\documentclass[sigconf]{acmart}
\usepackage{amsmath} 
\usepackage{amsfonts}

\usepackage{amsmath}
\usepackage{amsfonts}
\usepackage{amsthm}
\usepackage{xfrac}

\usepackage{mathtools}
\usepackage{graphicx}
\usepackage{color}
\usepackage{xspace}
\usepackage[font={small}]{caption}
\usepackage{bm}
\usepackage{booktabs}
\usepackage{graphicx}
\usepackage{subcaption}
\usepackage{graphicx}
\usepackage{multirow}

\usepackage{wrapfig}
\usepackage{algorithm}
\usepackage{algpseudocode}
\usepackage{enumitem}
\usepackage{longtable} 
\usepackage{multirow,multicol}
\usepackage{array}
\usepackage{makecell}
\usepackage[table]{xcolor}
\newcommand{\nfeas}[1]{\cellcolor{red!12}{#1}}
\newcommand{\best}[1]{\textbf{#1}}

\newcommand{\techreport}[2]{#1} 

\newcommand{\cmp}{\ensuremath{\mathtt{cmp}}}
\newcommand{\com}{\ensuremath{\mathtt{com}}}

\newcommand{\taskVisionOne}{MLP-MN\xspace}
\newcommand{\taskVisionTwo}{RN56-CF10\xspace}

\newcommand{\taskLangOne}{G1-2B-MMLU\xspace}
\newcommand{\taskLangTwo}{G1-7B-MMLU\xspace}
\newcommand{\taskLangThree}{G1-2B-SGPT\xspace}
\newcommand{\taskLangFour}{G1-7B-SGPT\xspace}
\newcommand{\taskLangFive}{Ll3-8B-MMLU\xspace}
\newcommand{\taskLangSix}{Ll3-8B-WT\xspace}
\newcommand{\taskLangSeven}{FT5-SST2\xspace}

\newcommand{\scenLangSGTwoB}{L-SG1B\xspace}
\newcommand{\scenLangSGTwoBSevenN}{L-SG1B7N\xspace}

\newcommand{\scenLangSGSevenB}{L-SG7B\xspace}

\newcommand{\scenVisionTGH}{V-TGH\xspace}
\newcommand{\scenVisionTRS}{V-TRS\xspace}
\newcommand{\scenVisionLOO}{V-LOO\xspace}

\newcommand{\scenLangFiveShot}{L-5shot\xspace}

\newcommand{\scenLangWTEightB}{L-WT8B\xspace}

\newcommand{\scenVisionJetsonSST}{V-STT\xspace}
\newcommand{\scenVisionJetsonSSQ}{V-STQ\xspace}
\newcommand{\scenVisionJetsonSSL}{V-STL\xspace}
\newcommand{\scenLangJetsonSST}{L-STT\xspace}
\newcommand{\scenLangJetsonSSQ}{L-STQ\xspace}
\newcommand{\scenLangJetsonSSL}{L-STL\xspace}
\newcommand{\scenVisionJetsonS}{V-ST \xspace}
\newcommand{\scenVisionJetsonSS}{V-ST\xspace}
\newcommand{\scenLangJetsonS}{L-ST\xspace}
\newcommand{\scenLangJetsonSS}{L-ST\xspace}

\newcommand{\scenLangJetsonTestbedS}{J-L-S\xspace}

\newcommand{\scenVisionJetsonTestbedS}{J-V-S\xspace}
\newcommand{\scenVisionJetsonTestbedST}{J-V-ST\xspace}

\newcolumntype{V}[1]{>{\raggedright\arraybackslash}p{#1}}

\usepackage{ifthen}

\newcommand{\NewAlg}[3]{%
  \expandafter\newcommand\csname alg#1\endcsname[1][both]{%
    \ifthenelse{\equal{##1}{long}}{\textbf{#2}\xspace}{%
      \ifthenelse{\equal{##1}{short}}{\textsf{#3}\xspace}{%
        \textsf{#2 (#3)}\xspace
      }%
    }%
  }%
}

\NewAlg{MAX}{Max-compression}{MAX}
\NewAlg{NONE}{No-compression}{NONE}

\NewAlg{UNI}{Uniform-compression}{UNI}
\NewAlg{MYO}{Myopic-capacity}{MYO}
\NewAlg{CONS}{Conservative-capacity}{CONS}
\NewAlg{MA}{Moving-average}{MA}

\NewAlg{EQ}{Equal-share}{EQ}
\NewAlg{PROP}{Proportional-share}{PROP}
\NewAlg{PRIO}{Strict-priority}{PRIO}
\NewAlg{DES}{Decoupled-equal-split}{D-EQUAL}
\NewAlg{DLPROP}{Decoupled-$\lambda$-proportional}{D-L-PROP}

\NewAlg{OURSCSI}{CSI-Aware Optimal}{OURS-CSI}
\NewAlg{OURSNOCSIst}{Alg. 1 with LCB}{OURS-NOCSI} 
\NewAlg{OURSNOCSImt}{Alg. 2 with LCB}{OURS-NOCSI} 

\newcommand{\metU}{\overline{U}}            
\newcommand{\metD}{\overline{D}}            
\newcommand{\metED}{\overline{\Delta D}}    

\newcommand{\cmpT}{\texttt{T}\xspace}      
\newcommand{\cmpQ}{\texttt{Q}\xspace}      
\newcommand{\cmpIEight}{\texttt{I8}\xspace}

\newcommand{\scenMTLangMMLUTwo}{L-MMLU2}
\newcommand{\scenMTLangSGPTThree}{L-SGPT3}

\newcommand{\scenMTVisOneSym}{V1-Sym}
\newcommand{\scenMTVisTwoHetero}{V2-H}
\newcommand{\scenMTVisThreeKThreeTight}{V3-K3-T}
\newcommand{\scenMTVisFourWeights}{V4-w}
\newcommand{\scenMTVisFiveLooseNTenTHundred}{V3-LOO}
\newcommand{\scenMTVisFiveLooseNTenTFifty}{V5-ls-T50}
\newcommand{\scenMTVisSixTightRkNTenTHundred}{V6-tg-T100}
\newcommand{\scenMTVisSixTightRkNTenTFifty}{V6-tg-T50}
\newcommand{\scenMTVisCrossROne}{V-C-R1}

\newcommand{\scenMTLVJM}{LV-MT}

\newcommand{\scenMTJetsonTestbedM}{J-LV-M}

\newcommand{\scenTbSSGPT}{R-\scenLangSGTwoB}

\newcommand{\scenPrsntVizWire}{P-V-G}
\newcommand{\scenPrsntVizEdge}{P-V-E}

\newcommand{\scenPrsntLangWikiWire}{P-L1-G}
\newcommand{\scenPrsntLangWikiEdge}{P-L1-E}

\newcommand{\scenPrsntLangMmluEdge}{P-L2-E}

\newcommand{\scenPrsntVizWireEdge}{P-V-[G, E]}
\newcommand{\scenPrsntVizVizWireEdge}{P-VV-[G, E]}

\newcommand{\scenPrsntVizVizWire}{P-VV-G}
\newcommand{\scenPrsntVizVizEdge}{P-VV-E}

\newcommand{\scenPrsntVizLangEdge}{P-VL-E}
\newcommand{\scenPrsntVizLangWlan}{P-VL-W}

\newcommand{\scenPrsntLangLangEdge}{P-LL-E}

\newcommand{\netProfileLan}{1Gbps LAN}
\newcommand{\netProfileEdge}{W-Edge}
\newcommand{\netProfileWlan}{MANET}

\newcommand{\datasetGradFit}{\ensuremath{\mathcal{D}_{\mathrm{grad}}}}
\newcommand{\datasetTest}{\ensuremath{\mathcal{D}_{\mathrm{test}}}}

\newcommand{\CRedit}[1]{#1}

\theoremstyle{plain}
\newtheorem{theorem}{Theorem}
\newtheorem{lemma}{Lemma}
\newtheorem{problem}{Problem}

\theoremstyle{definition}

\theoremstyle{remark}

\AtBeginDocument{%
  \providecommand\BibTeX{{%
    \normalfont B\kern-0.5em{\scshape i\kern-0.25em b}\kern-0.8em\TeX}}}

\setcopyright{acmlicensed}
\copyrightyear{2026}
\acmYear{2026}
\acmDOI{XXXXXXX.XXXXXXX}

\acmConference[MobiHoc 2026]{the 27th International Symposium
on Theory, Algorithmic Foundations, and Protocol Design for Mobile Networks and Mobile Computing}{Nov. 23--26, 2026}{Tokyo, Japan}
\acmISBN{978-1-4503-XXXX-X/18/06}

\acmCodeLink{https://github.com/neu-spiral/communication-aware-inference}

\setcopyright{none}
\renewcommand\footnotetextcopyrightpermission[1]{}

\begin{document}

\techreport{%
  \fancyhead[LE,RO]{}%
  \renewcommand{\headrulewidth}{0pt}%
}{}

\title[Communication-Aware Model Distributed Inference via Latent Representation Compression]{Communication-Aware Model Distributed  Inference\\ via Latent Representation Compression}
\techreport{%
  \titlenote{This is an extended version of the work published at MobiHoc 2026.}%
}{}

\author{Peyman Gholami$^\S$, Theodoros-Thirimachos Davarakis$^\dagger$, Teng Li$^\S$, Miquel Sirera Perell\'{o}$^\dagger$,\\ Salil Reddy$^\ddagger$, Ayberk Yarkın Yıldız$^\dagger$, Anish Arora$^\ddagger$, Atilla Eryilmaz$^\ddagger$, Stratis Ioannidis$^\dagger$,\\ Chengzhang Li$^\ddagger$, Hulya Seferoglu$^\S$, Ness Shroff$^\ddagger$}

\authornote{$^\S$University of Illinois Chicago, $^\dagger$Northeastern University, $^\ddagger$The Ohio State University}

\affiliation{%
  \institution{}
  \country{}
}

\renewcommand{\shortauthors}{Gholami et al.}

\begin{abstract}
We study optimization of  distributed model inference over resource-constrained edge resources.
We propose a framework that optimizes the trade-off between model accuracy and communication costs by controlling latent representation compression to meet strict Quality of Service (QoS) throughput targets. For settings with known channel state information (CSI), we derive a closed-form optimal solution for single tasks and reduce the multi-task problem to a convex optimization program characterized by a per-link water-filling strategy. We extend these to handle unpredictable environments via a stochastic dual descent algorithm that relies only on causal channel estimates. We provide Lyapunov-based proofs demonstrating that our approach strictly satisfies long-term delay constraints while achieving a bounded optimality gap. Our results offer a robust, scalable blueprint for maximizing the performance of pipelined AI tasks in dynamic, resource-constrained distributed systems.
We verify the effectiveness of our proposed framework through simulations and experiments with real edge devices.
\end{abstract}

\maketitle
\section{Introduction}
Deploying machine learning (ML) algorithms over edge devices is critical for applications with stringent latency requirements. It is also appealing when availability, privacy, and rising cloud computing costs matter~\cite{hua2023edge}, and has motivated studies on model compression~\cite{han2015deep,lin2024awq,shen2024agile,zhu2024survey} 
and model distributed inference~\cite{li2024adaptive,li2023model,macario2025model}. 
Nevertheless, successfully harnessing computational resources at the edge gives rise to significant challenges, precisely because model distributed inference entails additional communication overheads. Several recent works have proposed compressing neural network activations transmitted between devices~\cite{li2024efficient,li2025two} precisely to reduce such communication overheads. This, in turn, comes at the cost of reduced inference accuracy.

\begin{figure}[!t]
    \centering
    \includegraphics[width=\techreport{0.37}{0.4}\textwidth]{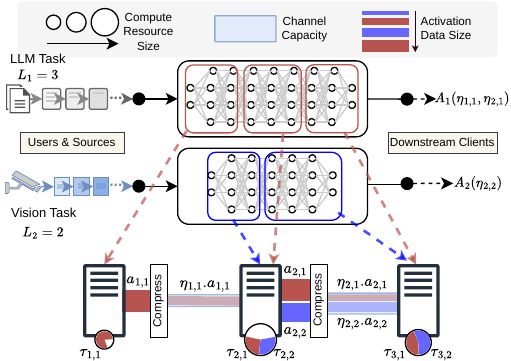}
    \vspace{-8pt}
    \techreport{%
  \vspace{-3pt}
}{}
    \caption{Distributed inference  processing \CRedit{input streams for}  two  tasks (LLM, partitioned into $L_1=3$ stages; Vision, partitioned into $L_2=2$ stages). Computation cycles ($\tau_{i,k}$) are assigned to the computational nodes. Intermediate activations ($a_{i,k}$) are compressed by a factor $\eta_{i,k}$ prior to transmission, yielding effective payloads $\eta_{i,k} a_{i,k}$ and determining the final end-to-end inference accuracy $A_k$. \CRedit{Given  a mapping of partitioned tasks over a topology and a compression scheme (expressed via  $A_k$), we identify optimal compression factors $\eta_{i,k}$ that maximize aggregate accuracy under throughput constraints.}} 
    \label{fig:system_model}
    \vspace{-11pt}
\end{figure}

To characterize this tradeoff, we study a setting in which the layers of multiple, diverse machine learning inference tasks are partitioned and subsequently distributed across a network of resource-constrained edge devices \CRedit{(see Fig.~\ref{fig:system_model})}. Each task  processes \CRedit{\emph{input streams}} generated at the edge: these could be, e.g., data \CRedit{continuously} collected from IoT sensors, cameras, LiDAR scenes, text inputs to an LLM, etc. \CRedit{As inputs arrive continuously, each task must be served at the rate at which its stream is generated, thereby imposing a throughput requirement.}  \CRedit{We target models that cannot fit on a single edge device, or that cannot meet accuracy and throughput requirements, even after static model compression. Instead, to speed up}  inference, devices parallelize execution via \emph{pipelining}:  after  processing a collection of layers, activations pass to a new device, and a new input is processed. 

\CRedit{For example, a vehicle processing a continuous high-rate LiDAR stream can execute early layers locally and send compressed activations to nearby compute, another smart car, a roadside unit, or an edge server, to complete inference.}
To meet throughput requirements in light of dynamic channel conditions, these communications can be \emph{compressed}, at the cost of accuracy degradation. Our contributions are as follows:

\sloppy
\begin{itemize}
\item We formalize the tradeoff between machine learning task accuracy vs.~throughput under pipeline parallelism over an arbitrary edge network topology. To the best of our knowledge, we are the first to introduce  and study this problem. 
\item 
Assuming that channel-state information (CSI) is known, we show that the corresponding optimization problem is tractable when  accuracy  is  monotone on compression rates in the single-task setting, and monotone and concave in the multi-task setting. 
\item Next, we study settings in which CSI is stochastic and a-priori unknown. Under the same assumptions of monotonicity and concavity, we show that  a stochastic dual descent algorithm can asymptotically recover the optimal compression ratios when having access to noisy, near-unbiased estimates of the CSI in the single-task setting.
\item For the multi-task setting with similar CSI estimates, the primal-step is non-convex. We show that a block-coordinate descent solver can be used as a tractable algorithm, and prove that, if it comes with additive approximation guarantees, the latter also translate to convergence guarantees for the end-to-end stochastic dual descent algorithm.  

\item We extensively  evaluate  our proposed CSI-aware and CSI-oblivious algorithms for a broad array of vision and language tasks, under top-$k$~\cite{ming2024adtopk,bian2024does}, quantization~\cite{lin2024awq,shen2024agile}, and LLM-int8~\cite{dettmers2022llmint8} compression schemes. Our  methods outperform competitors by maximizing accuracy while meeting throughput constraints.
We also deploy and evaluate our algorithms over three diverse testbeds, namely (a) a (highly-constrained) Raspberry Pi testbed, (b) a Jetson Orin Nano testbed, and (c) PRESCIENT, a GPU-dense network research testbed, demonstrating the applicability of our compression algorithms for scaling real-time inference deployments.  
\end{itemize}

\fussy
The remainder of this paper is structured as follows. We review related work in Section~\ref{sec:related}. In Section~\ref{sec:meth}, we provide the system model, CSI-aware design, estimated CSI design, and accuracy-compression function \& gradient estimation. Section \ref{sec:experiments} provides extensive evaluations of our proposed algorithms. Finally, we conclude in Section~\ref{sec:conc}.
\section{\label{sec:related} Related Work}



\textbf{Model Distributed Inference.}
The deployment of large-scale neural networks on resource-constrained edge devices has catalyzed the development of distributed inference architectures \cite{chen2019deep,li2023model,inferenceedge}. This distributed paradigm mitigates the strict memory constraints of individual edge devices while simultaneously providing substantial speedups through pipeline parallelism. This architecture has proven highly scalable, recently enabling the distributed inference of Large Language Models (LLMs) over consumer-grade internet connections \cite{macario2025model,borzunov2023distributed}. As a result, current literature is heavily focused on efficiently mapping and distributing computation workloads across available edge nodes \cite{wang2021pipeedge, li2024adaptive, li2025priority}. However, despite successfully utilizing parallelism to reduce computation time, these frameworks generally assume a strict single-task environment. More critically, they rely on transmitting massive, uncompressed intermediate activations between model partitions. In edge environments characterized by dynamic or degraded channel capacities, communicating these high-dimensional tensors becomes a critical bottleneck, severely bounding the end-to-end throughput.

\noindent\textbf{Communication-Efficient Inference and Activation Compression.}
To accelerate execution on resource-constrained edge devices, early efforts heavily focused on static model compression. Techniques such as weight pruning, knowledge distillation, and integer quantization \cite{han2015deep, hinton2015distilling} effectively reduce the memory footprint and computational requirements of neural networks prior to deployment \cite{zhu2024survey}. Extending this paradigm to distributed settings, recent work has explored communication-aware pruning \cite{jian2023communication} to explicitly co-optimize model sparsity and the resulting data transmission overhead. However, while these static methods are highly effective, they do not inherently resolve the dynamic network bottlenecks that arise when models are partitioned across multiple nodes. 
We explore an orthogonal dimension that can be used in addition to static methods such as, e.g., model pruning. We reduce the communication burden during model-distributed inference, by focusing on lightweight compression methods that operate directly on the intermediate outputs (activations) exchanged between workers at a partitioning point between layers (see Fig.~\ref{fig:system_model}).

\noindent\textbf{Activation Compression.} Several intermediate activation compression strategies have been proposed for both training and inference. Specifically, \textit{Top-$k$ sparsification} \cite{ramasinghe2025beyond,ming2024adtopk,bian2024does}
reduces payload sizes by transmitting only the most significant activation values while masking the rest to zero, though this often requires transmitting additional spatial index metadata. Alternatively, \textit{quantization}~\cite{choi2018pact,lin2024awq,shen2024agile} mitigates communication overhead by mapping high-precision floating-point activations to lower bit-width representations. This is typically achieved through fixed-point quantization or dynamically adjusted mixed-precision~\cite{lin2024awq,shen2024agile,dettmers2022llmint8}. Though the majority of these approaches accelerate training, recent literature has demonstrated the efficacy of targeted data compression schemes designed specifically to facilitate efficient and timely communication for network-edge classification tasks~\cite{li2024efficient, li2025two}. 
We balance communication reduction against the accuracy of the downstream inference task. 
This can be achieved by introducing a dynamically adjustable latent compression factor that continuously scales the intermediate payload size in response to fluctuating, real-time channel capacities, ensuring that strict QoS throughput constraints are satisfied without unnecessary degradation of inference accuracy.

\section{\label{sec:meth}Methodology}

\subsection{System Model}

We consider a model distributed inference system designed to process multiple inference tasks on the edge, as illustrated in Fig.~\ref{fig:system_model}. Several tasks are distributed through pipelining and executed continuously  across interconnected computational nodes \CRedit{over an arbitrary network} topology. 
Channel capacity between nodes is dynamic and, to meet throughput constraints, communication between consecutive nodes is compressed, leading to an accuracy–throughput tradeoff. We seek compression policies that maximize aggregate task utility under throughput constraints. \CRedit{We present this system model here in detail and discuss extensions  in \techreport{App.~\ref{app:extensions}.}{App.~L in \cite{techreport}.}}

\noindent\textbf{Computational Nodes \& Inference Tasks.} We model the underlying network topology as a directed graph $G = (\mathcal{V}, \mathcal{E})$, where $\mathcal{V}$ is the set of computational nodes and $\mathcal{E}$ is the set of directed communication links. Time is slotted, and each edge $e \in \mathcal{E}$ has a time-varying \emph{channel capacity}   $c_e(t) \in \mathbb{R}_+$ at time slot $t\in \mathbb{N}$. Assume a set of $K$ distinct \emph{inference tasks}.  The  set of tasks available at time slot $t$ is $\mathcal{K}(t) \subseteq [K]\triangleq \{1,\ldots,K\}$.  For every task $k \in \mathcal{K}(t)$, we assume a continuous stream of inference input samples (e.g., camera frames, LiDAR scenes, sensor measurements, etc.) to be processed for the duration of time slot $t$. To maximize hardware utilization, the system employs pipeline parallelism, overlapping computation and communication phases across all $L_k$ stages.

Formally, each task $k \in \mathcal{K}(t)$ corresponds to a neural network whose layers are partitioned into $L_k$ sequential \emph{stages}. These stages are assigned to a processing path, given by the sequence $P_k = (v_{1,k}, \dots, v_{L_k,k})$: physical node $v_{i,k} \in \mathcal{V}$ processes the $i$-th pipeline stage of the task ($i \in [L_k]$). Let $V_k \subseteq \mathcal{V}$ and $E_k \subseteq \mathcal{E}$ denote the sets of vertices and edges traversed by $P_k$, respectively, where the $i$-th transmission hop occurs over edge $e_{i,k} = (v_{i,k}, v_{i+1,k})$ for $i \in [L_k-1]$.
For notational convenience, we denote the capacity of the physical link traversed by this $i$-th hop as $c_{i,k}(t) \triangleq c_{e_{i,k}}(t)$.

\begin{figure*}[!t]
  \centering
    \begin{tabular}{@{}c@{\hspace{2.0em}}c@{\hspace{2.0em}}c@{}}
    \includegraphics[height=\techreport{3cm}{3.2cm}]{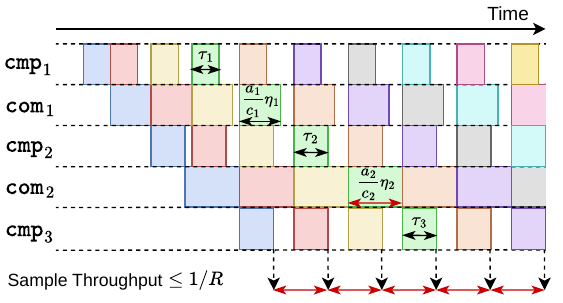} &
    \includegraphics[height=\techreport{3cm}{3.4cm}]{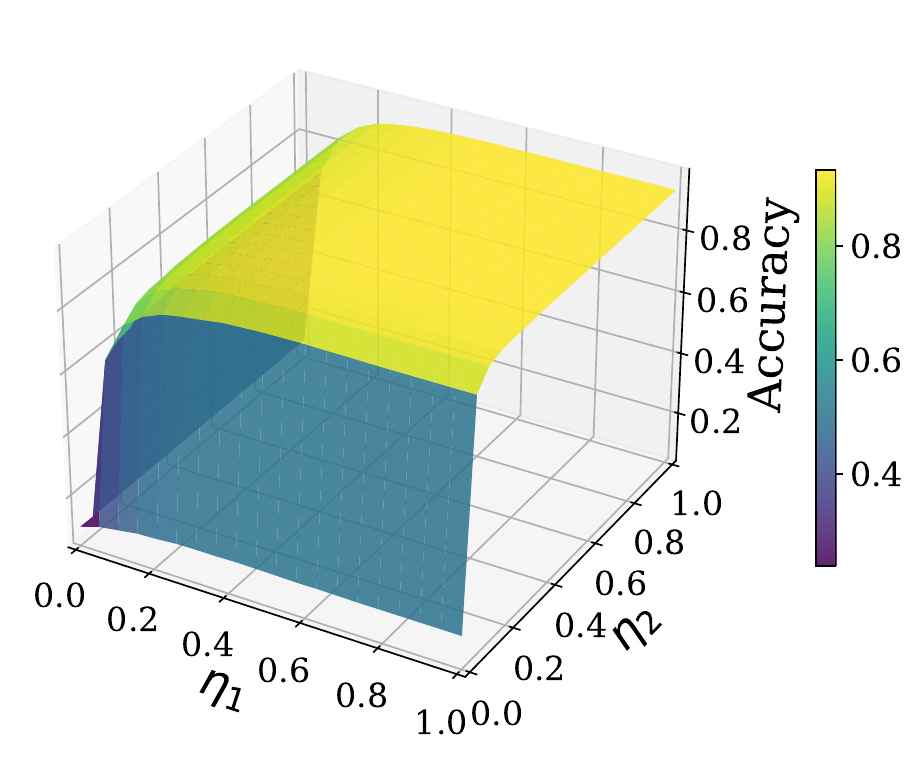} &
    \includegraphics[height=\techreport{3cm}{3.2cm}]{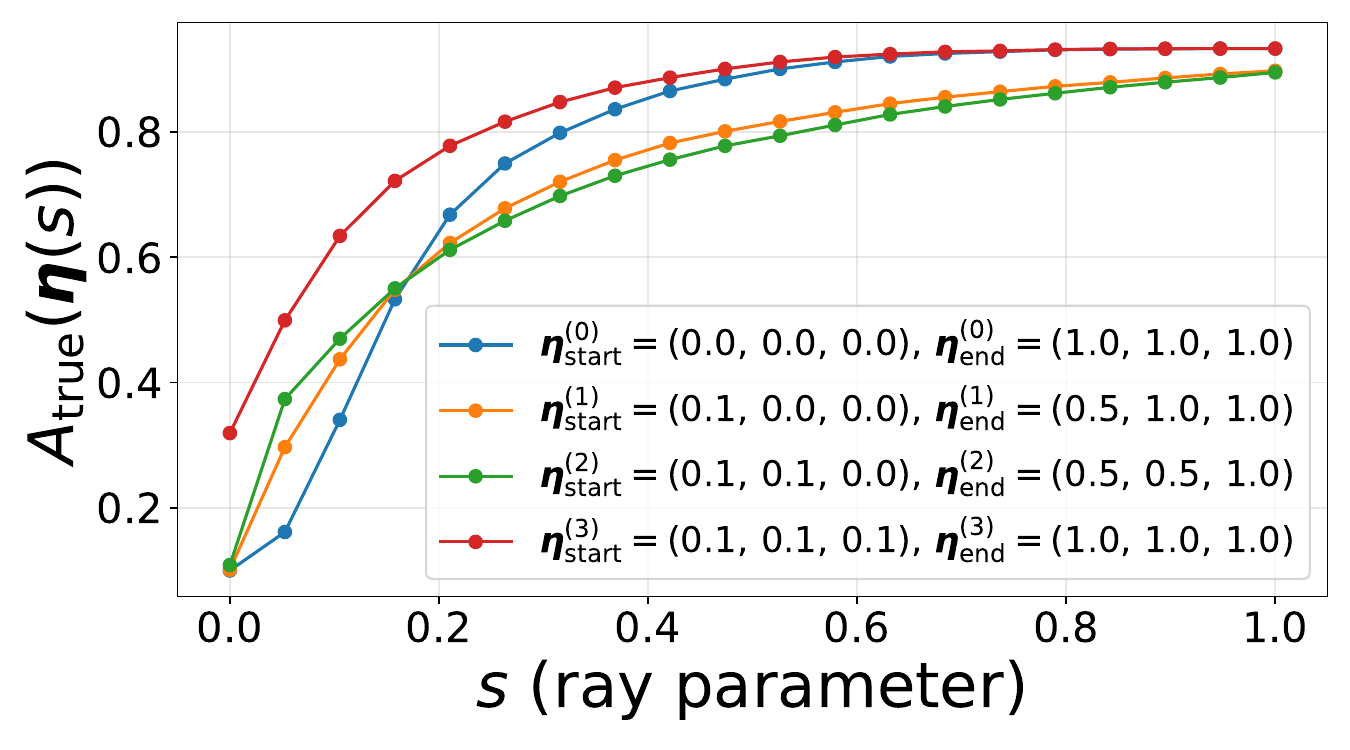} \\\vspace{-2.5ex}
    {\small (a)} & {\small (b)} & {\small (c)}
  \end{tabular}

  \caption{%
    (a) Illustration of pipeline parallelism for distributed inference of a single task.
    A task is partitioned over 3 stages and distributed across 3 nodes.
    Colors indicate different samples.
    Computation (\cmp$_i$) and communication (\com$_i$) phases at each node $i$ overlap.
    This leads to an end-to-end throughput bottlenecked by the maximum of stage processing
    times ($\tau_i$) and compressed transmission delays ($\frac{a_i}{c_i(t)}\eta_i(t)$).
    If compression is sufficiently high, computation dominates communication. If
      computation is also evenly partitioned across nodes 
   the resulting parallel speedup is $3\times$.
    (b)  Accuracy function $A(\boldsymbol{\eta})$ for ResNet-56~\cite{He2015resnet} on CIFAR-10
    with $L=3$ stages (two cut points) of top-$k$ sparsification~\cite{ming2024adtopk,bian2024does}, smoothed via KNN averaging.
    The function is monotone \& concave above $[\boldsymbol{\eta}_1,\boldsymbol{\eta}_2]=[0.1,0.1]$, and collapses below  this threshold.
    (c) To verify that the observed behavior in (b) is present in higher dimensional $\boldsymbol{\eta}$-spaces, we plot the directional profiles of $A(\boldsymbol{\eta})$ along random rays for the same ResNet-56 as in~(b), with $L=4$ stages (three cut points). Pictured is $A(\boldsymbol{\eta}(s))$ evaluated along $4$ linear segments in three dimensional $\boldsymbol{\eta}$-space, parameterized as $\boldsymbol{\eta}(s) = \boldsymbol{\eta}_{\mathrm{start}} + s\,\bigl(\boldsymbol{\eta}_{\mathrm{end}} - \boldsymbol{\eta}_{\mathrm{start}}\bigr)$, for $s \in [0,1].$
    Each  curve traces $A(\boldsymbol{\eta}(s))$ on a segment, as $s$ increases from $0$ to $1$.
    \CRedit{We observe that \emph{(i)} for certain segments, the accuracy $A$ collapses below a threshold, \emph{(ii)} $A$ is concave and monotone in $\boldsymbol{\eta}$ above the threshold.}
  }
  \label{fig:pipeline_and_accuracy}
  \vspace{-0pt}
\end{figure*}

We denote by $\tau_{i,k} \in \mathbb{R}_+$ the compute cycle time required to process an input sample through the $i$-th stage of task $k$ at node $v_{i,k}$. The output of the $i$-th stage (i.e., the activations of its last layer) has an uncompressed size $a_{i,k} \in \mathbb{R}_+$ (e.g., in MB). To ensure Quality of Service (QoS), the system must satisfy a throughput constraint:  sample inferences per second during time slot $t$ meet or exceed a target rate  $R_k(t)\in \mathbb{R}_+$ for task~$k$. Finally, each task is described by an \emph{accuracy-compression} function, which we describe next.

\noindent\textbf{Accuracy-Compression Function.} Since edge capacities fluctuate, to reduce communication overhead, a node may compress a stage's output (i.e., the activations of the last layer) prior to transmitting it to the next node. Several compression schemes are possible, e.g., top-$k$ sparsification \cite{ming2024adtopk,bian2024does}, quantization~\cite{lin2024awq,shen2024agile}, etc.~(see also Sec.~\ref{sec:expcompsch}). Given a compression scheme, we denote by $\eta_{i,k}(t) \in [\eta_{i,k}^{\min}, 1]$ the \emph{compression rate} applied to the activations of the $i$-th stage of task $k$ before transmission over edge $e_{i,k}$, resulting in an effective data size of $\eta_{i,k}(t) a_{i,k}$. Here, $\eta_{i,k}^{\min}$ denotes the minimum allowable compression rate for the $i$-th hop; compressing below this threshold causes the intermediate activations to collapse into noise, thereby degrading the inference accuracy to its absolute minimum (see also Fig.~\ref{fig:pipeline_and_accuracy}b).
Let $\boldsymbol{\eta}_k(t) = (\eta_{i,k}(t))_{1 \le i \le L_k-1}$ denote the vector of compression rates applied across the transmission hops of task $k$. We denote by $A_k(\boldsymbol{\eta}_k(t))\in \mathbb{R}_+$ the resulting \emph{inference accuracy}. We assume that $A_k(\cdot)$ is given for each task $k$; we describe how to estimate it from held-out data in Sec.~\ref{sec:acccompfun}.

\CRedit{Fig.~\ref{fig:pipeline_and_accuracy}(b) shows an example of an estimated function $A$ for  ResNet-56~\cite{He2015resnet} on CIFAR-10
    with $L=3$ stages (two cut points) of top-$k$ sparsification~\cite{ming2024adtopk,bian2024does}, smoothed via KNN averaging. Fig.~\ref{fig:pipeline_and_accuracy}(c) shows cross-sections evaluated over different rays. We observe that \emph{(i)} the accuracy collapses below a certain threshold and \emph{(ii)} is monotone concave above that threshold. We provide extensive  evidence of this behavior across different tasks in \techreport{App.~\ref{sec:concavity-validation}.}{App.~H in \cite{techreport}.}}

\noindent\textbf{Overall Objective. } Given  a mapping of partitioned tasks over an edge topology and a compression scheme (expressed via an accuracy-compression function per task), we wish to identify optimal compression factors that maximize aggregate accuracy, while meeting throughput constraints. 
In the remainder of this section, we formulate and propose solutions for this optimization problem under two distinct assumptions regarding channel state information (CSI), namely, a CSI-Aware  and an Estimated-CSI setting.

\subsection{CSI-Aware Design}

In our channel state information (CSI) aware setting, we assume that channel capacities   $c_e(t) \in \mathbb{R}_+$, $e \in \mathcal{E}$, are provided at the beginning of time slot $t$. As a warmup, we first describe the optimization problem and the solution that arises  under a single task.

\subsubsection{CSI-Aware Case -- Single Task}
When only a single task is present in the system, the optimization reduces to identifying optimal compression rates that meet throughput constraints. For convenience, we drop the task index $k$ from our notation, and assume w.l.o.g. that the nodes $i$ in the path $P$ the task is assigned to are indexed as $1,2,\ldots,L$.
Recall that a stream of inference requests occurs at each timeslot using pipeline parallelism, that overlaps computation and communication phases across different samples.  Under this model, the average  throughput is constrained by its slowest component, either computation or communication (see also Fig.~\ref{fig:pipeline_and_accuracy}a).
This leads to the following optimization problem:
\begin{problem}
[CSI-Aware Single-Task]\label{prob:single-task}
\begin{subequations} \label{prob:max-main}
\begin{align}\mathop{\text{Maximize:}}_{\boldsymbol{\eta}(t)} & \quad A(\boldsymbol{\eta}(t))
 \\[0.5ex]
  \text{subj.~to:}& \quad
     \max\bigl\{
      \max_{i\in [L]}\tau_i,\,
      \max_{i\in [L-1]} \tfrac{a_i}{c_{i}(t)}\eta_i(t)
    \bigr\}
    \;\le\; \tfrac{1}{R(t)} ,
    \label{cons:max}\\
    &\quad \boldsymbol{\eta}(t) \in \textstyle\prod_{i=1}^{L-1} [\eta_i^{\min}, 1].
\end{align}
\end{subequations}
\end{problem}
In short, \eqref{cons:max} states that \emph{all} computation times $\{\tau_i\}_{i=1}^L$ and \emph{all} transmission times $\{\frac{a_i}{c_{i}(t)}\eta_i(t)\}_{i=1}^{L-1}$ per stage must be bounded by the inverse of the target throughput $R(t)$. When the neural network is equipartitioned, reducing communication sufficiently makes execution computation-bound; in turn, this leads to a speedup by a factor $L$ compared to serial execution (see Fig.~\ref{fig:pipeline_and_accuracy}a). More generally, varying compression rates $\boldsymbol{\eta}$ establishes a trade-off between maximum throughput $R(t)$ and accuracy $A(\boldsymbol{\eta}(t))$. 

It is easy to see that the optimal solution per stage is to exhaust available capacity, binding the target throughput rate at all stages:
\begin{theorem}[Closed-Form Optimal Solution]\label{thm:single-convex}
If Problem~\ref{prob:single-task} is feasible, and the accuracy function $A(\boldsymbol{\eta}(t))$ is  non-decreasing with respect to each $\eta_i(t)$, the problem has the unique optimal solution:
 $ \eta_i^\star(t) = \min \{1,\, \tfrac{c_{i}(t)}{R(t)a_i} \},$ for all $  i \in [L-1].$
\end{theorem}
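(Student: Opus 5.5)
The plan is to show that the constraints of Problem~\ref{prob:single-task} decouple across hops into independent interval (box) constraints. The feasible set then has a componentwise-largest element, and monotonicity of $A$ makes that element optimal. Fix a time slot $t$ and drop $t$ from the notation.

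First, rewrite \eqref{cons:max}. A maximum of finitely many terms is at most $1/R$ exactly when every term is. So \eqref{cons:max} is equivalent to two sets of conditions:
\begin{itemize}
\item $\tau_i \le 1/R$ for all $i\in[L]$, which does not involve $\boldsymbol{\eta}$;
\item $\eta_i \le c_i/(R a_i)$ for all $i\in[L-1]$, using $a_i, c_i, R > 0$.
\end{itemize}
Combining these with the box constraint, the feasible set becomes
\[
\mathcal{F} = \prod_{i=1}^{L-1}\bigl[\eta_i^{\min}, \min\{1, c_i/(R a_i)\}\bigr]
\]
whenever $\max_i \tau_i \le 1/R$, and it is empty otherwise. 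Feasibility of the problem therefore means two things: $\max_i\tau_i\le 1/R$, and $\eta_i^{\min}\le \min\{1,c_i/(Ra_i)\}$ for every $i$. In particular, $\mathcal{F}$ is a nonempty product of intervals.

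Second, set $\eta_i^\star=\min\{1,c_i/(Ra_i)\}$, the right endpoint of each interval. Then $\boldsymbol{\eta}^\star\in\mathcal{F}$, and $\boldsymbol{\eta}\le\boldsymbol{\eta}^\star$ componentwise for every $\boldsymbol{\eta}\in\mathcal{F}$. Take any feasible $\boldsymbol{\eta}$ and raise its coordinates to $\eta_i^\star$ one at a time. Each intermediate vector stays in the product set $\mathcal{F}$, and since $A$ is non-decreasing in each coordinate, $A$ does not decrease at any step. Hence $A(\boldsymbol{\eta})\le A(\boldsymbol{\eta}^\star)$, so $\boldsymbol{\eta}^\star$ is optimal. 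This is the formal version of ``exhaust available capacity on every hop.''

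The main obstacle is the word \emph{unique}. Mere coordinatewise non-decreasingness does not give uniqueness: if $A$ is constant on some face of $\mathcal{F}$, other maximizers exist. I would handle this in one of two ways:
\begin{itemize}
\item Strengthen the monotonicity to strict increase in each coordinate, at least on $\mathcal{F}$. Then any feasible $\boldsymbol{\eta}\ne\boldsymbol{\eta}^\star$ has some coordinate $\eta_j<\eta_j^\star$. Raising that coordinate alone strictly increases $A$, and raising the remaining coordinates does not decrease it, so $A(\boldsymbol{\eta})<A(\boldsymbol{\eta}^\star)$.
\item Otherwise, interpret uniqueness as $\boldsymbol{\eta}^\star$ being the unique componentwise-maximal optimizer. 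This is the greatest element of the set of maximizers and is exactly the closed form given in the statement, which is the canonical choice.
\end{itemize}
I would state explicitly which reading is used. Everything else reduces to the decoupling observation in the first step.
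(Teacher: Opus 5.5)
Your proposal is correct and follows essentially the same route as the paper. The paper first proves a feasibility lemma, reduces the constraints to independent intervals $\eta_i \in [\eta_i^{\min}, \min\{1, c_i(t)/(R(t)a_i)\}]$, and takes each right endpoint by monotonicity; your coordinate-by-coordinate raising argument makes that last step explicit. Your uniqueness caveat is justified. The paper's proof only asserts that maximizing $A$ ``strictly equates'' to maximizing each $\eta_i$, which non-decreasingness does not give (a constant $A$ is a counterexample), so the word \emph{unique} needs your strict-monotonicity assumption on the feasible box or your greatest-maximizer reading.
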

The proof is in \techreport{App.~\ref{proof:thm:single-convex}.}{App.~A in \cite{techreport}.} Note  that an optimal solution can be computed so long as the $A$ is monotone in $\boldsymbol{\eta}(t)$, which is both natural and also observed experimentally (see Figs.~\ref{fig:pipeline_and_accuracy}b and \ref{fig:pipeline_and_accuracy}c). 

\subsubsection{CSI-Aware -- Multi-Task}
In a multi-task setting, different tasks may exhibit varying sensitivities to activation compression and heterogeneous computational demands. For instance, a task that is highly sensitive to compression at stage $i$ but robust at stage $j$ can receive a larger share of its $i$-th transmission hop's bandwidth and a correspondingly smaller share for its $j$-th hop. Similarly, adaptive allocation of computational resources is essential for optimizing pipeline parallelism. Because the end-to-end throughput of a pipelined task is dictated by its slowest stage, optimizing the compute share at each node prevents stragglers and balances the processing delays across the entire pipeline.

Let $s_{i,k}^{\cmp}(t) \in (0,1]$ denote the fraction of computation capacity of node $v_{i,k}$ allocated to process the $i$-th stage of task $k$ during time slot $t$. 
Similarly, let $s_{i,k}^{\com}(t) \in (0,1]$ denote the fraction of communication bandwidth of edge $e_{i,k}$ allocated to task $k$ for its $i$-th transmission hop during time slot $t$.
%
To ensure that physical resource limits are not exceeded, the total fraction of resources allocated to all active pipeline stages on any given node $v \in \mathcal{V}$, or transmissions across any given edge $e \in \mathcal{E}$, must be bounded at each time slot $t$. We express these capacity constraints by summing the assigned allocations directly over the applicable tasks and their respective stages for each physical resource:
\begin{align}
\textstyle  \sum_{k \in \mathcal{K}(t)} \sum_{i: \, v_{i,k} = v} s_{i,k}^{\cmp}(t) &\le 1, \quad \forall v \in \mathcal{V}, \label{eq:comp_cap}\\
 \textstyle    \sum_{k \in \mathcal{K}(t)} \sum_{i: \, e_{i,k} = e} s_{i,k}^{\com}(t) &\le 1, \quad \forall e \in \mathcal{E}. \label{eq:comm_cap}
\end{align}
We are interested in maximizing the aggregate accuracy across tasks $k$, weighted by some $w_k>0$ (e.g., $1/K$).  We thus arrive at the following optimization problem: 
\begin{problem}[CSI-Aware Multi-Task]\label{prob:multi-task}
\begin{subequations} \label{prob:het-main}
\begin{align}
\mathop{\text{Maximize:}}_{\boldsymbol{\eta}(t),\boldsymbol{s}^\cmp(t),\boldsymbol{s}^\com(t)} & \sum_{k=1}^{K} w_k A_k(\boldsymbol{\eta}_k(t))
 \\[0.5ex]
  \text{subj.~to:} ~~
    & \tfrac{\tau_{i,k}}{s_{i,k}^{\cmp}(t)} \le \tfrac{1}{R_k(t)}, \quad \forall k \in \mathcal{K}(t),  i \in [L_k], \label{cons:het-comp-lat} \\
    & \tfrac{a_{i,k} \eta_{i,k}(t)}{s_{i,k}^{\com}(t) c_{i,k}(t)} \le \tfrac{1}{R_k(t)}, \enspace \forall k \!\in\! \mathcal{K}(t), i \!\in\! [L_k\!-\!1], \label{cons:het-comm-lat}\\
    & \text{Constraints \eqref{eq:comp_cap} and \eqref{eq:comm_cap}}, \nonumber\\
    & \boldsymbol{\eta}_k(t) \in \textstyle\prod_{i=1}^{L_k-1} [\eta_{i,k}^{\min}, 1], \quad \forall k \in \mathcal{K}(t), \label{cons:het-eta-domain} \\
    & s_{i,k}^{\cmp}(t), \, s_{i,k}^{\com}(t) \in (0, 1]. \label{cons:het-s-domain}
\end{align}
\end{subequations}
\end{problem}
Akin to~\eqref{cons:max}, constraints \eqref{cons:het-comp-lat} and \eqref{cons:het-comm-lat} enforce the fundamental bottleneck principle of pipeline parallelism by bounding the effective execution delays. The term ${\tau_{i,k}}/{s_{i,k}^{\cmp}(t)}$ in \eqref{cons:het-comp-lat} represents the \emph{dilated computation time}; allocating only a fraction $s_{i,k}^{\cmp}(t)$ of the assigned node's computational capacity proportionally increases the time required to process that stage. Similarly, in \eqref{cons:het-comm-lat}, the term $s_{i,k}^{\com}(t) c_{i,k}(t)$ represents the communication bandwidth allocated  to task $k$ on its $i$-th transmission hop. Dividing the compressed payload $a_{i,k} \eta_{i,k}(t)$ by this bandwidth slice yields the actual transmission delay. To sustain a target throughput of $R_k(t)$, both this effective computation time and the actual transmission delay must be bounded by  ${1}/{R_k(t)}$, the maximum allowable pipeline cycle time.

To solve Problem \ref{prob:multi-task}, we first establish the conditions under which a feasible resource allocation exists.
\begin{lemma}[Multi-Task Feasibility Condition]\label{lem:multi-feasibility}
Problem \ref{prob:multi-task} is feasible if and only if 
    $\sum_{k \in \mathcal{K}(t)} \sum_{i: \, v_{i,k} = v} \tau_{i,k} R_k(t) \le 1, $ for all $ v \in \mathcal{V}, $ and $
    \sum_{k \in \mathcal{K}(t)} \sum_{i: \, e_{i,k} = e} \frac{a_{i,k} \eta_{i,k}^{\min} R_k(t)}{c_{i,k}(t)} \le 1, $ for all $ e \in \mathcal{E}.$

\end{lemma}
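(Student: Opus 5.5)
The plan is to prove both directions directly, treating the computation and communication constraints separately. They decouple: the variables $s^{\cmp}$ appear only in \eqref{cons:het-comp-lat} and \eqref{eq:comp_cap}, while $(\boldsymbol{\eta}, s^{\com})$ appear only in \eqref{cons:het-comm-lat}, \eqref{eq:comm_cap}, and the domain constraints. Since $s_{i,k}^{\cmp}(t)>0$, constraint \eqref{cons:het-comp-lat} is equivalent to $s_{i,k}^{\cmp}(t)\ge \tau_{i,k}R_k(t)$. Likewise, since $s_{i,k}^{\com}(t)>0$ and $c_{i,k}(t)>0$, constraint \eqref{cons:het-comm-lat} is equivalent to $s_{i,k}^{\com}(t)\ge a_{i,k}\eta_{i,k}(t)R_k(t)/c_{i,k}(t)$. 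Feasibility is therefore a question about lower bounds on shares that must fit under unit budgets.

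\textbf{Necessity.} Suppose a feasible point exists. For each node $v$, we sum the lower bound $s_{i,k}^{\cmp}(t)\ge\tau_{i,k}R_k(t)$ over all pairs $(k,i)$ with $v_{i,k}=v$ and combine it with \eqref{eq:comp_cap}. This gives $\sum\tau_{i,k}R_k(t)\le\sum s_{i,k}^{\cmp}(t)\le 1$. For each edge $e$, we use $\eta_{i,k}(t)\ge\eta_{i,k}^{\min}$ from \eqref{cons:het-eta-domain} to get $s_{i,k}^{\com}(t)\ge a_{i,k}\eta_{i,k}^{\min}R_k(t)/c_{i,k}(t)$. Summing over pairs with $e_{i,k}=e$ and applying \eqref{eq:comm_cap} yields the second condition.

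\textbf{Sufficiency.} Given the two conditions, we construct a feasible point explicitly. Set $\eta_{i,k}(t)=\eta_{i,k}^{\min}$, $s_{i,k}^{\cmp}(t)=\tau_{i,k}R_k(t)$, and $s_{i,k}^{\com}(t)=a_{i,k}\eta_{i,k}^{\min}R_k(t)/c_{i,k}(t)$. Constraints \eqref{cons:het-comp-lat} and \eqref{cons:het-comm-lat} then hold with equality. The capacity constraints \eqref{eq:comp_cap} and \eqref{eq:comm_cap} are exactly the hypothesized conditions. Each share is at most $1$, because it is a single nonnegative term of a sum bounded by $1$. The domain constraint for $\boldsymbol{\eta}$ is immediate.

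\textbf{Main obstacle.} The only delicate point is the open lower bound $s\in(0,1]$ in \eqref{cons:het-s-domain}. If some $\tau_{i,k}$, $a_{i,k}$, or $R_k(t)$ is zero, the constructed share is $0$, which lies outside the domain. We would fix this by adding slack. When a node or edge sum is strictly below $1$, we raise its zero shares to a small $\epsilon>0$, chosen so the budget is not exceeded; raising a share only loosens \eqref{cons:het-comp-lat} and \eqref{cons:het-comm-lat}. When a sum equals $1$ exactly while some required term is zero, no slack is available. In that case we would either invoke the implicit standing assumption that $\tau_{i,k},a_{i,k},R_k(t)>0$, or borrow $\epsilon$ from a positive share. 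The latter violates that share's latency constraint, so the positivity assumption is the intended resolution. We would state it explicitly and note that the equivalence holds under it, which is also what makes the rewritten lower-bound constraints exact.
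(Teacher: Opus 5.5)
Your proposal is correct and follows the paper's proof essentially verbatim. For necessity, you rearrange the latency constraints into lower bounds on the shares (using $\eta_{i,k}(t)\ge\eta_{i,k}^{\min}$ for the edges) and sum them against the capacity budgets; for sufficiency, you use the same explicit point $\eta_{i,k}=\eta_{i,k}^{\min}$, $s_{i,k}^{\cmp}=\tau_{i,k}R_k(t)$, $s_{i,k}^{\com}=a_{i,k}\eta_{i,k}^{\min}R_k(t)/c_{i,k}(t)$.

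Your extra checks are points the paper's proof passes over silently. You verify that each constructed share is at most $1$. You also observe that the open domain $s\in(0,1]$ requires $\tau_{i,k}$, $a_{i,k}\eta_{i,k}^{\min}$, $R_k(t)$ and $c_{i,k}(t)$ to be positive, because a tight budget shared with a zero-demand stage makes the ``if'' direction fail. Stating that positivity assumption explicitly is therefore a genuine refinement rather than a detour.
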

The proof is in \techreport{App.~\ref{proof:lem:multi-feasibility}.}{App.~B in \cite{techreport}.} Next, we show Problem~\ref{prob:multi-task} can be reduced to a convex optimization program when $A_k(\cdot)$ are  concave.
\begin{theorem}[Reduction to Convex Optimization]\label{thm:multi-convex}
Assume Problem~\ref{prob:multi-task} is feasible. If the accuracy functions $A_k(\cdot)$ are concave and non-decreasing, Problem~\ref{prob:multi-task} is equivalent to the following convex optimization problem over the communication fraction variables:
\begin{subequations} \label{prob:comm-main}
\begin{align}
 \mathop{\text{Maximize:}}_{\boldsymbol{s}^\com(t)} & \quad  \textstyle\sum_{k=1}^{K} w_k A_k\left( \boldsymbol{\eta}_{k}^*\left(\mathbf{s}^{\com}\left(t\right)\right)  \right) 
 \\[0.5ex]
  \text{subj.~to:} \quad
    & \textstyle \sum_{k \in \mathcal{K}(t)} \sum_{i: \, e_{i,k} = e} s_{i,k}^{\com}(t) \le 1, \quad \forall e \in \mathcal{E}, \label{cons:comm-sub-cap} \\
    & \textstyle s_{i,k}^{\com}(t) \ge \frac{a_{i,k} \eta_{i,k}^{\min} R_k(t)}{c_{i,k}(t)}, \enspace \forall k \in \mathcal{K}(t),  i \in [L_k\!-\!1], \label{cons:comm-sub-lb}
\end{align}
\end{subequations}
where, for all $k \in \mathcal{K}(t)$, the compression rates $\boldsymbol{\eta}_{k}(t)$ are given by:
\begin{align}
{\eta}_{i,k}^*(\mathbf{s}^{\com}(t) ) \triangleq  \textstyle\min\left(1, \frac{c_{i,k}(t)}{R_k(t) a_{i,k}} s_{i,k}^{\com}(t)\right), ~~i\in[L_k-1],
\end{align}
and the computation fractions are $s_{i,k}^{\cmp}(t) = \tau_{i,k} R_k(t)$,  for  $i\in[L_k].$
\end{theorem}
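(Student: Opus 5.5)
The plan is to split Problem~\ref{prob:multi-task} into its computation part and its communication/compression part, and to eliminate variables in two stages. The computation fractions $s^{\cmp}_{i,k}$ appear only in \eqref{cons:het-comp-lat} and \eqref{eq:comp_cap}. The objective does not depend on them, so they can be fixed independently. Constraint \eqref{cons:het-comp-lat} is equivalent to $s^{\cmp}_{i,k}(t)\ge \tau_{i,k}R_k(t)$. By the node condition of Lemma~\ref{lem:multi-feasibility}, setting $s^{\cmp}_{i,k}(t)=\tau_{i,k}R_k(t)$ satisfies \eqref{eq:comp_cap}. These values lie in $(0,1]$ because each summand of the feasibility sum is at most $1$; if some $\tau_{i,k}=0$, one takes any small positive value. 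This choice is therefore feasible and costs nothing, so the computation block drops out.

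Next I eliminate $\boldsymbol{\eta}$ for a fixed communication allocation $\mathbf{s}^{\com}$. Constraint \eqref{cons:het-comm-lat} reads $\eta_{i,k}(t)\le \frac{c_{i,k}(t)}{R_k(t)a_{i,k}}s^{\com}_{i,k}(t)$, and \eqref{cons:het-eta-domain} adds $\eta_{i,k}\in[\eta^{\min}_{i,k},1]$. For fixed $\mathbf{s}^{\com}$ the constraints on the $\eta_{i,k}$ are therefore decoupled boxes. A nonempty box requires $\eta^{\min}_{i,k}\le \frac{c_{i,k}}{R_k a_{i,k}}s^{\com}_{i,k}$, which is exactly \eqref{cons:comm-sub-lb}. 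Because each $A_k$ is non-decreasing in every coordinate, the coordinatewise upper endpoint $\eta^*_{i,k}(\mathbf{s}^{\com})=\min\bigl(1,\frac{c_{i,k}}{R_k a_{i,k}}s^{\com}_{i,k}\bigr)$ is optimal. This is the same argument as in Theorem~\ref{thm:single-convex}, applied per task.

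Combining the two stages, the optimal value of Problem~\ref{prob:multi-task} equals the supremum over $\mathbf{s}^{\com}$ satisfying \eqref{eq:comm_cap}, \eqref{cons:comm-sub-lb} and $s^{\com}\in(0,1]$ of $\sum_k w_kA_k(\boldsymbol{\eta}^*_k(\mathbf{s}^{\com}))$. Two bookkeeping points close the equivalence. First, the upper bound $s^{\com}\le 1$ is implied by \eqref{cons:comm-sub-cap} together with nonnegativity. Second, the lower bound in \eqref{cons:comm-sub-lb} is strictly positive whenever $\eta^{\min}_{i,k}>0$; otherwise the open endpoint only affects a boundary of measure zero and, by continuity, does not change the optimum. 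Optimal solutions then map back and forth between the two problems via $\boldsymbol{\eta}^*$ and the fixed $s^{\cmp}$, which gives the claimed equivalence.

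Finally, I must establish convexity, which I expect to be the main step. The feasible set of \eqref{prob:comm-main} is a polytope, and it is nonempty by the edge condition of Lemma~\ref{lem:multi-feasibility}. For the objective, each map $s\mapsto \min\bigl(1,\alpha s\bigr)$ with $\alpha>0$ is concave and non-decreasing, so $\boldsymbol{\eta}^*_k$ is a coordinatewise concave, non-decreasing map. The composition rule states that a concave function that is non-decreasing in each argument, composed with concave arguments, is concave. Hence each $A_k\circ\boldsymbol{\eta}^*_k$ is concave, and the nonnegative weighted sum is concave too. This uses monotonicity in an essential way, since without it the composition can fail to be concave. One subtlety needs checking: $A_k$ is defined only on the box $\prod[\eta^{\min}_{i,k},1]$, but \eqref{cons:comm-sub-lb} guarantees that $\boldsymbol{\eta}^*_k$ stays in this box, and the box is convex. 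The result is the maximization of a concave function over a polytope, which is a convex program.
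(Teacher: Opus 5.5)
Your proposal is correct and follows essentially the same route as the paper's proof. You fix $s^{\cmp}_{i,k}(t)=\tau_{i,k}R_k(t)$ via Lemma~\ref{lem:multi-feasibility}, use monotonicity of $A_k$ to push each $\eta_{i,k}$ to the upper endpoint $\min\bigl(1,\frac{c_{i,k}(t)}{R_k(t)a_{i,k}}s^{\com}_{i,k}(t)\bigr)$, and obtain concavity by composing the concave, non-decreasing $A_k$ with the concave map $\boldsymbol{\eta}^*_k$; your version is slightly more careful than the paper's, since you derive \eqref{cons:comm-sub-lb} as the nonemptiness condition of the $\eta$-box and check that $\boldsymbol{\eta}^*_k$ stays in the domain of $A_k$.
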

The proof is in \techreport{App.~\ref{proof:thm:multi-convex}.}{App.~C in \cite{techreport}.} Thm.~\ref{thm:multi-convex} implies that, if the accuracy-compression functions are concave and non-decreasing, we can compute the optimal solution to 
Problem~\ref{prob:multi-task} using standard convex optimization techniques. We verify this concavity premise empirically across models, tasks, and compression schemes in \techreport{App.~\ref{sec:concavity-validation}.}{App.~H in \cite{techreport}.} \CRedit{Moreover,} we show in \techreport{App.~\ref{app:water}}{App.~D in \cite{techreport}} that Prob.~\eqref{prob:comm-main} can be solved in polynomial time via a per-link water-filling
with a QoS floor strategy \cite{elgarhy2018waterfilling}.

\subsection{Estimated-CSI Design}
Next, we consider an online setting where the exact channel capacities $\boldsymbol{c}(t)$ are (a) drawn from a stationary random process (e.g., i.i.d.) and (b) not known at the time of decision-making. Instead, the controller has access to an estimate $\hat{\boldsymbol{c}}(t)$ based, e.g., on the transmission history of previous slots. The system must make decisions based on these estimates, while the actual pipeline delay is governed by the true realization $\boldsymbol{c}(t)$.

For each task $k \in \mathcal{K}(t)$, we define a delay function:
\begin{align}
\! \! D_k(\boldsymbol{\eta}_k(t), \boldsymbol{s}_k(t), \boldsymbol{c}(t))\!\triangleq  \! \max\! \big\{ \!\max_{i \in [L_k]} \!\tfrac{\tau_{i,k}}{s_{i,k}^\cmp(t)},\!\! \max_{i \in [L_k\!-\!1]}\!\! \tfrac{a_{i,k} \eta_{i,k}(t)}{c_{i,k}(t) s_{i,k}^\com(t)} \big\},
\end{align}
that measures the maximum delay across all pipeline computation and communication stages under the given parameters. We denote by $D_{\text{act},k}(t)\triangleq D_k(\boldsymbol{\eta}_k(t), \boldsymbol{s}_k(t), \boldsymbol{c}(t))$ the  delay under the actual bandwidth capacities at time $t$, and   $\hat{D}_k(t)\triangleq D_k(\boldsymbol{\eta}_k(t), \boldsymbol{s}_k(t), \hat{\boldsymbol{c}}(t))$ the delay estimate under capacity estimates $\hat{\boldsymbol{c}}(t)$.
%
We introduce the following additional assumptions on our problem setup:

\noindent\textbf{Assumption 1 (Bounded and Concave Objective):}\label{assum:bounded_objective} The accuracy objective function $A_k(\boldsymbol{\eta}_k)$ for any task $k$ is concave w.r.t.~$\boldsymbol{\eta}_k$ and bounded, i.e., that $A_k(\boldsymbol{\eta}_k) \in [A_k^{\min}, A_k^{\max}]$ for all feasible $\boldsymbol{\eta}_k$. Empirically, this holds in the informative regime $\boldsymbol{\eta}_k\ge\boldsymbol{\eta}_k^{\min}$ for both top-$k$ and LLM.int8 compression (\techreport{App.~\ref{sec:concavity-validation}}{App.~H in \cite{techreport}}).

\noindent\textbf{Assumption 2 (Bounded Second Moment of Constraint Violation):}\label{assum:bounded_variance} For any time slot $t$ and task $k$, the expected squared constraint violation is bounded, i.e., there exists a $\xi^2$ s.t.~for all $ k \in [K]$ and $  t \ge 1$ we have
$ \mathbb{E}[\big(D_{\text{act},k}(t) - \frac{1}{R_k(t)}\big)^2] \le \xi^2. $ 

\noindent\textbf{Assumption 3 (Estimation Error Characteristics):} For any time slot $t$ and task $k \in [K]$, let $\Delta_k(t) = D_{\text{act},k}(t) - \hat{D}_k(t)$ denote the estimation error. 
Let $\mathcal{F}(t)$ denote the system history (filtration) up to the beginning of time slot $t$, encompassing all past channel realizations and algorithmic decisions.
Conditioned on the filtration $\mathcal{F}(t)$,  the expected estimation error is bounded; i.e., there exists a constant $\delta>0$ s.t.:
  $  \mathbb{E}[\Delta_k(t) \mid \mathcal{F}(t)] \le \delta, $ for all $k \in [K]$ and   $t \ge 1.$
Additionally,  the expected positive part of the error is also bounded, i.e., there exists a constant $\delta^+>0$ s.t.
   $ \mathbb{E}[\max\{0, \Delta_k(t)\} \mid \mathcal{F}(t)] \le \delta^+, $ for all $ k \in [K] $ and $t \ge 1.$

\noindent\textbf{Assumption 4 (Robust System Feasibility):}\label{assum:robust_feasibility} There exists a valid resource allocation sequence $\{\boldsymbol{s}_k(t)\}_{t\ge1}$ such that the expected baseline bottleneck delay for any task $k$, evaluated at its maximum compression configuration $\boldsymbol{\eta}_k^{\min}$, satisfies its throughput constraint:
 $   \mathbb{E}[D_k(\boldsymbol{\eta}_k^{\min}, \boldsymbol{s}_k(t), \boldsymbol{c}(t))] \le \tfrac{1}{R_k(t)} - \gamma, $ for all $ k \in [K],$ $ t \ge 1,$
for some positive margin $\gamma > 0$. We assume this feasibility margin dominates the worst-case channel estimation error, i.e.,  $\gamma > \delta^+$.

\subsubsection{CSI-Oblivious -- Single-Task}


Dropping the task index $k$ for the single-task setting, let $\hat{\boldsymbol{c}}(t)$ denote the estimated channel capacity available at the beginning of slot $t$; the resource allocation vector  can be trivially fixed to a vector of ones (i.e., $\boldsymbol{s}(t)=\boldsymbol{1}$), as the task consumes all computational and bandwidth resources. 
We aim to maximize the time average of accuracy subject to the time average delay constraint being satisfied under the true channel realizations:
\begin{problem}[No-CSI Single-Task]\label{prob:online-est}
\begin{subequations}
\begin{align}
 \max_{\{\boldsymbol{\eta}(t)\}_{t\ge 1}} & \quad \textstyle \frac{1}{T} \sum_{t=1}^T \mathbb{E}[A(\boldsymbol{\eta}(t))] 
 \label{prob:online-main}\\[0.5ex]
  \text{subj.~to:} \quad
    &  \textstyle\frac{1}{T} \sum_{t=1}^T \mathbb{E}\left[D_{\text{act}}(t) - \frac{1}{R(t)}\right] \le 0, \\
    &\boldsymbol{\eta}(t) \in  \textstyle\prod_{i=1}^{L-1} [\eta_i^{\min}, 1], \quad \forall t \ge 1.
\end{align}
\end{subequations}
\end{problem}

\begin{algorithm}[!t]
\caption{Single-Task Estimated Stochastic Dual Descent}\label{alg:NoCSI-SingleTask}
\footnotesize
\begin{algorithmic}[1]
\State \textbf{Input:} Tradeoff parameter $\mu > 0$, QoS rates $R(t)$, dual lower bound $\epsilon > 0$.
\State \textbf{Initialization:} Set the scalar Lagrange multiplier $\lambda_1 = \epsilon$.
\For{time slot $t = 1, 2, \dots$}
    \State \textbf{Estimation:} Obtain the channel estimate vector $\hat{\mathbf{c}}(t) = [\hat{c}_1(t), \dots, \hat{c}_{L-1}(t)]$.
    \State \textbf{Primal Step (Compression Optimization):} Select $\boldsymbol{\eta}(t)$ to maximize the partial Lagrangian using the \textbf{estimated} delay cost:
    \Statex \quad $\boldsymbol{\eta}(t) = \arg \max_{\boldsymbol{\eta} \in \prod_{i=1}^{L-1} [\eta_i^{\min}, 1]} \left( A(\boldsymbol{\eta}) - \mu \cdot \lambda_t \cdot \hat{D}(\boldsymbol{\eta}, \boldsymbol{1}, \hat{\mathbf{c}}(t)) \right)$.
    \Statex \quad \textit{Note: This is a convex optimization problem.}
    \State \textbf{Execute:} Run the pipeline with configuration $\boldsymbol{\eta}(t)$. The system experiences the \textbf{true} bottleneck delay $D_{\text{act}}(t) = D(\boldsymbol{\eta}(t), \mathbf{c}(t))$.
    \State \textbf{Dual Step (Multiplier Update):} Update the Lagrange multiplier based on the \textbf{actual} realized constraint violation:
    \Statex \quad $\lambda_{t+1} = \max\{\epsilon,  \lambda_t + D_{\text{act}}(t) - 1/R(t) \}$.
\EndFor
\end{algorithmic}
\end{algorithm}

We solve this problem using a primal-dual algorithm (Alg.~\ref{alg:NoCSI-SingleTask}). In short, compression rates can be set by optimizing a Lagrangian using the estimated delay; under Assumption \hyperref[assum:bounded_objective]{1}, the corresponding problem is convex and thereby tractable. Moreover, under assumptions we have made on the quality of the channel estimator, we prove in \techreport{App.~\ref{proof:thm:performance_bounds}}{App.~E in \cite{techreport}} that the following theorem holds:
\sloppy
\begin{theorem}[No-CSI Single Task Performance Bounds]
\label{thm:performance_bounds}
Under Assumptions \hyperref[assum:bounded_objective]{1}--\hyperref[assum:robust_feasibility]{4}, Alg.~\ref{alg:NoCSI-SingleTask} yields the following performance guarantees:
\textbf{Constraint Satisfaction:} The long-term average expected delay strictly satisfies the target threshold, i.e.:
   $     \lim_{T \to \infty} \frac{1}{T} \sum_{t=1}^T \mathbb{E}[D_\text{act}(t) - \tfrac{1}{R(t)}] \le 0.$
 \textbf{Dual Variable Stability Bound:} The long-term average expected dual variable is bounded by
    $    \lim_{T \to \infty} \frac{1}{T} \sum_{t=1}^T \mathbb{E}[\lambda_t] \le \tfrac{A_{\max} - A_{\min}}{\mu(\gamma - \delta^+)} + \tfrac{\xi^2 + \epsilon^2}{2(\gamma - \delta^+)}.$
    \textbf{Optimality Gap:} Let $B_{\text{dual}}$ be the above dual variable stability bound. Then, 
       $ \lim_{T \to \infty} \frac{1}{T} \sum_{t=1}^T \mathbb{E}[A(\boldsymbol{\eta}(t))] \ge A^* - \tfrac{\mu(\xi^2 + \epsilon^2)}{2} - \mu B_{\text{dual}}(\delta - \delta^*),$
where $A^*$ is the optimal  value over  stationary algorithms that can satisfy the  constraints in Prob.~\ref{prob:online-est}, and
$\delta^*$ is the expected estimation error under the  optimal stationary policy.
\end{theorem}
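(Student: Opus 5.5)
We will use the standard drift-plus-penalty (Lyapunov) argument applied to the virtual queue $\lambda_t$. Take $\Phi(\lambda)=\tfrac12\lambda^2$ and write $g_t \triangleq D_{\text{act}}(t)-1/R(t)$.

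\textbf{One-step drift.} The dual update is a projection onto $[\epsilon,\infty)$. We would first bound the drift without the projection. Whenever the max is attained at $\epsilon$ we have $\lambda_t+g_t\le\epsilon$. This gives $\lambda_{t+1}\le \lambda_t+g_t+\epsilon\cdot\mathbb{1}$, and a short case analysis yields
\[
\tfrac12\lambda_{t+1}^2-\tfrac12\lambda_t^2 \le \lambda_t g_t + \tfrac12(g_t^2+\epsilon^2).
\]
Taking conditional expectations and using Assumption 2 gives $\mathbb{E}[\Delta\Phi_t\mid\mathcal F(t)]\le \lambda_t\mathbb{E}[g_t\mid\mathcal F(t)]+\tfrac12(\xi^2+\epsilon^2)$. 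We then add the penalty $-\tfrac1\mu A(\boldsymbol\eta(t))$, i.e.\ multiply the drift by $\mu$.

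\textbf{Decomposition of the drift term.} Split $g_t=(\hat D(t)-1/R)+\Delta(t)$. The primal step maximizes $A(\boldsymbol\eta)-\mu\lambda_t\hat D(\boldsymbol\eta)$ with $\lambda_t$ measurable in $\mathcal F(t)$. Hence the algorithm's value of this Lagrangian dominates that of any comparison policy evaluated at the same estimate. We use two comparison policies.

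(a) For the dual bound, compare with $\boldsymbol\eta^{\min}$. The estimated delay of this policy is at most its actual delay minus its error, and its expected actual delay is at most $1/R-\gamma$ by Assumption 4. Bounding the algorithm's own error term by $\mathbb{E}[\max\{0,\Delta\}]\le\delta^+$ (Assumption 3) gives
\[
\mu\,\mathbb{E}[\Delta\Phi]\le A_{\max}-A_{\min} - \mu(\gamma-\delta^+)\mathbb{E}[\lambda_t]+\tfrac{\mu}{2}(\xi^2+\epsilon^2).
\]
Telescoping over $t$, dividing by $T$, and using $\Phi\ge0$ together with a finite $\Phi(\lambda_1)$ produces exactly the stated $B_{\text{dual}}$.

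(b) For optimality, compare with the optimal stationary policy $\boldsymbol\eta^*$, which is i.i.d.\ and independent of $\mathcal F(t)$. Its expected actual violation is at most $0$, and its estimation error has mean $\delta^*$. This comparison produces the cross term $\lambda_t(\mathbb{E}[\Delta(t)\mid\mathcal F(t)]-\delta^*)\le\lambda_t(\delta-\delta^*)$. After telescoping, averaging, and inserting $B_{\text{dual}}$ for $\frac1T\sum\mathbb{E}\lambda_t$, we obtain the stated optimality-gap inequality.

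\textbf{Constraint satisfaction.} This is the standard virtual-queue argument. From $\lambda_{t+1}\ge\lambda_t+g_t$ we get $\sum_{t\le T} g_t\le\lambda_{T+1}-\lambda_1$. It remains to show $\mathbb{E}[\lambda_{T+1}]/T\to0$. For this we use $\mathbb{E}[\lambda_{T+1}^2]=O(T)$, which follows from the telescoped drift bound in (a) because the right-hand side there is $O(1)$ per step once the negative $\lambda$ term is dropped. Jensen's inequality then gives $\mathbb{E}\lambda_{T+1}=O(\sqrt T)=o(T)$.

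\textbf{Main obstacle.} The delicate step is (b). The error $\Delta(t)$ is policy-dependent, so the algorithm's error and the optimal policy's error differ. One must carefully apply the conditional bound $\delta$ to the algorithm's own error and the mean $\delta^*$ to the comparison policy, while keeping the sign of $\lambda_t\ge\epsilon>0$ so that the inequalities hold. The lim statements should also be read as $\limsup$/$\liminf$.
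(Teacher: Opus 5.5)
Your architecture matches the paper's: a quadratic Lyapunov function on $\lambda_t$, drift-plus-penalty, comparison with $\boldsymbol{\eta}^{\min}$ for the multiplier bound, and comparison with the stationary optimum for the gap. However, step (a) has a genuine gap, and it is the policy-dependence of the error that you flag only for (b).

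Once the primal step lets you replace the algorithm's estimated Lagrangian by that of $\boldsymbol{\eta}^{\min}$, you write $\hat{D}(\boldsymbol{\eta}^{\min},\hat{\mathbf{c}}(t)) = D(\boldsymbol{\eta}^{\min},\mathbf{c}(t)) - \Delta(\boldsymbol{\eta}^{\min})$. The drift then contains $\mu\lambda_t\bigl(\Delta(\boldsymbol{\eta}(t)) - \Delta(\boldsymbol{\eta}^{\min})\bigr)$, not just $\mu\lambda_t\Delta(\boldsymbol{\eta}(t))$. You bound the algorithm's own error by its positive part and then let $-\mu\lambda_t\Delta(\boldsymbol{\eta}^{\min})$ disappear. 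Nothing in Assumptions 1--4 controls that term. Assumption 3 bounds errors only from above, and whenever the estimator overestimates the delay at $\boldsymbol{\eta}^{\min}$ (the LCB regime used in the experiments), the term is positive with no stated bound. In (b) the analogous term is absorbed by the definition of $\delta^*$, but the statement has no such constant for $\boldsymbol{\eta}^{\min}$. So you need an extra structural fact before Assumption 3 can produce the factor $\gamma-\delta^+$.

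The paper supplies it by a case analysis on which stage is the bottleneck. Using $\boldsymbol{\eta}(t)\ge\boldsymbol{\eta}^{\min}$ and the monotone max-structure of $D$ and $\hat{D}$, it asserts $\Delta(\boldsymbol{\eta}(t)) - \Delta(\boldsymbol{\eta}^{\min}) \le \max\{0,\Delta(\boldsymbol{\eta}(t))\}$. Do not simply import that lemma. It holds for a single hop, for an estimator that never overestimates delay, or when $c_i(t)/\hat{c}_i(t)$ is the same on all hops. It can fail when hops are misestimated by different factors, even with $\hat{c}_i(t)\le c_i(t)$ everywhere. For example, take two hops with $a_i=1$, negligible compute, $\mathbf{c}=(10,10)$, $\hat{\mathbf{c}}=(1,10)$, $\boldsymbol{\eta}^{\min}=(0.1,0.1)$ and $\boldsymbol{\eta}(t)=(0.1,1)$. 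The difference is then $0.09$ while $\Delta(\boldsymbol{\eta}(t))=0$. State the fact as a lemma under such a hypothesis, or as an explicit extra assumption.

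Your constraint-satisfaction argument is more rigorous than the paper's. The paper asserts that $\mathbb{E}[\lambda_t]$ is uniformly bounded, although its multiplier argument bounds only the time average. Your route, $\mathbb{E}[\lambda_{T+1}^2]=O(T)$ plus Jensen, correctly gives $\mathbb{E}[\lambda_{T+1}]/T\to 0$. It does, however, use the per-step inequality from (a), so it inherits the same gap.

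Three smaller points:
\begin{itemize}
\item Inserting $B_{\text{dual}}$ for $\frac{1}{T}\sum_t\mathbb{E}[\lambda_t]$ in (b) requires $\delta\ge\delta^*$, since otherwise the inequality points the wrong way. This holds, for instance, if Assumption 3 is read as covering every admissible decision.
\item The intermediate claim $\lambda_{t+1}\le\lambda_t+g_t+\epsilon$ is false when $\lambda_t+g_t<0$. Use $(\max\{\epsilon,x\})^2\le x^2+\epsilon^2$ directly.
\item Assumption 2 is unconditional, so apply it only after taking full expectations, not inside $\mathbb{E}[\,\cdot\mid\mathcal{F}(t)]$.
\end{itemize}
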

\fussy

The performance bounds established above reveal a fundamental $[\mathcal{O}(1/\mu), \mathcal{O}(\mu)]$ trade-off governed by the control parameter $\mu$. This parameter serves as the primary mechanism for balancing application utility (inference accuracy) against dual variable stability (delay constraint satisfaction). As $\mu$ increases, the algorithm places a heavier penalizing weight on the delay constraint violations, ensuring stricter compliance; mathematically, the upper bound on the time-averaged dual variables---which  track the cumulative delay violations---shrinks proportionally to $O(1/\mu)$. However, this  focus on delay constraints comes at the  expense of application performance: a larger $\mu$ widens the optimality gap proportionally to $O(\mu)$, leading to lower overall inference accuracy. Conversely, a smaller $\mu$ prioritizes maximizing the accuracy objective, but results in a looser upper bound on the accumulated delay violations. We discuss how to select $\mu$ from problem parameters in \techreport{App.~\ref{sec:mu_selection}.}{App.~G in \cite{techreport}.}
\CRedit{Note that constraint satisfaction  here is time-averaged, so transient violations may occur: this is inherent to acting without CSI.}


\subsubsection{Estimated-CSI -- Multi-Task}
Next, we extend the online setting to  multiple concurrent tasks. Recall that, in this environment, the controller faces a coupled optimization challenge: it must jointly set compression rates while dynamically partitioning shared computational and communication resources, 
while relying entirely on the channel estimates $\hat{\boldsymbol{c}}(t)$.
Let $\mathcal{T}_k(T)\triangleq \{t\leq T: k\in\mathcal{K}(t) \}$ be the set of time slots up to horizon $T$ during which task $k$ is active.
 Our objective is again to maximize the weighted sum of time-averaged accuracies across all tasks, strictly subject to individual time-averaged delay constraints: 
\begin{problem}[No-CSI for Multiple Tasks]\label{prob:online-multitask}
\begin{align}
  \!\!\!\! \max_{\substack{\{\boldsymbol{\eta}(t)\}_{t \ge 1}, \\ \{\mathbf{s}^{\com}(t), \mathbf{s}^{\cmp}(t)\}_{t \ge 1}}} \!\!\!& \frac{1}{T} \sum_{t=1}^T \sum_{k \in \mathcal{K}(t)} w_k \mathbb{E}\Big[A_k(\boldsymbol{\eta}_k(t))\Big] \label{eq:online-multi-obj} \\
  \text{subj.~to:} \quad& 
  \tfrac{1}{|\mathcal{T}_k(T)|} \!\!\!\sum_{t \in \mathcal{T}_k(T)} \!\!\!\!\mathbb{E}\big[D_{\text{act},k}(t) \!-\! \tfrac{1}{R_k(t)}\big] \le 0, \: \forall k \in [K], \label{cons:online-multi-delay} \\
&\text{Constraints \eqref{eq:comp_cap}, \eqref{eq:comm_cap}, \eqref{cons:het-eta-domain} and \eqref{cons:het-s-domain}, }~\forall t \ge 1. \nonumber
\end{align}
\end{problem}


\begin{algorithm}[!t]
\caption{Multi-Task Estimated Stochastic Dual Descent} \label{alg:NoCSI-MultiTask}
\footnotesize
\begin{algorithmic}[1]
\State \textbf{Input:} Tradeoff parameter $\mu > 0$, task weights $w_k$, QoS rates $R_k(t)$, max BCD iterations $J$, dual lower bound $\epsilon > 0$.
\State \textbf{Initialization:} Set the dual variables $\lambda_k(1) = \epsilon$ for all $k \in [K]$.
\For{time slot $t = 1, 2, \dots$}
    \State \textbf{Estimation:} Obtain the channel estimate vector $\hat{\mathbf{c}}(t) = [\hat{c}_e(t)]_{e \in \mathcal{E}}$.
    \State \textbf{Primal Step (Resource and Compression Optimization):} Select $\boldsymbol{\eta}(t)$ and $\mathbf{s}(t)$ for active tasks $\mathcal{K}(t)$ to maximize the partial Lagrangian using the \textbf{estimated} delay cost:
    \Statex \quad $\max \sum_{k \in \mathcal{K}(t)} \left[ w_k A_k(\boldsymbol{\eta}_k) - \mu \cdot \lambda_k(t) \cdot \hat{D}_k(\boldsymbol{\eta}_k, \mathbf{s}_k, \hat{\mathbf{c}}(t)) \right]$
    \Statex \quad $\text{subj.~to:} \quad \sum_{k \in \mathcal{K}(t)} \sum_{i: \, e_{i,k} = e} s_{i,k}^{\com} \le 1, \quad \forall e \in \mathcal{E},$
    \Statex \quad $\phantom{\text{subj.~to:}} \quad \sum_{k \in \mathcal{K}(t)} \sum_{i: \, v_{i,k} = v} s_{i,k}^{\cmp} \le 1, \quad \forall v \in \mathcal{V}.$
    \Statex \quad \textit{Optimization Method (Block Coordinate Descent):} For iteration $j = 1, \dots, J$:
    \Statex \quad \quad \textit{Phase A (Resource Optimization):} Fix $\boldsymbol{\eta} = \boldsymbol{\eta}^{(j-1)}$. Update $\mathbf{s}^{(j)}$ by solving the resulting convex sub-problem.
    \Statex \quad \quad \textit{Phase B (Compression Optimization):} Fix $\mathbf{s} = \mathbf{s}^{(j)}$. Update $\boldsymbol{\eta}^{(j)}$ by solving the resulting concave sub-problem.
    \Statex \quad Assign final operational decisions $\boldsymbol{\eta}(t) = \boldsymbol{\eta}^{(J)}$ and $\mathbf{s}(t) = \mathbf{s}^{(J)}$.
    \State \textbf{Execute:} Run active tasks with assigned configurations. The system experiences the \textbf{true} bottleneck delays $D_{\text{act},k}(t) = D_k(\boldsymbol{\eta}_k(t), \mathbf{s}_k(t), \mathbf{c}(t))$ for all $k \in \mathcal{K}(t)$.
    \State \textbf{Dual Step (Multiplier Update):} Update the Lagrange multipliers based on the \textbf{actual} realized constraint violations, freezing inactive tasks:
    \Statex \quad $\lambda_k(t+1) = \begin{cases} 
    \max\left\{\epsilon, \, \lambda_k(t) + D_{\text{act},k}(t) - \frac{1}{R_k(t)} \right\}, & \text{if } k \in \mathcal{K}(t) \\
    \lambda_k(t), & \text{otherwise}
    \end{cases}$
\EndFor
\end{algorithmic}
\end{algorithm}

Our algorithm for this problem is described in Alg.~\ref{alg:NoCSI-MultiTask}. The joint optimization of resource allocation ($s^{\cmp}, s^{\com}$) and compression factors ($\eta$) is inherently non-convex due to the coupling in the delay term $\eta/s^{com}$. To maintain tractability, we employ a Block Coordinate Descent (BCD) strategy, which iteratively fixes one set of variables to solve the resulting convex sub-problem.
Notice that in the primal optimization step (line 5), the dual variables $\lambda_k$ act as the weights for bottleneck delays. If an active task were to have $\lambda_k = 0$, the optimizer would allocate it close to zero resources, effectively starving it. To prevent this, we introduce $\epsilon > 0$ as a mandatory floor for the dual variables. 
To provide guarantees for this algorithm, we need the following additional assumption regarding the stationary point found by the BCD process:

\vspace{1ex}
\noindent\textbf{Assumption 5 (Primal Optimization Gap):}
\label{assum:NonconvexGap}
The expected additive gap between the stationary point solution found by the block coordinate descent and the true global maximum of the partial Lagrangian in Alg. \ref{alg:NoCSI-MultiTask} is bounded by $C_{\text{gap}}$.


\sloppy
\begin{theorem}[No-CSI Multi Task Performance Bounds]
\label{thm:multitask_performance_bounds}
Under Assumptions \hyperref[assum:bounded_objective]{1}--\hyperref[assum:NonconvexGap]{5}, Alg.~\ref{alg:NoCSI-MultiTask} yields the following performance guarantees:
     \textbf{Constraint Satisfaction:} The long-term average expected delay strictly satisfies the target threshold during the time slots when the task is active, for all tasks $k \in [K]$,
     $   \lim_{|\mathcal{T}_k(T)| \to \infty} \textstyle \frac{1}{|\mathcal{T}_k(T)|} \sum_{t \in \mathcal{T}_k(T)} \mathbb{E}[D_{\text{act},k}(t) - \tfrac{1}{R_k(t)}] \le 0.$
    \textbf{Dual Variable Stability Bound:} The sum of the average expected active dual variables across all tasks is bounded as:
        $\lim_{T \to \infty} \frac{1}{T} \sum_{t=1}^T \sum_{k \in \mathcal{K}(t)} \mathbb{E}[\lambda_k(t)] \le 
        \tfrac{\sum_{k=1}^K \rho_k w_k (A_{k,\max} - A_{k,\min}) + C_{\text{gap}}}{\mu(\gamma - \delta^+)} 
        + \tfrac{\sum_{k=1}^K \rho_k (\xi^2 + \epsilon^2)}{2(\gamma - \delta^+)}.$
\textbf{Optimality Gap:} Let $B_{\text{dual}}$ be the above dual variable stability bound. Then,
$\lim_{T\to\infty} \frac{1}{T} \sum_{t=1}^T \sum_{k \in \mathcal{K}(t)} w_k \mathbb{E}[A_k(\boldsymbol{\eta}_k(t))] \geq A^* - \frac{\mu}{2} \sum_{k=1}^K \rho_k (\xi^2 + \epsilon^2) 
- C_{\text{gap}} - \mu B_{\text{dual}} (\delta - \delta^*),$
where $\rho_k \triangleq \lim_{T \to \infty} \frac{|\mathcal{T}_k(T)|}{T}$. $A^*$ is the optimal objective across stationary algorithms that  satisfy the  constraints in Prob.~\ref{prob:online-multitask}, and $\delta^*$ is the expected estimation error under the optimal stationary policy.
\end{theorem}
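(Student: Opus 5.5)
We prove Theorem~\ref{thm:multitask_performance_bounds} with a drift-plus-penalty argument. It mirrors the single-task analysis of Theorem~\ref{thm:performance_bounds}, with three changes: the Lyapunov function becomes a sum over tasks, inactive multipliers are frozen, and Assumption~\hyperref[assum:NonconvexGap]{5} absorbs the suboptimality of the BCD primal step.

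\textbf{Drift bound.} Define $\Lambda(t)=\tfrac12\sum_{k=1}^K \lambda_k(t)^2$. For an active task $k$, the dual update $\lambda_k(t+1)=\max\{\epsilon,\lambda_k(t)+y_k(t)\}$ with $y_k(t)=D_{\text{act},k}(t)-1/R_k(t)$ gives
\[
\lambda_k(t+1)^2\le \lambda_k(t)^2+2\lambda_k(t)y_k(t)+y_k(t)^2+\epsilon^2 .
\]
This follows by considering separately the two cases of the projection onto the floor $\epsilon$. Inactive tasks contribute zero drift. Taking conditional expectations and using Assumption~\hyperref[assum:bounded_variance]{2}, we get
\[
\mathbb{E}[\Lambda(t+1)-\Lambda(t)\mid\mathcal{F}(t)]\le \sum_{k\in\mathcal{K}(t)}\Big(\tfrac{\xi^2+\epsilon^2}{2}+\lambda_k(t)\,\mathbb{E}[y_k(t)\mid\mathcal{F}(t)]\Big).
\]
We then split $D_{\text{act},k}=\hat D_k+\Delta_k$. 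Multiplying by $\mu$ and subtracting the utility $\sum_{k\in\mathcal{K}(t)} w_kA_k(\boldsymbol{\eta}_k(t))$, the part containing $\hat D_k$ is exactly the negated partial Lagrangian that line~5 of Alg.~\ref{alg:NoCSI-MultiTask} maximizes. By Assumption~\hyperref[assum:NonconvexGap]{5}, the BCD output is within $C_{\text{gap}}$ of the global maximizer. We can therefore replace the BCD decisions by those of any comparison policy, paying an additive $C_{\text{gap}}$.

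\textbf{Dual stability and constraint satisfaction.} For the comparison policy we take the robust allocation of Assumption~\hyperref[assum:robust_feasibility]{4} with $\boldsymbol{\eta}_k=\boldsymbol{\eta}_k^{\min}$. Because this policy is independent of the current estimate, its estimated delay is at most the actual delay minus $\Delta_k$. Using $\mathbb{E}[-\Delta_k\mid\mathcal{F}]$ together with the positive part of the error of the algorithm's own choice, bounded by $\delta^+$, yields an expected slack of at least $\gamma-\delta^+$. This step is the delicate one: the error terms of the algorithm's decision and of the comparison decision must be handled so that only $\delta^+$ survives. We do this exactly as in the single-task proof, bounding $\lambda_k\Delta_k\le\lambda_k\max\{0,\Delta_k\}$ on the algorithm side.

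The accuracy difference is bounded by $\sum_k w_k(A_{k,\max}-A_{k,\min})$. Telescoping over $t=1,\dots,T$, dividing by $\mu(\gamma-\delta^+)T$, and using that task $k$ is active a fraction $\rho_k$ of slots in the limit gives the stated bound $B_{\text{dual}}$ on $\tfrac1T\sum_t\sum_{k\in\mathcal{K}(t)}\mathbb{E}[\lambda_k(t)]$. For constraint satisfaction, the update gives $\lambda_k(t+1)\ge\lambda_k(t)+y_k(t)$ on active slots. Summing over $\mathcal{T}_k(T)$ gives
\[
\sum_{t\in\mathcal{T}_k(T)}\mathbb{E}[y_k(t)]\le\mathbb{E}[\lambda_k(T+1)]-\epsilon .
\]
From the drift inequality we also obtain $\mathbb{E}[\Lambda(T)]=O(T)$, hence $\mathbb{E}[\lambda_k(T)]=O(\sqrt T)$. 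Dividing by $|\mathcal{T}_k(T)|\to\infty$ then yields the limit $\le 0$, assuming $\rho_k>0$ so that $|\mathcal{T}_k(T)|=\Theta(T)$.

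\textbf{Optimality gap.} We now compare instead against the optimal stationary randomized policy achieving $A^*$. Since this policy depends only on the current state, its conditional expected constraint satisfies $\mathbb{E}[\hat D_k-1/R_k]\le -\mathbb{E}[\Delta_k^*]=-\delta^*$ in the sense of the definition of $\delta^*$. The residual cross terms are $\mu\lambda_k(\Delta_k-\Delta_k^*)$, with expectation at most $\mu\lambda_k(\delta-\delta^*)$. Telescoping the drift-plus-penalty inequality, using $\Lambda\ge0$, and invoking the dual bound on the time average of $\lambda$ to control $\mu\sum\lambda_k(\delta-\delta^*)$ by $\mu B_{\text{dual}}(\delta-\delta^*)$, we obtain
\[
\tfrac1T\sum_t\sum_{k\in\mathcal{K}(t)} w_k\mathbb{E}[A_k]\ge A^*-\tfrac{\mu}{2}\sum_k\rho_k(\xi^2+\epsilon^2)-C_{\text{gap}}-\mu B_{\text{dual}}(\delta-\delta^*)-\tfrac{\mu\mathbb{E}[\Lambda(1)]}{T},
\]
and the last term vanishes as $T\to\infty$.

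The main obstacle is the coupling between $\lambda_k(t)$ and the estimation errors under intermittent task activity. Pulling $\lambda_k(t)$ out of the conditional expectation requires it to be $\mathcal{F}(t)$-measurable, which holds by construction. Frozen multipliers must not contribute drift, which also holds since their drift is zero. Finally, the per-slot $C_{\text{gap}}$ must enter additively rather than scaled by $\lambda$; it does, because Assumption~\hyperref[assum:NonconvexGap]{5} bounds the gap in the $\mu$-weighted Lagrangian itself.
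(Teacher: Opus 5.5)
Your proposal follows the paper's proof essentially step for step. The shared ingredients are:
\begin{itemize}
\item the aggregate Lyapunov function $\tfrac12\sum_k\lambda_k^2$ with frozen inactive multipliers;
\item the same $\epsilon$-floor projection bound;
\item comparison against $(\boldsymbol{\eta}^{\min},\mathbf{s}^{\min})$, paying an additive $C_{\text{gap}}$, to obtain $B_{\text{dual}}$;
\item comparison against the optimal stationary policy, with the $(\delta-\delta^*)$ cross term, for the optimality gap.
\end{itemize}
The one real difference is constraint satisfaction. The paper asserts that $\mathbb{E}[\lambda_k(t)]$ is uniformly bounded, although it only proves a bound on a time average. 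Your route is the more careful one: bounded drift gives $\mathbb{E}[\Lambda(T)]=O(T)$, hence $\mathbb{E}[\lambda_k(T)]=O(\sqrt{T})$. The price is the hypothesis $\rho_k>0$, which you rightly make explicit.

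Your sketch is thinner than the paper at the step you yourself call delicate. After comparing with the baseline, the residual is $\lambda_k(t)\bigl(\Delta_k(t)-\Delta_k^{\min}(t)\bigr)$, where $\Delta_k^{\min}$ is the estimation error of the \emph{baseline's} delay. Bounding $\lambda_k\Delta_k\le\lambda_k\max\{0,\Delta_k\}$ handles only the algorithm's part. The term $-\lambda_k\Delta_k^{\min}$ is positive exactly when the estimator overestimates the baseline delay, which the LCB estimator is designed to do. None of Assumptions 1--5 bounds $\mathbb{E}[-\Delta_k^{\min}\mid\mathcal{F}(t)]$.

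The paper closes this with the pointwise claim $\Delta_k-\Delta_k^{\min}\le\max\{0,\Delta_k\}$, justified by a bottleneck case analysis, so you must at least invoke it explicitly. However, the claim does not follow from the model when per-link errors have different signs. Take two hops with $a_1=a_2=1$, negligible $\tau$, $c_1=1$, $\hat{c}_1=2$, $c_2=2$, $\hat{c}_2=1$, $\boldsymbol{\eta}^{\min}=(0.1,0.5)$ and $\boldsymbol{\eta}=(1,0.5)$. Then $\Delta^{\min}=-0.25$ and $\Delta=0.5$, so the claim fails. In the multi-task case, $\mathbf{s}_k(t)\neq\mathbf{s}_k^{\min}$ makes it even less automatic.

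One clean repair is to state Assumption 4 for the estimated baseline delay:
\[
\mathbb{E}[D_k(\boldsymbol{\eta}_k^{\min},\mathbf{s}_k^{\min},\hat{\mathbf{c}}(t))\mid\mathcal{F}(t)]\le 1/R_k(t)-\gamma .
\]
Then only the algorithm-side positive part survives, and your argument goes through verbatim.

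Separately, replacing the time-averaged multiplier by $B_{\text{dual}}$ in the optimality gap is valid only when $\delta\ge\delta^*$. Both you and the paper leave this condition implicit.
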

\fussy
The proof is in \techreport{App.~\ref{proof:thm:multitask_performance_bounds}.}{App.~F in \cite{techreport}.}  
As in the single-task case (Thm.~ \ref{thm:performance_bounds}), we observe the fundamental $[\mathcal{O}(1/\mu), \mathcal{O}(\mu)]$ trade-off, where the parameter $\mu$ balances aggregate accuracy against delay compliance. However, Theorem \ref{thm:multitask_performance_bounds} highlights the critical role of task-specific weights $w_k$ and activity frequencies $\rho_k$ in a shared-resource environment, normalizing the bounds by the duty cycle and priority of each task. The multi-task bounds explicitly capture the impact of the non-convex BCD algorithm; because the solver converges to a stationary point, the primal optimization gap $C_{\text{gap}}$ emerges in both the dual stability bound and the long-term optimality gap.

\subsection{Accuracy-Compression Function}\label{sec:acccompfun}
The accuracy function $A(\boldsymbol{\eta})$
 can be empirically evaluated on a held-out dataset $\mathcal{D}$
 of $N_{\text{eval}}$
 samples. For each candidate compression vector $\boldsymbol{\eta}$
, the intermediate activations of the distributed pipeline are compressed according to $\boldsymbol{\eta}$, passed to the next layer, and the resulting end-to-end inference quality (e.g., classification accuracy or perplexity) is measured on $\mathcal{D}$
. Because this evaluation involves discrete operations such as top-$k$ sparsification \cite{ming2024adtopk,bian2024does} or quantization~\cite{lin2024awq,shen2024agile}, the resulting function $A(\boldsymbol{\eta})$ may be non-smooth or even discontinuous. We explored two approaches to smooth this empirical evaluation. The first is \emph{fitting}: a smooth function such as a polynomial, a multi-layer perceptron, etc., can be regressed from input/output samples of the accuracy function. 
The fitted function can be readily used to produce gradient estimates $\nabla A(\boldsymbol{\eta})$  while the held-out dataset $\mathcal{D}$ is not needed during optimization. A drawback is that different fitted functions may introduce biases. An alternative approach is to use \emph{kernel-smoothing}, e.g., KNN with an inverse distance weighting kernel, a Gaussian kernel, etc. This is less prone to biases, but is  more computationally intensive, and requires repeated evaluations of  $A(\cdot)$  over the  held-out dataset $\mathcal{D}$. 

Interestingly, in the case of Gaussian kernel smoothing, one can produce a sampling estimator of the gradient $\nabla A(\boldsymbol{\eta})$ directly from randomized $A(\boldsymbol{\eta})$ function calls using Stein's lemma~\cite{stein1981estimation}.
Specifically, for a Gaussian perturbation $\mathbf{z} \sim \mathcal{N}(\mathbf{0}, \mathbf{I})$
 with smoothing bandwidth $\sigma > 0$
, the gradient of the Gaussian-convolved function $\tilde{A}(\boldsymbol{\eta}) = \mathbb{E}_{\mathbf{z}}[A(\boldsymbol{\eta} + \sigma \mathbf{z})]$
 satisfies:
$ \nabla_{\boldsymbol{\eta}} \tilde{A}(\boldsymbol{\eta}) = \frac{1}{\sigma} \mathbb{E}_{\mathbf{z}}\left[ A(\boldsymbol{\eta} + \sigma \mathbf{z}) \cdot \mathbf{z} \right],$ and this can be estimated \emph{directly from sampled function evaluations}, without the need for either integration (w.r.t.~the Gaussian kernel) or differentiation (w.r.t.~$\boldsymbol{\eta}$).
We  have implemented and explored both fitting and this kernel-smoothing via Stein's lemma in our experiments; we provide more details in \techreport{App.~\ref{app:acccomp},}{App.~H in \cite{techreport},} including quality of fit experiments for various ML tasks.

\section{Evaluation}
\label{sec:experiments}


\subsection{Experimental Setup}
\label{sec:expcompsch}

\begin{table}[!t]
\footnotesize
\setlength{\tabcolsep}{3pt}
  \caption{Inference tasks used in the evaluation. Each task is specified by a dataset-model pair and its utility metric is $A_k$.}
  \vspace{-10pt}
\label{tab:tasks}
\centering
\begin{tabular*}{\columnwidth}{@{\extracolsep{\fill}}lllll@{}}
\toprule
\textbf{Task ID} & \textbf{Dataset} & \textbf{Model} & \textbf{Metric ($A_k$)} & $\bm{\eta_{\min}}$ \\
\midrule
\multicolumn{5}{c}{\textit{Vision tasks}} \\
\midrule
\taskVisionOne & MNIST~\cite{mnist}        & MLP                              & Accuracy   & 0.125 \\
\taskVisionTwo & CIFAR-10~\cite{cifar10}   & ResNet-56~\cite{He2015resnet}    & Accuracy   & 0.125 \\
\midrule
\multicolumn{5}{c}{\textit{Language tasks}} \\
\midrule
\taskLangThree & \multirow{2}{*}{ShareGPT~\cite{sharegpt}}          & Gemma-1.1 2B~\cite{gemma1.1}        & Perplexity & 0.25 \\
\taskLangFour  &                                                     & Gemma-1.1 7B~\cite{gemma1.1}        & Perplexity & 0.25 \\[2pt]
\taskLangFive  & MMLU~\cite{hendrycks2021mmlu}                      & Llama-3.1 8B~\cite{llama3herd}  & Accuracy   & 0.25 \\
\taskLangSix   & WikiText~\cite{merity2016pointer}                  & Llama-3.1 8B~\cite{llama3herd}  & Perplexity & 0.25 \\[2pt]
\taskLangSeven & SST2~\cite{sst2}                                   & Flan-T5-Base~\cite{FlanT5chung2024scaling} & Accuracy & 0.125 \\
\bottomrule
\end{tabular*}
\vspace{-5pt}
\end{table}


\sloppy
\noindent\textbf{Inference Tasks and Compression Schemes.}
Vision and language tasks we study are summarized in  Table~\ref{tab:tasks}.  
To compute $A_k(\boldsymbol{\eta}_k)$, we use fitting and Stein kernel smoothing (see \techreport{App.~\ref{app:acccomp}}{App.~H in \cite{techreport}}) on 3 compression schemes: \emph{Top-$k$} sparsification (\cmpT{}), whereby only a fraction $\eta$ of the largest-magnitude activation coordinates are transmitted \cite{ming2024adtopk,bian2024does}; \emph{uniform quantization} (\cmpQ{}), which applies symmetric uniform quantization and maps activations to a reduced-precision fixed-point grid~\cite{lin2024awq,shen2024agile}; and \emph{LLM.int8} (\cmpIEight{}), a hybrid scheme that isolates the largest-magnitude values for higher-precision storage while quantizing the remainder at lower precision~\cite{dettmers2022llmint8}. In each scheme, $\eta\in[0,1]$ is this retained fraction and controls how aggressive the transform is. Additional details for tasks, datasets, and compression schemes are in \techreport{App.~\ref{app:reproducibility}.}{App.~I in \cite{techreport}.}\footnote{\CRedit{Our code is at \url{https://github.com/neu-spiral/communication-aware-inference}.}}

\fussy

\begin{table}[t]  
  \centering
  \caption{Offline experiment scenario summary.
Comma-separated entries are per-task.}
 \vspace{-10pt}
  \label{tab:scenarios}
  {\color{black}
  \footnotesize
  \setlength{\tabcolsep}{3pt}
  \renewcommand{\arraystretch}{0.9}
  \begin{tabular*}{\columnwidth}{@{\extracolsep{\fill}}lccccc@{}}
  \toprule
  \textbf{Scenario} & \textbf{Task ID} & $\bm{c_{\min}}$--$\bm{c_{\max}}$ (MB) & $\bm{R_k}$ (Hz) & $\bm{T}$ & $\bm{L_k}$ \\
  \midrule
  \multicolumn{6}{c}{\textit{Single-task}} \\
  \midrule
  \scenLangSGTwoB       & \taskLangThree  & 0.46--2.08 & 8  & 30  & 4 \\
  \scenLangSGTwoBSevenN & \taskLangThree  & 3.4--16    & 6  & 48  & 7 \\
  \scenLangSGSevenB     & \taskLangFour   & 0.575--2.6 & 7  & 36  & 4 \\
  \scenVisionTRS        & \taskVisionTwo  & 0.06--0.3  & 10 & 50  & 4 \\
  \scenVisionLOO        & \taskVisionTwo  & 0.2--0.6   & 10 & 50  & 4 \\
  \scenLangFiveShot     & \taskLangFive   & 10--40     & 10 & 50  & 3 \\
  \scenLangWTEightB     & \taskLangSix    & 5--9       & 4  & 30  & 4 \\
  \scenVisionJetsonSS   & \taskVisionTwo  & 1.08--1.79 & 1  & 100 & 4 \\
  \scenLangJetsonS      & \taskLangSeven  & 1.35--1.95 & 2  & 100 & 4 \\
  \midrule
  \multicolumn{6}{c}{\textit{Multi-task}} \\
  \midrule
  \scenMTLangSGPTThree            & $3\times$\,\taskLangThree & 4.2--7     & 6, 5, 4  & 20  & 4 \\
  \scenMTVisFiveLooseNTenTHundred & $2\times$\,\taskVisionTwo & 0.2--0.6   & 10, 10   & 100 & 4 \\
  \scenMTLVJM                     & \makecell[c]{\taskVisionTwo,\\ \taskLangSeven} & 1.08--1.79 & 0.5, 1.5 & 100 & 4 \\
  \bottomrule
  \end{tabular*}
  }
  \vspace{-8pt}
  \end{table}

\noindent\textbf{Offline Experiments.}
We conduct 12 distinct offline experiments on linear topologies with the tasks from Table~\ref{tab:scenarios}, with corresponding experiment parameters. Simulations run for $T$ timeslots, with the link capacities sampled u.a.r.~from $[c_{\min}, c_{\max}]$. When $\tau$ varies across schemes, it reflects per-scheme profiles characterized on \textit{Jetson} testbed. Additional details are provided in \techreport{App.~\ref{app:reproducibility}.}{App.~I in \cite{techreport}.}  

\noindent\textbf{Testbed Experiments.}
We also conduct 14 experiments on 3 testbeds: (i) a CPU-only Raspberry Pi edge deployment for small LLMs, (ii) a Jetson Edge GPU deployment for LLM and vision workloads, and (iii) a multi-GPU testbed for large-model and multi-task deployments with controllable communication bottlenecks. Experiment settings are summarized in Table~\ref{tab:online-scenarios}; further testbed deployment details are in \techreport{App.~\ref{app:testbed_details}.}{App.~J in \cite{techreport}.} Our focus is on No-CSI methods on testbed experiments, as the channel is volatile and difficult to estimate.

\begin{table}[t]
\centering
\caption{Testbed experiment scenario summary.
Comma-separated entries are per-task, and Jetson $R_k$ is given as \cmpT/\cmpQ/\cmpIEight.}
\vspace{-10pt}
\label{tab:online-scenarios}
{\color{black}
\footnotesize
\setlength{\tabcolsep}{3pt}
\renewcommand{\arraystretch}{0.9}
\begin{tabular*}{\columnwidth}{@{\extracolsep{\fill}}lccccc@{}}
\toprule
\textbf{Scenario} & \textbf{Task ID} & \textbf{Network} & $\bm{R_k}$ (Hz) & $\bm{T}$ & $\bm{L_k}$ \\
\midrule
\multicolumn{6}{c}{\textit{PRESCIENT}} \\
\midrule
\scenPrsntVizWire      & \taskVisionTwo & \netProfileLan  & 5.4 & 50 & 3 \\
\scenPrsntVizEdge      & \taskVisionTwo & \netProfileEdge & 4.3 & 50 & 3 \\
\scenPrsntLangWikiWire & \taskLangSix   & \netProfileLan  & 4.0 & 50 & 3 \\
\scenPrsntLangWikiEdge & \taskLangSix   & \netProfileEdge & 3.0 & 50 & 3 \\
\scenPrsntLangMmluEdge & \taskLangFive  & \netProfileEdge & 2.5 & 50 & 3 \\
\scenPrsntVizVizWire   & $2\times$\,\taskVisionTwo & \netProfileLan  & 2.7, 2.7 & 50 & 3 \\
\scenPrsntVizVizEdge   & $2\times$\,\taskVisionTwo & \netProfileEdge & 1.2, 1.2 & 50 & 3 \\
\scenPrsntVizLangEdge  & \makecell[c]{\taskVisionTwo,\\ \taskLangSix} & \netProfileEdge & 1.1, 1.1 & 50 & 3 \\
\scenPrsntVizLangWlan  & \makecell[c]{\taskVisionTwo,\\ \taskLangSix} & \netProfileWlan & 0.7, 0.5 & 50 & 3 \\
\scenPrsntLangLangEdge & $2\times$\,\taskLangSix   & \netProfileEdge & 0.9, 0.9 & 50 & 3 \\
\midrule
\multicolumn{6}{c}{\textit{Jetson}} \\
\midrule
\scenVisionJetsonTestbedS & \taskVisionTwo & Wi-Fi AP & 0.530/0.527/0.525 & 50 & 4 \\
\scenLangJetsonTestbedS   & \taskLangSeven & Wi-Fi AP & 5.17/5.51/3.97    & 50 & 4 \\
\scenMTJetsonTestbedM     & \makecell[c]{\taskVisionTwo,\\ \taskLangSeven} & Wi-Fi AP
  & \makecell[c]{0.274/0.273/0.273,\\ 0.820/0.821/0.817} & 20 & 4 \\
\midrule
\multicolumn{6}{c}{\textit{Raspberry Pi}} \\
\midrule
\scenTbSSGPT & \taskLangThree & Wi-Fi 802.11ac & 5.0 & 100 & 7 \\
\bottomrule
\end{tabular*}
}
\vspace{-8pt}
\end{table}

\newcommand{\OfflineArrayStretch}{0.45}
\newcommand{\OfflineDataSkip}{-1.0pt}      
\newcommand{\OfflineTabColSepST}{1.5pt}  
\newcommand{\OfflineTabColSepMT}{1.0pt}  
\newcommand{\offrow}{\\[\OfflineDataSkip]}

\begin{table*}[t]
  \centering
  \caption{Offline results for single (upper) and multi-task (lower). We report agg. utility $\metU$, avg. delay $\metD$, and excess delay $\metED$, across methods, grouped into reference, CSI-aware, and No-CSI methods. $\cmpT$, $\cmpQ$, and $\cmpIEight$ denote Top-$k$, quantization, and LLM.int8, respectively. Infeasible entries ($\metED/\metD > 0.05$) are shaded light red. Bold and underlined values denote the best and second-best accuracies, respectively, among feasible entries.}
  \vspace{-15pt}
   \techreport{%
  \vspace{11pt}
}{\vspace{2pt}}
  \label{tab:results-offline-all}
  
  \medskip
  \hrule
  \smallskip
  {\bfseries Single-task Scenarios}
  \smallskip
  \hrule
  \medskip
  
  {\footnotesize
  \setlength{\tabcolsep}{\OfflineTabColSepST}
  \renewcommand{\arraystretch}{\OfflineArrayStretch}
  \begin{tabular*}{\textwidth}{@{\extracolsep{\fill}}ll c ccc !{\vrule width 0.8pt} ccc ccc !{\vrule width 0.8pt} ccc ccc ccc ccc@{}}
  \toprule
  \multirow{3}{*}{\textbf{Scenario}} & \multirow{3}{*}{\textbf{Metric}}
   & \multicolumn{4}{c!{\vrule width 0.8pt}}{\textbf{Reference}}
   & \multicolumn{6}{c!{\vrule width 0.8pt}}{\textbf{CSI-aware}}
   & \multicolumn{12}{c}{\textbf{No-CSI}} \\
  \cmidrule(l{0.3pt}r{0.3pt}){3-6}\cmidrule(l{0.3pt}r{0.3pt}){7-12}\cmidrule(l{0.3pt}r{0.3pt}){13-24}
   & & \multirow{2}{*}{\algNONE[short]} & \multicolumn{3}{c!{\vrule width 0.8pt}}{\algMAX[short]}
     & \multicolumn{3}{c}{\algUNI[short]} & \multicolumn{3}{c!{\vrule width 0.8pt}}{\algOURSCSI[short]}
     & \multicolumn{3}{c}{\algMYO[short]} & \multicolumn{3}{c}{\algCONS[short]} & \multicolumn{3}{c}{\algMA[short]} & \multicolumn{3}{c}{\algOURSNOCSIst[short]} \\
  \cmidrule(l{0.3pt}r{0.3pt}){4-6}\cmidrule(l{0.3pt}r{0.3pt}){7-9}\cmidrule(l{0.3pt}r{0.3pt}){10-12}\cmidrule(l{0.3pt}r{0.3pt}){13-15}\cmidrule(l{0.3pt}r{0.3pt}){16-18}\cmidrule(l{0.3pt}r{0.3pt}){19-21}\cmidrule(l{0.3pt}r{0.3pt}){22-24}
   & & & \cmpT & \cmpQ & \cmpIEight & \cmpT & \cmpQ & \cmpIEight & \cmpT & \cmpQ & \cmpIEight & \cmpT & \cmpQ & \cmpIEight & \cmpT & \cmpQ & \cmpIEight & \cmpT & \cmpQ & \cmpIEight & \cmpT & \cmpQ & \cmpIEight \\
  \midrule
  \multirow{3}{*}{\scenLangSGTwoB}
   & $\metU$   & \nfeas{1.00} & 0.802 & 0.365 & 0.889 & 0.968 & 0.989 & 0.994 & 0.985 & \underline{0.995} & \best{0.997} & \nfeas{0.982} & \nfeas{0.992} & \nfeas{0.996} & 0.925 & 0.961 & \underline{0.984} & \nfeas{0.992} & \nfeas{0.996} & \nfeas{0.998} & 0.957 & \best{0.991} & 0.980 \offrow
   & $\metD$   & \nfeas{222}  & 55.5  & 55.5  & 55.5  & 124   & 124   & 124   & 124   & \underline{124} & \best{124} & \nfeas{183} & \nfeas{183} & \nfeas{183} & 96.8  & 96.8  & \underline{96.8} & \nfeas{203} & \nfeas{203} & \nfeas{203} & 121   & \best{119} & 99.1 \offrow
   & $\metED$  & \nfeas{97.1} & -69.5 & -69.5 & -69.5 & -1.46 & -1.46 & -1.46 & -1.46 & \underline{-1.46} & \best{-1.46} & \nfeas{58.0} & \nfeas{58.0} & \nfeas{58.0} & -28.2 & -28.2 & \underline{-28.2} & \nfeas{77.8} & \nfeas{77.8} & \nfeas{77.8} & -4.15 & \best{-5.67} & -25.9 \offrow
  \midrule
  
  \multirow{3}{*}{\scenLangSGTwoBSevenN}
   & $\metU$   & \nfeas{1.00} & 0.444 & 0.155 & 0.344 & 0.824 & 0.805 & 0.949 & \underline{0.951} & 0.909 & \best{0.985} & \nfeas{0.944} & \nfeas{0.891} & \nfeas{0.981} & 0.750 & 0.378 & \underline{0.876} & \nfeas{0.979} & \nfeas{0.980} & \nfeas{0.994} & 0.825 & \nfeas{0.797} & \best{0.949} \offrow
   & $\metD$   & \nfeas{440}  & 96.9  & 96.9  & 96.9  & 167   & 167   & 167   & \underline{167} & 167   & \best{167} & \nfeas{344} & \nfeas{344} & \nfeas{344} & 151   & 151   & \underline{151} & \nfeas{337} & \nfeas{337} & \nfeas{337} & 175 & \nfeas{182} & \best{173} \offrow
   & $\metED$  & \nfeas{274}  & -69.8 & -69.8 & -69.8 & 0.00  & 0.00  & 0.00  & \underline{0.00} & 0.00  & \best{0.00} & \nfeas{178} & \nfeas{178} & \nfeas{178} & -15.7 & -15.7 & \underline{-15.7} & \nfeas{171} & \nfeas{171} & \nfeas{171} & 7.86 & \nfeas{15.3} & \best{5.94} \offrow
  \midrule
  
  \multirow{3}{*}{\scenLangSGSevenB}
   & $\metU$   & \nfeas{1.00} & 0.537 & 0.882 & 0.851 & 0.930 & \underline{0.994} & 0.992 & 0.966 & \best{0.996} & \underline{0.994} & \nfeas{0.963} & \nfeas{0.995} & \nfeas{0.994} & 0.869 & \underline{0.989} & 0.986 & \nfeas{0.988} & \nfeas{0.996} & \nfeas{0.997} & 0.968 & \best{0.989} & 0.986 \offrow
   & $\metD$   & \nfeas{269}  & 67.3  & 67.3  & 67.3  & 141   & \underline{141} & 141   & 141   & \best{141} & \underline{141} & \nfeas{230} & \nfeas{230} & \nfeas{230} & 115   & \underline{115} & 115   & \nfeas{234} & \nfeas{234} & \nfeas{234} & 146 & \best{113} & 113 \offrow
   & $\metED$  & \nfeas{126}  & -75.5 & -75.5 & -75.5 & -1.37 & \underline{-1.37} & -1.37 & -1.37 & \best{-1.37} & \underline{-1.37} & \nfeas{87.2} & \nfeas{87.2} & \nfeas{87.2} & -27.7 & \underline{-27.7} & -27.7 & \nfeas{91.3} & \nfeas{91.3} & \nfeas{91.3} & 3.53 & \best{-29.5} & -30.3 \offrow
  \midrule
  
  \multirow{3}{*}{\scenVisionTRS}
   & $\metU$   & \nfeas{0.933} & 0.415 & 0.769 & 0.873 & 0.677 & 0.827 & 0.893 & 0.708 & \underline{0.926} & \best{0.933} & \nfeas{0.703} & \nfeas{0.921} & \nfeas{0.932} & 0.786 & 0.899 & \underline{0.903} & \nfeas{0.757} & \nfeas{0.912} & \nfeas{0.913} & 0.889 & \best{0.923} & \nfeas{0.933} \offrow
   & $\metD$   & \nfeas{476}   & 59.5  & 59.5  & 59.5  & 101 & 101 & 101 & 101 & \underline{101} & \best{101} & \nfeas{175} & \nfeas{175} & \nfeas{175} & 75.5 & 75.5 & \underline{75.5} & \nfeas{164} & \nfeas{164} & \nfeas{164} & 85.3 & \best{104} & \nfeas{119} \offrow
   & $\metED$  & \nfeas{376}   & -40.5 & -40.5 & -40.5 & 1.23 & 1.23 & 1.23 & 1.23 & \underline{1.23} & \best{1.23} & \nfeas{75.4} & \nfeas{75.4} & \nfeas{75.4} & -24.5 & -24.5 & \underline{-24.5} & \nfeas{63.7} & \nfeas{63.7} & \nfeas{63.7} & -14.7 & \best{3.99} & \nfeas{19.0} \offrow
  \midrule
  
  \multirow{3}{*}{\scenVisionLOO}
   & $\metU$   & \nfeas{0.933} & 0.415 & 0.769 & 0.873 & 0.894 & 0.913 & \underline{0.923} & 0.900 & \best{0.933} & \best{0.933} & \nfeas{0.898} & \nfeas{0.923} & \nfeas{0.933} & 0.831 & 0.903 & \underline{0.913} & \nfeas{0.910} & \nfeas{0.930} & \nfeas{0.933} & \nfeas{0.901} & 0.913 & \best{0.933} \offrow
   & $\metD$   & \nfeas{189}   & 23.6  & 23.6  & 23.6  & 100 & 100 & \underline{100} & 100 & \best{100} & \best{100} & \nfeas{122} & \nfeas{122} & \nfeas{122} & 72.5 & 72.5 & \underline{72.5} & \nfeas{118} & \nfeas{118} & \nfeas{118} & \nfeas{106} & 89.0 & \best{98.5} \offrow
   & $\metED$  & \nfeas{88.7}  & -76.4 & -76.4 & -76.4 & 0.00 & 0.00 & \underline{0.00} & 0.00 & \best{0.00} & \best{0.00} & \nfeas{21.5} & \nfeas{21.5} & \nfeas{21.5} & -27.5 & -27.5 & \underline{-27.5} & \nfeas{18.3} & \nfeas{18.3} & \nfeas{18.3} & \nfeas{6.06} & -11.0 & \best{-1.50} \offrow
  \midrule
  
  \multirow{3}{*}{\scenLangFiveShot}
   & $\metU$   & \nfeas{0.595} & 0.420 & 0.255 & 0.543 & 0.531 & 0.479 & \underline{0.583} & 0.545 & 0.499 & \best{0.590} & \nfeas{0.575} & \nfeas{0.559} & \nfeas{0.592} & 0.467 & 0.355 & \underline{0.543} & \nfeas{0.493} & \nfeas{0.417} & \nfeas{0.555} & \nfeas{0.487} & \nfeas{0.425} & \best{0.555} \offrow
   & $\metD$   & \nfeas{233}   & 23.9  & 23.9  & 23.9  & 100 & 100 & \underline{100} & 100 & 100 & \best{100} & \nfeas{155} & \nfeas{155} & \nfeas{155} & 68.7 & 68.7 & \underline{68.7} & \nfeas{145} & \nfeas{145} & \nfeas{145} & \nfeas{106} & \nfeas{115} & \best{91.1} \offrow
   & $\metED$  & \nfeas{133}   & -76.1 & -76.1 & -76.1 & 0.00 & 0.00 & \underline{0.00} & 0.00 & 0.00 & \best{0.00} & \nfeas{55.0} & \nfeas{55.0} & \nfeas{55.0} & -31.3 & -31.3 & \underline{-31.3} & \nfeas{44.9} & \nfeas{44.9} & \nfeas{44.9} & \nfeas{5.74} & \nfeas{15.1} & \best{-8.89} \offrow
  \midrule
  
  \multirow{3}{*}{\scenLangWTEightB}
   & $\metU$   & \nfeas{1.00} & 0.727 & 0.206 & 0.818 & 0.866 & 0.894 & \underline{0.972} & 0.909 & 0.922 & \best{0.988} & \nfeas{0.904} & \nfeas{0.917} & \nfeas{0.984} & 0.833 & 0.843 & \underline{0.943} & \nfeas{0.910} & \nfeas{0.942} & \nfeas{0.990} & 0.856 & 0.904 & \best{0.962} \offrow
   & $\metD$   & \nfeas{709}  & 177   & 177   & 177   & 250 & 250 & \underline{250} & 250 & 250 & \best{250} & \nfeas{316} & \nfeas{316} & \nfeas{316} & 230 & 230 & \underline{230} & \nfeas{295} & \nfeas{295} & \nfeas{295} & 252 & 261 & \best{242} \offrow
   & $\metED$  & \nfeas{459}  & -72.6 & -72.6 & -72.6 & 0.00 & 0.00 & \underline{0.00} & 0.00 & 0.00 & \best{0.00} & \nfeas{65.5} & \nfeas{65.5} & \nfeas{65.5} & -20.2 & -20.2 & \underline{-20.2} & \nfeas{44.8} & \nfeas{44.8} & \nfeas{44.8} & 1.79 & 10.8 & \best{-7.82} \offrow
  \midrule

  
  \multirow{3}{*}{\shortstack[l]{\scenVisionJetsonS}}
   & $\metU$   & \nfeas{0.933} & 0.414 & 0.869 & 0.931 & 0.749 & \underline{0.932} & \underline{0.932} & 0.781 & \best{0.933} & \best{0.933} & \nfeas{0.780} & \underline{0.932} & \nfeas{0.932} & 0.738 & 0.917 & \best{0.932} & \nfeas{0.780} & 0.932 & \nfeas{0.932} & 0.786 & 0.925 & 0.931 \offrow
   & $\metD$   & \nfeas{3810}  & 476  & 473  & 477  & 1000 & \underline{1000} & \underline{1000} & 1000 & \best{1000} & \best{1000} & \nfeas{1091} & \underline{1034} & \nfeas{1091} & 902  & 885  & \best{902} & \nfeas{1084} & 1034 & \nfeas{1084} & 1034 & 1017 & 477 \offrow
   & $\metED$  & \nfeas{2810}  & -524 & -527 & -523 & 0.00 & \underline{0.00} & \underline{0.00} & 0.00 & \best{0.00} & \best{0.00} & \nfeas{91.4} & \underline{33.7} & \nfeas{91.4} & -97.9 & -115 & \best{-97.9} & \nfeas{83.5} & 34.3 & \nfeas{83.5} & 34.3 & 16.5 & -523 \offrow
  \midrule

  
  \multirow{3}{*}{\shortstack[l]{\scenLangJetsonS}}
   & $\metU$   & \nfeas{0.930} & 0.850 & 0.506 & 0.930 & \best{0.930} & \underline{0.480} & \best{0.930} & \best{0.930} & \underline{0.480} & \best{0.930} & 0.930 & 0.480 & 0.930 & 0.930 & 0.480 & 0.930 & 0.930 & 0.480 & 0.930 & \underline{0.930} & 0.501 & \best{0.930} \offrow
   & $\metD$   & \nfeas{683}   & 105  & 104  & 105  & \best{500} & \underline{500} & \best{500} & \best{500} & \underline{500} & \best{500} & 500 & 500 & 500 & 434 & 436 & 434 & 500 & 498 & 500 & \underline{304} & 104 & \best{105} \offrow
   & $\metED$  & \nfeas{183}   & -395 & -396 & -395 & \best{0.00} & \underline{0.00} & \best{0.00} & \best{0.00} & \underline{0.00} & \best{0.00} & 0.310 & -0.004 & 0.0003 & -65.7 & -63.6 & -65.7 & -0.150 & -1.52 & -0.0002 & \underline{-196} & -396 & \best{-395} \offrow
  \bottomrule
  \end{tabular*}}
  
  \medskip
  \hrule
  \smallskip
  {\bfseries Multi-task Scenarios}
  \smallskip
  \hrule
  \medskip
  
  {\footnotesize
  \setlength{\tabcolsep}{\OfflineTabColSepMT}
  \renewcommand{\arraystretch}{\OfflineArrayStretch}
  \begin{tabular*}{\textwidth}{@{\extracolsep{\fill}}ll c ccc !{\vrule width 0.8pt} ccc ccc ccc ccc !{\vrule width 0.8pt} ccc ccc ccc@{}}
  \toprule
  \multirow{3}{*}{\textbf{Scenario}} & \multirow{3}{*}{\textbf{Metric}}
   & \multicolumn{4}{c!{\vrule width 0.8pt}}{\textbf{Reference}}
   & \multicolumn{12}{c!{\vrule width 0.8pt}}{\textbf{CSI-aware}}
   & \multicolumn{9}{c}{\textbf{No-CSI}} \\
  \cmidrule(l{0.3pt}r{0.3pt}){3-6}\cmidrule(l{0.3pt}r{0.3pt}){7-18}\cmidrule(l{0.3pt}r{0.3pt}){19-27}
   & & \multirow{2}{*}{\algNONE[short]} & \multicolumn{3}{c!{\vrule width 0.8pt}}{\algMAX[short]}
     & \multicolumn{3}{c}{\algEQ[short]} & \multicolumn{3}{c}{\algPROP[short]} & \multicolumn{3}{c}{\algPRIO[short]} & \multicolumn{3}{c!{\vrule width 0.8pt}}{\algOURSCSI[short]}
     & \multicolumn{3}{c}{\algDES[short]} & \multicolumn{3}{c}{\algDLPROP[short]} & \multicolumn{3}{c}{\algOURSNOCSImt[short]} \\
  \cmidrule(l{0.3pt}r{0.3pt}){4-6}\cmidrule(l{0.3pt}r{0.3pt}){7-9}\cmidrule(l{0.3pt}r{0.3pt}){10-12}\cmidrule(l{0.3pt}r{0.3pt}){13-15}\cmidrule(l{0.3pt}r{0.3pt}){16-18}\cmidrule(l{0.3pt}r{0.3pt}){19-21}\cmidrule(l{0.3pt}r{0.3pt}){22-24}\cmidrule(l{0.3pt}r{0.3pt}){25-27}
   & & & \cmpT & \cmpQ & \cmpIEight & \cmpT & \cmpQ & \cmpIEight & \cmpT & \cmpQ & \cmpIEight & \cmpT & \cmpQ & \cmpIEight & \cmpT & \cmpQ & \cmpIEight & \cmpT & \cmpQ & \cmpIEight & \cmpT & \cmpQ & \cmpIEight & \cmpT & \cmpQ & \cmpIEight \\
  \midrule
  \multirow{3}{*}{\scenMTLangSGPTThree}
   & $\metU$  & \nfeas{1.00} & 0.888 & $0.0001$ & 0.364 & \best{0.967} & 0.581 & 0.379 & \underline{0.966} & 0.626 & 0.375 & 0.951 & 0.491 & 0.543 & 0.964 & $0.0001$ & 0.364 & \best{0.935} & $0.0001$ & 0.364 & \underline{0.934} & $0.0001$ & 0.364 & \nfeas{0.955} & $0.0001$ & 0.364 \offrow
   & $\metD$  & \nfeas{352} & 87.9 & 87.9 & 87.9 & \best{131} & 131 & 131 & \underline{131} & 131 & 131 & 131 & 131 & 131 & 131 & 92.1 & 92.1 & \best{104} & 87.9 & 87.9 & \underline{103} & 87.9 & 87.9 & \nfeas{133} & 87.8 & 87.8 \offrow
   & $\metED$ & \nfeas{663} & -128 & -128 & -128 & \best{0.880} & 0.880 & 0.880 & \underline{1.38} & 1.38 & 1.38 & 0.00 & 0.00 & 0.00 & 0.00 & -115 & -115 & \best{-80.3} & -128 & -128 & \underline{-83.1} & -128 & -128 & \nfeas{7.50} & -128 & -128 \offrow
  \midrule
  
  \multirow{3}{*}{\scenMTVisFiveLooseNTenTHundred}
   & $\metU$  & \nfeas{0.933} & 0.415 & 0.869 & 0.933 & 0.738 & 0.931 & 0.933 & 0.781 & \underline{0.933} & \underline{0.933} & 0.704 & 0.925 & \underline{0.933} & 0.767 & 0.932 & \best{0.933} & \nfeas{0.707} & 0.931 & 0.933 & 0.711 & 0.932 & \underline{0.933} & \nfeas{0.737} & 0.932 & \best{0.933} \offrow
   & $\metD$  & \nfeas{416} & 52.0 & 52.0 & 52.0 & 101 & 101 & 101 & 100 & \underline{100} & \underline{100} & 100 & 100 & \underline{100} & 99.8 & 57.9 & \best{46.2} & \nfeas{106} & 66.1 & 52.0 & 105 & 67.0 & \underline{51.9} & \nfeas{114} & 71.7 & \best{43.9} \offrow
   & $\metED$ & \nfeas{316} & -48.0 & -48.0 & -48.0 & 0.620 & 0.620 & 0.620 & 0.00 & \underline{0.00} & \underline{0.00} & 0.00 & 0.00 & \underline{0.00} & -0.240 & -42.1 & \best{-53.8} & \nfeas{5.58} & -34.0 & -48.0 & 4.99 & -33.0 & \underline{-48.1} & \nfeas{13.8} & -28.4 & \best{-56.1} \offrow
  \midrule

  \multirow{3}{*}{\shortstack[l]{\scenMTLVJM}}
   & $\metU$  & \nfeas{0.931} & 0.632 & 0.684 & 0.930 & 0.859 & 0.706 & 0.931 & 0.880 & 0.721 & 0.931 & 0.877 & 0.724 & \underline{0.931} & 0.892 & 0.714 & \best{0.931} & \nfeas{0.919} & 0.718 & \underline{0.930} & \nfeas{0.919} & 0.719 & \underline{0.930} & 0.883 & 0.690 & \best{0.930} \offrow
   & $\metD$  & \nfeas{4498} & 582 & 577 & 581 & 1335 & 1335 & 1335 & 1334 & 1334 & 1335 & 1333 & 1333 & \underline{1333} & 1332 & 1331 & \best{797} & \nfeas{2427} & 1281 & \underline{584} & \nfeas{2427} & 1281 & \underline{584} & 1345 & 738 & \best{488} \offrow
   & $\metED$ & \nfeas{3164} & -752 & -757 & -752 & 1.46 & 1.45 & 1.45 & 0.665 & 0.608 & 1.41 & 0.00 & 0.00 & \underline{0.00} & -1.61 & -1.99 & \best{-536} & \nfeas{1094} & -52.7 & \underline{-749} & \nfeas{1094} & -52.7 & \underline{-749} & 11.2 & -645 & \best{-845} \offrow
  \bottomrule
  \end{tabular*}}
  \vspace{-4pt}
     \techreport{%
}{\vspace{7pt}}
  \end{table*}

\noindent\textbf{Compression Rate Selection Algorithms.}
For single-task experiments, we evaluate the following policies: 
\algMAX{} is a baseline forcing $\eta_{i}=\eta_{i}^{min}$, a strict lower bound on accuracy. 
\algNONE{} is a baseline transmitting uncompressed data ($\eta_{i}=1$) to establish an upper bound on accuracy, ignoring constraints. 
\algUNI{} is a CSI-aware policy enforcing  uniform compression across stages, bottlenecking the system to the weakest link. 
\algMYO{} is a No-CSI policy assuming static networks by treating the most recent channel observation as the current capacity. 
\algCONS{} is a No-CSI pessimistic baseline  assuming future capacities will match the lowest historically observed value. 
\algMA{} is a No-CSI baseline assuming a moving average of past capacities to smooth transient network noise. 
Finally, we evaluate our proposed methods: \algOURSCSI{} is our CSI-aware optimizer computing the closed-form optimal solution of Theorem \ref{thm:single-convex}. \algOURSNOCSIst is our No-CSI optimizer using Alg.~\ref{alg:NoCSI-SingleTask} with a 90\% Lower Confidence Bound (LCB) as channel estimate~$\hat{\mathbf{c}}$.

In multi-task environments, beyond 
 \algMAX{} and \algNONE{}, we implement policies that select both compression and resource allocations.
\algEQ{} is a CSI-aware baseline splitting resources equally, then applies optimal single-task compression. 
\algPROP{} is a CSI-aware baseline allocating resources proportionally based on inherent task computational loads and data sizes, before applying optimal single-task compression. 
\algPRIO{} is a CSI-aware baseline meeting minimum requirements for all tasks, then greedily funneling leftover resources to the highest-priority tasks. 
\algDES{} is a No-CSI baseline splitting resources equally and running Alg. \ref{alg:NoCSI-SingleTask} for each task in isolation.
\algDLPROP{} is a No-CSI baseline dynamically shifting resources proportional to dual variables ($\lambda_{k}(t)$) to explicitly aid struggling tasks. 
\algMA{} is a No-CSI baseline feeding smoothed, moving-average channel estimates into the optimal CSI-aware multi-task solver. 
Lastly, we evaluate our proposed multi-task frameworks: \algOURSCSI{} is our CSI-aware optimizer solving the convex optimization in Theorem \ref{thm:multi-convex}. \algOURSNOCSImt{} is our No-CSI optimizer using Algorithm \ref{alg:NoCSI-MultiTask} with a 90\% LCB as channel estimate~$\hat{\mathbf{c}}$.

\noindent\textbf{Evaluation Metrics.}
\sloppy
We evaluate system performance over the experiment horizon using three time-averaged metrics. \emph{Aggregate Utility ($\metU$)} measures the overall normalized weighted accuracy across active tasks: $\metU = \frac{1}{T} \sum_{t=1}^T \frac{\sum_{k \in \mathcal{K}(t)} w_k A_k(\boldsymbol{\eta}_k(t))}{\sum_{k \in \mathcal{K}(t)} w_k}$. \emph{Average Delay ($\metD$)} represents the average bottleneck delay: $\metD = \frac{1}{KT} \sum_{t=1}^T \sum_{k \in \mathcal{K}(t)} D_{act,k}(t)$. Finally, \emph{Excess Delay ($\metED$)} evaluates QoS compliance by tracking the difference between actual bottleneck delay and the target threshold: $\metED = \frac{1}{KT} \sum_{t=1}^T \sum_{k \in \mathcal{K}(t)} \big( D_{act,k}(t) - 1/R_k(t) \big)$. 
We consider an excess delay higher than 5\% of the average target delay as violating feasibility.

\newcommand{\TestbedHalfWidth}{0.495\textwidth} 
\newcommand{\TestbedHGap}{0.01\textwidth}       
\newcommand{\TestbedTabColSepL}{.5pt}          
\newcommand{\TestbedTabColSepR}{.5pt}          
\newcommand{\TestbedDataSkipL}{-1.5pt}             
\newcommand{\TestbedDataSkipR}{1.44pt}           
\newcommand{\tbrowL}{\\[\TestbedDataSkipL]}
\newcommand{\tbrowR}{\\[\TestbedDataSkipR]}

\begin{table*}[t]
  \centering
  \caption{Testbed results for single-task (left) and multi-task (right), excluding the CSI-aware methods. We report aggregate utility $\metU$, average delay $\metD$, and excess delay $\metED$, across methods, grouped into reference baselines and No-CSI methods. Compression schemes $\cmpT$, $\cmpQ$, and $\cmpIEight$ denote Top-$k$, quantization, and LLM.int8, respectively. Infeasible entries ($\metED/\metD > 0.05$) are shaded light red. Bold and underlined values denote the best and second-best results, respectively, among feasible entries within the No-CSI group.}
  \vspace{-15pt}
     \techreport{%
  \vspace{12pt}
}{\vspace{2pt}}
  \label{tab:results-testbed-all-nocsi}
  \footnotesize
  \begin{minipage}[t]{\TestbedHalfWidth}
  \centering
  \setlength{\tabcolsep}{\TestbedTabColSepL}
  \renewcommand{\arraystretch}{1}
  \medskip
  \hrule
  \smallskip
  {\bfseries Single-task Scenarios}
  \smallskip
  \hrule
  \medskip
  \resizebox{\linewidth}{!}{%
  \begin{tabular}{@{}ll c ccc !{\vrule width 0.8pt} ccc ccc ccc ccc@{}}
  \toprule
  \multirow{3}{*}{\textbf{Scenario}} & \multirow{3}{*}{\textbf{Metric}}
   & \multicolumn{4}{c!{\vrule width 0.8pt}}{\textbf{Reference}}
   & \multicolumn{12}{c}{\textbf{No-CSI}} \\
  \cmidrule(l{0.3pt}r{0.3pt}){3-6}\cmidrule(l{0.3pt}r{0.3pt}){7-18}
   & & \multirow{2}{*}{\algNONE[short]} & \multicolumn{3}{c!{\vrule width 0.8pt}}{\algMAX[short]}
     & \multicolumn{3}{c}{\algMYO[short]} & \multicolumn{3}{c}{\algCONS[short]} & \multicolumn{3}{c}{\algMA[short]} & \multicolumn{3}{c}{\algOURSNOCSIst[short]} \\
  \cmidrule(l{0.3pt}r{0.3pt}){4-6}\cmidrule(l{0.3pt}r{0.3pt}){7-9}\cmidrule(l{0.3pt}r{0.3pt}){10-12}\cmidrule(l{0.3pt}r{0.3pt}){13-15}\cmidrule(l{0.3pt}r{0.3pt}){16-18}
   & & & \cmpT & \cmpQ & \cmpIEight & \cmpT & \cmpQ & \cmpIEight & \cmpT & \cmpQ & \cmpIEight & \cmpT & \cmpQ & \cmpIEight & \cmpT & \cmpQ & \cmpIEight \\
  \midrule
  \multirow{3}{*}{\scenTbSSGPT}
   & $\metU$  & \nfeas{1.000} & 0.331 & 0.242 & 0.552 & \nfeas{0.686} & \nfeas{0.470} & \nfeas{0.742} & 0.607 & 0.443 & \underline{0.702} & \nfeas{0.718} & \nfeas{0.506} & \nfeas{0.798} & \nfeas{0.676} & 0.469 & \best{0.713} \tbrowL
   & $\metD$  & \nfeas{320} & 102 & 91.1 & 97.0 & \nfeas{307} & \nfeas{285} & \nfeas{293} & 154 & 143 & \underline{162} & \nfeas{288} & \nfeas{265} & \nfeas{284} & \nfeas{86.0} & 81.1 & \best{128} \tbrowL
   & $\metED$ & \nfeas{120} & -97.9 & -109 & -103 & \nfeas{107} & \nfeas{84.9} & \nfeas{92.7} & -45.7 & -57.3 & \underline{-38.0} & \nfeas{88.2} & \nfeas{65.1} & \nfeas{84.3} & \nfeas{16.5} & -119 & \best{-71.8} \tbrowL
  \midrule
  \multirow{3}{*}{\shortstack[l]{\scenPrsntVizWire}}
   & $\metU$  & 0.927 & 0.120 & 0.114 & 0.751 & \nfeas{0.900} & \best{0.927} & \nfeas{0.934} & \nfeas{0.880} & \underline{0.927} & \nfeas{0.935} & \nfeas{0.907} & 0.927 & \nfeas{0.934} & 0.855 & 0.648 & 0.892 \tbrowL
   & $\metD$  & 236 & 226 & 169 & 221 & \nfeas{286} & \best{175} & \nfeas{279} & \nfeas{288} & \underline{179} & \nfeas{279} & \nfeas{287} & 180 & \nfeas{279} & 183 & 184 & 187 \tbrowL
   & $\metED$ & 9.06 & -1.25 & -58.1 & -5.80 & \nfeas{58.7} & \best{-52.2} & \nfeas{51.4} & \nfeas{60.7} & \underline{-48.5} & \nfeas{51.6} & \nfeas{59.3} & -47.6 & \nfeas{51.7} & -44.1 & -43.6 & -39.9 \tbrowL
  \midrule
  \multirow{3}{*}{\shortstack[l]{\scenPrsntVizEdge}}
   & $\metU$  & \nfeas{0.927} & 0.120 & 0.114 & 0.751 & \nfeas{0.551} & \nfeas{0.933} & \nfeas{0.908} & \nfeas{0.543} & \nfeas{0.933} & \best{0.909} & \nfeas{0.546} & \nfeas{0.933} & \nfeas{0.909} & 0.533 & 0.684 & \underline{0.845} \tbrowL
   & $\metD$  & \nfeas{676} & 197 & 197 & 209 & \nfeas{286} & \nfeas{264} & \nfeas{255} & \nfeas{284} & \nfeas{268} & \best{249} & \nfeas{286} & \nfeas{264} & \nfeas{252} & 235 & 223 & \underline{206} \tbrowL
   & $\metED$ & \nfeas{438} & -40.9 & -40.6 & -28.6 & \nfeas{47.7} & \nfeas{26.2} & \nfeas{17.1} & \nfeas{46.4} & \nfeas{30.4} & \best{10.7} & \nfeas{47.7} & \nfeas{25.5} & \nfeas{14.1} & -2.65 & -15.4 & \underline{-32.2} \tbrowL
  \midrule
  \multirow{3}{*}{\shortstack[l]{\scenPrsntLangWikiWire}}
   & $\metU$  & \nfeas{1.000} & \nfeas{0.192} & 0.028 & 0.682 & \nfeas{1.000} & \nfeas{1.000} & \nfeas{1.000} & \nfeas{1.000} & \nfeas{1.000} & \nfeas{1.000} & \nfeas{1.000} & \nfeas{1.000} & \nfeas{1.000} & \underline{0.639} & 0.066 & \best{0.942} \tbrowL
   & $\metD$  & \nfeas{297} & \nfeas{282} & 236 & 274 & \nfeas{339} & \nfeas{295} & \nfeas{308} & \nfeas{339} & \nfeas{287} & \nfeas{317} & \nfeas{343} & \nfeas{294} & \nfeas{332} & \underline{262} & 261 & \best{236} \tbrowL
   & $\metED$ & \nfeas{33.4} & \nfeas{18.5} & -26.7 & 10.7 & \nfeas{75.4} & \nfeas{31.7} & \nfeas{44.8} & \nfeas{75.6} & \nfeas{24.3} & \nfeas{53.4} & \nfeas{80.0} & \nfeas{31.2} & \nfeas{68.5} & \underline{-1.06} & -2.35 & \best{-26.9} \tbrowL
  \midrule
  \multirow{3}{*}{\shortstack[l]{\scenPrsntLangWikiEdge}}
   & $\metU$  & \nfeas{1.000} & 0.192 & 0.028 & 0.677 & \nfeas{0.957} & \nfeas{0.013} & \nfeas{0.989} & \nfeas{0.956} & \nfeas{0.013} & \nfeas{0.987} & \nfeas{0.957} & \nfeas{0.013} & \nfeas{0.989} & \underline{0.691} & 0.087 & \best{0.857} \tbrowL
   & $\metD$  & \nfeas{790} & 287 & 335 & 292 & \nfeas{463} & \nfeas{457} & \nfeas{434} & \nfeas{454} & \nfeas{453} & \nfeas{438} & \nfeas{458} & \nfeas{449} & \nfeas{436} & \underline{338} & 314 & \best{344} \tbrowL
   & $\metED$ & \nfeas{420} & -83.8 & -35.0 & -78.7 & \nfeas{92.3} & \nfeas{87.1} & \nfeas{63.9} & \nfeas{83.4} & \nfeas{82.8} & \nfeas{67.6} & \nfeas{87.9} & \nfeas{78.3} & \nfeas{65.3} & \underline{-32.6} & -56.0 & \best{-26.6} \tbrowL
  \midrule
  \multirow{3}{*}{\shortstack[l]{\scenPrsntLangMmluEdge}}
   & $\metU$  & \nfeas{0.467} & 0.267 & 0.283 & 0.283 & 0.408 & \nfeas{0.300} & \nfeas{0.477} & \underline{0.417} & \nfeas{0.300} & \nfeas{0.481} & \best{0.417} & \nfeas{0.300} & \nfeas{0.478} & 0.269 & 0.282 & 0.397 \tbrowL
   & $\metD$  & \nfeas{901} & 274 & 280 & 297 & 385 & \nfeas{646} & \nfeas{874} & \underline{391} & \nfeas{651} & \nfeas{862} & \best{386} & \nfeas{656} & \nfeas{863} & 290 & 310 & 358 \tbrowL
   & $\metED$ & \nfeas{401} & -226 & -220 & -203 & -115 & \nfeas{146} & \nfeas{374} & \underline{-109} & \nfeas{151} & \nfeas{362} & \best{-114} & \nfeas{156} & \nfeas{363} & -210 & -190 & -142 \tbrowL
  \midrule
  \multirow{3}{*}{\shortstack[l]{\scenVisionJetsonTestbedS}}
   & $\metU$  & \nfeas{0.933} & 0.414 & 0.870 & 0.930 & 0.894 & 0.933 & 0.932 & 0.856 & \best{0.933} & 0.932 & 0.893 & \underline{0.933} & 0.932 & 0.894 & 0.925 & 0.930 \tbrowL
   & $\metD$  & \nfeas{3770} & 476 & 469 & 510 & 1960 & 1950 & 1960 & 1620 & \best{1780} & 1170 & 1940 & \underline{1940} & 1990 & 1900 & 1270 & 972 \tbrowL
   & $\metED$ & \nfeas{1880} & -1410 & -1430 & -1400 & 68.3 & 60.1 & 54.5 & -267 & \best{-114} & -733 & 53.4 & \underline{47.5} & 86.9 & 12.5 & -624 & -934 \tbrowL
  \midrule
  \multirow{3}{*}{\shortstack[l]{\scenLangJetsonTestbedS}}
   & $\metU$  & \nfeas{0.934} & 0.863 & 0.484 & 0.939 & \nfeas{0.939} & 0.486 & 0.939 & \best{0.942} & 0.486 & 0.939 & \nfeas{0.939} & 0.486 & 0.939 & 0.939 & 0.540 & \underline{0.939} \tbrowL
   & $\metD$  & \nfeas{237} & 98.1 & 95.5 & 96.2 & \nfeas{235} & 179 & 249 & \best{148} & 153 & 129 & \nfeas{230} & 178 & 243 & 149 & 95.9 & \underline{98.9} \tbrowL
   & $\metED$ & \nfeas{65.2} & -95.2 & -76.1 & -156 & \nfeas{41.5} & 7.86 & -3.39 & \best{-45.0} & -18.1 & -123 & \nfeas{36.2} & 6.60 & -9.04 & -44.0 & -75.7 & \underline{-153} \tbrowL
  \bottomrule
  \end{tabular}}
  \end{minipage}%
  \hspace{\TestbedHGap}%
  \begin{minipage}[t]{\TestbedHalfWidth}
  \centering
  \setlength{\tabcolsep}{\TestbedTabColSepR}
  \renewcommand{\arraystretch}{1}
  \medskip
  \hrule
  \smallskip
  {\bfseries Multi-task Scenarios}
  \vspace{1.5pt}
  \smallskip
  \hrule
  \medskip
  \resizebox{\linewidth}{!}{%
  \begin{tabular}{@{}ll c ccc !{\vrule width 0.8pt} ccc ccc ccc ccc@{}}
  \toprule
  \multirow{3}{*}{\textbf{Scenario}} & \multirow{3}{*}{\textbf{Metric}}
   & \multicolumn{4}{c!{\vrule width 0.8pt}}{\textbf{Reference}}
   & \multicolumn{12}{c}{\textbf{No-CSI}} \\
  \cmidrule(l{0.3pt}r{0.3pt}){3-6}\cmidrule(l{0.3pt}r{0.3pt}){7-18}
   & & \multirow{2}{*}{\algNONE[short]} & \multicolumn{3}{c!{\vrule width 0.8pt}}{\algMAX[short]}
     & \multicolumn{3}{c}{\algDES[short]} & \multicolumn{3}{c}{\algDLPROP[short]} & \multicolumn{3}{c}{\algMA[short]} & \multicolumn{3}{c}{\algOURSNOCSImt[short]} \\
  \cmidrule(l{0.3pt}r{0.3pt}){4-6}\cmidrule(l{0.3pt}r{0.3pt}){7-9}\cmidrule(l{0.3pt}r{0.3pt}){10-12}\cmidrule(l{0.3pt}r{0.3pt}){13-15}\cmidrule(l{0.3pt}r{0.3pt}){16-18}
   & & & \cmpT & \cmpQ & \cmpIEight & \cmpT & \cmpQ & \cmpIEight & \cmpT & \cmpQ & \cmpIEight & \cmpT & \cmpQ & \cmpIEight & \cmpT & \cmpQ & \cmpIEight \\
  \midrule
  \multirow{3}{*}{\scenPrsntVizVizWire}
   & $\metU$  & \nfeas{0.924} & \nfeas{0.094} & \nfeas{0.086} & \nfeas{0.758} & \nfeas{0.094} & \nfeas{0.086} & \nfeas{0.758} & \nfeas{0.094} & \nfeas{0.086} & \nfeas{0.758} & \nfeas{0.094} & \nfeas{0.086} & \nfeas{0.758} & 0.109 & \underline{0.109} & \best{0.759} \tbrowR
   & $\metD$  & \nfeas{430} & \nfeas{415} & \nfeas{376} & \nfeas{383} & \nfeas{390} & \nfeas{351} & \nfeas{394} & \nfeas{388} & \nfeas{343} & \nfeas{393} & \nfeas{393} & \nfeas{343} & \nfeas{396} & 286 & \underline{283} & \best{294} \tbrowR
   & $\metED$ & \nfeas{193} & \nfeas{163} & \nfeas{85.6} & \nfeas{98.9} & \nfeas{113} & \nfeas{34.9} & \nfeas{120} & \nfeas{109} & \nfeas{18.4} & \nfeas{120} & \nfeas{119} & \nfeas{20.2} & \nfeas{125} & -95.1 & \underline{-101} & \best{-79.5} \tbrowR
  \midrule
  \multirow{3}{*}{\shortstack[l]{\scenPrsntVizVizEdge}}
   & $\metU$  & \nfeas{0.924} & 0.094 & 0.086 & 0.758 & 0.094 & 0.086 & 0.763 & 0.094 & 0.086 & \underline{0.763} & 0.094 & 0.086 & \nfeas{0.763} & 0.389 & 0.325 & \best{0.765} \tbrowR
   & $\metD$  & \nfeas{1400} & 465 & 412 & 469 & 424 & 432 & 487 & 436 & 433 & \underline{483} & 426 & 428 & \nfeas{493} & 358 & 344 & \best{369} \tbrowR
   & $\metED$ & \nfeas{1840} & -21.8 & -128 & -13.6 & -104 & -89.1 & 21.1 & -80.6 & -86.8 & \underline{13.2} & -101 & -95.8 & \nfeas{34.0} & -236 & -264 & \best{-215} \tbrowR
  \midrule
  \multirow{3}{*}{\shortstack[l]{\scenPrsntVizLangEdge}}
   & $\metU$  & \nfeas{0.962} & 0.153 & 0.059 & 0.737 & 0.153 & 0.059 & \underline{0.737} & 0.153 & 0.059 & 0.737 & 0.153 & 0.059 & 0.737 & 0.450 & 0.232 & \best{0.844} \tbrowR
   & $\metD$  & \nfeas{1350} & 492 & 431 & 487 & 475 & 480 & \underline{471} & 471 & 449 & 483 & 474 & 452 & 487 & 388 & 380 & \best{446} \tbrowR
   & $\metED$ & \nfeas{1370} & -349 & -471 & -360 & -384 & -374 & \underline{-392} & -391 & -436 & -368 & -384 & -429 & -360 & -557 & -574 & \best{-441} \tbrowR
  \midrule
  \multirow{3}{*}{\shortstack[l]{\scenPrsntVizLangWlan}}
   & $\metU$  & \nfeas{0.947} & 0.147 & 0.059 & 0.707 & 0.147 & 0.059 & 0.707 & 0.147 & 0.059 & 0.707 & 0.147 & 0.059 & \underline{0.707} & 0.184 & 0.072 & \best{0.716} \tbrowR
   & $\metD$  & \nfeas{6110} & 803 & 962 & 1250 & 766 & 940 & 1270 & 767 & 951 & 1250 & 769 & 968 & \underline{1230} & 865 & 996 & \best{1220} \tbrowR
   & $\metED$ & \nfeas{9360} & -1250 & -934 & -358 & -1320 & -976 & -325 & -1320 & -956 & -352 & -1320 & -922 & \underline{-401} & -1130 & -864 & \best{-405} \tbrowR
  \midrule
  \multirow{3}{*}{\shortstack[l]{\scenPrsntLangLangEdge}}
   & $\metU$  & \nfeas{1.000} & 0.211 & 0.033 & 0.716 & 0.211 & 0.033 & 0.716 & 0.211 & 0.033 & \underline{0.716} & 0.211 & 0.033 & 0.716 & 0.594 & 0.033 & \best{0.898} \tbrowR
   & $\metD$  & \nfeas{1490} & 554 & 519 & 518 & 573 & 488 & 512 & 577 & 491 & \underline{510} & 576 & 477 & 511 & 491 & 495 & \best{518} \tbrowR
   & $\metED$ & \nfeas{1560} & -321 & -390 & -393 & -283 & -453 & -405 & -274 & -446 & \underline{-409} & -277 & -475 & -407 & -446 & -438 & \best{-393} \tbrowR
  \midrule
  \multirow{3}{*}{\shortstack[l]{\scenMTJetsonTestbedM}}
   & $\metU$  & \nfeas{0.926} & 0.418 & 0.860 & 0.929 & 0.707 & \underline{0.915} & \nfeas{0.932} & \nfeas{0.836} & \best{0.915} & \nfeas{0.932} & \nfeas{0.928} & \nfeas{0.927} & \nfeas{0.933} & 0.747 & 0.913 & \nfeas{0.932} \tbrowR
   & $\metD$  & \nfeas{4330} & 1090 & 957 & 1680 & 2440 & \underline{1230} & \nfeas{2880} & \nfeas{3590} & \best{1210} & \nfeas{2670} & \nfeas{4880} & \nfeas{3990} & \nfeas{13800} & 2400 & 1450 & \nfeas{2730} \tbrowR
   & $\metED$ & \nfeas{1900} & -1350 & -1480 & -760 & 9.55 & \underline{-1210} & \nfeas{443} & \nfeas{1160} & \best{-1230} & \nfeas{234} & \nfeas{2450} & \nfeas{1550} & \nfeas{11400} & -30.3 & -986 & \nfeas{285} \tbrowR
  \bottomrule
  \end{tabular}}
  \end{minipage}
  \vspace{-8pt}
     \techreport{%
  \vspace{10pt}
}{}
\end{table*}

\begin{figure*}[!t]
    \centering
    \includegraphics[width=1.0\textwidth]{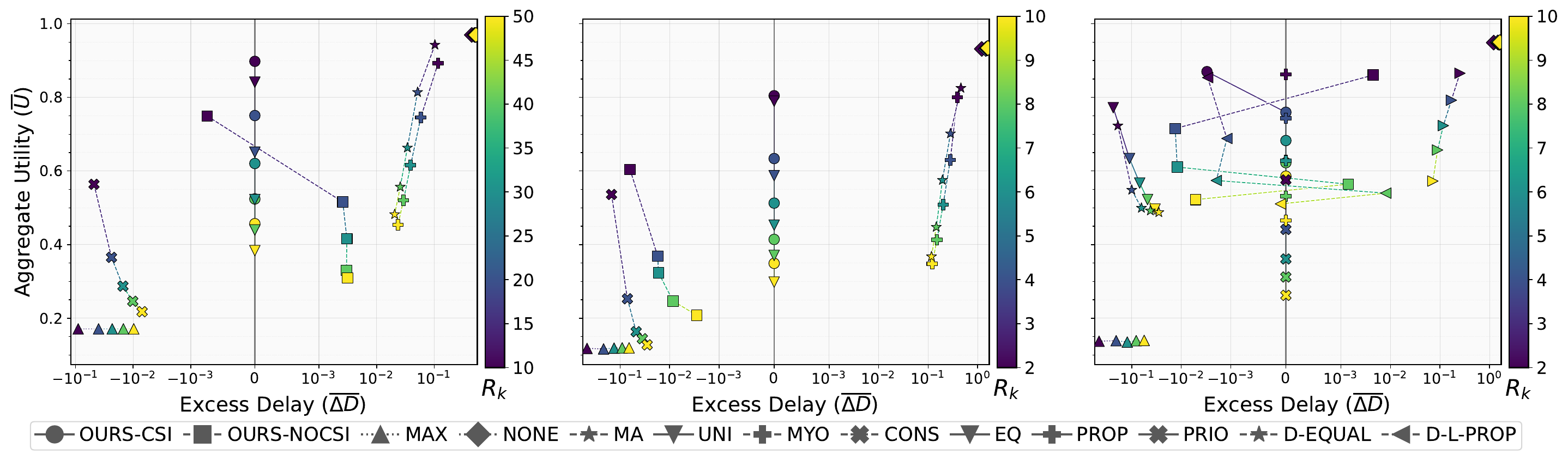}
    \vspace{-15pt}
    \techreport{%
\vspace{-1pt}
}{\vspace{-7pt}}
    \caption{Aggregate utility versus excess delay for single-task and multi-task scenarios under varying target throughputs ($R_k$) and Top-$k$ compression. Each line represents the trajectory of a specific algorithm. The left panel displays single-task \taskVisionOne, the middle panel shows single-task \taskVisionTwo, and the right panel illustrates a multi-task scenario combining both workloads with equal $R_k$.}
    \label{fig:topk-varyR}
    \vspace{-5pt}
       \techreport{%
\vspace{3pt}}{\vspace{-5pt}}
\end{figure*}

\fussy

\vspace{-2pt}
\subsection{Experimental Results}
\label{sec:experimental_results}

\techreport{%
\vspace{4pt}
}{}
 \noindent\textbf{Offline Experiments.}
 Table~\ref{tab:results-offline-all} shows our offline experiment results. Across all scenarios, no-compression (\algNONE[short]) is infeasible. In single-task experiments, among CSI-aware methods, \algOURSCSI[short] consistently achieves the highest feasible utility, whereas \algUNI[short] bottlenecks all links to the worst-case channel, sacrificing up to 0.03 utility under \cmpIEight. In the No-CSI group, \algMYO[short] and \algMA[short] are infeasible in nearly every scenario, while \algCONS[short] is feasible through pessimistic estimation at a substantial accuracy cost. \algOURSNOCSIst[short] balances both: it meets the delay constraint in most scenarios while maintaining utility competitive with the CSI-aware policy, trailing \algCONS[short] in excess delay but exceeding it by up to 0.09 in utility.

For multi-task scenarios, resource contention makes the allocation strategy as important as the compression policy. \algOURSCSI[short] achieves the best feasible utility in every scenario. On \scenMTLVJM{} with \cmpIEight, it  attains the highest utility and lowest delay, outperforming other CSI-aware baselines by a wide margin. \algEQ[short]{} and \algPROP[short]{} are feasible but cannot exploit asymmetric accuracy sensitivities across tasks. In the No-CSI group, \algOURSNOCSImt[short] has the best feasible utility in three of four scenarios under \cmpIEight, while \algDES[short] and \algDLPROP[short] are infeasible under high task heterogeneity. 


\techreport{%
\vspace{2pt}
}{}
\noindent\textbf{Testbed Experiments.}
The results in Table~\ref{tab:results-testbed-all-nocsi} confirm the offline trends under realistic conditions. We report on No-CSI algorithms as the channel is hard to estimate (we discuss CSI-aware testbed deployment challenges in \techreport{App.~\ref{app:additional_exp}}{App.~K in \cite{techreport}}). \algMYO[short] and \algMA[short] are infeasible across most scenarios, consistent with our offline findings, while \algCONS[short] is feasible through pessimistic estimation at reduced utility. \algOURSNOCSIst[short] with \cmpIEight{} is the only No-CSI method achieving both feasibility and competitive utility on the challenging LLM scenarios (\scenPrsntLangWikiWire: $0.942$, \scenPrsntLangWikiEdge: $0.857$), where all other No-CSI methods are infeasible. On the Jetson testbed, \algCONS[short] and \algOURSNOCSIst[short] are the only methods always meeting the delay constraint, with \algOURSNOCSIst[short] matching utility at lower excess delay.
On the hardest scenario (\scenPrsntVizVizWire), all methods  are infeasible except \algOURSNOCSImt[short], which finds a feasible operating point ($0.759$ utility with \cmpIEight). On \scenPrsntVizVizEdge, \algOURSNOCSImt[short] with \cmpIEight{} achieves the best feasible utility ($0.765$) at the lowest delay, demonstrating efficient resource usage under imperfect channel knowledge. The cross-modal \scenPrsntVizLangEdge{}  confirms this: \algOURSNOCSImt[short] with \cmpIEight{} has the highest utility ($0.844$) with the most negative excess delay ($-440$\,ms). On the Jetson multi-task scenario, \algDLPROP[short] and \algOURSNOCSImt[short] with \cmpQ{} are the only feasible entries, both above $0.91$ utility.


\techreport{%
\vspace{3pt}
}{}
\noindent\textbf{Varying Target Throughputs.}
Fig.~\ref{fig:topk-varyR} (left) isolates a single-task setup deploying an MLP on the MNIST dataset across a 3-node routing path, whereas Fig.~\ref{fig:topk-varyR} (middle) examines a ResNet-56 model processing CIFAR-10 data over a 4-node path. Fig.~\ref{fig:topk-varyR} (right) illustrates a multi-task scenario across a 4-node path, where the MLP and ResNet models share the first three nodes and two connecting links. Both tasks share an identical target throughput, with utility weights set to 0.7 for the MLP and 1.0 for the ResNet. The accuracy-compression functions in these setups are based on Stein's approach for the MLP task and the fitting approach for the ResNet task.
In Fig.~\ref{fig:topk-varyR} (left) and (middle),  \algNONE[short] and \algMAX[short] establish performance extremes. With perfect channel knowledge, \algOURSCSI[short] strictly enforces the target rate while maximizing utility, outperforming \algUNI[short]. In the no-CSI case, \algOURSNOCSIst[short] tightly tracks the zero excess delay axis, avoiding severe constraint violations of \algMYO[short] and \algMA[short], as well as unnecessary utility degradation of over-pessimistic \algCONS[short]. 
Fig.~\ref{fig:topk-varyR} (right) demonstrates multi-task resource contention between MLP and ResNet-56 pipelines. \algOURSCSI[short] has non-positive excess delay while experiencing optimal utility via multi-dimensional water-filling, whereas static CSI-aware heuristics (\algEQ[short], \algPROP[short]) and greedy policies (\algPRIO[short]) fail to catch up. Crucially, \algOURSNOCSImt[short] achieves a highly competitive delay-utility profile via block coordinate descent. By jointly optimizing the tasks, it outperforms No-CSI baselines like \algDES[short] and \algDLPROP[short], which optimize tasks in isolation. Furthermore, the \algMA[short] baseline incurs high constraint violations because its weak estimation fails to track the highly variable channel.

\vspace{-1pt}
\section{\label{sec:conc} Conclusion}

    We attempt to balance communication reduction against the accuracy of the downstream inference task. 
We achieve this by introducing a dynamically adjustable latent compression factor that continuously scales the intermediate payload size in response to fluctuating, real-time channel capacities, ensuring that strict QoS throughput constraints are satisfied without unnecessary degradation of inference accuracy.  \CRedit{Compression cannot, however, overcome a link outage: once a capacity falls below the demand of the maximally compressed payload, no compression level is feasible, and the system must buffer, shed load, or exit inference early. Early-exit policies for such outages, along with autoregressive workloads whose prefill and decode phases impose different compute and activation loads, are natural directions for future work.}

\techreport{%
}{\vspace{5pt}}

\begin{acks}
\CRedit{This work was supported by the National Science Foundation through the AI-EDGE Institute (Award~No.~2112471), by the Army Research Laboratory under Grant~No.~W911NF-24-2-0172, and by the Army Research Office under Grant~No.~W911NF-24-1-0103.}

\end{acks}

\newcommand{\showDOI}[1]{\unskip}
\newcommand{\showURL}[1]{\unskip}
\renewcommand{\showeprint}[2][]{\unskip}
\bibliographystyle{ACM-Reference-Format}
\bibliography{sample-base}

\techreport{%
\newpage
\appendix
\onecolumn
\fancyhead[LE,RO]{}%
\renewcommand{\headrulewidth}{0pt}%
\section{Proof of Theorem~\ref{thm:single-convex}} \label{proof:thm:single-convex}
\begin{proof}
We begin by determining the feasibility conditions for Prob.~\ref{prob:single-task}: 
\begin{lemma}[Single-Task Feasibility Condition]\label{lem:feasibility}
Problem \ref{prob:single-task} is feasible if and only if: 
 \begin{align}
     \tau_i \le \frac{1}{R(t)}, & \quad \forall i \in [L],
    & &\text{and} & 
      \frac{a_i \eta_i^{\min}}{c_{i}(t)} \le \frac{1}{R(t)} ,& \quad \forall i \in [L-1].
       \label{cond:feas-tau}
 \end{align}
\end{lemma}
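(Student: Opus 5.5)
The feasibility condition in Lemma~\ref{lem:feasibility} follows by unpacking constraint~\eqref{cons:max}. The maximum of a finite collection of reals is at most $1/R(t)$ exactly when every member of the collection is. So \eqref{cons:max} is equivalent to requiring $\tau_i \le 1/R(t)$ for all $i\in[L]$ together with $\frac{a_i}{c_i(t)}\eta_i(t) \le 1/R(t)$ for all $i\in[L-1]$. The decision variables then enter only through the communication terms, and separately per coordinate. The box constraint is a product set, so there is no coupling across hops, and the feasible set is a product of intervals.

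For necessity, I will take any feasible $\boldsymbol{\eta}(t)$. The computation inequalities $\tau_i\le 1/R(t)$ hold directly, since they do not depend on $\boldsymbol{\eta}$. For each hop, I use $\eta_i(t)\ge \eta_i^{\min}$ and the positivity of $a_i/c_i(t)$. This gives $\frac{a_i\eta_i^{\min}}{c_i(t)} \le \frac{a_i\eta_i(t)}{c_i(t)} \le 1/R(t)$, which is the second condition in \eqref{cond:feas-tau}.

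For sufficiency, I will exhibit the explicit point $\boldsymbol{\eta}(t)=\boldsymbol{\eta}^{\min}$. It lies in $\prod_i[\eta_i^{\min},1]$ because $\eta_i^{\min}\le 1$ by definition of the domain. By the assumed inequalities every term inside the max in \eqref{cons:max} is at most $1/R(t)$, so this point is feasible.

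There is no real obstacle here; the only care needed is the implicit assumptions $c_i(t)>0$, $a_i>0$ and $R(t)>0$, so that the ratios are well defined and dividing or multiplying by them preserves inequalities. I would state these at the outset. As a byproduct, the feasible set for hop $i$ is the interval $[\eta_i^{\min}, \min\{1, c_i(t)/(R(t)a_i)\}]$. This is what Theorem~\ref{thm:single-convex} then exploits via monotonicity of $A$.
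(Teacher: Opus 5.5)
Your proposal is correct and follows the paper's proof essentially step for step: necessity by reading \eqref{cons:max} termwise and using $\eta_i(t)\ge\eta_i^{\min}$, and sufficiency by exhibiting $\boldsymbol{\eta}(t)=\boldsymbol{\eta}^{\min}$. Your explicit remark that $c_i(t)>0$ is needed for the ratios to be defined, and that $a_i/c_i(t)\ge 0$ is needed for the inequalities to be preserved, is care the paper leaves implicit.
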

\begin{proof}
We prove this in both directions.

\noindent\textbf{Necessity:} Assume the problem is feasible. Let $\boldsymbol{\eta}(t)$ be a feasible solution. The constraint in \eqref{cons:max} dictates that the maximum of a set is bounded by $\frac{1}{R(t)}$, meaning every individual element in the set must be bounded by $\frac{1}{R(t)}$. 
Consequently, for the computation delays, it directly establishes \eqref{cond:feas-tau}. 
For the communication delays, we must have $\frac{a_i}{c_{e_i}(t)}\eta_i(t) \le \frac{1}{R(t)}$ for all $i \in [L-1]$. Because the solution is drawn from the valid domain, $\eta_i(t) \ge \eta_i^{\min}$. Substituting this lower bound preserves the inequality:
$$ \frac{a_i \eta_i^{\min}}{c_{e_i}(t)} \le \frac{a_i \eta_i(t)}{c_{e_i}(t)} \le \frac{1}{R(t)}, \quad \forall i \in [L-1]. $$

\noindent\textbf{Sufficiency:} Assume condition \eqref{cond:feas-tau} holds. We demonstrate feasibility by constructing a valid solution. Let $\eta_i(t) = \eta_i^{\min}$ for all $i \in [L-1]$. By definition, this vector satisfies the domain constraint $\boldsymbol{\eta}(t) \in \prod_{i=1}^{L-1} [\eta_i^{\min}, 1]$.
Condition \eqref{cond:feas-tau} ensures that $\tau_i \le \frac{1}{R(t)}$ for all $i \in [L]$ and  $\frac{a_i}{c_{e_i}(t)}\eta_i^{\min} \le \frac{1}{R(t)}$ for all $i \in [L-1]$. 
Since every individual computation and communication delay term under this configuration is bounded by $\frac{1}{R(t)}$, their maximum is also bounded by $\frac{1}{R(t)}$. Thus, $\boldsymbol{\eta}(t) = \boldsymbol{\eta}^{\min}$ satisfies \eqref{cons:max}, proving the feasible set is non-empty.
\end{proof}


To  prove Theorem~\ref{thm:single-convex}, observe that, since the problem is feasible, the inequalities regarding $\tau_i$ are inherently satisfied and independent of the choice of $\boldsymbol{\eta}(t)$. Thus, we can rewrite the constraints entirely in terms of $\eta_i(t)$. The upper bound imposed by the target delay $\frac{1}{R(t)}$ is $\eta_i(t) \le \frac{c_{e_i}(t)}{R(t)a_i}$. Combined with the natural domain constraint $\eta_i(t) \le 1$, the feasible region for each compression factor is:
$$ \eta_i(t) \in \left[ \eta_i^{\min},\, \min\!\left\{ 1,\, \frac{c_{e_i}(t)}{R(t)a_i} \right\} \right] $$
Because $A(\boldsymbol{\eta}(t))$ is monotonically non-decreasing with respect to $\eta_i(t)$, maximizing the objective function strictly equates to maximizing each $\eta_i(t)$ independently. Therefore, the optimal value $\eta_i^\star(t)$ is the upper bound of its feasible interval.
\end{proof}


\section{Proof of Lemma \ref{lem:multi-feasibility}}\label{proof:lem:multi-feasibility}
\begin{proof}
We prove this by demonstrating both necessity and sufficiency.

\noindent\textbf{Necessity:} Assume Problem \ref{prob:multi-task} is feasible. Thus, there exist sets of variables $\mathbf{s}^{\text{comp}}(t)$, $\mathbf{s}^{\text{comm}}(t)$, and $\boldsymbol{\eta}(t)$ satisfying all constraints. From \eqref{cons:het-comp-lat}, we rearrange to find $s_{i,k}^{\text{comp}}(t) \ge \tau_{i,k} R_k(t)$. Summing this inequality over all active tasks $k \in \mathcal{K}(t)$ and their corresponding stages $i$ assigned to any physical node $v \in \mathcal{V}$ (where $v_{i,k} = v$), and applying the capacity constraint \eqref{eq:comp_cap} yields:
$$ 1 \ge \sum_{k \in \mathcal{K}(t)} \sum_{i: \, v_{i,k} = v} s_{i,k}^{\text{comp}}(t) \ge \sum_{k \in \mathcal{K}(t)} \sum_{i: \, v_{i,k} = v} \tau_{i,k} R_k(t), $$
which establishes the first condition.
Similarly, rearranging \eqref{cons:het-comm-lat} yields $s_{i,k}^{\text{comm}}(t) \ge \frac{a_{i,k} \eta_{i,k}(t) R_k(t)}{c_{i,k}(t)}$. By the domain constraint \eqref{cons:het-eta-domain}, $\eta_{i,k}(t) \ge \eta_{i,k}^{\min}$. Substituting this lower bound and summing over all active tasks $k \in \mathcal{K}(t)$ and their corresponding hops $i$ assigned to any physical edge $e \in \mathcal{E}$ (where $e_{i,k} = e$) gives:
$$ 1 \ge \sum_{k \in \mathcal{K}(t)} \sum_{i: \, e_{i,k} = e} s_{i,k}^{\text{comm}}(t) \ge \sum_{k \in \mathcal{K}(t)} \sum_{i: \, e_{i,k} = e} \frac{a_{i,k} \eta_{i,k}^{\min} R_k(t)}{c_{i,k}(t)}. $$

\noindent\textbf{Sufficiency:} Assume the conditions in the lemma
hold. We construct a valid solution by setting the variables to their minimum required limits: let $s_{i,k}^{\text{comp}}(t) = \tau_{i,k} R_k(t)$, $\eta_{i,k}(t) = \eta_{i,k}^{\min}$, and $s_{i,k}^{\text{comm}}(t) = \frac{a_{i,k} \eta_{i,k}^{\min} R_k(t)}{c_{i,k}(t)}$ for all corresponding $i, k$. 
By definition, $\eta_{i,k}(t)$ satisfies \eqref{cons:het-eta-domain}. These allocations exactly satisfy the latency constraints \eqref{cons:het-comp-lat} and \eqref{cons:het-comm-lat} as equalities. Furthermore, by the assumed conditions in the lemma statement,
the sums of these allocated shares do not exceed $1$, thus satisfying the node and edge capacity constraints \eqref{eq:comp_cap} and \eqref{eq:comm_cap}. Therefore, a feasible solution exists.
\end{proof}

\section{Proof of Theorem \ref{thm:multi-convex}} \label{proof:thm:multi-convex}

\begin{proof}
Because the objective function \eqref{prob:het-main} is a monotonically non-decreasing function of $\boldsymbol{\eta}_k(t)$, maximizing the objective equates to maximizing each $\eta_{i,k}(t)$ independently, constrained only by the allocated bandwidth $s_{i,k}^{\text{comm}}(t)$.
From \eqref{cons:het-comm-lat} and \eqref{cons:het-eta-domain}, the feasible upper bound for compression is:
$$ \eta_{i,k}(t) \le \min\left(1, \frac{c_{i,k}(t)}{R_k(t) a_{i,k}} s_{i,k}^{\text{comm}}(t) \right). $$
Since we aim to maximize accuracy, the optimal compression factor is exactly this upper bound. We can define this relationship as a function of the allocated bandwidth:
$$ \eta_{i,k}^*(s_{i,k}^{\text{comm}}(t)) \triangleq\min\left(1, \frac{c_{i,k}(t)}{R_k(t) a_{i,k}} s_{i,k}^{\text{comm}}(t) \right). $$
The function $\eta^*_{i,k}$ is the pointwise minimum of two affine functions, and is therefore concave. 
Substituting $\eta_{i,k}(t) = \eta^*_{i,k}(s_{i,k}^{\text{comm}}(t))$ into the objective function yields $\sum_{k=1}^{K} w_k A_k(\eta^*_{i,k}(\mathbf{s}^{\text{comm}}(t)))$. 
Because $A_k$ is concave and non-decreasing, the composition of $A_k$ with the concave function $\eta^*_{i,k}$ is also concave. A positively weighted sum of concave functions remains concave. 

Furthermore, the computation allocation variables $\mathbf{s}^{\text{comp}}(t)$ are decoupled from the objective and can be trivially satisfied by assigning $s_{i,k}^{\text{comp}}(t) = \tau_{i,k} R_k(t)$ (guaranteed valid by Lemma \ref{lem:multi-feasibility}).
Thus, the optimization simplifies to maximizing a concave objective function subject to linear inequality constraints on the communication allocations:
\begin{align}
  \text{Maximize:} \quad & \sum_{k=1}^{K} w_k A_k\left( \dots, \min\left(1, \frac{c_{i,k}(t)}{R_k(t) a_{i,k}} s_{i,k}^{\text{comm}}(t) \right), \dots \right) \nonumber \\
  \text{subj.~to:} \quad & \sum_{k \in \mathcal{K}(t)} \sum_{i: \, e_{i,k} = e} s_{i,k}^{\text{comm}}(t) \le 1, \quad \forall e \in \mathcal{E}, \nonumber \\
  & s_{i,k}^{\text{comm}}(t) \ge \frac{a_{i,k} \eta_{i,k}^{\min} R_k(t)}{c_{i,k}(t)}, \quad \forall k \in \mathcal{K}(t), \; \forall i \in [L_k-1]. \nonumber
\end{align}
Maximizing a concave function over a convex set defined by linear constraints constitutes a standard convex optimization problem, completing the proof.
\end{proof}

\section{Water-Filling Algorithm}\label{app:water}

\paragraph{Optimality Conditions (KKT)} 
To gain insight into the convex optimization problem formulated in Theorem \ref{thm:multi-convex}, we apply the Karush-Kuhn-Tucker (KKT) conditions. We attach a Lagrange multiplier $\lambda_e \ge 0$ to the communication bandwidth capacity constraint of each physical edge $e \in \mathcal{E}$, and a multiplier $\mu_{i,k} \ge 0$ to the minimum QoS bandwidth constraint for the $i$-th transmission hop of task $k$.

Let $s_{i,k}^{\min}(t) = \frac{a_{i,k} \eta_{i,k}^{\min} R_k(t)}{c_{i,k}(t)}$ denote the minimum bandwidth fraction required for task $k$ at its $i$-th hop during time slot $t$. The Lagrangian is:

\begin{align}
    \mathcal{L} (\boldsymbol{\eta}(t),\boldsymbol{s}^\cmp(t),\boldsymbol{s}^\com(t),\boldsymbol{\lambda},\boldsymbol{\mu})&= \sum_{k=1}^{K} w_k A_k(\boldsymbol{\eta}_k(t))  - \sum_{e \in \mathcal{E}} \lambda_e \left( \sum_{k \in \mathcal{K}(t)} \sum_{i: \, e_{i,k} = e} s_{i,k}^{\com}(t) - 1 \right)  - \sum_{k \in \mathcal{K}(t)} \sum_{i=1}^{L_k-1} \mu_{i,k} \left( s_{i,k}^{\min}(t) - s_{i,k}^{\com}(t) \right).
\end{align}

Assuming the solution lies in the active region where $\eta_{i,k}(t) < 1$, we take the partial derivative of the Lagrangian with respect to the bandwidth allocation $s_{i,k}^{\com}(t)$ and apply the chain rule. Note that differentiating the capacity penalty for a specific $s_{i,k}^{\com}(t)$ isolates the shadow price $\lambda_{e_{i,k}}$ of the specific edge traversed by that hop:

\begin{equation}
    \frac{\partial \mathcal{L}}{\partial s_{i,k}^{\com}(t)} = w_k \frac{\partial A_k}{\partial \eta_{i,k}(t)} \cdot \frac{\partial \eta_{i,k}(t)}{\partial s_{i,k}^{\com}(t)} - \lambda_{e_{i,k}} + \mu_{i,k} = 0.
\end{equation}

Substituting the partial derivative $\frac{\partial \eta_{i,k}(t)}{\partial s_{i,k}^{\com}(t)} = \frac{c_{i,k}(t)}{R_k(t) a_{i,k}}$, we obtain the stationarity condition:

\begin{equation}
    \frac{w_k c_{i,k}(t)}{R_k(t) a_{i,k}} \frac{\partial A_k}{\partial \eta_{i,k}(t)} + \mu_{i,k} = \lambda_{e_{i,k}}, \quad \forall k \in \mathcal{K}(t), \; \forall i \in [L_k-1].
    \label{eq:kkt-stationarity}
\end{equation}

Furthermore, the KKT complementary slackness conditions require that:
\begin{align}
    \lambda_e \left( \sum_{k \in \mathcal{K}(t)} \sum_{i: \, e_{i,k} = e} s_{i,k}^{\com}(t) - 1 \right) &= 0, \quad \forall e \in \mathcal{E}, \label{eq:kkt-slack-cap}\\
    \mu_{i,k} \left( s_{i,k}^{\min}(t) - s_{i,k}^{\com}(t) \right) &= 0, \quad \forall k \in \mathcal{K}(t), \; \forall i \in [L_k-1]. \label{eq:kkt-slack-min}
\end{align}

\paragraph{Algorithm.} Equations \eqref{eq:kkt-stationarity} through \eqref{eq:kkt-slack-min} reveal a \textbf{Per-Link Water-Filling with a QoS Floor} strategy. 
\begin{itemize}
    \item Each physical edge $e$ possesses a shadow price $\lambda_e$, representing the marginal utility of additional bandwidth at that link.
    \item If a task $k$ is allocated strictly more than its minimum required bandwidth ($s_{i,k}^{\com}(t) > s_{i,k}^{\min}(t)$) for hop $i$, the complementary slackness condition \eqref{eq:kkt-slack-min} dictates that $\mu_{i,k} = 0$. For these tasks, the solver equalizes their effective marginal accuracy gains to the traversed edge's shadow price $\lambda_{e_{i,k}}$.
    \item If a task's marginal utility is too low to compete for additional bandwidth, the optimization algorithm will attempt to starve it. However, the constraint halts the reduction at $s_{i,k}^{\min}(t)$. At this boundary, $\mu_{i,k} > 0$ compensates for the task's lack of competitiveness, artificially satisfying the stationarity condition \eqref{eq:kkt-stationarity} and pegging the task to its minimum viable bandwidth.
\end{itemize}

\section{Proof of Theorem \ref{thm:performance_bounds}} \label{proof:thm:performance_bounds}

\begin{proof}
\textbf{Part A: Proof of Constraint Satisfaction}\\
From the dual update rule $\lambda_{t+1} = \max\left\{\epsilon, \lambda_t + D(\boldsymbol{\eta}(t), \mathbf{c}(t)) - \frac{1}{R(t)}\right\}$, we establish the fundamental inequality:
\begin{equation*}
    \lambda_{t+1} \ge \lambda_t + D(\boldsymbol{\eta}(t), \mathbf{c}(t)) - \frac{1}{R(t)}
\end{equation*}

Rearranging to isolate the instantaneous constraint violation yields:
\begin{equation*}
    D(\boldsymbol{\eta}(t), \mathbf{c}(t)) - \frac{1}{R(t)} \le \lambda_{t+1} - \lambda_t
\end{equation*}

Taking the expectation and averaging over $t = 1, \dots, T$ produces a telescoping sum:
\begin{equation*}
    \frac{1}{T} \sum_{t=1}^T \mathbb{E}\left[D(\boldsymbol{\eta}(t), \mathbf{c}(t)) - \frac{1}{R(t)}\right] \le \frac{\mathbb{E}[\lambda_{T+1}] - \mathbb{E}[\lambda_1]}{T}
\end{equation*}

As rigorously established in Part B, the expected dual variable $\mathbb{E}[\lambda_t]$ is uniformly bounded by a constant. Consequently, $\lim_{T \to \infty} \frac{\mathbb{E}[\lambda_{T+1}]}{T} = 0$. Taking the limit as $T \to \infty$ gives:
\begin{equation*}
    \lim_{T \to \infty} \frac{1}{T} \sum_{t=1}^T \mathbb{E}\left[D(\boldsymbol{\eta}(t), \mathbf{c}(t)) - \frac{1}{R(t)}\right] \le 0
\end{equation*}

This confirms that the long-term expected delay constraint is strictly satisfied.

\vspace{1em}
\noindent\textbf{Part B: Bounding the Multiplier}\\
Define the Lyapunov function as $L(\lambda_t) = \frac{1}{2}\lambda_t^2$. We utilize the projection property $(\max\{\epsilon, x\})^2 \le x^2 + \epsilon^2$ for all $x \in \mathbb{R}$. Applying this to the dual update rule yields:
\begin{equation*}
    \lambda_{t+1}^2 \le \left(\lambda_t + D(\boldsymbol{\eta}(t), \mathbf{c}(t)) - \frac{1}{R(t)}\right)^2 + \epsilon^2
\end{equation*}

Expanding the square and rearranging provides the Lyapunov drift bound:
\begin{equation*}
    \frac{1}{2}\lambda_{t+1}^2 - \frac{1}{2}\lambda_t^2 \le \lambda_t\left(D(\boldsymbol{\eta}(t), \mathbf{c}(t)) - \frac{1}{R(t)}\right) + \frac{1}{2}\left(D(\boldsymbol{\eta}(t), \mathbf{c}(t)) - \frac{1}{R(t)}\right)^2 + \frac{1}{2}\epsilon^2
\end{equation*}

To introduce the utility-delay trade-off, we subtract the scaled utility $\frac{1}{\mu} A(\boldsymbol{\eta}(t))$ from both sides. Let $\Delta(\boldsymbol{\eta}(t)) = D(\boldsymbol{\eta}(t), \mathbf{c}(t)) - \hat{D}(\boldsymbol{\eta}(t), \hat{\mathbf{c}}(t))$ denote the estimation error. Adding and subtracting the estimator term $\lambda_t \hat{D}(\boldsymbol{\eta}(t), \hat{\mathbf{c}}(t))$ on the right-hand side yields:
\begin{align*}
    \frac{1}{2}\lambda_{t+1}^2 - \frac{1}{2}\lambda_t^2 - \frac{1}{\mu}A(\boldsymbol{\eta}(t)) \le & -\frac{1}{\mu}A(\boldsymbol{\eta}(t)) + \lambda_t\left(\hat{D}(\boldsymbol{\eta}(t), \hat{\mathbf{c}}(t)) - \frac{1}{R(t)}\right) \\
    & + \lambda_t\Delta(\boldsymbol{\eta}(t)) + \frac{1}{2}\left(D(\boldsymbol{\eta}(t), \mathbf{c}(t)) - \frac{1}{R(t)}\right)^2 + \frac{1}{2}\epsilon^2
\end{align*}

By design, our algorithm minimizes the primary terms $-\frac{1}{\mu}A(\boldsymbol{\eta}(t)) + \lambda_t\left(\hat{D}(\boldsymbol{\eta}(t), \hat{\mathbf{c}}(t)) - \frac{1}{R(t)}\right)$ with respect to any alternative choice of $\boldsymbol{\eta}(t)$. We can thus upper-bound this expression by evaluating it at the strictly feasible, maximum compression baseline policy $\boldsymbol{\eta}^{\min}$:
\begin{align*}
    \frac{1}{2}\lambda_{t+1}^2 - \frac{1}{2}\lambda_t^2 - \frac{1}{\mu}A(\boldsymbol{\eta}(t)) \le & -\frac{1}{\mu}A(\boldsymbol{\eta}^{\min}) + \lambda_t\left(\hat{D}(\boldsymbol{\eta}^{\min}, \hat{\mathbf{c}}(t)) - \frac{1}{R(t)}\right) \\
    & + \lambda_t\Delta(\boldsymbol{\eta}(t)) + \frac{1}{2}\left(D(\boldsymbol{\eta}(t), \mathbf{c}(t)) - \frac{1}{R(t)}\right)^2 + \frac{1}{2}\epsilon^2
\end{align*}

Substituting the identity $\hat{D}(\boldsymbol{\eta}^{\min}, \hat{\mathbf{c}}(t)) = D(\boldsymbol{\eta}^{\min}, \mathbf{c}(t)) - \Delta(\boldsymbol{\eta}^{\min})$ and rearranging terms yields:
\begin{align*}
    \frac{1}{2}\lambda_{t+1}^2 - \frac{1}{2}\lambda_t^2 - \frac{1}{\mu}A(\boldsymbol{\eta}(t)) \le & -\frac{1}{\mu}A(\boldsymbol{\eta}^{\min}) + \lambda_t\left(D(\boldsymbol{\eta}^{\min}, \mathbf{c}(t)) - \frac{1}{R(t)}\right) \\
    & + \lambda_t\left(\Delta(\boldsymbol{\eta}(t)) - \Delta(\boldsymbol{\eta}^{\min})\right) + \frac{1}{2}\left(D(\boldsymbol{\eta}(t), \mathbf{c}(t)) - \frac{1}{R(t)}\right)^2 + \frac{1}{2}\epsilon^2
\end{align*}

To bound the error difference term $\Delta(\boldsymbol{\eta}(t)) - \Delta(\boldsymbol{\eta}^{\min})$, we analyze the bottleneck mechanics. Because $\boldsymbol{\eta}(t) \ge \boldsymbol{\eta}^{\min}$, we consider the exhaustive set of physically possible bottleneck transitions:
\begin{itemize}
    \item \textbf{Case 1:} If both configurations are compute-bottlenecked, $\Delta(\boldsymbol{\eta}(t)) = 0$ and $\Delta(\boldsymbol{\eta}^{\min}) = 0$, yielding a difference of $0$.
    \item \textbf{Case 2:} If $\boldsymbol{\eta}^{\min}$ is compute-bottlenecked but $\boldsymbol{\eta}(t)$ is communication-bottlenecked, $\Delta(\boldsymbol{\eta}^{\min}) = 0$, making the difference exactly $\Delta(\boldsymbol{\eta}(t))$.
    \item \textbf{Case 3:} If both are communication-bottlenecked, the delay is monotonic with  $\boldsymbol{\eta}$. If the estimator underestimates the delay ($\Delta \ge 0$), then $\Delta(\boldsymbol{\eta}^{\min}) \ge 0$, yielding a difference $\le \Delta(\boldsymbol{\eta}(t))$. If the estimator overestimates the delay ($\Delta \le 0$), the difference is strictly $\le 0$.
\end{itemize}

In all cases, the difference is strictly bounded by the positive error margin: $\Delta(\boldsymbol{\eta}(t)) - \Delta(\boldsymbol{\eta}^{\min}) \le \max\{0, \Delta(\boldsymbol{\eta}(t))\}$. Based on Assumption 3, we bound the conditional expectation of this positive error by $\delta^+$.

Taking the expectation of the full drift inequality, applying Assumption 4 ($\mathbb{E}[D(\boldsymbol{\eta}^{\min}, \mathbf{c}(t)) - \frac{1}{R(t)}] \le -\gamma$), Assumption 3 ($\mathbb{E}[\max\{0, \Delta(\boldsymbol{\eta}(t))\} \mid \lambda_t] \le \delta^+$), and Assumption 2 ($\mathbb{E}[(D(\boldsymbol{\eta}(t), \mathbf{c}(t)) - \frac{1}{R(t)})^2] \le \xi^2$):
\begin{equation*}
    (\gamma - \delta^+)\mathbb{E}[\lambda_t] \le \frac{\mathbb{E}[A(\boldsymbol{\eta}(t))] - A(\boldsymbol{\eta}^{\min})}{\mu} + \frac{1}{2}(\xi^2 + \epsilon^2) + \frac{1}{2}\mathbb{E}[\lambda_t^2] - \frac{1}{2}\mathbb{E}[\lambda_{t+1}^2]
\end{equation*}

Averaging over $T$ slots, applying Assumption 1 ($A(\boldsymbol{\eta}) \le A_{\max}$), and taking the limit $T \to \infty$ establishes the multiplier bound:
\begin{equation}
    \lim_{T \to \infty} \frac{1}{T} \sum_{t=1}^T \mathbb{E}[\lambda_t] \le \frac{A_{\max} - A_{\min}}{\mu(\gamma - \delta^+)} + \frac{\xi^2 + \epsilon^2}{2(\gamma - \delta^+)} \label{eq:multiplier_bound_ap}
\end{equation}

\vspace{1em}
\noindent\textbf{Part C: Bounding the Optimality Gap}\\
We return to the fundamental drift inequality (prior to evaluating $\boldsymbol{\eta}^{\min}$) and instead evaluate it under the strictly optimal, stationary policy $\boldsymbol{\eta}^*(t)$. This policy makes decisions independent of $\lambda_t$, achieves the optimal expected accuracy $A^*$, and satisfies the desired long-term constraint $\mathbb{E}\left[D(\boldsymbol{\eta}^*(t), \mathbf{c}(t)) - \frac{1}{R(t)}\right] \le 0$. Substituting $\boldsymbol{\eta}^*(t)$ and expanding the estimate yields:
\begin{align*}
    & \frac{1}{2}\lambda_{t+1}^2 - \frac{1}{2}\lambda_t^2 - \frac{1}{\mu}A(\boldsymbol{\eta}(t)) - \lambda_t\Delta(\boldsymbol{\eta}(t)) - \frac{1}{2}\left(D(\boldsymbol{\eta}(t), \mathbf{c}(t)) - \frac{1}{R(t)}\right)^2 - \frac{1}{2}\epsilon^2 \\
    & \le -\frac{1}{\mu}A(\boldsymbol{\eta}^*(t)) + \lambda_t\left(\hat{D}(\boldsymbol{\eta}^*(t), \hat{\mathbf{c}}(t)) - \frac{1}{R(t)}\right) \\
    & = -\frac{1}{\mu}A(\boldsymbol{\eta}^*(t)) + \lambda_t\left(D(\boldsymbol{\eta}^*(t), \mathbf{c}(t)) - \frac{1}{R(t)}\right) - \lambda_t\Delta(\boldsymbol{\eta}^*(t))
\end{align*}

Moving $-\lambda_t\Delta(\boldsymbol{\eta}(t))$ to the right-hand side isolates the error difference $\lambda_t(\Delta(\boldsymbol{\eta}(t)) - \Delta(\boldsymbol{\eta}^*(t)))$. Because $\boldsymbol{\eta}^*(t)$ is a stationary policy independent of $\lambda_t$, the expectation of its cross-terms factorizes cleanly. 

First, for the queue term:
\begin{align*}
    \mathbb{E}\left[\lambda_t\left(D(\boldsymbol{\eta}^*(t), \mathbf{c}(t)) - \frac{1}{R(t)}\right)\right] &= \mathbb{E}_{\lambda_t}\left[\lambda_t \mathbb{E}\left[D(\boldsymbol{\eta}^*(t), \mathbf{c}(t)) - \frac{1}{R(t)} \;\Big|\; \lambda_t\right]\right] \\
    &= \mathbb{E}_{\lambda_t}\left[\lambda_t \mathbb{E}\left[D(\boldsymbol{\eta}^*(t), \mathbf{c}(t)) - \frac{1}{R(t)}\right]\right] \\
    &\le \mathbb{E}_{\lambda_t}[\lambda_t \cdot 0] = 0
\end{align*}

Second, for the error difference term, we apply the definitions from Assumption 3. For the dynamic policy, we use the raw expected error bound $\mathbb{E}[\Delta(\boldsymbol{\eta}(t)) \mid \lambda_t] \le \delta$. For the stationary optimal policy, we use $\mathbb{E}[\Delta(\boldsymbol{\eta}^*(t))] \le \delta^*$. Because $\boldsymbol{\eta}^*$ is independent of $\lambda_t$, the expectation factorizes:
\begin{align*}
    \mathbb{E}\left[\lambda_t\left(\Delta(\boldsymbol{\eta}(t)) - \Delta(\boldsymbol{\eta}^*(t))\right)\right] &= \mathbb{E}[\lambda_t \Delta(\boldsymbol{\eta}(t))] - \mathbb{E}[\lambda_t \Delta(\boldsymbol{\eta}^*(t))] \\
    &= \mathbb{E}_{\lambda_t}[\lambda_t \mathbb{E}[\Delta(\boldsymbol{\eta}(t)) \mid \lambda_t]] - \mathbb{E}[\lambda_t]\mathbb{E}[\Delta(\boldsymbol{\eta}^*(t))] \\
    &\le \delta \mathbb{E}[\lambda_t] - \delta^* \mathbb{E}[\lambda_t] = (\delta - \delta^*)\mathbb{E}[\lambda_t]
\end{align*}

Taking the expectation of the full drift inequality and applying these decoupled bounds gives:
\begin{equation*}
    \mathbb{E}[A(\boldsymbol{\eta}(t))] \ge A^* - \mu(\delta - \delta^*)\mathbb{E}[\lambda_t] - \frac{\mu}{2}(\xi^2 + \epsilon^2) + \frac{\mu}{2}\mathbb{E}[\lambda_t^2] - \frac{\mu}{2}\mathbb{E}[\lambda_{t+1}^2]
\end{equation*}

Averaging over $T$ slots:
\begin{equation*}
    \frac{1}{T} \sum_{t=1}^T \mathbb{E}[A(\boldsymbol{\eta}(t))] \ge A^* - \frac{\mu(\delta - \delta^*)}{T} \sum_{t=1}^T \mathbb{E}[\lambda_t] - \frac{\mu}{2}(\xi^2 + \epsilon^2) + \frac{\mu}{2T}\mathbb{E}[\lambda_1^2] - \frac{\mu}{2T}\mathbb{E}[\lambda_{T+1}^2]
\end{equation*}

Since $\mathbb{E}[\lambda_{T+1}^2] \ge 0$ and $\frac{1}{T}\mathbb{E}[\lambda_1^2] \to 0$, taking the limit as $T \to \infty$ gives:
\begin{equation*}
    \lim_{T \to \infty} \frac{1}{T} \sum_{t=1}^T \mathbb{E}[A(\boldsymbol{\eta}(t))] \ge A^* - \frac{\mu(\xi^2 + \epsilon^2)}{2} - \mu(\delta - \delta^*) \left( \lim_{T \to \infty} \frac{1}{T} \sum_{t=1}^T \mathbb{E}[\lambda_t] \right)
\end{equation*}

By substituting the long-term multiplier bound derived in Part B, we establish the final optimality gap.
\end{proof}

\section{Proof of Theorem \ref{thm:multitask_performance_bounds}} \label{proof:thm:multitask_performance_bounds}

\begin{proof}

\textbf{Part A: Proof of Constraint Satisfaction}\\
From the dual update rule, for time slots when task $k$ is active ($t \in \mathcal{T}_k$), the update is:
\begin{equation*}
    \lambda_k(t+1) = \max\left\{\epsilon, \lambda_k(t) + D_{\text{act},k}(t) - \frac{1}{R_k(t)}\right\}
\end{equation*}
For inactive slots ($t \notin \mathcal{T}_k$), $\lambda_k(t+1) = \lambda_k(t)$. Rearranging the active update to isolate the instantaneous constraint violation yields:
\begin{equation*}
    D_{\text{act},k}(t) - \frac{1}{R_k(t)} \le \lambda_k(t+1) - \lambda_k(t)
\end{equation*}
Taking the expectation and summing over all active time slots up to $T$ (the set $\mathcal{T}_k(T)$) produces a telescoping sum for each task $k$:
\begin{equation*}
    \sum_{t \in \mathcal{T}_k(T)} \mathbb{E}\left[D_{\text{act},k}(t) - \frac{1}{R_k(t)}\right] \le \mathbb{E}[\lambda_k(T+1)] - \mathbb{E}[\lambda_k(1)]
\end{equation*}
Dividing by the cardinality $|\mathcal{T}_k(T)|$ yields:
\begin{equation*}
    \frac{1}{|\mathcal{T}_k(T)|} \sum_{t \in \mathcal{T}_k(T)} \mathbb{E}\left[D_{\text{act},k}(t) - \frac{1}{R_k(t)}\right] \le \frac{\mathbb{E}[\lambda_k(T+1)] - \mathbb{E}[\lambda_k(1)]}{|\mathcal{T}_k(T)|}
\end{equation*}
As established in Part B, the expected dual variable $\mathbb{E}[\lambda_k(t)]$ is uniformly bounded by a constant. Assuming the task is active infinitely often, the number of active time slots approaches infinity ($|\mathcal{T}_k(T)| \to \infty$) as the global time horizon $T \to \infty$. Taking the limit with respect to the active time slots yields:
\begin{equation*}
    \lim_{|\mathcal{T}_k(T)| \to \infty} \frac{1}{|\mathcal{T}_k(T)|} \sum_{t \in \mathcal{T}_k(T)} \mathbb{E}\left[D_{\text{act},k}(t) - \frac{1}{R_k(t)}\right] \le 0
\end{equation*}
This proves the long-term expected delay constraint is strictly satisfied specifically for the time slots where the task is active..

\vspace{1em}
\noindent\textbf{Part B: Bounding the Multipliers}\\
Define the aggregate Lyapunov function as $L(\boldsymbol{\lambda}(t)) = \frac{1}{2} \sum_{k=1}^K \lambda_k^2(t)$. Applying the projection property $(\max\{\epsilon, x\})^2 \le x^2 + \epsilon^2$ to the active tasks ($k \in \mathcal{K}(t)$), and noting that the drift is zero for inactive tasks, we bound the unified single-slot aggregate expected drift $\Delta L(t) \triangleq \mathbb{E}[L(\boldsymbol{\lambda}(t+1))] - \mathbb{E}[L(\boldsymbol{\lambda}(t))]$:
\begin{equation*}
    \Delta L(t) \le \mathbb{E}\left[ \sum_{k \in \mathcal{K}(t)} \left( \lambda_k(t)\left(D_{\text{act},k}(t) - \frac{1}{R_k(t)}\right) + \frac{1}{2}\left(D_{\text{act},k}(t) - \frac{1}{R_k(t)}\right)^2 + \frac{1}{2}\epsilon^2 \right) \right]
\end{equation*}

To introduce the utility-delay trade-off, we subtract the aggregate scaled expected utility $\frac{1}{\mu} \sum_{k \in \mathcal{K}(t)} w_k \mathbb{E}[A_k(\boldsymbol{\eta}_k(t))]$ from both sides. We further add and subtract the estimator term $\lambda_k(t) \hat{D}_k(t)$ inside the expectation:
\begin{align*}
    \Delta L(t) - \frac{1}{\mu}\sum_{k \in \mathcal{K}(t)} w_k \mathbb{E}[A_k(\boldsymbol{\eta}_k(t))] \le & \mathbb{E}\Bigg[ \sum_{k \in \mathcal{K}(t)} \left( -\frac{1}{\mu} w_k A_k(\boldsymbol{\eta}_k(t)) + \lambda_k(t)\left(\hat{D}_k(t) - \frac{1}{R_k(t)}\right) \right) \\
    & + \sum_{k \in \mathcal{K}(t)} \lambda_k(t) \Delta_k(t) + \sum_{k \in \mathcal{K}(t)} \left( \frac{1}{2}\left(D_{\text{act},k}(t) - \frac{1}{R_k(t)}\right)^2 + \frac{1}{2}\epsilon^2 \right) \Bigg]
\end{align*}

Based on Assumption 5, the block coordinate descent algorithm maximizes the partial Lagrangian within an expected additive gap $C_{\text{gap}}$. We upper-bound the primary algorithmic terms by evaluating them against a strictly feasible, robust baseline configuration $(\boldsymbol{\eta}^{\min}, \mathbf{s}^{\min})$. Let $D_k^{\min}(t) \triangleq D_k(\boldsymbol{\eta}^{\min}_k, \mathbf{s}^{\min}_k, \mathbf{c}(t))$ and $\hat{D}_k^{\min}(t)$ be its estimate. Substituting this bound back into the drift inequality, and replacing $\hat{D}_k^{\min}(t)$ with $D_k^{\min}(t) - \Delta_k^{\min}(t)$, we isolate the error difference term:
\begin{align*}
    \Delta L(t) - \frac{1}{\mu}\sum_{k \in \mathcal{K}(t)} w_k \mathbb{E}[A_k(\boldsymbol{\eta}_k(t))] \le & \mathbb{E}\Bigg[ \sum_{k \in \mathcal{K}(t)} \left( -\frac{1}{\mu} w_k A_k(\boldsymbol{\eta}^{\min}) + \lambda_k(t)\left(D_k^{\min}(t) - \frac{1}{R_k(t)}\right) \right) \\
    & + \sum_{k \in \mathcal{K}(t)} \lambda_k(t)\left(\Delta_k(t) - \Delta_k^{\min}(t)\right) \\
    & + \sum_{k \in \mathcal{K}(t)} \left( \frac{1}{2}\left(D_{\text{act},k}(t) - \frac{1}{R_k(t)}\right)^2 + \frac{1}{2}\epsilon^2 \right) \Bigg] + \frac{C_{\text{gap}}}{\mu}
\end{align*}

The bottleneck mechanics dictate that the error difference is strictly bounded by the positive error margin: $\Delta_k(t) - \Delta_k^{\min}(t) \le \max\{0, \Delta_k(t)\}$. By conditioning on the state variables and applying Assumption 3 ($\mathbb{E}[\max\{0, \Delta_k(t)\} \mid \lambda_k(t)] \le \delta^+$) and Assumption 4 ($\mathbb{E}[D_k^{\min}(t) - \frac{1}{R_k(t)}] \le -\gamma$), we can bound the terms:
\begin{align*}
    \mathbb{E}\left[\lambda_k(t)\left(\Delta_k(t) - \Delta_k^{\min}(t)\right)\right] &\le \mathbb{E}_{\lambda_k}\Big[\lambda_k(t) \mathbb{E}[\max\{0, \Delta_k(t)\} \mid \lambda_k(t)]\Big] \le \delta^+ \mathbb{E}[\lambda_k(t)] \\
    \mathbb{E}\left[\lambda_k(t) \left(D_k^{\min}(t) - \frac{1}{R_k(t)}\right)\right] &= \mathbb{E}_{\lambda_k}\Big[\lambda_k(t) \mathbb{E}\left[D_k^{\min}(t) - \frac{1}{R_k(t)}\right]\Big] \le -\gamma \mathbb{E}[\lambda_k(t)]
\end{align*}

Applying these bounds along with Assumption 2 ($\mathbb{E}[(D_{\text{act},k}(t) - \frac{1}{R_k(t)})^2] \le \xi^2$), and rearranging yields:
\begin{equation*}
    (\gamma - \delta^+) \sum_{k \in \mathcal{K}(t)} \mathbb{E}[\lambda_k(t)] \le \sum_{k \in \mathcal{K}(t)} \frac{w_k}{\mu} \left(\mathbb{E}[A_k(\boldsymbol{\eta}_k(t))] - A_k(\boldsymbol{\eta}^{\min})\right) + \sum_{k \in \mathcal{K}(t)} \frac{1}{2}(\xi^2 + \epsilon^2) + \frac{C_{\text{gap}}}{\mu} - \Delta L(t)
\end{equation*}

Summing over $T$ slots, dropping the positive term $\mathbb{E}[L(\boldsymbol{\lambda}(T+1))]$, and dividing by $T$. Note that $\frac{1}{T} \sum_{t=1}^T \sum_{k \in \mathcal{K}(t)} (\cdot) = \sum_{k=1}^K \frac{|\mathcal{T}_k(T)|}{T} (\cdot)$. Defining the asymptotic frequency of task $k$ being active as $\rho_k = \lim_{T \to \infty} \frac{|\mathcal{T}_k(T)|}{T}$, we take the limit $T \to \infty$ to establish the aggregate multiplier bound:
\begin{equation}
    \lim_{T \to \infty} \frac{1}{T} \sum_{t=1}^T \sum_{k \in \mathcal{K}(t)} \mathbb{E}[\lambda_k(t)] \le \frac{\sum_{k=1}^K \rho_k w_k (A_{k,\max} - A_{k,\min}) + C_{\text{gap}}}{\mu(\gamma - \delta^+)} + \frac{\sum_{k=1}^K \rho_k (\xi^2 + \epsilon^2)}{2(\gamma - \delta^+)} \label{eq:multitask_multiplier_bound_ap}
\end{equation}

\vspace{1em}
\noindent\textbf{Part C: Bounding the Optimality Gap}\\
We return to the fundamental drift inequality prior to evaluating the robust baseline. Instead, we evaluate it under the optimal stationary multi-task policy $(\boldsymbol{\eta}^*(t), \mathbf{s}^*(t))$, which satisfies $\mathbb{E}\left[D_k^*(t) - \frac{1}{R_k(t)}\right] \le 0$, where $D_k^*(t) \triangleq D_k(\boldsymbol{\eta}_k^*(t), \mathbf{s}_k^*(t), \mathbf{c}(t))$. Incorporating the approximation gap $C_{\text{gap}}$, we obtain:
\begin{align*}
    \Delta L(t) - \frac{1}{\mu}\sum_{k \in \mathcal{K}(t)} w_k \mathbb{E}[A_k(\boldsymbol{\eta}_k(t))] \le & \mathbb{E}\Bigg[ \sum_{k \in \mathcal{K}(t)} \left( -\frac{1}{\mu} w_k A_k^*(t) + \lambda_k(t)\left(D_k^*(t) - \frac{1}{R_k(t)}\right) \right) \\
    & + \sum_{k \in \mathcal{K}(t)} \lambda_k(t)\left(\Delta_k(t) - \Delta_k^*(t)\right) \\
    & + \sum_{k \in \mathcal{K}(t)} \left( \frac{1}{2}\left(D_{\text{act},k}(t) - \frac{1}{R_k(t)}\right)^2 + \frac{1}{2}\epsilon^2 \right) \Bigg] + \frac{C_{\text{gap}}}{\mu}
\end{align*}

Because the optimal stationary policy makes decisions independent of the queue state, the expectation of the queue cross-term cleanly factorizes by conditioning on the state variables:
\begin{equation*}
    \mathbb{E}\left[\lambda_k(t) \left(D_k^*(t) - \frac{1}{R_k(t)}\right)\right] = \mathbb{E}_{\lambda_k}\Big[\lambda_k(t) \mathbb{E}\left[D_k^*(t) - \frac{1}{R_k(t)}\right]\Big] \le \mathbb{E}[\lambda_k(t)] \cdot 0 = 0
\end{equation*}

For the error difference term, we apply the raw expected error definitions from Assumption 3 ($\mathbb{E}[\Delta_k(t) \mid \lambda_k(t)] \le \delta$ and $\mathbb{E}[\Delta_k^*(t)] = \delta^*$):
\begin{align*}
    \mathbb{E}\left[\lambda_k(t)\left(\Delta_k(t) - \Delta_k^*(t)\right)\right] &= \mathbb{E}[\lambda_k(t) \Delta_k(t)] - \mathbb{E}[\lambda_k(t) \Delta_k^*(t)] \\
    &= \mathbb{E}_{\lambda_k}\Big[\lambda_k(t)\mathbb{E}[\Delta_k(t) \mid \lambda_k(t)]\Big] - \mathbb{E}_{\lambda_k}\Big[\lambda_k(t)\mathbb{E}[\Delta_k^*(t)]\Big] \\
    &\le \delta \mathbb{E}[\lambda_k(t)] - \delta^* \mathbb{E}[\lambda_k(t)] = (\delta - \delta^*)\mathbb{E}[\lambda_k(t)]
\end{align*}

Taking the full expectation of the drift inequality, applying these decoupled bounds:
\begin{equation*}
    \frac{1}{\mu} \sum_{k \in \mathcal{K}(t)} w_k \mathbb{E}[A_k(\boldsymbol{\eta}_k(t))] \ge \frac{1}{\mu} \sum_{k \in \mathcal{K}(t)} w_k A_k^* - \sum_{k \in \mathcal{K}(t)} (\delta - \delta^*) \mathbb{E}[\lambda_k(t)] - \sum_{k \in \mathcal{K}(t)} \frac{1}{2}(\xi^2 + \epsilon^2) - \frac{C_{\text{gap}}}{\mu} + \Delta L(t)
\end{equation*}

Averaging the above inequality over $T$ slots, dropping the stabilizing Lyapunov term $\mathbb{E}[L(\boldsymbol{\lambda}(1))]/T$, multiplying by $\mu$, and taking the limit $T \to \infty$. We define $A^* = \sum_{k=1}^K \rho_k w_k A_k^*$ as the optimal time-averaged expected baseline utility:
\begin{equation*}
    \lim_{T \to \infty} \frac{1}{T} \sum_{t=1}^T \sum_{k \in \mathcal{K}(t)} w_k \mathbb{E}[A_k(\boldsymbol{\eta}_k(t))] \ge A^* - \mu (\delta - \delta^*) \left( \lim_{T \to \infty} \frac{1}{T} \sum_{t=1}^T \sum_{k \in \mathcal{K}(t)} \mathbb{E}[\lambda_k(t)] \right) - \frac{\mu}{2} \sum_{k=1}^K \rho_k (\xi^2 + \epsilon^2) - C_{\text{gap}}
\end{equation*}

\end{proof}

\section{Principled Selection of the Tradeoff Parameter $\mu$}
\label{sec:mu_selection}

The operational performance of the \textit{Estimated Stochastic Dual Descent} algorithm is inherently dictated by the tradeoff parameter $\mu$. This parameter serves as the pivotal weight that balances the maximization of the application-level utility, $A(\boldsymbol{\eta})$, against the strict enforcement of the long-term Quality of Service (QoS) delay constraints, represented by the virtual queue of dual variables $\boldsymbol{\lambda}$. Based on the Lyapunov stability results established in Theorem 3.5 and Theorem 3.6, the time-averaged expected delay violation is governed by the fundamental $[\mathcal{O}(\mu), \mathcal{O}(1/\mu)]$ tradeoff and is upper-bounded as follows:
\begin{equation}
    \lim_{T \to \infty} \frac{1}{T} \sum_{t=1}^{T} \mathbb{E}[\lambda_t] \le \frac{A_{\max} - A_{\min}}{\mu(\gamma - \delta^+)} + \Psi,
\end{equation}
where $\Psi$ captures the constant noise terms. 

To guarantee system stability, our objective is to ensure that the expected accumulated delay violations remain bounded by a reasonable constant order of magnitude $\Theta(1)$, rather than growing infinitely or being unnecessarily suppressed to zero at the severe expense of inference accuracy. Achieving this target requires a principled derivation for $\mu$. This derivation relies on two critical system parameters: $\gamma$, the expected robust safety margin between the target QoS cycle time $1/R$ and the baseline bottleneck delay evaluated at maximum compression $\boldsymbol{\eta}^{\min}$; and $\delta$, the expected absolute impact of the channel estimation error on the realized bottleneck delay. 

Because these parameters govern different mechanics of the system, they must be determined through distinct methodologies:
\begin{itemize}
   \item \textbf{Analytical Calculation of $\gamma$:} Because the feasibility margin is defined strictly at a static, maximum compression configuration ($\boldsymbol{\eta}^{\min}$), it acts as an open-loop physical property of the hardware and network. If the base processing times, activation sizes, and the underlying channel capacity distributions are known, $\gamma$ can be calculated directly. Furthermore, in multi-task scenarios where resource contention complicates exact calculation, a conservative lower bound for $\gamma$ can be derived analytically by assuming a baseline equal resource sharing policy across all active tasks and subsequently evaluating the expected bottleneck delays at $\boldsymbol{\eta}^{\min}$.
    \item \textbf{Empirical Measurement of $\delta$:} Unlike $\gamma$, the expected delay error $\delta$ depends on the dynamic compression factor $\boldsymbol{\eta}(t)$. Because $\boldsymbol{\eta}(t)$ is actively selected by the algorithm in a closed-loop manner based on the capacity estimator $\hat{c}(t)$, the error impact is deeply coupled to both the selection policy and the estimator's behavior. Therefore, $\delta$ must be measured empirically. This is achieved during a warmup phase or simulation by executing the policy, logging the true channel realizations $c(t)$ alongside the causal predictions $\hat{c}(t)$, and averaging the resulting delay discrepancies.
\end{itemize}

A critical component of our experimental setup is the deployment of a 90\% Lower Confidence Bound (LCB) for causal channel estimation. By adopting a deliberately pessimistic view of the network state, the estimator systematically under-predicts the available channel capacity. Consequently, the estimated bottleneck delay safely overestimates the actual realized delay. In the context of our Lyapunov drift analysis, this pessimistic over-estimation drives the expected error impact $\delta$ to be strictly negative ($\delta^+ < 0$). Mathematically, this broadens the effective stability denominator from $(\gamma - \delta^+)$ to $(\gamma + |\delta^+|)$. This augmented, artificial buffer ensures that the virtual queues remain stable and strictly satisfy the QoS constraints, even when the system is subjected to high-variance network environments.

Leveraging this expanded stability margin, we define a heuristic upper bound for $\mu$ to ensure the delay violations remain firmly within the $\Theta(1)$ budget:
\begin{equation}
    U = \frac{A_{\max} - A_{\min}}{\gamma}.
\end{equation}

To systematically isolate the optimal operating point on the Pareto frontier—defined as the minimum $\mu$ that strictly satisfies the QoS delay constraints while maximizing accuracy—we employ an empirical binary search strategy over the interval $(0, U]$. To minimize profiling overhead, we conduct a multiple-step binary search initialized at the midpoint:
\begin{equation}
    \mu = \frac{U}{2}.
\end{equation}
Following each trial, the system evaluates the resulting time-averaged delay compliance. The search direction for the subsequent iteration is dictated by the system's stability:
\begin{itemize}
    \item \textbf{Delay Satisfied (Over-penalized):} If the constraints are strictly met, the system can safely prioritize inference accuracy further. We update the upper bound of the search space and test a smaller penalty (e.g., transitioning from $U/2$ to $U/4$).
    \item \textbf{Delay Violated (Under-penalized):} If the delay constraints are breached, the dual variables lack sufficient weight to stabilize the virtual queues. We update the lower bound of the search space and test a larger penalty (e.g., transitioning from $U/2$ to $3U/4$).
\end{itemize}
Based on our experiments, executing this binary search for three iterations rapidly converges on a highly tuned $\mu$.
In Fig. \ref{fig:mu-sweep}, we observe the sensitivity of the result to the value of $\mu$ in an online Jetson experiment (\scenVisionJetsonTestbedST) for three different values of $\mu$.
We observe that in the left figure, related to approximating the gradient using a fitting model approach, decreasing the tradeoff parameter $\mu$ from $1$ to $0.01$ results in an increase in aggregate utility (from approximately $0.81$ to $0.90$). However, this improvement comes at the direct cost of QoS compliance. As $\mu$ decreases, the system prioritizes utility over the virtual queues, causing the excess delay to cross the feasibility threshold (the zero vertical line) from a safe negative value into the positive domain, indicating delay violations. For the fitting model, only $\mu=1$ successfully maintains a strictly negative excess delay. 
In contrast, the right panel illustrates the performance when employing the Stein's estimator. While the overarching $\mu$-driven tradeoff curve remains structurally similar, the specific operating points shift. Notably, $\mu=0.1$ emerges as a highly optimal configuration for the Stein's estimator; it achieves a strong utility of approximately $0.85$ while still maintaining a strictly negative excess delay (around $-10^{-1}$). Further decreasing the parameter to $\mu=0.01$ yields marginal utility gains (up to roughly $0.92$) but severely breaches the delay constraints, resulting in a large positive excess delay.

Ultimately, both panels empirically validate the $[\mathcal{O}(\mu), \mathcal{O}(1/\mu)]$ bounds derived in our Lyapunov analysis. Sweeping $\mu$ traces out a distinct Pareto frontier for the \algOURSNOCSIst[short] approach, clearly bridging the gap between conservative, low-utility baseline executions and aggressive, delay-violating performance. This visualizes and underscores the necessity of our empirical binary search methodology, as the exact optimal $\mu$ required to maximize utility while safely adhering to the zero-excess-delay constraint heavily depends on the specific gradient estimation strategy deployed.

\begin{figure*}[t]
    \centering
    \includegraphics[width=0.75\textwidth]{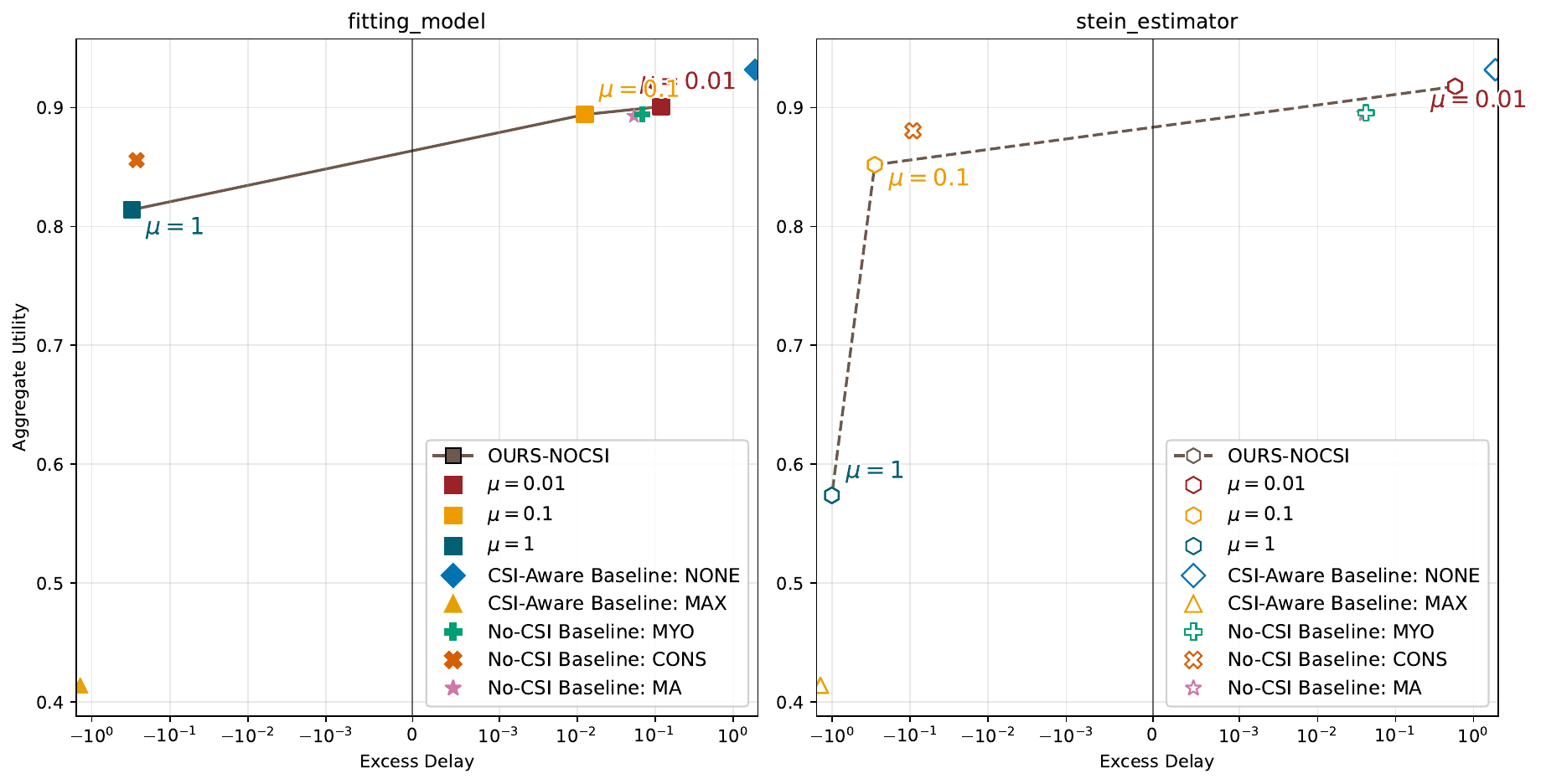}
    \caption{Testbed Top-$k$ single-task results under $\mu$ sweep. The left panel uses the fitting-based estimator, and the right panel uses Stein's estimator.}
    \label{fig:mu-sweep}
\end{figure*}





\section{Estimating the Accuracy-Compression Function}\label{app:acccomp}
\CRedit{This appendix describes how we estimate $A_k(\boldsymbol{\eta}_k)$ from held-out data and how we validate the concavity assumed in Sec.~\ref{sec:meth}. We first recall the Stein gradient oracle, then characterize the shape of $A$ and detect $\eta_{\min}$, statistically test concavity across tasks and compression schemes, and finally describe regression fitting.}

\subsection{Stein Gradient Oracle}
\label{app:stein}

We describe the zeroth-order gradient oracle used in our method, which estimates $\nabla f(\mathbf{x})$ using only scalar evaluations of $f$. The construction is grounded in Stein's lemma.

\subsubsection{Stein's Lemma}
We repeat here Stein's Lemma, which we can use to produce a gradient estimator for the accuracy function via appropriate sampling:
\begin{lemma}[Stein, 1981~\cite{stein1981estimation}] 
Let $\mathbf{z} \sim \mathcal{N}(0, I_d)$ and let $f : \mathbb{R}^d \to \mathbb{R}$ be almost everywhere differentiable with $\mathbb{E}[\|\nabla f(\mathbf{x} + \sigma\mathbf{z})\|] < \infty$ and $\mathbb{E}[f(\mathbf{x} + \sigma\mathbf{z})^2] < \infty$. Then for any $\mathbf{x} \in \mathbb{R}^d$ and $\sigma > 0$:
\begin{equation}
    \mathbb{E}_{\mathbf{z}}\left[\mathbf{z}\, f(\mathbf{x} + \sigma\mathbf{z})\right] = \sigma\, \mathbb{E}_{\mathbf{z}}\left[\nabla f(\mathbf{x} + \sigma\mathbf{z})\right].
    \label{eq:steins_lemma}
\end{equation}
\end{lemma}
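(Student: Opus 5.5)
The identity in \eqref{eq:steins_lemma} holds coordinatewise, and it comes from Gaussian integration by parts. Fix $\mathbf{x}$ and $\sigma>0$, and write $g(\mathbf{z}) \triangleq f(\mathbf{x}+\sigma\mathbf{z})$, so that $\nabla_{\mathbf{z}} g(\mathbf{z}) = \sigma \nabla f(\mathbf{x}+\sigma\mathbf{z})$. It then suffices to show $\mathbb{E}[z_j g(\mathbf{z})] = \mathbb{E}[\partial_j g(\mathbf{z})]$ for each coordinate $j\in[d]$. The key observation is that the standard Gaussian density satisfies $\partial_j \varphi(\mathbf{z}) = -z_j \varphi(\mathbf{z})$. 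Therefore
\[
\mathbb{E}[z_j g(\mathbf{z})] = -\int_{\mathbb{R}^d} g(\mathbf{z})\, \partial_j\varphi(\mathbf{z})\, d\mathbf{z}.
\]
The plan is to integrate by parts in $z_j$ and pick up $\int \partial_j g\, \varphi = \mathbb{E}[\partial_j g(\mathbf{z})]$.

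Concretely, I will use Fubini to write $\mathbf{z}=(z_j,\mathbf{z}_{-j})$ and factor $\varphi = \varphi_1(z_j)\varphi_{d-1}(\mathbf{z}_{-j})$. This reduces the claim to the one-dimensional statement $\int_{\mathbb{R}} h(u) u\varphi_1(u)\,du = \int_{\mathbb{R}} h'(u)\varphi_1(u)\,du$, applied for almost every fixed $\mathbf{z}_{-j}$. The integrability hypotheses make Fubini legitimate. Since $\mathbb{E}[f(\mathbf{x}+\sigma\mathbf{z})^2]<\infty$, Cauchy--Schwarz gives $\mathbb{E}|z_j g|<\infty$. The hypothesis $\mathbb{E}\|\nabla f\|<\infty$ gives $\mathbb{E}|\partial_j g|<\infty$.

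The main obstacle is the one-dimensional step under only "almost everywhere differentiable" regularity. In that setting there are no boundary terms for free, and the fundamental theorem of calculus is not automatic. I would read the hypothesis as saying that, along almost every line, $h$ is absolutely continuous with derivative $h'$ almost everywhere; this is the weakly differentiable / Sobolev reading that the statement implicitly needs. Under that reading I would avoid boundary terms altogether with the Stein trick. Write $u\varphi_1(u) = \int_u^\infty v\varphi_1(v)\,dv \cdot$ (appropriately signed), and split the integral at $0$. Then
\[
\int_0^\infty h(u)u\varphi_1(u)\,du = \int_0^\infty \bigl(h(u)-h(0)\bigr)u\varphi_1(u)\,du + h(0)\int_0^\infty u\varphi_1(u)\,du.
\]
Next I would express $h(u)-h(0)=\int_0^u h'(s)\,ds$ and swap the order of integration by Tonelli/Fubini, justified by $\int |h'|\varphi_1<\infty$. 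This yields $\int_0^\infty h'(s)\varphi_1(s)\,ds$, because $\int_s^\infty u\varphi_1(u)\,du=\varphi_1(s)$. The negative half-line is handled symmetrically, and the $h(0)$ terms cancel because $u\varphi_1$ is odd.

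Finally, I will integrate the one-dimensional identity against $\varphi_{d-1}(\mathbf{z}_{-j})$ and collect the coordinates $j=1,\dots,d$. This gives $\mathbb{E}[\mathbf{z}\, g(\mathbf{z})] = \mathbb{E}[\nabla_{\mathbf{z}} g(\mathbf{z})] = \sigma\,\mathbb{E}[\nabla f(\mathbf{x}+\sigma\mathbf{z})]$, which is \eqref{eq:steins_lemma}.
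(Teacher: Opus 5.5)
Your proof is correct, and it rests on the same Gaussian integration-by-parts idea the paper appeals to, but you carry it out differently. The paper gives only a remark: integrate by parts and let the integrability conditions kill the boundary term $f(\mathbf{x}+\sigma\mathbf{z})\phi(\mathbf{z})$ at infinity. You instead reduce to one dimension by Fubini and use Stein's original device: write $h(u)-h(0)=\int_0^u h'(s)\,ds$, swap the integrals by Tonelli, and use $\int_s^\infty u\varphi_1(u)\,du=\varphi_1(s)$. With this route no boundary term ever has to be controlled, and the second-moment hypothesis is used only to justify the Fubini step for $z_j g$.

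Please repair the garbled sentence asserting $u\varphi_1(u)=\int_u^\infty v\varphi_1(v)\,dv$. The identity you actually use is $\varphi_1(s)=\int_s^\infty u\varphi_1(u)\,du$, i.e.\ $\varphi_1'(u)=-u\varphi_1(u)$.

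The main gain of your route is that it exposes the hypothesis the lemma really needs. Strengthening ``almost everywhere differentiable'' to absolute continuity on almost every coordinate line (Stein's ``almost differentiable'') is not a stylistic choice, because the statement as printed is false. Take $f(\mathbf{y})=\mathbf{1}\{y_1>0\}$. It is bounded and $\nabla f=\mathbf{0}$ almost everywhere, so both integrability conditions hold and the right-hand side vanishes. Yet $\mathbb{E}[z_1 f(\mathbf{x}+\sigma\mathbf{z})]=\varphi_1(x_1/\sigma)>0$. The paper's boundary-term remark cannot detect this, because the failure comes from an interior jump, not from behavior at infinity.

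The paper's actual use of the lemma does not depend on it. The identity $\nabla\tilde{A}(\boldsymbol{\eta})=\sigma^{-1}\mathbb{E}[\mathbf{z}\,A(\boldsymbol{\eta}+\sigma\mathbf{z})]$, applied to a possibly discontinuous $A$, holds for any bounded measurable $A$. It follows by differentiating the Gaussian kernel under the integral in $\tilde{A}(\boldsymbol{\eta})=\int A(\mathbf{y})\,\phi_\sigma(\mathbf{y}-\boldsymbol{\eta})\,d\mathbf{y}$, where $\phi_\sigma$ is the $\mathcal{N}(\mathbf{0},\sigma^2\mathbf{I})$ density.
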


The integrability conditions ensure that the boundary term arising in the integration by parts proof vanishes, i.e.\ $\lim_{\|\mathbf{z}\|\to\infty} f(\mathbf{x}+\sigma\mathbf{z})\,\phi(\mathbf{z}) = 0$, where $\phi$ is the standard Gaussian density. These conditions are satisfied whenever $f$ grows slower than exponentially, which covers all cases of practical interest.

\subsubsection{Gradient Estimator of $A(\boldsymbol{\eta})$ using Stein's Lemma}

Equation~\eqref{eq:steins_lemma} implies that the gradient of the Gaussian-smoothed function,
\begin{equation}
    f_\sigma(\mathbf{x}) = \mathbb{E}_{\mathbf{z} \sim \mathcal{N}(0, I_d)}\left[f(\mathbf{x} + \sigma\mathbf{z})\right],
\end{equation}
can be written purely in terms of function evaluations:
\begin{equation}
    \nabla f_\sigma(\mathbf{x}) = \frac{1}{\sigma}\,\mathbb{E}_{\mathbf{z}}\left[\mathbf{z}\, f(\mathbf{x} + \sigma\mathbf{z})\right].
\end{equation}
As $\sigma \to 0$, $f_\sigma \to f$ pointwise, so for small $\sigma$ this approximates $\nabla f(\mathbf{x})$. In practice we estimate the expectation via $N$ antithetic sample pairs $(\mathbf{z}_i, -\mathbf{z}_i)$, which cancels the zeroth-order term $f(\mathbf{x})$ and reduces variance:
\begin{equation}
    \hat{\nabla} f(\mathbf{x}) = \frac{1}{2N\sigma} \sum_{i=1}^{N} \mathbf{z}_i \left[f(\mathbf{x} + \sigma\mathbf{z}_i) - f(\mathbf{x} - \sigma\mathbf{z}_i)\right], \quad \mathbf{z}_i \overset{\text{i.i.d.}}{\sim} \mathcal{N}(0, I_d).
    \label{eq:stein_estimator}
\end{equation}
The estimator requires $2N$ function evaluations regardless of the input dimension $d$, and all queries are independent so they can be executed in parallel. The smoothing parameter $\sigma$ controls a bias-variance trade-off: larger $\sigma$ reduces variance but increases the bias from smoothing, while smaller $\sigma$ reduces bias at the cost of higher variance.

\begin{algorithm}[t]
\caption{Stein Gradient Oracle}
\label{alg:stein}
\small 
\begin{algorithmic}[1]
\Require Point $\mathbf{x} \in \mathbb{R}^d$, smoothing parameter $\sigma > 0$, number of samples $N$
\Ensure Gradient estimate $\hat{\nabla} f(\mathbf{x}) \in \mathbb{R}^d$
\State $\mathbf{g} \leftarrow \mathbf{0}_d$
\For{$i = 1$ \textbf{to} $N$}
    \State Sample $\mathbf{z}_i \sim \mathcal{N}(0, I_d)$
    \State $\mathbf{g} \leftarrow \mathbf{g} + \mathbf{z}_i \bigl[ f(\mathbf{x} + \sigma \mathbf{z}_i) - f(\mathbf{x} - \sigma \mathbf{z}_i) \bigr]$
\EndFor
\State \Return $\hat{\nabla} f(\mathbf{x}) = \mathbf{g}\, /\, (2N\sigma)$
\end{algorithmic}
\end{algorithm}

\subsection{Accuracy Function Analysis: Concavity and $\eta_{\min}$ Detection}
\label{app:concavity}

We next describe the methodology used to (i) characterize the shape of
the accuracy function $A(\boldsymbol{\eta})$ across the compression-ratio
space, (ii) automatically detect the per-link threshold $\eta_{\min}^{\ell}$
below which the activation representations collapse to noise and the predictor
should not be fitted, (iii) produce three-dimensional
surface plots of $A(\boldsymbol{\eta})$\CRedit{, and (iv) statistically validate concavity
across tasks and compression schemes.}
Let $\boldsymbol{\eta} = (\eta_0, \eta_1, \ldots, \eta_{L-1}) \in [0,1]^L$
denote the per-link compression ratios \CRedit{of a task with $L_k$ stages, so $L=L_k-1$}, where $\eta_\ell = 1$ corresponds to no compression on link $\ell$ and $\eta_\ell = 0$ corresponds to maximum
compression.  The true accuracy function $A_{\mathrm{true}}(\boldsymbol{\eta})$
is defined as the task accuracy (e.g.\ top-1 classification accuracy for
ResNet, or MMLU accuracy for the LLM backend) when each link $\ell$ operates
at ratio $\eta_\ell$.
Empirically, $A_{\mathrm{true}}$ exhibits two qualitatively distinct
regimes (see Figure~\ref{fig:accuracy_surface_knn}):

\begin{enumerate}
  \item \textbf{Collapse regime} ($\eta_\ell < \eta_{\min}^{\ell}$): the
    compression is so severe that intermediate activations carry no useful
    information.  $A$ is essentially flat and close to chance level; small
    increases in $\eta_\ell$ produce no consistent improvement.

  \item \textbf{Informative regime} ($\eta_\ell \geq \eta_{\min}^{\ell}$):
    $A$ is monotonically non-decreasing and \emph{concave} in $\eta_\ell$.  Our surrogate
    predictor is only well-specified in this regime.
\end{enumerate}
Operating below $\eta_{\min}^{\ell}$ wastes compute (all configurations are
equally bad) and can mislead the optimizer.  We therefore identify
$\eta_{\min}^{\ell}$ automatically before training the predictor and enforce
it as a hard lower bound on the search domain.

\subsubsection{$\eta_{\min}$ Detection}
\label{app:concavity:tests}

We run three complementary analyses, each implemented as 1-D sweeps along
individual link axes or as multi-dimensional directional profiles. In each case, we used a knn regressor to smoothly approximate the true accuracy function. The knn regressor was trained on $N^2$ samples of the accuracy function, each averaged over $R$ evaluations to reduce noise.

\paragraph{Test 1: Per-axis Monotonicity Scan}
\label{app:concavity:mono}

For each link $\ell$, we construct a 1-D slice of the accuracy function by
sweeping $\eta_\ell$ over a uniform grid of $N$ points in $[0,1]$ while
holding all other links fixed at a reference value $\eta_{\mathrm{mid}} =
0.5$. 
Let $p_j = A(\eta^{(\ell)}_j)$ denote the accuracy at grid
point $j$.  We apply a light three-point moving average $\tilde{p}_j =
\frac{1}{3}(p_{j-1}+p_j+p_{j+1})$ and compute finite-difference slopes
\begin{equation}
  s_j = \frac{\tilde{p}_{j+1} - \tilde{p}_j}{\eta^{(\ell)}_{j+1} -
  \eta^{(\ell)}_j}.
\end{equation}
The per-link monotonicity threshold is defined as the smallest grid value
above which the slope remains consistently positive:
\begin{equation}
  \hat{\eta}_{\min,\mathrm{mono}}^{\ell}
  \;=\;
  \eta^{(\ell)}_{j^* - 1},
  \qquad
  j^* = \min\bigl\{j : s_j \geq \tau_{\mathrm{slope}}\bigr\},
\end{equation}
where $\tau_{\mathrm{slope}} > 0$ is a small noise tolerance
(default $5 \times 10^{-3}$).

\paragraph{Test 2: Concavity Scan via Second Differences}
\label{app:concavity:conc}

Using the same 1-D profiles $p_j$, we compute second (centered) differences
\begin{equation}
  \Delta^2_j = p_{j-1} + p_{j+1} - 2\,p_j, \qquad j = 1, \ldots, N-2.
\end{equation}
Strict concavity of $A$ requires $\Delta^2_j \leq 0$ for all $j$.  A
significantly positive second difference indicates a convex neighborhood in the
accuracy curve.
We scan downward from $\eta=1$ and identify the last (highest) index at which
concavity is violated:
\begin{equation}
  \hat{\eta}_{\min,\mathrm{conc}}^{\ell}
  \;=\;
  \eta^{(\ell)}_{j^{**}},
  \qquad
  j^{**} = \max\bigl\{j : \Delta^2_j > \tau_{\mathrm{conc}}\bigr\},
\end{equation}
where $\tau_{\mathrm{conc}} > 0$ is a small positive tolerance to allow for
measurement noise (default $5 \times 10^{-3}$).  If no violation is detected,
$\hat{\eta}_{\min,\mathrm{conc}}^{\ell} = 0$.  Plots of $\Delta^2_j$ are
shown in Figure~\ref{fig:second_differences}.

\paragraph{Test 3: Directional Profiles}
\label{app:concavity:dir}

We repeat the above process across different directions. We sample $D$ random rays in $\boldsymbol{\eta}$-space.  Each
ray is parameterized as
\begin{equation}
  \boldsymbol{\eta}(s) = \boldsymbol{\eta}^{(d)}_{\mathrm{start}}
  + s\,\bigl(\mathbf{1} - \boldsymbol{\eta}^{(d)}_{\mathrm{start}}\bigr),
  \quad s \in [0,1],
\end{equation}
so that $\boldsymbol{\eta}(0) = \boldsymbol{\eta}^{(d)}_{\mathrm{start}}$
and $\boldsymbol{\eta}(1) = \mathbf{1}$.  The first ray always starts from
$\boldsymbol{0}$; the remaining $D-1$ start from points sampled uniformly in
$[0, 0.5]^L$.  Accuracy is evaluated along each ray at $N$ equispaced values
of $s$, and the profiles are plotted together in
Figure~\ref{fig:directional_profiles}. A vertical marker is drawn on each profile at the ray parameter $s^*$
corresponding to the point where the ray enters the informative regime.

The final per-link threshold combines the two quantitative tests with an
additive safety margin $\delta \geq 0$ (default $0.05$):
\begin{equation}
  \hat{\eta}^{\ell}_{\min,\mathrm{final}}
  \;=\;
  \min\!\left(1,\;
    \max\!\bigl(
      \hat{\eta}_{\min,\mathrm{mono}}^{\ell},\;
      \hat{\eta}_{\min,\mathrm{conc}}^{\ell}
    \bigr)
    + \delta
  \right).
\end{equation}
The global threshold used as the lower bound for all subsequent experiments is
\begin{equation}
  \eta_{\min} = \max_{\ell}\; \hat{\eta}^{\ell}_{\min,\mathrm{final}}.
  \label{eq:etamin_global}
\end{equation}
Axis-sweep profiles with all three threshold lines overlaid are shown in
Figure~\ref{fig:axis_profiles}.

\subsection{Three-Dimensional Accuracy Surface Visualisation}
\label{app:concavity:3d}

To justify the use of convex optimization algorithms, we provide a picture of $A(\boldsymbol{\eta})$ that demonstrates its concave region. To do that, we use the ResNet-56 model with \CRedit{$L_k=4$ stages ($L=3$ hops)}, fix link 0 at
$\eta_0 = 1$ (uncompressed) and produce 3-D surface plots over the
$(\eta_1, \eta_2)$ plane on an $N \times N$ grid (total $N^2$ evaluations,
$R$ repeats each).  Two surfaces are rendered: a) the KNN-smoothed surface, where
A $k$-nearest-neighbor regressor with inverse-distance weighting is fitted
to the $N^2$ observations, yielding a smooth
surface that removes per-point noise while preserving the global shape
(Figure~\ref{fig:accuracy_surface_knn}); and b) the Stein-smoothed surface, where
The Stein (Gaussian-smoothing) view of $A$,
\begin{equation}
  \tilde{A}(\boldsymbol{\eta})
  = \mathbb{E}_{\mathbf{z} \sim \mathcal{N}(\mathbf{0}, \mathbf{I})}
    \bigl[A(\boldsymbol{\eta} + \sigma\,\mathbf{z})\bigr],
  \label{eq:stein_smooth}
\end{equation}
estimated by Monte Carlo with $M$ perturbation samples per grid point and
bandwidth $\sigma$ matching the value used in the gradient oracle
(Figure~\ref{fig:accuracy_surface_stein}).  This surface is what the
optimizer effectively maximizes during training. Table~\ref{tab:concavity_params} summarizes the default hyperparameters used
in our experiments.  

\begin{table}[t!]
\centering
\caption{Default hyperparameters for concavity analysis and 3-D surface plots.}
\label{tab:concavity_params}
\begin{tabular}{llcc}
\toprule
Parameter & Description & ResNet & LLM \\
\midrule
$N$ & Grid points per axis (1-D sweeps) & 20 & 12 \\
$R$ & Repeated evaluations per point & 5 & 1 \\
$D$ & Random rays (directional test) & 10 & 5 \\
$\tau_{\mathrm{slope}}$ & Monotonicity noise tolerance & \multicolumn{2}{c}{$5 \times 10^{-3}$} \\
$\tau_{\mathrm{conc}}$  & Concavity noise tolerance   & \multicolumn{2}{c}{$5 \times 10^{-3}$} \\
$\delta$ & Safety margin & \multicolumn{2}{c}{$0.05$} \\
\midrule
$N^2$ & Grid points (3-D surface)  & 400 & 400 \\
$R$   & Repeats per point (3-D)    & 3   & 3   \\
$k$   & KNN neighbors             & \multicolumn{2}{c}{8} \\
$\sigma$ & Stein bandwidth         & \multicolumn{2}{c}{$0.05$} \\
$M$   & Stein MC samples per point & \multicolumn{2}{c}{10} \\
\bottomrule
\end{tabular}
\end{table}

\label{app:concavity:figures}

\begin{figure}[t!]
  \centering
  \includegraphics[width=\linewidth]{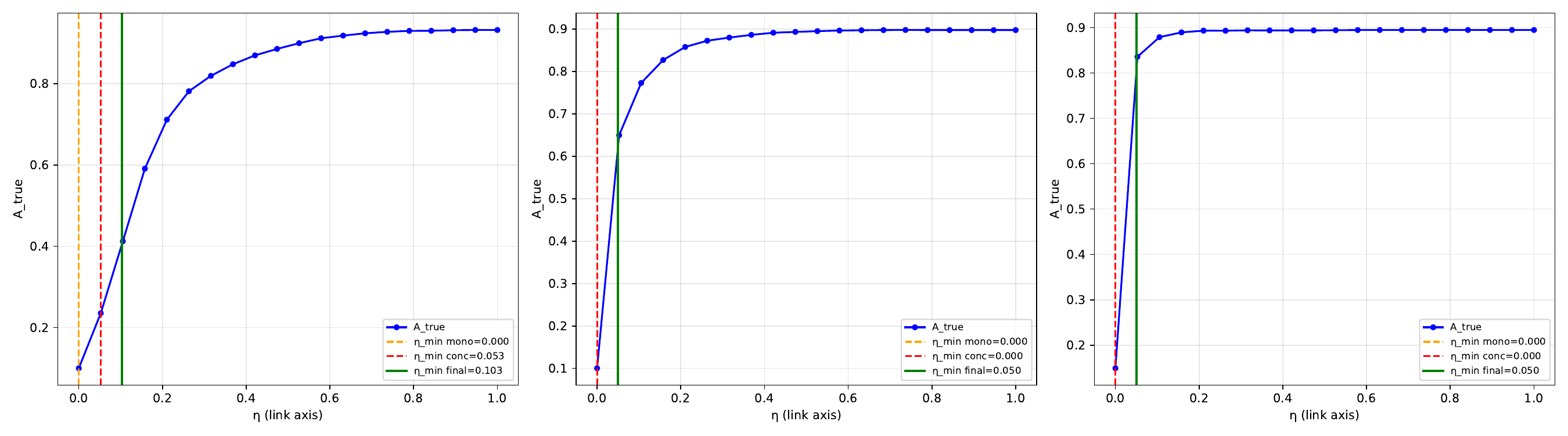}
  \caption{%
    \textbf{Per-axis accuracy profiles with $\eta_{\min}$ thresholds.}
    Each panel shows the 1-D accuracy slice along one link axis (all other
    links held at $\eta_{\mathrm{mid}}=0.5$).  Orange dashed: monotonicity
    threshold $\hat{\eta}_{\min,\mathrm{mono}}^{\ell}$; red dashed: concavity
    threshold $\hat{\eta}_{\min,\mathrm{conc}}^{\ell}$; green solid: final
    recommended threshold $\hat{\eta}^{\ell}_{\min,\mathrm{final}}$
    (equation~\eqref{eq:etamin_global}).
  }
  \label{fig:axis_profiles}
\end{figure}

\begin{figure}[t!]
  \centering
  \includegraphics[width=\linewidth]{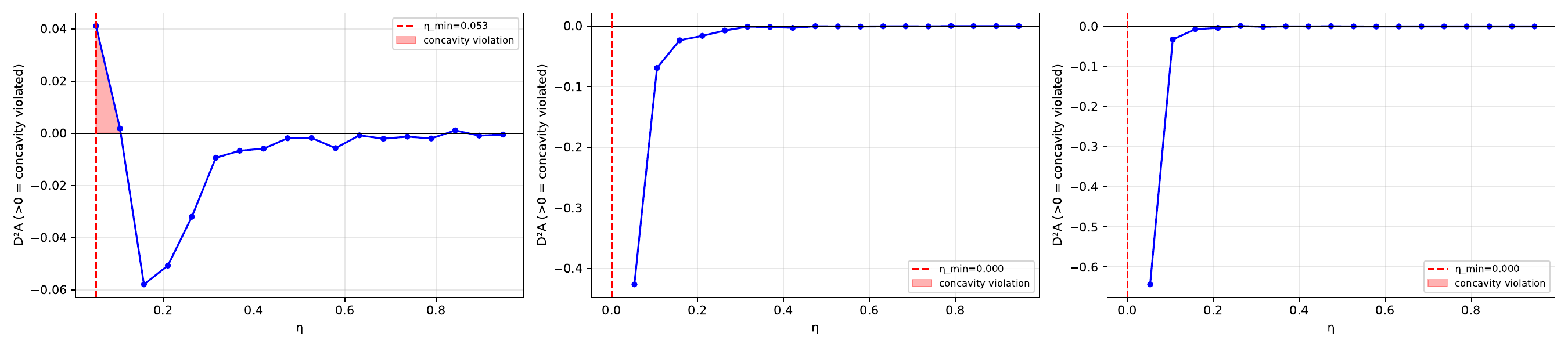}
  \caption{%
    \textbf{Second differences $\Delta^2 A$ along each link axis.}
    Positive values (shaded red) indicate violations of concavity.  The red
    dashed vertical line marks $\hat{\eta}_{\min,\mathrm{conc}}^{\ell}$, i.e.\
    the largest $\eta$ at which a significant violation occurs.
  }
  \label{fig:second_differences}
\end{figure}

\begin{figure}[t!]
  \centering
  \includegraphics[width=0.6\linewidth]{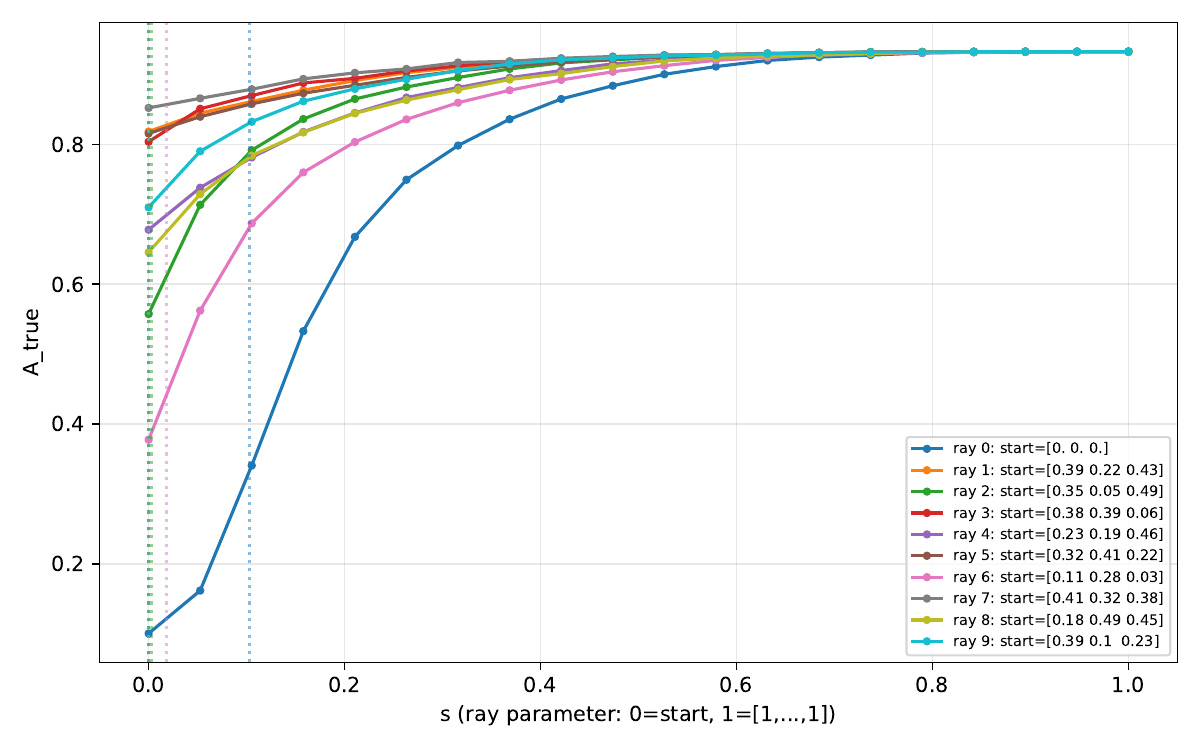}
  \caption{%
    \textbf{Directional accuracy profiles along random rays.}
    Each colored curve traces $A(\boldsymbol{\eta}(s))$ as $s$ increases from
    a random starting point to $\mathbf{1}$.  The dotted vertical line marks
    the ray parameter $s^*$ at which the ray enters the informative regime
    (all links above their respective $\hat{\eta}_{\min,\mathrm{final}}^\ell$).
    The consistent increase in slope at $s^*$ confirms that the collapse is
    correlated across links.
  }
  \label{fig:directional_profiles}
\end{figure}

\begin{figure}[t!]
  \centering
  \includegraphics[width=0.6\linewidth]{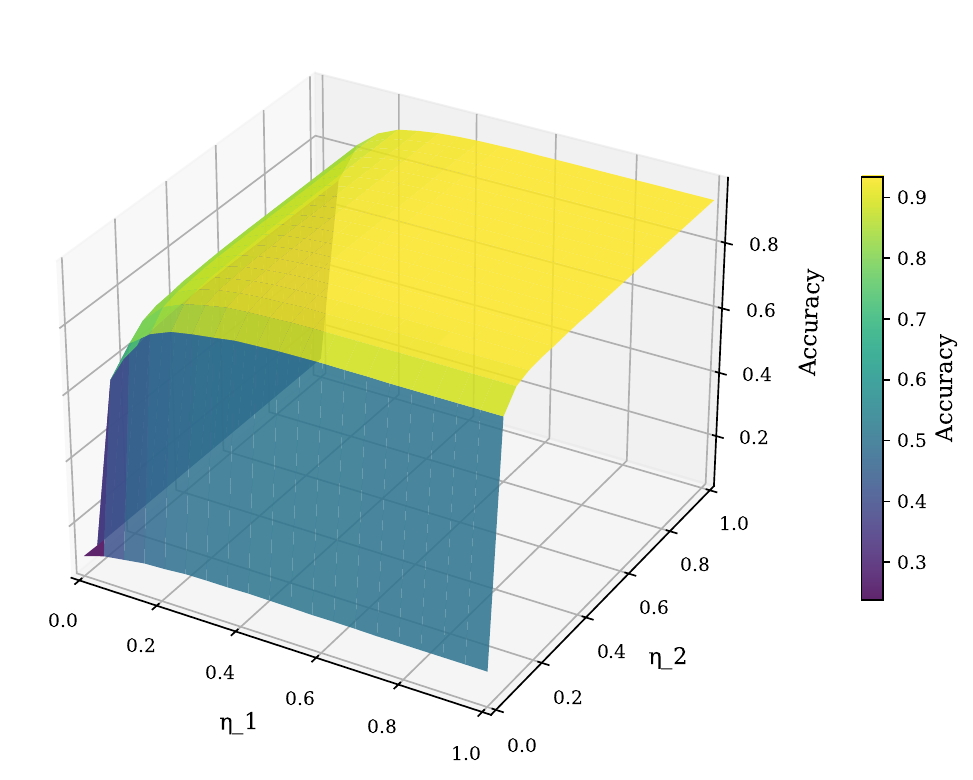}
  \caption{%
    \textbf{KNN-smoothed accuracy surface} ($k=8$, inverse-distance
    weighting).  Fitted on $N^2$ observations; the smooth surface suppresses
    per-point noise while preserving the monotone concave structure of the
    informative regime.
  }
  \label{fig:accuracy_surface_knn}
\end{figure}

\begin{figure}[t!]
  \centering
  \includegraphics[width=0.6\linewidth]{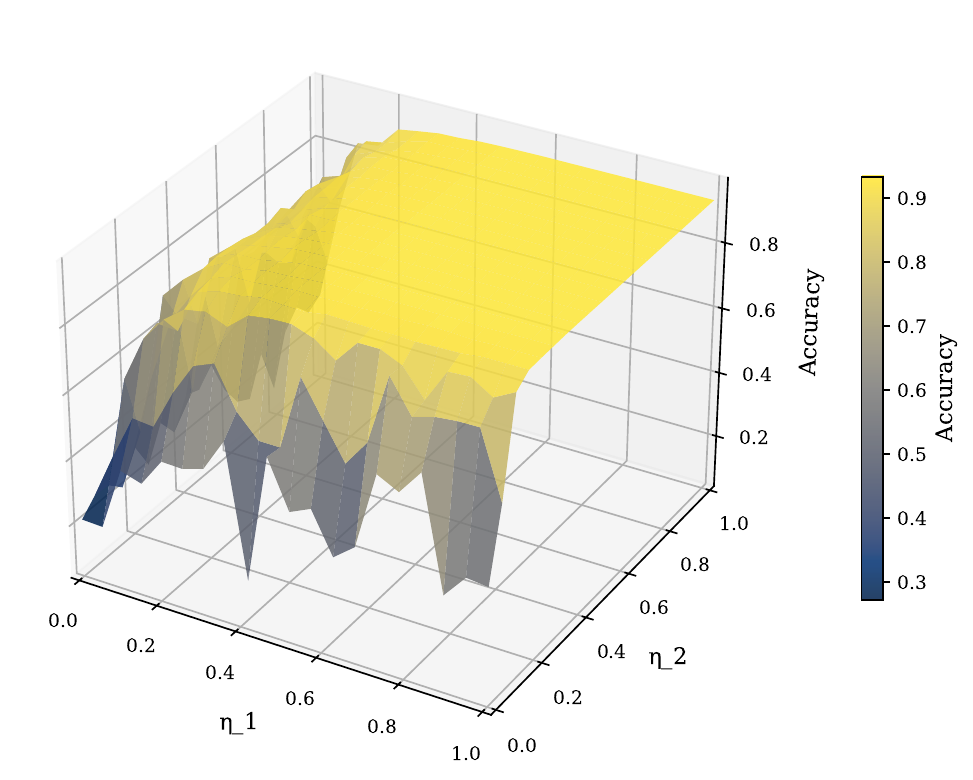}
  \caption{%
    \textbf{Stein-smoothed accuracy surface}
    $\tilde{A}(\boldsymbol{\eta})$ (equation~\eqref{eq:stein_smooth}),
    estimated by Monte Carlo with $M=10$ perturbation samples per point and
    bandwidth $\sigma=0.05$.  This is the surface effectively maximized by
    the gradient oracle during optimization.
  }
  \label{fig:accuracy_surface_stein}
\end{figure}

\subsection{Empirical Validation of the Concavity Assumption}
\label{sec:concavity-validation}

\CRedit{Above, we characterize $A(\boldsymbol{\eta})$ along axes, rays, and surfaces. We now verify the same premise, that each $A_k(\boldsymbol{\eta}_k)$ is monotone non-decreasing and \emph{concave}, statistically over the whole $\boldsymbol{\eta}$-hypercube, separately for each strategy: top-$k$ sparsification, quantization, and LLM.int8 mixed-precision.} Concavity is what reduces the multi-task program to the convex program of Theorem~\ref{thm:multi-convex}, and underpins Assumption~1.

\paragraph{Monte-Carlo Jensen inequality test.}
A function is concave iff its restriction to every line segment in its domain is concave, so we sample line segments through the cube and test each restriction. \CRedit{As above, $\boldsymbol{\eta}_k\in[0,1]^{L_k-1}$, with $\boldsymbol{\eta}_k=\mathbf{1}$ the uncompressed baseline.} We test the tasks of Table~\ref{tab:tasks} with different stage counts $L_k$. The procedure is:
\begin{enumerate}
  \item \textbf{Sample isotropic segments.} Draw endpoints
    $\boldsymbol{\eta}_{\mathrm{start}},\boldsymbol{\eta}_{\mathrm{end}}$
    independently and uniformly from $[\varepsilon,1]^{L_k-1}$
    ($\varepsilon=10^{-3}$), unconstrained in direction, so the ensemble covers
    the cube isotropically.
  \item \textbf{Evaluate along the segment.} Discretize each segment at $m$
    points
    $\boldsymbol{\eta}(t_j)=(1-t_j)\boldsymbol{\eta}_{\mathrm{start}}+t_j\boldsymbol{\eta}_{\mathrm{end}}$,
    $t_j=j/(m-1)$, and record $y_j=A_k(\boldsymbol{\eta}(t_j))$
    (reference-normalized perplexity retention, or raw accuracy).
  \item \textbf{Test Jensen's inequality on random chords.} Along each segment,
    sample triples $i<j<k$ and check the chord form of Jensen's inequality
    for concave functions,
    \begin{equation}
      y_j \;\ge\; \lambda\,y_i + (1-\lambda)\,y_k - \mathrm{tol},
      \qquad \lambda = \frac{t_k - t_j}{t_k - t_i},
      \label{eq:jensen}
    \end{equation}
    where $\mathrm{tol}\ge 0$ absorbs measurement noise without masking
    structural violations. The per-segment \emph{Jensen inequality pass rate} is
    the fraction of sampled inequalities that pass, the vertical axis of every
    figure below. We draw five triples per segment and repeat the draw five
    times, so each reported value is a mean across resamples rather than a
    single binomial estimate.
\end{enumerate}

Top-$k$ and LLM.int8 are effectively continuous in $\boldsymbol{\eta}$ and are
probed directly. Quantization is discrete: each $\eta_\ell$ snaps to a bit-width
rung before the layer is compressed, so we enumerate its lattice exhaustively
and read the same random segments off that grid. This costs no additional
evaluation and probes all three strategies identically. We report quantization
under the floor snap rule that the backends implement.
Testing curvature through randomly placed chords, rather than a bandwidth-dependent derivative estimate, follows the non-parametric simplex approach of Abrevaya and Jiang for noisy regression functions~\cite{AbrevayaJiang2005}.

\paragraph{Scalar $\eta$-floor sweep.}
\CRedit{As above, activations collapse into noise below a per-link threshold $\eta_{\min}$, so the question is not only whether $A_k$ is concave on the full cube, but on which sub-cube $[c,1]^{L_k-1}$.} For each scalar floor $c\in[0,0.7]$ we keep the grid points with $\min_\ell\eta_\ell\ge c$, sample triples among the survivors, and recompute the pass fraction. Sweeping $c$ traces how excluding the aggressive-compression corner changes measured concavity.

\begin{figure*}[t]
  \centering
  \includegraphics[width=\textwidth]{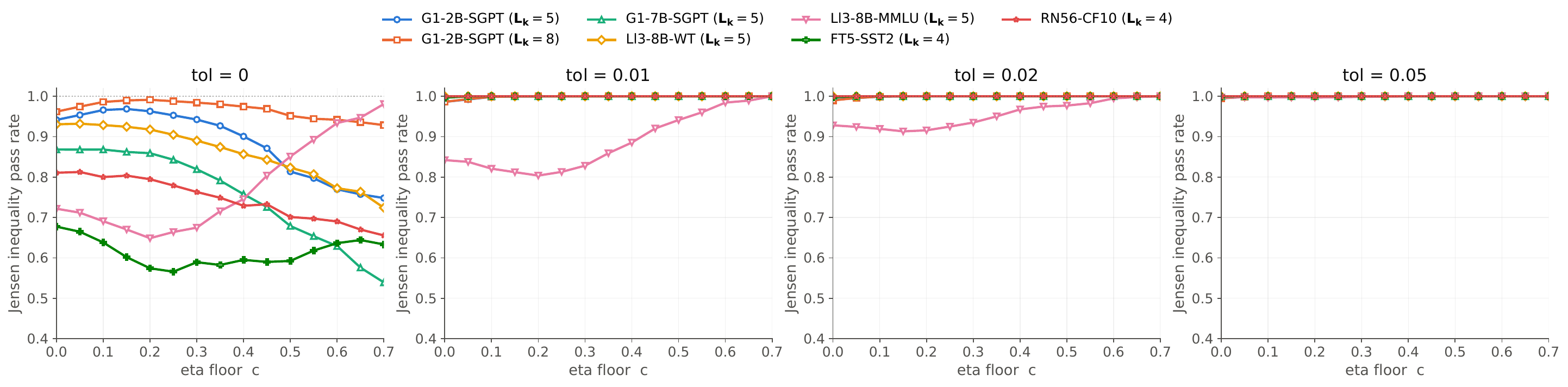}
  \caption{Monte-Carlo Jensen inequality test for \emph{top-$k$} sparsification. Each panel plots the fraction of sampled Jensen triples (Eq.~\eqref{eq:jensen}) satisfied against the scalar $\eta$-floor $c$ that restricts segments to $[c,1]^{L_k-1}$; panels are the chord tolerance $\mathrm{tol}\in\{0,0.01,0.02,0.05\}$, and each line is one scenario with a stage count~$L_k$. Every point is the mean over five resamples of five triples per segment. With a small chord tolerance ($\mathrm{tol} = 0.01$), the pass fraction for most tasks approaches $1.0$, even when the $\boldsymbol{\eta}$-floor is near $0$.}
  \label{fig:concavity-grid-topk}
\end{figure*}

\begin{figure*}[t]
  \centering
  \includegraphics[width=\textwidth]{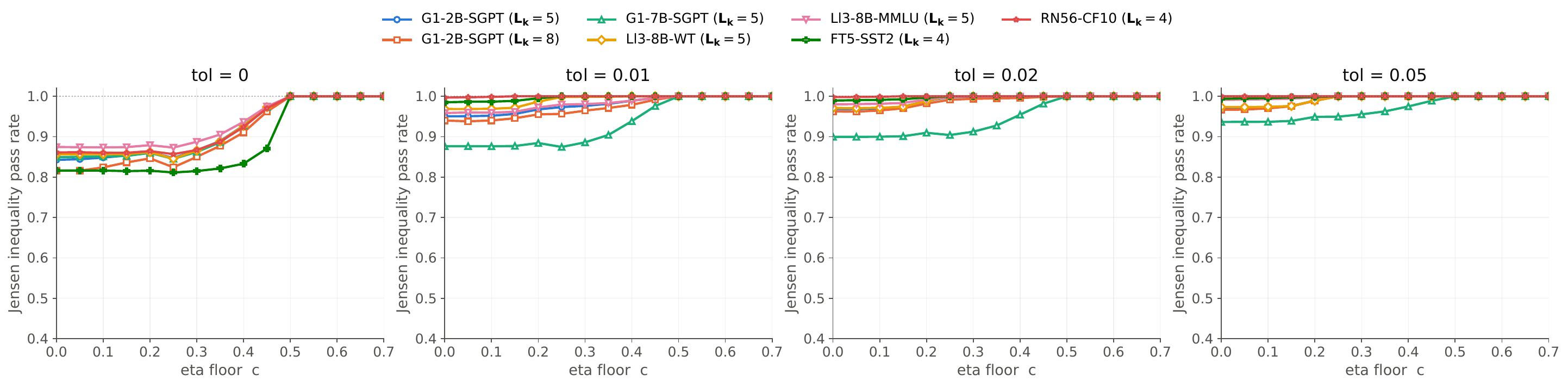}
  \caption{Monte-Carlo Jensen inequality test for the \emph{quantization} strategy. Axes, panels and lines as in Fig.~\ref{fig:concavity-grid-topk}. Past $c=0.5$ the floor rule leaves a single reachable rung, so every segment is constant there and the plateau at~$1.0$ records the absence of measurable curvature rather than a curvature test. Again, with a small chord tolerance ($\mathrm{tol} = 0.02$), the pass fraction for most tasks approaches $1.0$.}
  \label{fig:concavity-grid-quant}
\end{figure*}

\begin{figure*}[t]
  \centering
  \includegraphics[width=\textwidth]{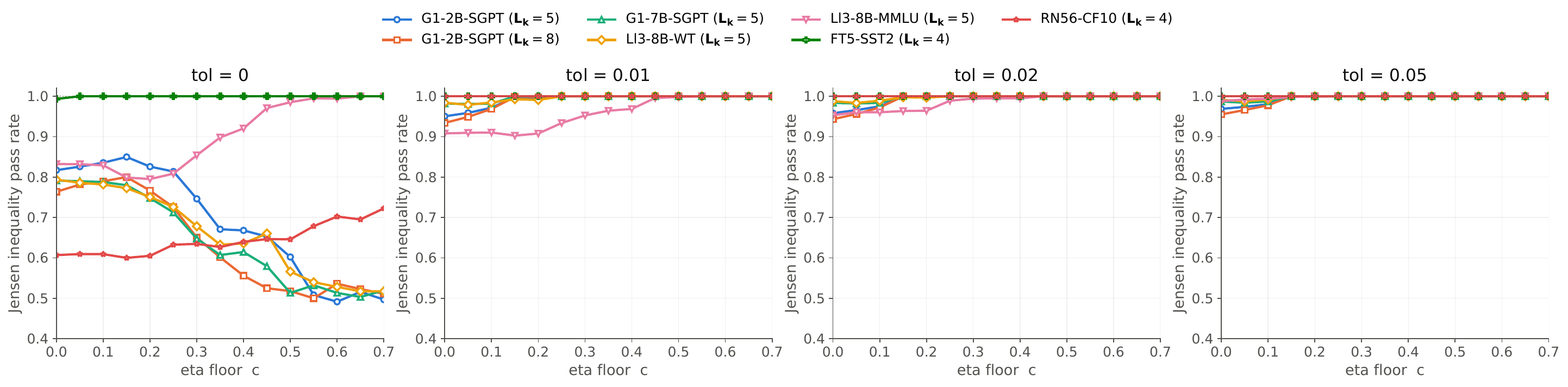}
  \caption{Monte-Carlo Jensen inequality test for the \emph{LLM.int8} strategy. Axes, panels and lines as in Fig.~\ref{fig:concavity-grid-topk}. With a small chord tolerance ($\mathrm{tol} = 0.01$), the pass fraction for most tasks approaches $1.0$.}
  \label{fig:concavity-grid-llmint8}
\end{figure*}

\paragraph{Results.}
Figures~\ref{fig:concavity-grid-topk}--\ref{fig:concavity-grid-llmint8} report the floor sweep for the three strategies. At $\mathrm{tol}=0$ the pass rate at $c=0$ spans $0.61$--$0.99$ across the twenty-one curves, dips to $0.49$ inside the sweep, and raising the floor does not repair it, those curves are flat or non-monotone in $c$ rather than climbing, so only one top-$k$ curve settles at~$1$ anywhere and only two of seven LLM.int8 curves do. At exact tolerance a chord missed by $10^{-6}$ scores identically to a genuine convex dip. Essentially the entire gap closes at the \emph{first} non-zero tolerance: $\mathrm{tol}=0.01$ lifts the worst value at $c=0$ from $0.61$ to $0.84$ and the median to~$0.98$, every curve then reaches~$1.000$ within the sweep, and widening further moves that worst value only $0.84\to0.90\to0.94$ at $\mathrm{tol}=0.01,0.02,0.05$ ($\ge 0.97$ at $\mathrm{tol}=0.1$). The curves snap to~$1$ at the first tolerance different than 0 and stay there, the signature of near-chord misses at evaluation-noise scale rather than of extended convex regions. The only task that behaves differently is \taskLangFive under top-$k$ ($0.804$ at $\mathrm{tol}=0.01$) and \taskLangFour under quantization ($0.900$ from $\mathrm{tol}=0.02$ on). Under every strategy, $A_k$ is concave up to measurement noise.

\paragraph{Why strict ($\mathrm{tol}=0$) concavity can \emph{fall} as $c$ rises.}
Raising the floor moves the test into the saturated neighborhood of $\boldsymbol{\eta}=\mathbf{1}$, where the surface is nearly flat: the returns from further decompression are already exhausted, so the local curvature, and with it the Jensen gap, approaches zero. A chord drawn across a flat plateau has almost no downward room, so any evaluation noise of size $\varepsilon$ registers as a violation. The $\mathrm{tol}=0$ curves on smooth perplexity landscapes therefore \emph{decrease} with $c$ even though the surface is not becoming more convex: the signal shrinks faster than the noise. A modest offset ($\mathrm{tol}\in\{0.01,\ldots,0.1\}$) absorbs exactly these gaps and restores a pass rate of essentially~$1$, confirming that the effect is an artifact of testing a near-zero second derivative, not a structural failure of Assumption~1.

The opposite pattern, concavity \emph{rising} with $c$, appears when the low-$\boldsymbol{\eta}$ collapse regime is itself the source of local convexity (discrete accuracy plateaus, jagged collapse); fencing that regime off with $\eta_{\min}$ is then the right remedy. That pattern is clear for \taskLangFive in Fig.~\ref{fig:concavity-grid-llmint8}.

\paragraph{Implications.}
Within the informative regime $\boldsymbol{\eta}\ge\eta_{\min}$, the only region in which we fit a surrogate or run the optimizer, $A_k$ is concave up to noise-scale violations for all three strategies. The residual strict-test dips fall exactly where concavity matters least: the saturated neighborhood of the uncompressed baseline, where the Jensen gap vanishes and noise dominates. Together these results validate the convex reduction of Theorem~\ref{thm:multi-convex} and the per-link water-filling solution of both algorithms.

\subsection{Regression Fitting}
\label{app:fitting}

As an alternative to the Stein-based approach \CRedit{of Sec.~\ref{app:stein}}, we also estimate the accuracy-compression function $A_k(\boldsymbol{\eta}_k)$ and its gradients using a fitting-based method. The basic idea is to conduct offline experiments on a partitioned model, establish the relationship between $\boldsymbol{\eta}_k$ and the resulting inference accuracy, and then fit a surrogate model to this relationship.

Specifically, for each task $k$ and scenario, we first build an offline partitioned model with a scenario-specific set of fixed cut points. For every cut point $i$, we predefine a candidate set of compression factors within $[\eta_{i,k}^{\min}, 1]$ that covers the feasible range as broadly as possible. We then enumerate all combinations of compression factors across the cut points. For each configuration of $\boldsymbol{\eta}_k$, we run offline inference on the hold-out dataset using the corresponding partitioned model and record the resulting accuracy $A_k(\boldsymbol{\eta}_k)$. In this way, we obtain \(N_f\) sampled compression-accuracy pairs,
\[
\left(\boldsymbol{\eta}_k^{(n)},\, A_k^{(n)}\right), \qquad n=1,2,\dots,N_{f},
\]
where \(\boldsymbol{\eta}_k^{(n)}\) denotes the compression vector of the \(n\)-th sample for task \(k\), and
\(A_k^{(n)}\) is the corresponding measured accuracy.

We then fit a surrogate model \(\hat{A}_k(\boldsymbol{\eta}_k)\) based on the sampled compression-accuracy pairs \((\boldsymbol{\eta}_k^{(n)}, A_k^{(n)})\). To train \(\hat{A}_k(\boldsymbol{\eta}_k)\), we use \emph{scikit-learn}, a widely used Python machine learning library that provides efficient tools for regression fitting. Since the shape of the accuracy-compression function may vary across tasks and scenarios, we evaluate several surrogate families, including \emph{linear\_monotonic}, \emph{poly2}, \emph{poly3}, \emph{gbm}, \emph{rf}, \emph{mlp}, and \emph{mlp\_small}.

\textbf{Linear monotonic (\emph{linear\_monotonic})}
This surrogate uses a linear regressor with nonnegative coefficients, enforcing a monotonic relationship between the compression features and the predicted accuracy. 

\textbf{Polynomial regression (\emph{poly2}, \emph{poly3})}
These surrogates use polynomial feature expansion followed by ridge regression to capture smooth nonlinear accuracy-compression relationships. The polynomial degree is set to 2 for \emph{poly2} and 3 for \emph{poly3}, with ridge regularization strength fixed at 0.1. Because the fitted functions are smooth, their gradients can be computed directly.

\textbf{Multilayer perceptron (\emph{mlp\_small}, \emph{mlp})}
These surrogates use feedforward neural networks to model more flexible nonlinear mappings. The \emph{mlp\_small} model uses two hidden layers of sizes 64 and 32 with up to 2000 training iterations, whereas \emph{mlp} uses three hidden layers of sizes 128, 64, and 32 with up to 3000 iterations. Both models use early stopping to reduce overfitting.

\textbf{Random forest (\emph{rf})}
This method uses a random forest regressor, which aggregates the predictions of multiple decision trees to improve robustness and capture nonlinear effects. In our implementation, the model uses 200 trees with a maximum depth of 15.

\textbf{Gradient boosting (\emph{gbm})}
This method uses a gradient boosting regressor, which builds an ensemble of trees sequentially so that each stage reduces the residual error of the previous one. Its main hyperparameters are 300 boosting stages, a maximum depth of 8, and a learning rate of 0.05.

\begin{table*}[t]
\centering
\scriptsize
\caption{Test accuracy and surrogate inference speed for different fitting models. Inference speed is reported in milliseconds per sample.}
\label{tab:fitting_acc_speed}
\setlength{\tabcolsep}{3pt}
\resizebox{\textwidth}{!}{%
\begin{tabular}{ll
rrr
rrr
rrr
rrr
rrr
rrr
rrr}
\toprule
\multirow{2}{*}{Task} & \multirow{2}{*}{Codec}
& \multicolumn{3}{c}{\emph{linear\_mono}}
& \multicolumn{3}{c}{\emph{poly2}}
& \multicolumn{3}{c}{\emph{poly3}}
& \multicolumn{3}{c}{\emph{rf}}
& \multicolumn{3}{c}{\emph{gbm}}
& \multicolumn{3}{c}{\emph{mlp\_small}}
& \multicolumn{3}{c}{\emph{mlp}} \\
\cmidrule(lr){3-5}\cmidrule(lr){6-8}\cmidrule(lr){9-11}\cmidrule(lr){12-14}\cmidrule(lr){15-17}\cmidrule(lr){18-20}\cmidrule(lr){21-23}
&
& RMSE & $R^2$ & Inf. (ms)
& RMSE & $R^2$ & Inf. (ms)
& RMSE & $R^2$ & Inf. (ms)
& RMSE & $R^2$ & Inf. (ms)
& RMSE & $R^2$ & Inf. (ms)
& RMSE & $R^2$ & Inf. (ms)
& RMSE & $R^2$ & Inf. (ms) \\
\midrule
\taskVisionTwo & \emph{topk}
& 0.0940 & 0.6937 & 0.1752
& 0.0441 & 0.9327 & 0.2709
& 0.0226 & 0.9822 & 0.2794
& 0.0009 & 1.0000 & 13.36
& 0.0019 & 0.9999 & 0.4332
& 0.1661 & 0.0422 & 0.2276
& 0.0269 & 0.9749 & 0.2038 \\

\taskVisionTwo & \emph{quantization}
& 0.3487 & 0.1638 & 0.1752
& 0.2999 & 0.3816 & 0.2892
& 0.2331 & 0.6265 & 0.2868
& 0.0016 & 1.0000 & 13.64
& 0.0018 & 1.0000 & 0.4536
& 0.4368 & -0.3122 & 0.2029
& 0.4289 & -0.2649 & 0.2243 \\

\taskVisionTwo & \emph{llmint8\_fp16\_int2}
& 0.0292 & 0.5688 & 0.1757
& 0.0100 & 0.9501 & 0.2707
& 0.0053 & 0.9859 & 0.2852
& 0.0061 & 0.9811 & 13.28
& 0.0101 & 0.9487 & 0.4338
& 0.0239 & 0.7124 & 0.1981
& 0.1110 & -5.2093 & 0.2075 \\

\taskVisionTwo & \emph{llmint8\_fp16\_int4}
& 0.0006 & 0.3058 & 0.1755
& 0.0007 & 0.2523 & 0.2793
& 0.0006 & 0.2668 & 0.2815
& 0.0007 & 0.2017 & 13.33
& 0.0008 & -0.2244 & 0.4522
& 0.2463 & -105348 & 0.2032
& 0.1451 & -36545 & 0.2081 \\

\taskVisionTwo & \emph{llmint8\_int8\_int4}
& 0.0007 & 0.0495 & 0.1746
& 0.0007 & 0.1081 & 0.2716
& 0.0006 & 0.3141 & 0.2795
& 0.0006 & 0.2561 & 13.30
& 0.0007 & -0.0463 & 0.4319
& 0.0217 & -895.3 & 0.1999
& 0.0912 & -15853 & 0.2087 \\

\taskVisionTwo & \emph{llmint8\_int8\_int2}
& 0.0276 & 0.6158 & 0.1759
& 0.0092 & 0.9569 & 0.2705
& 0.0041 & 0.9915 & 0.2818
& 0.0025 & 0.9968 & 13.11
& 0.0005 & 0.9999 & 0.4468
& 0.1159 & -5.7756 & 0.1984
& 0.1045 & -4.5031 & 0.2062 \\

\midrule
\taskLangSeven & \emph{topk}
& 0.0133 & 0.4869 & 0.1731
& 0.0098 & 0.7201 & 0.2693
& 0.0094 & 0.7401 & 0.2766
& 0.0067 & 0.8695 & 13.15
& 0.0070 & 0.8565 & 0.4196
& 0.0923 & -23.8149 & 0.1971
& 0.0827 & -18.9063 & 0.2040 \\

\taskLangSeven & \emph{quantization}
& 0.0791 & 0.3047 & 0.1794
& 0.0522 & 0.6971 & 0.2764
& 0.0360 & 0.8560 & 0.2800
& 0.0303 & 0.8978 & 13.17
& 0.0359 & 0.8570 & 0.4296
& 0.1110 & -0.3695 & 0.1967
& 0.1193 & -0.5817 & 0.2044 \\

\taskLangSeven & \emph{llmint8\_fp16\_int8}
& 0.0000 & 1.0000 & 0.1763
& 0.0000 & 1.0000 & 0.2709
& 0.0000 & 1.0000 & 0.2901
& 0.0000 & -1520 & 13.21
& 0.0000 & 1.0000 & 0.4052
& 0.9300 & 0.0000 & 0.2014
& 0.7418 & 0.0000 & 0.2068 \\
\bottomrule
\end{tabular}%
}
\end{table*}

\begin{table*}[t]
\centering
\scriptsize
\caption{Fidelity of the Gaussian-smoothed accuracy to the true accuracy. Stein's identity does not approximate $A$; it estimates $\nabla \tilde{A}_\sigma$, the gradient of the Gaussian-smoothed accuracy $\tilde{A}_\sigma(\bm{\eta})=\mathbb{E}_{\bm z}[A(\bm{\eta}+\sigma \bm z)]$. We report RMSE and $R^2$, for $\tilde{A}_\sigma$ against $A$ over the $\bm{\eta}$ grid, as a function of the smoothing scale $\sigma$. }
\label{tab:stein_smoothing}
\setlength{\tabcolsep}{3pt}
\resizebox{\textwidth}{!}{%
\begin{tabular}{llrr rr rr rr rr}
\toprule
\multirow{2}{*}{Task} & \multirow{2}{*}{Codec} & \multicolumn{2}{c}{$\sigma=0.01$} & \multicolumn{2}{c}{$\sigma=0.02$} & \multicolumn{2}{c}{$\bm{\sigma=0.05}$} & \multicolumn{2}{c}{$\sigma=0.1$} & \multicolumn{2}{c}{$\sigma=0.2$} \\
\cmidrule(lr){3-4}\cmidrule(lr){5-6}\cmidrule(lr){7-8}\cmidrule(lr){9-10}\cmidrule(lr){11-12}
 &  & RMSE & $R^2$ & RMSE & $R^2$ & RMSE & $R^2$ & RMSE & $R^2$ & RMSE & $R^2$ \\
\midrule
\taskVisionTwo \scriptsize(V-TRS/V-LOO) & \cmpT & 0.0139 & 0.9961 & 0.0168 & 0.9943 & 0.0196 & 0.9923 & 0.0560 & 0.9370 & 0.1265 & 0.6784 \\
 & \cmpQ & 0.0075 & 0.9996 & 0.0095 & 0.9994 & 0.0389 & 0.9897 & 0.1216 & 0.8994 & 0.1933 & 0.7456 \\
 & \cmpIEight & 0.0000 & 1.0000 & 0.0000 & 1.0000 & 0.0000 & 1.0000 & 0.0000 & 1.0000 & 0.0000 & 1.0000 \\
\bottomrule
\end{tabular}%
}
\end{table*}

We evaluate each fitting method using RMSE, $R^2$, and the inference latency of a single prediction call, since online optimization is sensitive to prediction overhead. Based on preliminary experiments that jointly consider fitting quality and prediction speed, we select the final surrogate for each task and scenario. Taking \taskVisionTwo and \taskLangSeven as representative examples, and following the model setup and partition strategy used in the scenarios \textbf{\scenVisionJetsonTestbedS} and \textbf{\scenLangJetsonTestbedS} on the Jetson platform, we evaluate the fitting quality and inference overhead of different surrogate models. The corresponding results for the two tasks are reported in Table~\ref{tab:fitting_acc_speed}. For comparison, Table~\ref{tab:stein_smoothing} presents the same accuracy metrics on the Gaussian-smoothed accuracy function approximated by Stein. The results show that the fitting quality differs significantly across surrogate families, suggesting that the fitting method should be selected task-dependently. Meanwhile, the inference latency of all fitted models remains far below the corresponding scenario-specific \(\tau\), demonstrating the efficiency advantage of fitting-based accuracy estimation in terms of runtime overhead.

We also evaluate the structural properties of all fitted models through one-dimensional slices, which allow us to examine characteristics such as monotonicity and convexity with respect to $\eta_{i,k}$. The Figure~\ref{fig:vision2-oflv1-poly3-analysis} provides an illustrative surrogate model \(\hat{A}_k(\boldsymbol{\eta}_k)\) example for the scenario \textbf{\scenVisionJetsonTestbedS}.

We further evaluate the fitted models through the behavior of the proposed algorithm in both offline and online experiments. If Algorithms~\ref{alg:NoCSI-SingleTask} and \ref{alg:NoCSI-MultiTask} consistently satisfy the delay constraints and achieve accuracy advances relative to the baselines in the test scenarios, this indicates that the fitting-based accuracy estimator provides high-quality estimates. Referring to the results under scenario \scenVisionJetsonSS in Figures~\ref{fig:topk-resnet-jetson} and \ref{fig:topk-resnet-jetson-stein}, we observe that the fitting-based method achieves better performance than Stein's method in this scenario.

\begin{figure*}[t]
    \centering
    \subfloat[One-Dimensional Accuracy Slices with Fixed Base Values.]{
        \includegraphics[width=0.45\textwidth]{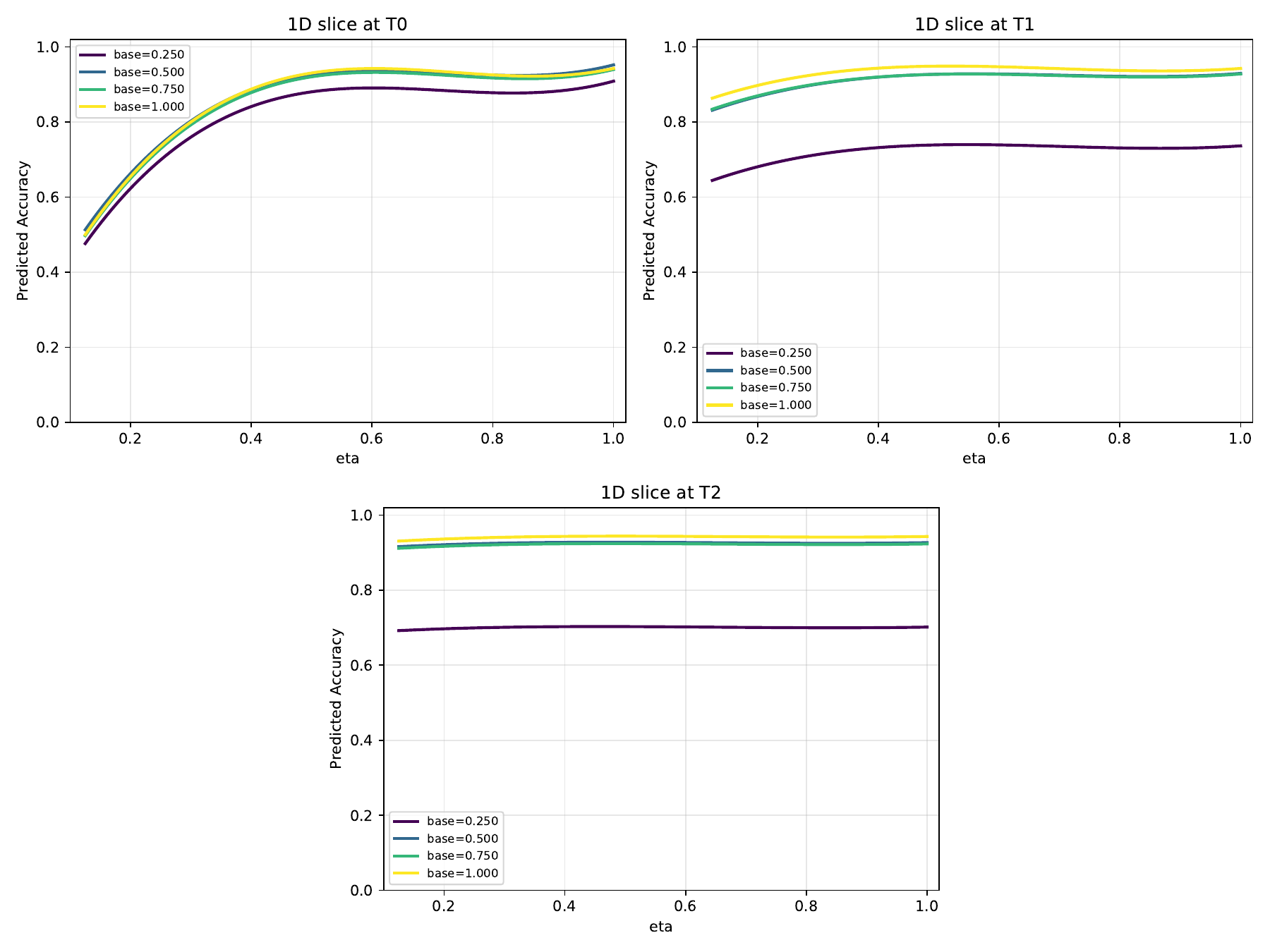}
        \label{fig:vision2-oflv1-poly3-1d-slice}
    }
    \hfill
    \subfloat[Second-Derivative Heatmaps under Different Base Values]{
        \includegraphics[width=0.5\textwidth]{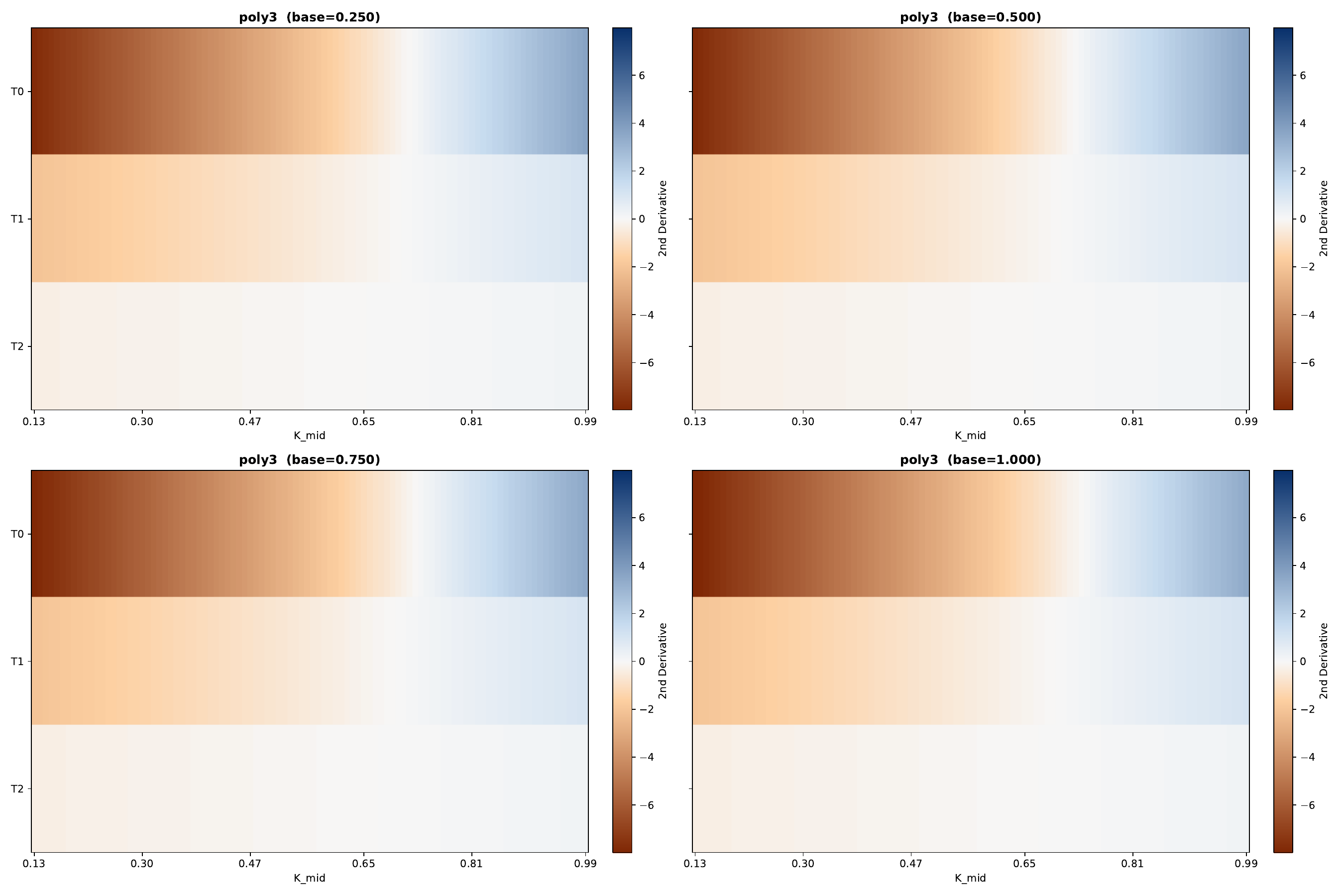}
        \label{fig:vision2-oflv1-poly3-curvature-heatmap}
    }
    \caption{Scenario \scenVisionJetsonTestbedS \emph{Poly3} Surrogate Visualization. The base value denotes the fixed value assigned to the non-scanned transfer ratios while one transfer ratio is varied to generate a one-dimensional slice.}
    \label{fig:vision2-oflv1-poly3-analysis}
\end{figure*}

\section{Additional Experimental Details}\label{app:reproducibility}

\subsection{Tasks and Datasets}

For vision, we consider a two-layer MLP on MNIST and ResNet-56 on CIFAR-10. For language, we consider Gemma-2 2B and 7B on MMLU and ShareGPT, Llama-3.1 8B on MMLU and WikiText, and Flan-T5-Base on SST-2. MMLU is evaluated with a fixed multiple-choice log-likelihood protocol. For the generative tasks (ShareGPT and WikiText), we define a reference-normalized retention score $A_k \in [0,1]$ as follows. For each held-out sample, we compute the mean per-token negative log-likelihood on an evaluation span under the compressed pipeline and under the uncompressed reference; the sample score is $\min\{1,\exp(\bar{H}^{\mathrm{ref}} - \bar{H}^{\mathrm{comp}})\}$, where $\bar{H}$ denotes this mean loss (equivalently, the ratio of reference to compressed perplexity on that span). The task utility is the average over samples, so $A_k=1$ means the compressed model matches the full-precision reference on the evaluation set. On ShareGPT, the span is the assistant-response tokens conditioned on the conversation context; on WikiText, it is all next-token positions within fixed-length text windows. SST-2 is a binary sentiment-classification subset of the Stanford Sentiment Treebank~\cite{sst2}; for Flan-T5-Base on SST-2, we use classification accuracy as the inference-quality metric.

\subsection{Scenarios}
\label{app:scenarios}

This section complements the scenario summaries in the main paper---the offline overview in Table~\ref{tab:scenarios} and the testbed overview in Table~\ref{tab:online-scenarios}---by providing \CRedit{the complete scenario parameter tables (Tables~\ref{tab:scenarios-full} and~\ref{tab:online-scenarios-full}) and} additional information for each scenario not reported in the main body, including the full algorithm parameter listings used to produce our reported results (Table~\ref{tab:results-offline-all} and Table~\ref{tab:results-testbed-all}), and brief explanations describing scenario-specific details and insights. \CRedit{Compressors and rate-selection policies referenced in these descriptions are specified in Appendices~\ref{app:compressors} and~\ref{appendix:algorithms}.}

\CRedit{Tables~\ref{tab:scenarios} and~\ref{tab:online-scenarios} summarize each scenario by identifier, task(s), target throughput $R_k$, and pipeline depth $L_k$. To keep the main-body tables compact, Table~\ref{tab:scenarios} additionally omits capacity ranges $[c_{\min},c_{\max}]$ and horizons $T$, while Table~\ref{tab:online-scenarios} omits network profiles and horizons $T$; both omit stage compute times $\boldsymbol{\tau}_k\in\mathbb{R}_+^{L_k}$ and uncompressed activation sizes $\boldsymbol{a}_k\in\mathbb{R}_+^{L_k-1}$. The complete tables are given below as Tables~\ref{tab:scenarios-full} and~\ref{tab:online-scenarios-full}. These quantities match Sec.~\ref{sec:meth}: $\tau_{i,k}$ is the time to execute stage $i$ of task $k$ on its assigned node, and $a_{i,k}$ is the uncompressed activation volume on inter-stage hop $i$ (hence $|\boldsymbol{a}_k|=L_k-1$). Capacity ranges, horizons, and network profiles are scenario constants used by the optimizers and reported in full below. On Jetson, $\boldsymbol{\tau}$ (and, in testbed runs, the measured $R_k$) varies with the compressor because stage times include encode/decode cost; $\boldsymbol{a}$ depends only on tensor shape and is compressor-invariant.}

\begin{table*}[t]
\centering
\small
\setlength{\tabcolsep}{4pt}
\renewcommand{\arraystretch}{1.05}
\caption{\CRedit{Complete offline scenario parameters corresponding to Table~\ref{tab:scenarios}.
$\boldsymbol{\tau}_k\in\mathbb{R}_+^{L_k}$ lists per-stage compute times and $\boldsymbol{a}_k\in\mathbb{R}_+^{L_k-1}$ lists uncompressed activation sizes on the inter-stage hops (Sec.~\ref{sec:meth}).
Where $\boldsymbol{\tau}$ differs across compression schemes within the same scenario, it reflects per-scheme computational profiles measured on the Jetson testbed; $\boldsymbol{a}$ is determined by tensor shape and is compressor-invariant.}}
\label{tab:scenarios-full}
{\color{black}
\begin{tabular}{@{}lccccc ll@{}}
\toprule
\textbf{Scenario} & \textbf{Task ID} & $\bm{c_{\min}}$--$\bm{c_{\max}}$ & $\bm{R_k}$ & $\bm{T}$ & $\bm{L_k}$ & $\boldsymbol{\tau}$ (ms) & $\boldsymbol{a}$ (MB) \\
 &  & (MB) & (Hz) &  &  &  &  \\
\midrule
\multicolumn{8}{c}{\textit{Single-task}} \\
\midrule
\scenLangSGTwoB
  & \taskLangThree  & 0.46--2.08  & 8  & 30  & 4
  & $[8, 10, 8, 10]$
  & $[0.176, 0.176, 0.176]$ \\
\scenLangSGTwoBSevenN
  & \taskLangThree  & 3.4--16     & 6  & 48  & 7
  & $[3.4, 3.4, 3.4, 2.3, 2.3, 2.3, 3.4]$
  & $[2.1, 2.1, 2.1, 2.1, 2.1, 2.1]$ \\
\scenLangSGSevenB
  & \taskLangFour   & 0.575--2.6  & 7  & 36  & 4
  & $[13, 13, 13, 13]$
  & $[0.252, 0.252, 0.252]$ \\
\scenVisionTRS
  & \taskVisionTwo  & 0.06--0.3   & 10 & 50  & 4
  & $[1, 1, 1, 1]$
  & $[0.066, 0.033, 0.016]$ \\
\scenVisionLOO
  & \taskVisionTwo  & 0.2--0.6    & 10 & 50  & 4
  & $[1, 1, 1, 1]$
  & $[0.066, 0.033, 0.016]$ \\
\scenLangFiveShot
  & \taskLangFive   & 10--40      & 10 & 50  & 3
  & $[14.7, 16.2, 14.8]$
  & $[4.2, 4.2]$ \\
\scenLangWTEightB
  & \taskLangSix    & 5--9        & 4  & 30  & 4
  & $[11.4, 11.4, 11.4, 11.4]$
  & $[4.19, 4.19, 4.19]$ \\
& & & & & & \cmpT: $[120, 61, 36, 19]$ & \\
\scenVisionJetsonSS
  & \taskVisionTwo  & 1.08--1.79  & 1  & 100 & 4
  & \cmpQ: $[131, 56, 36, 18]$
  & $[6.6, 3.3, 1.6]$ \\
& & & & & & \cmpIEight: $[334, 159, 81, 21]$ & \\
& & & & & & \cmpT: $[61, 52, 53, 105]$ & \\
\scenLangJetsonS
  & \taskLangSeven  & 1.35--1.95  & 2  & 100 & 4
  & \cmpQ: $[43, 37, 38, 103]$
  & $[1.2, 0.39, 0.39]$ \\
& & & & & & \cmpIEight: $[64, 47, 48, 104]$ & \\
\midrule
\multicolumn{8}{c}{\textit{Multi-task}} \\
\midrule
\multirow{3}{*}{\scenMTLangSGPTThree}
  & \taskLangThree & \multirow{3}{*}{4.2--7}   & 6  & \multirow{3}{*}{20}  & 4
  & $[8, 10, 8, 10]$ & $[0.176, 0.176, 0.176]$ \\
  & \taskLangThree &                           & 5  &                      & 4
  & $[8, 10, 8, 10]$ & $[0.176, 0.176, 0.176]$ \\
  & \taskLangThree &                           & 4  &                      & 4
  & $[8, 10, 8, 10]$ & $[0.176, 0.176, 0.176]$ \\
\multirow{2}{*}{\scenMTVisFiveLooseNTenTHundred}
  & \taskVisionTwo & \multirow{2}{*}{0.2--0.6} & 10 & \multirow{2}{*}{100} & 4
  & $[1, 1, 1, 1]$ & $[0.033, 0.049, 0.008]$ \\
  & \taskVisionTwo &                           & 10 &                      & 4
  & $[1, 1, 1, 1]$ & $[0.098, 0.016, 0.025]$ \\
\multirow{6}{*}{\scenMTLVJM}
  & \multirow{3}{*}{\taskVisionTwo} & \multirow{6}{*}{1.08--1.79} & \multirow{3}{*}{0.5} & \multirow{3}{*}{100} & \multirow{3}{*}{4}
  & \cmpT: $[120, 61, 36, 19]$ & \multirow{3}{*}{$[6.6, 3.3, 1.6]$} \\
  & & & & & & \cmpQ: $[131, 56, 36, 18]$ & \\
  & & & & & & \cmpIEight: $[334, 159, 81, 21]$ & \\
  & \multirow{3}{*}{\taskLangSeven} & & \multirow{3}{*}{1.5} & \multirow{3}{*}{100} & \multirow{3}{*}{4}
  & \cmpT: $[61, 52, 53, 105]$ & \multirow{3}{*}{$[1.2, 0.39, 0.39]$} \\
  & & & & & & \cmpQ: $[43, 37, 38, 103]$ & \\
  & & & & & & \cmpIEight: $[64, 47, 48, 104]$ & \\
\bottomrule
\end{tabular}%
}
\end{table*}

\begin{table*}[t]
\centering
\small
\setlength{\tabcolsep}{3.5pt}
\renewcommand{\arraystretch}{1.05}
\caption{\CRedit{Complete testbed scenario parameters corresponding to Table~\ref{tab:online-scenarios}.
As in Table~\ref{tab:scenarios-full}, $\boldsymbol{\tau}_k\in\mathbb{R}_+^{L_k}$ and $\boldsymbol{a}_k\in\mathbb{R}_+^{L_k-1}$.
Where $\boldsymbol{\tau}$ differs across compression schemes within the same scenario, it reflects per-scheme computational profiles measured on that testbed.
On Jetson, both the measured target $R_k$ and $\boldsymbol{\tau}$ depend on the compressor because stage times include encode/decode; $\boldsymbol{a}$ is compressor-invariant.}}
\label{tab:online-scenarios-full}
{\color{black}
\begin{tabular}{@{}lccccc ll@{}}
\toprule
\textbf{Scenario} & \textbf{Task ID} & \textbf{Network} & $\bm{R_k}$ & $\bm{T}$ & $\bm{L_k}$ & $\boldsymbol{\tau}$ (ms) & $\boldsymbol{a}$ (MiB) \\
 &  &  & (Hz) &  &  &  &  \\
\midrule
\multicolumn{8}{c}{\textit{PRESCIENT}} \\
\midrule
\scenPrsntVizWire
  & \taskVisionTwo & \netProfileLan  & 5.4 & 50 & 3
  & $[8.796, 7.871, 2.239]$
  & $[6.25, 1.5625]$ \\
\scenPrsntVizEdge
  & \taskVisionTwo & \netProfileEdge & 4.3 & 50 & 3
  & $[8.796, 7.871, 2.239]$
  & $[6.25, 1.5625]$ \\
\scenPrsntLangWikiWire
  & \taskLangSix   & \netProfileLan  & 4.0 & 50 & 3
  & $[21.078, 29.998, 19.108]$
  & $[4.0, 4.0]$ \\
\scenPrsntLangWikiEdge
  & \taskLangSix   & \netProfileEdge & 3.0 & 50 & 3
  & $[21.078, 29.998, 19.108]$
  & $[4.0, 4.0]$ \\
\scenPrsntLangMmluEdge
  & \taskLangFive  & \netProfileEdge & 2.5 & 50 & 3
  & $[24.856, 40.983, 31.539]$
  & $[16, 16]$ \\
\multirow{2}{*}{\scenPrsntVizVizWire}
  & \taskVisionTwo & \multirow{2}{*}{\netProfileLan}  & 2.7
  & \multirow{2}{*}{50} & \multirow{2}{*}{3}
  & \multirow{2}{*}{$[8.796, 7.871, 2.239]$}
  & \multirow{2}{*}{$[6.25, 1.5625]$} \\
  & \taskVisionTwo & & 2.7 & & & & \\
\multirow{2}{*}{\scenPrsntVizVizEdge}
  & \taskVisionTwo & \multirow{2}{*}{\netProfileEdge} & 1.2
  & \multirow{2}{*}{50} & \multirow{2}{*}{3}
  & \multirow{2}{*}{$[8.796, 7.871, 2.239]$}
  & \multirow{2}{*}{$[6.25, 1.5625]$} \\
  & \taskVisionTwo & & 1.2 & & & & \\
\multirow{2}{*}{\scenPrsntVizLangEdge}
  & \taskVisionTwo & \multirow{2}{*}{\netProfileEdge} & 1.1
  & \multirow{2}{*}{50} & \multirow{2}{*}{3}
  & $[8.796, 7.871, 2.239]$
  & $[6.25, 1.5625]$ \\
  & \taskLangSix & & 1.1 & &
  & $[21.078, 29.998, 19.108]$
  & $[4.0, 4.0]$ \\
\multirow{2}{*}{\scenPrsntVizLangWlan}
  & \taskVisionTwo & \multirow{2}{*}{\netProfileWlan} & 0.7
  & \multirow{2}{*}{50} & \multirow{2}{*}{3}
  & $[8.796, 7.871, 2.239]$
  & $[6.25, 1.5625]$ \\
  & \taskLangSix & & 0.5 & &
  & $[21.078, 29.998, 19.108]$
  & $[4.0, 4.0]$ \\
\multirow{2}{*}{\scenPrsntLangLangEdge}
  & \taskLangSix & \multirow{2}{*}{\netProfileEdge} & 0.9
  & \multirow{2}{*}{50} & \multirow{2}{*}{3}
  & \multirow{2}{*}{$[21.078, 29.998, 19.108]$}
  & \multirow{2}{*}{$[4.0, 4.0]$} \\
  & \taskLangSix & & 0.9 & & & & \\
\midrule
\multicolumn{8}{c}{\textit{Jetson}} \\
\midrule
& & & \cmpT: $0.530$ & & & \cmpT: $[120, 61, 36, 19]$ & \\
\scenVisionJetsonTestbedS
  & \taskVisionTwo & Wi-Fi AP & \cmpQ: $0.527$ & 50 & 4
  & \cmpQ: $[131, 56, 36, 18]$
  & $[6.6, 3.3, 1.6]$ \\
& & & \cmpIEight: $0.525$ & & & \cmpIEight: $[334, 159, 81, 21]$ & \\
& & & \cmpT: $5.17$ & & & \cmpT: $[61, 52, 53, 105]$ & \\
\scenLangJetsonTestbedS
  & \taskLangSeven & Wi-Fi AP & \cmpQ: $5.51$ & 50 & 4
  & \cmpQ: $[43, 37, 38, 103]$
  & $[1.2, 0.39, 0.39]$ \\
& & & \cmpIEight: $3.97$ & & & \cmpIEight: $[64, 47, 48, 104]$ & \\
\multirow{6}{*}{\scenMTJetsonTestbedM}
  & \multirow{3}{*}{\taskVisionTwo} & \multirow{3}{*}{Wi-Fi AP}
  & \cmpT: $0.274$ & \multirow{3}{*}{20} & \multirow{3}{*}{4}
  & \cmpT: $[120, 61, 36, 19]$ & \multirow{3}{*}{$[6.6, 3.3, 1.6]$} \\
  & & & \cmpQ: $0.273$ & & & \cmpQ: $[131, 56, 36, 18]$ & \\
  & & & \cmpIEight: $0.273$ & & & \cmpIEight: $[334, 159, 81, 21]$ & \\
  & \multirow{3}{*}{\taskLangSeven} & \multirow{3}{*}{Wi-Fi AP}
  & \cmpT: $0.820$ & \multirow{3}{*}{20} & \multirow{3}{*}{4}
  & \cmpT: $[61, 52, 53, 105]$ & \multirow{3}{*}{$[1.2, 0.39, 0.39]$} \\
  & & & \cmpQ: $0.821$ & & & \cmpQ: $[43, 37, 38, 103]$ & \\
  & & & \cmpIEight: $0.817$ & & & \cmpIEight: $[64, 47, 48, 104]$ & \\
\midrule
\multicolumn{8}{c}{\textit{Raspberry Pi}} \\
\midrule
\scenTbSSGPT
  & \taskLangThree & Wi-Fi 802.11ac & 5.0 & 100 & 7
  & $[378, 172, 145, 246, 155, 177, 143]$
  & $[4, 4, 4, 4, 4, 4]$ \\
\bottomrule
\end{tabular}%
}
\end{table*}

For each scenario, we describe the pipeline topology and the dataset subsets used for both utility evaluation and surrogate fitting. We also report dataset versions and model checkpoints at the task level, so that the same inputs can be regenerated from public sources.

Per-task data is drawn from two disjoint sample constructions. \datasetTest{} is the evaluation subset used inside the optimization loop to compute the utility metric $\metU$. \datasetGradFit{} is a separate set drawn with an independent seed, used \emph{only} to either fit the surrogate utility model or supply samples to the Stein zeroth-order estimator; its rows do not overlap with \datasetTest{}. The \textbf{Estimator} column of Tables~\ref{tab:repro-offline-alg}--\ref{tab:repro-online-alg} indicates which estimator is used to produce the headline numbers for that scenario---the fitted surrogate (\emph{Fit}) or the Stein oracle (\emph{Stein}). For Stein-based estimators, $\bm{N}$ and \datasetGradFit{} \# samples denote the number of random perturbation directions and the number of samples used per direction, respectively; for surrogate accuracy estimators, $\bm{N_f}$ denotes the number of $\eta$ points evaluated and \datasetGradFit{} \# samples the number of samples per $\eta$ point, respectively.

\subsubsection{Offline Scenarios}
\begin{table*}[t]
\centering
\caption{Offline experiment algorithm parameters. \datasetTest{} reports the number of dataset evaluation samples (per task for multi-task scenarios). The Estimator column indicates whether the quality metric is estimated via Stein's identity or a surrogate accuracy model. For Stein-based estimators, $\bm{N}$ and \datasetGradFit{} \# samples denote the number of random perturbation directions and the number of samples used per direction, respectively. For surrogate accuracy estimators, $\bm{N_f}$ and \datasetGradFit{} \# samples denote the number of $\eta$ points evaluated and the number of \datasetGradFit{} samples used per $\eta$ point, respectively. For Stein-based estimators, $\bm{\sigma}$ denotes the noise standard deviation. See Table~\ref{tab:scenarios} for scenario parameters.}

\label{tab:repro-offline-alg}
\small
\begin{tabular}{@{}l cccc cc cc}
\toprule
\textbf{Scenario} & $\bm{\mu}$ & $\bm{\varepsilon}$ & \textbf{\datasetTest{} samples} & \textbf{Estimator} & \textbf{\datasetGradFit{} \# samples} & $\bm{\sigma}$ & $\bm{N}$ & $\bm{N_f}$ \\
\midrule
\multicolumn{9}{c}{\textit{Single-task}} \\
\midrule
\scenLangSGTwoB          & 4.0   & 0.1 & 32         & Fit (\emph{poly2}) & 16  & -- & --  & 256 \\
\scenLangSGTwoBSevenN    & 6.0   & 0.1 & 32         & Fit (\emph{poly2}) & 16  & -- & --  & 256 \\
\scenLangSGSevenB        & 4.5   & 0.1 & 32         & Fit (\emph{poly2}) & 16  & -- & --  & 256 \\
\scenVisionTRS           & 3.0   & 0.1 & 10{,}000   & Stein & 100 & 0.05 & 20 & -- \\
\scenVisionLOO           & 3.0   & 0.1 & 10{,}000   & Stein & 100 & 0.05 & 20 & -- \\
\scenLangFiveShot        & 0.001 & 0.1 & 60 & Fit (MLP)& 30 & -- & -- & 256 \\
\scenLangWTEightB        & 4.5   & 0.1 & 32         & Fit (\emph{poly2}) & 16  & -- & --  & 256 \\
\midrule
\scenVisionJetsonSS     & 0.1 & 0.1 & 10{,}000 & Fit (\emph{poly3}) & 10{,}000 & -- &-- & 125\\  
\scenLangJetsonSS       & 0.1 & 0.1 & 100      & Fit (\emph{poly3}) & 100 & -- &--& 125\\
\midrule
\multicolumn{9}{c}{\textit{Multi-task}} \\
\midrule
\scenMTLangSGPTThree            & 4.0 & 0.1 & [16, 16, 16]         & Fit (\emph{poly2}) & [8, 8, 8]     & --         &  -- &   [256, 256, 256]   \\
\scenMTVisFiveLooseNTenTHundred & 3.0 & 0.1 & [10{,}000, 10{,}000] & Stein & [100, 100]    & [0.05, 0.05] & [10, 10] &  --  \\

\midrule
\scenMTLVJM & 0.1 & 0.1 & [10{,}000, 100] & Fit (\emph{poly3}) & [10{,}000, 100]  & -- & -- & [125, 125]\\
\bottomrule
\end{tabular}
\end{table*}
We describe all single-task and multi-task offline scenarios below, with algorithm parameters listed in Table~\ref{tab:repro-offline-alg}. For each scenario, we describe the pipeline topology, estimator choice, and dataset configuration.

\smallskip\noindent\textbf{\scenLangSGTwoB.}
Single-task ShareGPT evaluation with Gemma-2~2B on a four-stage pipeline (three links). Forward hooks compress activations after decoder blocks $3$, $8$, and $12$ (18-layer architecture). ShareGPT rows come from \texttt{Aeala/ShareGPT\_Vicuna\_unfiltered} (default \texttt{train} split) with prompt/response windows capped at $256$ tokens. Utility during the run is evaluated on \datasetTest{} using the sample count in Table~\ref{tab:repro-offline-alg}; surrogate training fits on \datasetGradFit{} with a different seed. Reported results follow the \emph{poly2} estimator.

\smallskip\noindent\textbf{\scenLangSGTwoBSevenN.}
Same ShareGPT task and model family as \scenLangSGTwoB, but the pipeline is elongated to \emph{seven} stages (six links) with cut indices $2,5,8,10,12,14$ so that each stage contains fewer transformer blocks; this stresses deeper device chains and matches the seven-node linear topology used for the Raspberry~Pi ShareGPT testbed. The intent is to observe scaling effects when only topology depth and link budgets change while the evaluation objective stays ShareGPT perplexity retention. \datasetTest{} / \datasetGradFit{} conventions are as above. The results use the \emph{poly2} estimator.

\smallskip\noindent\textbf{\scenLangSGSevenB.}
ShareGPT with Gemma-2~7B on four stages (three links), hooks after layers $6$, $13$, and $20$ (28-layer model). We use the same ShareGPT dataset protocol as \scenLangSGTwoB. Again, we report results obtained with \emph{poly2} as our regression fitting.

\smallskip\noindent\textbf{\scenVisionTRS.}
This scenario runs ResNet-56 on CIFAR-10 in a \emph{threshold} capacity
regime ($c \in [0.06, 0.3]$\,MB): feasible compression ratios exist in most
channel realizations but are occasionally out of reach, placing the optimizer
near the boundary of feasibility.
Gradients are estimated using a Stein gradient oracle, which
queries the fast accuracy callable at $2N$ antithetic perturbations of the
current~$\eta$; each query runs inference on \datasetTest{} with a reduced
sample budget.

\smallskip\noindent\textbf{\scenVisionLOO.}
Also using ResNet-56 on CIFAR-10, this scenario operates in a \emph{loose}
capacity regime ($c \in [0.2, 0.6]$\,MB), where the channel is almost always
wide enough to accommodate feasible compression ratios.
As in \scenVisionTRS, a Stein gradient oracle provides gradient
estimates by evaluating perturbed accuracy values on \datasetTest{}.

\smallskip\noindent\textbf{\scenLangFiveShot.}
This scenario evaluates Llama-3.1-8B on MMLU in a 5-shot prompting setting
under high-capacity conditions ($c \in [10, 40]$\,MB).
Unlike the vision scenarios, whose Stein oracles query \datasetTest{} at
every gradient step, this scenario uses a
Predictor gradient oracle: an accuracy predictor surrogate is
fitted \emph{offline} by evaluating the full-budget accuracy callable on a
grid of $(\eta, A_{\mathrm{true}})$ samples drawn from \datasetGradFit{},
training a concave monotone network to interpolate the accuracy surface.
At runtime, exact gradients are obtained in milliseconds via PyTorch autograd
through the fitted surrogate, bypassing the need for any further model
inference---a critical advantage given the high cost of Llama evaluation.
\datasetTest{} is used only for final performance reporting, not during
optimization.

\smallskip\noindent\textbf{\scenLangWTEightB.}
WikiText-2~\cite{merity2016pointer} (raw) language modeling with Llama-3.1~8B on four stages (three links) and decoder hooks after layers $7$, $15$, and $23$ (32-layer stack). By default we load \texttt{wikitext/wikitext-2-raw-v1} on the \texttt{test} split, extract non-empty text rows, chunk to fixed length $512$ tokens with stride $256$, and score next-token perplexity retention. Table~\ref{tab:repro-offline-alg} lists the \emph{poly2} hyperparameters.

\smallskip\noindent\textbf{\scenVisionJetsonSS}, \textbf{\scenLangJetsonS}, \textbf{\scenMTLVJM}
These scenarios correspond to the single-task and multi-task offline experiments whose parameters, such as $\tau$ and $c$, are obtained from emulated experiments on the Jetson testbed. Model and dataset details are provided in Appendix~\ref{app:testbed_jetson}. Before running the offline experiments, we first perform a set of preliminary measurements on the Jetson testbed to obtain the required scenario parameters for each model-compressor pair, including $\bm{c}$, $\bm{\tau}$, and $\bm{a}$ \CRedit{(Table~\ref{tab:scenarios-full})}. The collected scenario parameters are then used as inputs to the offline experiments on a high-performance computer. In the offline setting, $\bm{\tau}$ includes only the inference delay and the compressor compression/decompression delay. For utility evaluation, we use the full test set. We evaluate both \emph{fitting} and \emph{kernel-smoothing} under multiple values of $R_k$, while a representative result is reported in Table~\ref{tab:results-offline-all}. For \emph{Top-$k$}, the single-task tradeoff curves are shown in Figures.~\ref{fig:topk-resnet-jetson} and \ref{fig:topk-resnet-jetson-stein}, while the multi-task tradeoff curves are shown in Figures.~\ref{fig:multi-topk-fitting-task1-fixed-r2} and \ref{fig:multi-topk-stein-task1-fixed-r2}.


\smallskip\noindent\textbf{\scenMTLangSGPTThree.} Three concurrent ShareGPT replicas on Gemma-2~2B ($K{=}3$), each with its own four-stage pipeline but identical topology parameters. This stress-tests proportional baselines when workloads are symmetric and comparatively slack. \datasetTest{} / \datasetGradFit{} follow the ShareGPT protocol with per-task seeds; reporting is \emph{poly2}.

\smallskip\noindent\textbf{\scenMTVisFiveLooseNTenTHundred.}
A multi-task scenario pairing two ResNet-56/CIFAR-10 inference tasks that
share the same channel under loose capacity conditions ($c \in [0.2, 0.6]$\,MB).
The two tasks are assigned heterogeneous activation sizes, testing the
scheduler's ability to allocate capacity fairly across tasks with different
compression footprints.
Each task maintains an independent Stein gradient oracle that queries
\datasetTest{} with $N=10$ antithetic sample pairs; the scenario runs for
$T=100$ time steps.

\subsubsection{Testbed Scenarios}



\begin{table*}[t]
\centering
\caption{Testbed experiment algorithm parameters. \datasetTest{} reports the number of dataset evaluation samples, respectively (per task for multi-task scenarios). The Estimator column indicates whether the quality metric is estimated via Stein's identity or a surrogate accuracy model. For Stein-based estimators, $\bm{N}$ and \datasetGradFit{} \# samples denote the number of random perturbation directions and the number of samples used per direction, respectively; for surrogate accuracy estimators, $\bm{N_f}$ and \datasetGradFit{} \# samples denote the number of $\eta$ points evaluated and the number of \datasetGradFit{} samples per $\eta$ point, respectively. For Stein-based estimators, $\bm{\sigma}$ denotes the noise standard deviation. See Table~\ref{tab:online-scenarios} for scenario parameters.}

\label{tab:repro-online-alg}
\small
\begin{tabular}{@{}l cccc cc cc}
\toprule
\textbf{Scenario} & $\bm{\mu}$ & $\bm{\varepsilon}$ & \textbf{\datasetTest{} samples} & \textbf{Estimator} & \textbf{\datasetGradFit{} \# samples} & $\bm{\sigma}$ & $\bm{N}$ & $\bm{N_f}$ \\
\midrule
\multicolumn{9}{c}{\textit{Single-task}} \\
\midrule
\scenTbSSGPT & 1.0 & 0.1 & 7 & Fit (\emph{poly2}) & 5 & -- & -- & 10 \\

\scenPrsntVizWire & 4.0 & 0.1 & 1{,}000 & Stein & 100 & 0.05 & 10 & -- \\

\scenPrsntVizEdge & 4.0 & 0.1 & 1{,}000 & Stein & 100 & 0.05 & 10 & -- \\

\scenPrsntLangWikiWire & 4.0 & 0.01 & 20 & Stein & 10 & 0.05 & 5 & -- \\

\scenPrsntLangWikiEdge & 4.0 & 0.01 & 20 & Fit (\emph{poly3}) & 10 & -- & -- & 1{,}000\\

\scenPrsntLangMmluEdge & 4.0 & 0.1 & 60 & Fit (\emph{poly3}) & 20 & -- & -- & 1{,}000 \\

\scenVisionJetsonTestbedS & 0.1 & 1e-7 & 10{,}000 & Fit (\emph{poly3}) & 10{,}000 & -- & -- & 125 \\

\scenLangJetsonTestbedS & 0.1 & 1e-7 & 100 & Fit (\emph{poly3}) & 100 & -- & -- & 125 \\

\midrule
\multicolumn{9}{c}{\textit{Multi-task}} \\
\midrule

\scenPrsntVizVizWire & 4.0 & 0.1 & [1{,}000, 1{,}000] & Stein & [100, 100] & [0.05, 0.05] & [10, 10] & -- \\

\scenPrsntVizVizEdge & 4.0 & 0.1 & [1{,}000, 1{,}000] & Stein & [100, 100] & [0.05, 0.05] & [10, 10] & -- \\

\scenPrsntVizLangEdge & 4.0 & 0.1 & [600, 12] & Stein & [100, 10] & [0.05, 0.05] & [10, 5] & -- \\

\scenPrsntVizLangWlan & 4.0 & 0.1 & [600, 12] & Stein & [100, 10] & [0.05, 0.05] & [10, 5] & -- \\

\scenPrsntLangLangEdge & 4.0 & 0.1 & [12, 12] & Stein & [10, 10] & [0.05, 0.05] & [5, 5] & -- \\

\scenMTJetsonTestbedM & 0.1 & 0.1 & [10{,}000, 100] & Fit (\emph{poly3}) & [10{,}000, 100] & -- & -- & [125, 125]\\

\bottomrule
\end{tabular}
\end{table*}
We describe all testbed scenarios below, grouped by platform (PRESCIENT, Jetson, Raspberry Pi), with algorithm parameters listed in Table~\ref{tab:repro-online-alg}. For each scenario, we describe the pipeline topology, network profile, and dataset configuration.

\smallskip\noindent\textbf{\scenTbSSGPT.} ShareGPT with Gemma-2~2B on seven stages (six links), hooks after layers $3$, $5$, $8$, $10$, $12$, and $15$ (18-layer model). We use the same ShareGPT dataset protocol as \scenLangSGTwoB with prompt/response windows capped at 50 tokens. The \datasetTest{} and \datasetGradFit{} counts are indicated in Table~\ref{tab:repro-online-alg}.

\smallskip\noindent\textbf{\scenPrsntVizWireEdge.} ResNet-56 partitions pipelined across 3 nodes in a linear topology with the network being the differentiator - effectively unbounded or moderately constrained for ResNet-sized activations with some packet loss. The throughput ceiling is compute bound in \netProfileLan{} while \netProfileEdge{} enables us to measure the network share of the end-to-end cost for small-activation workloads. 

\smallskip\noindent\textbf{\scenPrsntLangWikiWire.} The unbounded network of before is no longer enough to completely decouple task throughput from network owing to the large Llama model activations. Useful to measure how much a gigabit link actually benefits LLM-scale activations and whether optimizers could be necessary in such deployment scenarios as well.

\smallskip\noindent\textbf{\scenPrsntLangWikiEdge.} Large fp16 hidden states on a constrained link — the strongly bandwidth-bound regime. WikiText perplexity gives a smooth, continuous accuracy signal, which makes the resulting trade-off surface the cleanest to read in the whole set. The first experiment scenario where our fitted \emph{poly3} accuracy model outperforms the Stein oracle. The per-slot $\lambda$ trace shows the active oscillations as the dual variable constantly adjusts compression when a slot runs slow and relaxes when one runs fast. The fitted \emph{poly3} model captures the llmint8 accuracy surface for Llama well enough to guide $\eta$ precisely into the narrow band ($\eta \approx 0.12$) where both accuracy and throughput are good. The Stein oracle over-compresses at ($\eta \approx 0.052$). On Llama activations, the Stein oracle evaluates accuracy by perturbing $\eta$ and measuring perplexity changes through the simulation pipeline. The perplexity function may have a flatter gradient in the low-$\eta$ region
making it harder for the finite-difference gradient estimate to pull $\eta$ back up. The LCB channel estimator also contributes — it systematically underestimates capacity (~90\%, amplified in testbed deployment), and the combined pessimism results in more aggressive compression than necessary.

\smallskip\noindent\textbf{\scenPrsntLangMmluEdge.} Identical setup with Llama model in a heavily constrained network but with the MMLU dataset instead. MMLU accuracy is a 0/1 per-question signal aggregated to a noisy, step-like curve over compression rate. The direct contrast with the WikiText run — same model, same network, same activations, different metric — shows how much the choice of evaluation metric shapes the apparent sensitivity to compression for LLM inference. Another scenario where the fitted \emph{poly3} model outperforms the Stein oracle \algOURSNOCSIst{} optimizer. Due to the nature of the MMLU accuracy function and the limited number of test samples feasible in this network, our optimizers are unable to exploit the flat accuracy function. In fact, some baselines beat \algOURSNOCSIst{} variants (fitted model and Stein oracle) comfortably.

\smallskip\noindent\textbf{\scenPrsntVizVizWireEdge.} Two identical ResNet-56 pipelines with complete node overlap. This overlap is maintained across all multi-task scenarios to introduce FIFO contention dynamics and compute sharing on nodes. Tested under two network profiles - unconstrained and reasonably constrained for multiple pipelines even with ResNet-sized activations.

\smallskip\noindent\textbf{\scenPrsntVizLangEdge.} Heterogeneous compute and activation sizes on a shared constrained link. Small ResNet-56 tasks queued behind longer running Llama tasks in a shared FIFO queue means total task throughput is bounded by the slower Llama throughput.

\smallskip\noindent\textbf{\scenPrsntVizLangWlan.} Same pipeline configurations but on an asymmetric, lossy, jittery link prone to packet losses. Link capacity is stochastic and due to the asymmetry of links, the per-link compression trade-off is no long symmetric across the pipeline. Dropped ResNet-56 packets have significantly less impact than dropping large Llama activation payloads resulting in interest throughput performance over long durations.

\smallskip\noindent\textbf{\scenPrsntLangLangEdge.} Node contention becomes a significant factor as large Llama workloads fill the FIFO queue whose dynamics dominate throughput in contrast to the \scenPrsntVizVizWireEdge{} scenarios where the same topology and bandwidth suffered from compute queuing instead of network queuing.

\smallskip\noindent\textbf{\scenVisionJetsonTestbedS}, \textbf{\scenLangJetsonTestbedS}, \textbf{\scenMTJetsonTestbedM}
This group of scenarios corresponds to real experiments on the Jetson testbed, where both our algorithm and the baselines are executed entirely on the Jetson platform. Details of the topology, network, models, and datasets are provided in Appendix~\ref{app:testbed_jetson}. Testbed experiment results are reported in Table~\ref{tab:results-testbed-all-nocsi}. We also provide a \emph{Top-$k$} single-task testbed result in Figure~\ref{fig:mu-sweep}.

Task processing and transmission are fully pipelined on the testbed. The first node serves both as the control node and as the host of the first inference partition; it fetches input samples, runs the first partition, and updates the control policy. The remaining nodes host only their respective inference partitions. On the first node, inference and optimization run on separate threads. Each inference instance reads the latest available policy before execution, while the optimization thread updates the policy in the background and assigns the corresponding $\eta$, $s_{i,k}^{\mathrm{comm}}$, and $s_{i,k}^{\mathrm{comp}}$.

The time-slot interval is dynamic. A slot starts when optimization begins and ends after at least five instance-completion feedback signals of the slower task have been collected for the current policy. If policy updating takes longer, the slot includes more instances and ends when the new policy update finishes. In practice, \emph{kernel-smoothing} is slower, so the slot duration is usually aligned with the optimization time, whereas \emph{fitting} is faster and the slot typically ends after five completed instances. To speed up online updates, we set the iteration number $J$ in Algorithm~\ref{alg:NoCSI-MultiTask} to 1.

At each node, transmissions for different tasks are carried out independently and in parallel. The allocation of $s_{i,k}^{\mathrm{comm}}$ is realized by assigning different token-bucket rates to different tasks. Each node also stores multiple task models and runs them in parallel on separate threads. Although $s_{i,k}^{\mathrm{comp}}$ is included in the algorithm, it is not explicitly enforced in the Jetson testbed. In our Jetson scenarios, however, transmission delay is much larger than processing delay based on measurement, so this has little effect on the results.

Because precise probing and control of link bandwidth are difficult in a shared Wi-Fi environment, CSI-aware baselines are not evaluated on the Jetson testbed. Moreover, accuracy evaluation in the online testbed setting differs from that in the offline experiments: rather than evaluating over the full test set, we compute accuracy using only the completed instances for which feedback is received.

Unlike the offline setting, where each task has one instance per slot, the number of completed instances may vary across tasks within the same slot in the online testbed. Therefore, the actual
bottleneck delay $D_{\mathrm{act},k}(t)$ is first averaged across completed instances within each task $k$ and slot $t$, then aggregated across tasks with weights $w_k$.

\subsection{Activation Compression Methods}
\label{app:compressors}

Throughout the experiments, the compression factor $\eta\in[0,1]$ is the \emph{retained fraction} relative to the uncompressed activation tensor in its reference floating-point dtype (for language models, activations are taken at selected decoder layers; for vision models, at hooked layers in the convolutional pipeline). 

\textbf{\emph{Top-$k$}} sparsification keeps a fraction $\eta$ of coordinates with largest magnitude and zeros the rest: for transformer hidden states we apply this mask \emph{per token} along the hidden dimension, while for image backbones we flatten each sample's activation vector and keep the top $\eta$ fraction of entries \emph{per sample}. For \emph{Top-$k$} compression, the indices of the retained values must also be conveyed. We use a PackBits mask, in which each value is assigned one bit to indicate whether it is transmitted, and the resulting bit-packed mask is sent together with the compressed activation. The size of this mask is fixed for a given activation tensor shape and does not vary with the value of $\eta$, thus introducing a constant overhead.

\textbf{\emph{Uniform quantization} }uses symmetric uniform quantization with a per-token (per-row) scale along the hidden dimension. By default, the uncompressed representation uses FP32 precision. To reduce quantization loss, we apply several implementation techniques for lower-precision compressors. For FP16, we use direct datatype conversion together with value clipping to avoid overflow. For INT8, INT4, and INT2, activation values are uniformly mapped to the corresponding integer range based on their minimum and maximum values. In addition, INT4 uses stochastic rounding, which reduces systematic bias and makes the quantization error closer to unbiased in expectation. $\eta$ is mapped to an integer target bit-width between a minimum of 2 bits (INT2) and the native element width (FP32), thereby enabling the transition from continuous compression control to discrete quantization levels.

\textbf{\emph{LLM.int8}}-style compression~\cite{dettmers2022llmint8} implements a mixed-precision payload: the outlier values are stored separately at outlier precision (FP16 or INT8 with a scalar scale), while the remaining mass is quantized to INT8, INT4, or INT2 using row-wise scaling. We still use a PackBits-style mask to transmit the outlier indices, which introduces a fixed transmission overhead. Within a given experiment, the precision pair for outlier and regular values remains fixed, while the outlier ratio serves as the control parameter for the compression level. This compressor requires an offline-measured mapping between each triplet of (outlier precision, regular precision, outlier ratio) and the resulting $\eta$. In the optimization algorithm, we still optimize only over $\eta$. During compression and decompression, the corresponding triplet is retrieved by table lookup using the selected $\eta$.

\subsection{Compression Rate Selection Algorithms}
\label{appendix:algorithms}

In this section, we provide the detailed formulations and underlying mechanics for the established baseline policies and our proposed optimization frameworks. 

\subsubsection{Single-Task}
For isolated single-task environments, the optimization objective is restricted to finding the optimal data compression ratios along the inference pipeline. We evaluate the following policies:

\begin{description}
  \item[\algMAX{}] 
  Operating as a strict lower bound for network delay, this baseline forcefully applies the absolute minimum admissible compression factor \(\eta_i = \eta_i^{\min}\) across all communication links. While it aggressively minimizes transmission latency, it establishes the empirical lower bound on downstream inference accuracy.

  \item[\algNONE{}] 
  Prioritizing maximum inference accuracy without regard for network congestion, this policy transmits the intermediate layer representations entirely uncompressed (\(\eta_i=1\)) over every link. It serves as an upper bound for accuracy at the severe risk of violating throughput constraints in bandwidth-limited environments.

  \item[\algUNI{}] 
  This CSI-aware baseline enforces a strictly homogeneous compression rate \(\eta\) across all transmission hops along the pipeline. Because the rate must satisfy the end-to-end constraint uniformly, it is inherently bottlenecked by the most restrictive instantaneous link capacity, taking the form:
  \begin{equation}
      \eta \;=\; \max \left(\,\max_{i}\eta_i^{\min},\; \min\left(1,\;\min_{i}\frac{c_i(t)}{R(t)\,a_i}\right)\right).
  \end{equation}

  \item[\algMYO{}] 
  This No-CSI policy employs the exact CSI-aware optimal formulation but operates under the naive assumption that the network is entirely static. Specifically, it relies on the most recent channel observation as a deterministic proxy for current capacity, mapping \(\hat{c}_i(t) = c_i(t-1)\) at every time slot.

  \item[\algCONS{}] 
  Sharing the optimal formulation of the myopic policy, this No-CSI baseline adopts a highly pessimistic approach to channel estimation. To safeguard against transient network collapses, it substitutes \(\hat{c}_i(t)\) with the running minimum of historically observed capacities: \(\hat{c}_i(t) = \min_{\tau < t} c_i(\tau)\).

  \item[\algMA{}] 
  To mitigate the impact of high-frequency channel noise, this No-CSI baseline utilizes the myopic optimal policy but replaces the true channel capacity with a moving average of past observations spanning a finite temporal window \(W\), such that \(\hat{c}_i(t) = \frac{1}{W} \sum_{w=1}^W c_i(t-w)\).

  \item[\algOURSCSI{}]
  Our proposed framework computes the exact closed-form solution to the single-task program. It assumes oracle access to perfect channel knowledge (\(c_i(t)\)), yielding the theoretical Pareto frontier between delay and accuracy.

\item[\algOURSNOCSIst{}] Our proposed solver relies entirely on causal channel estimates. To robustly handle channel uncertainty, it employs a Lower Confidence Bound (LCB) estimator for the network capacity, formulated as:
\begin{equation}
    \hat{c}(t) = \hat{\mu}_c(t) - \beta \hat{\sigma}_c(t)
\end{equation}
where $\hat{\mu}_c(t)$ and $\hat{\sigma}_c(t)$ represent the empirically predicted mean and standard deviation of the channel state at time $t$, and $\beta$ is a tunable parameter controlling the conservatism of the estimate. By deliberately under-predicting the available capacity, this pessimistic estimator creates a safety buffer. The solver leverages this within our Estimated Stochastic Dual Descent formulation to iteratively optimize compression rates, strictly satisfying long-term QoS constraints even under unpredictable network volatility.
\end{description}

\subsubsection{Multi-Task}
In multi-task environments, resource contention introduces a coupled optimization challenge necessitating policies that jointly govern both physical resource allocation partitions (\(s^{\text{comp}}, s^{\text{comm}}\)) and task-specific data compression rates (\(\eta_{i,k}\)):

\begin{description}
  \item[\algMAX{}] 
  This baseline strictly enforces uniform partitioning of compute and bandwidth resources among all active tasks on shared nodes and links (\(s_{i,k}^{\text{comp}} = s_{i,k}^{\text{comm}} = 1/|\mathcal{K}|\)). Concurrently, it enforces maximum compression \(\eta_{i,k}=\eta_{i,k}^{\min}\) universally to aggressively minimize queueing and transmission delays.

  \item[\algNONE{}] 
  Maintaining the uniform resource partitioning scheme (\(s_{i,k}^{\text{comp}} = s_{i,k}^{\text{comm}} = 1/|\mathcal{K}|\)), this baseline transmits all intermediate data completely uncompressed (\(\eta_{i,k}=1\)), establishing an upper bound on multi-task accuracy while largely ignoring shared link congestion.

  \item[\algEQ{}] 
  This policy enforces a fair-share uniform resource partition across all tasks. Given these static, isolated resource slices, the compression factors \(\eta_{i,k}\) are independently optimized using the single-task CSI-aware optimal mapping.

  \item[\algPROP{}] 
  Rather than splitting resources equally, this CSI-aware baseline allocates fractions proportional to the inherent architectural demands of each task. Specifically, compute shares are allocated relative to base processing loads (\(s_{i,k}^{\text{comp}} \propto \tau_{i,k}\)), and bandwidth shares are allocated relative to uncompressed activation dimensions (\(s_{i,k}^{\text{comm}} \propto a_{i,k}\)). Task compression rates subsequently follow the single-task optimal formulation over these weighted partitions.

  \item[\algPRIO{}] 
  This baseline implements a greedy resource scheduling algorithm governed strictly by the application-level utility weights \(w_k\). The scheduler first secures the minimum-feasible resource allocations required to keep all active tasks viable. Subsequently, it funnels all residual bandwidth and compute capacity entirely to the tasks with the highest utility priorities to maximize their respective accuracy (\(\eta_{k}\)).

  \item[\algDES{}] 
  Operating without explicit channel state information, this policy partitions resources equally at every time slot, effectively decoupling the optimization landscape. Each task \(k\) independently executes its local stochastic compression subproblem (Alg. 1) using its dedicated effective channel capacity \(\hat{c}_{i,k}^{\text{eff}}(t) = s_{i,k}^{\text{comm}}\hat{c}_i(t)\) and effective processing time \(\tau_{i,k}^{\text{eff}} = \tau_{i,k} / s_{i,k}^{\text{comp}}\).

  \item[\algDLPROP{}] 
  Operating under the same decoupled framework as \algDES, this No-CSI baseline dynamically modulates the resource allocations proportionally to the current values of the  dual variables, \(s_{i,k} \propto \lambda_k(t)\). This acts as a feedback-driven triage mechanism, explicitly prioritizing resources for tasks that are currently struggling to meet their required throughput thresholds.

  \item[\algMA{}] 
  This predictive baseline constructs a historical temporal mean of the channel capacities. It subsequently feeds these smoothed deterministic estimates into the exact CSI-aware multi-task problem (Problem 2), jointly optimizing the resource splits and compression rates under the assumption that the average reflects the true environment.

  \item[\algOURSCSI{}]
  Assuming perfect instantaneous knowledge of the shared channel capacities, this method solves the convex optimization program (Problem 2), performing optimal multi-dimensional water-filling across all tasks and network links.

  \item[\algOURSNOCSImt{}]
  Our comprehensive multi-task framework leverages block coordinate descent to alternatingly optimize resource allocation variables and intermediate compression variables on the fly. It uses an LCB channel estimator and dynamically balances fair resource contention with individual task QoS requirements.
\end{description}

\section{Testbed Details}
\label{app:testbed_details}

We ran our experiments on the following three testbeds:

\paragraph{Raspberry Pi} We deploy the system on a cluster of seven Raspberry Pi~5 nodes (2.4\,GHz quad-core ARM, 8\,GB RAM), interconnected over 802.11ac Wi-Fi. The inference model is Gemma~2B (18 hidden layers), distributed across the cluster as a pipeline: a dedicated user-equipment node handles embedding and sampling, while the 17 transformer layers are partitioned across the seven Pis. Compression is applied at the seven inter-node cut points where activation tensors cross the wireless link. Multiple requests are served concurrently via pipeline parallelism, each identified by a unique tag routed to per-request KV caches. A central controller on a separate machine periodically collects per-node metrics, runs the optimizer, and broadcasts updated per-link compression ratios without pausing inference. \CRedit{The implementation is available at \url{https://github.com/neu-spiral/rasp_compression}.}

\paragraph{NVIDIA Jetson} Our Jetson-based testbed consists of four NVIDIA Jetson Orin Nano Developer Kits interconnected through a shared 802.11ac Wi-Fi access point. Each node is equipped with a compact Ampere-architecture GPU, a 6-core Arm Cortex-A78 CPU, and 8 GB of 128-bit LPDDR5 memory. Inference is executed on the GPU, while compression, decompression, and other computations run on the CPU. Nodes communicate over HTTP, and bandwidth is controlled at the application layer using a token-bucket mechanism, yielding only coarse-grained control in practice. On this platform, we evaluate both a vision model, ResNet-56 on CIFAR-10 and a language model Flan-T5-Base on SST-2, each partitioned into pipeline stages across the nodes to balance the workload. In both cases, performance is measured by classification accuracy. \CRedit{The corresponding codebase can be accessed at \url{https://github.com/UIC-Networking-Research-Lab/CAMDI_RC_MobiHoc2026_Fitting_and_Jetson}.}

\paragraph{PRESCIENT} An experiment framework deployed on PRESCIENT, a GPU-dense network research testbed, converts pipeline and topology specifications along with scenario and algorithmic configurations into experiments that deploy as ``pods'' in PRESCIENT's Kubernetes environment. For each experiment, pods consist of an orchestrator and a set of nodes each running a stage of one (single-task) or more (multi-task) pipelines. The orchestrator runs the optimizers, manages configuration updates on node pods, and generates and measures pipeline tasks over the lifetime of an experiment. \CRedit{The testbed codebase is open-sourced at \url{https://github.com/PRESCIENT-osu/DNN-comm-compression}.}

We elaborate on each of the three testbed implementations below.

\subsection{Raspberry Pi}

\subsubsection{Hardware and Network}

The testbed consists of seven Raspberry Pi~5 single-board computers, each equipped with a 2.4\,GHz 64-bit quad-core ARM Cortex-A76 CPU and 8\,GB of LPDDR4X RAM. All nodes connect to a shared 802.11ac Wi-Fi access point. An eighth device (laptop) on the same network hosts the central controller process. All inference is CPU-bound; no GPU acceleration is available. Inter-node Wi-Fi links are subject to contention, fading, and interference from other devices on the network. 

\subsubsection{Model and Layer Assignment}

The inference model is Gemma~2B, a decoder-only transformer with $L = 18$ hidden layers. A dedicated user-equipment (UE) node hosts layer 0 and the remaining 17 transformer layers are spread across seven Pis: layers 1--3 on the first node, layers 4--5 on the second, layers 6--8 on the third, layers 9--10 on the fourth, layers 11--12 on the fifth, layers 13--15 on the sixth, and layers 16--17 on the seventh.

\subsubsection{Cut Points and Pipeline Parallelism}

The model has seven inter-node cut points, located at the output of the last hosted layer on each node: layers 3, 5, 8, 10, 12, and 15. At each cut point the output activation tensor is serialized, compressed according to the optimizer-assigned ratio $\eta_i(t)$, and transmitted to the downstream node over Wi-Fi. The ten remaining layer boundaries are intra-node transfers carried over local memory without compression. Multiple inference requests are served concurrently. Each request carries a unique identifier that routes it to a dedicated KV cache at every pipeline node. Requests are injected with a stagger delay, so that at steady state every node processes a different request's token simultaneously. 

\subsubsection{Central Controller}

A lightweight controller process runs on a laptop connected to the same Wi-Fi network, operating on a periodic schedule. At each control step, the controller first collects per-node metric reports delivered over UDP, including the computation latency, link throughput, and uncompressed activation size from every pipeline node. It then invokes the selected optimization algorithm to determine the compression ratio for each inter-node link. Finally, it broadcasts the updated compression ratio and codec identifier to every node via UDP. Inference proceeds continuously and the pipeline does not pause between control updates.

\subsubsection{Autoregressive Profiling}

\CRedit{The Raspberry Pi scenario is autoregressive: a one-time prefill over the $[S,d]$ prompt
followed by repeated single-token decode steps. Rather than model these two passes
separately, we use a fixed compute/communication profile measured on the same full-sequence
sample. The activation size $a_i$ is profiled on the full-sequence $[S,d]$ prefill activation
($S\approx512$, $d=4096$, FP16, so $a_i\approx4$\,MB), and the per-stage time $\tau_i$ is the
decompress, forward, and compress time for that same sequence, not a per-token decode. Both
therefore refer to a single full-sequence pass. Since we do not model the heterogeneous
prefill and decode passes separately, this request-level profile is approximate for full
generation.}

\subsection{NVIDIA Jetson}
\label{app:testbed_jetson}

\subsubsection{Setup}

Our Jetson-based testbed consists of four Jetson Orin Nano Developer Kits. Each node is an edge computing device with a compact Ampere-architecture GPU, a 6-core Arm Cortex-A78 CPU, and 8GB of 128-bit LPDDR5 memory. CUDA-supported workloads, such as model inference, are executed on the GPU, while non-CUDA components, such as compression/decompression and optimization, are executed on the CPU. All nodes are connected through a shared 802.11ac Wi-Fi access point. Each node runs an HTTP service, and inter-node communication is carried out over HTTP. Bandwidth is controlled at the application layer on each node using a token-bucket mechanism to avoid exceeding the hardware bandwidth limit of the access point.  As this control is implemented above the network stack, the resulting bandwidth shaping is not precise and is susceptible to environmental interference.

\subsubsection{Models and Datasets}

We use two models and their corresponding datasets on the Jetson platform. For the vision task, we use ResNet-56 on CIFAR-10. The ResNet-56 model we used is a CIFAR-10-specific variant whose architecture is adapted to the input size of CIFAR-10 samples. ResNet-56 is split into four pipeline stages: The four model partitions are deployed sequentially across the four nodes, hosting the stem and layer1.0--layer1.6, layer1.7--layer2.3, layer2.4--layer3.1, and layer3.2--layer3.8 plus the classification head, respectively. This partition is determined based on the computational cost of each layer to balance the inference workload across nodes.
For CIFAR-10, we use only the test set, which is divided into 100 batches, thus each batch consists of 100 images. Evaluation is based on classification accuracy; For the language task, we use Flan-T5-Base on SST-2. Flan-T5-Base follows an encoder-decoder architecture and is also split into four stages: Node A hosts encoder layers 0--3; Node B hosts encoder layers 4--7; Node C hosts encoder layers 8--11; and Node D hosts the full decoder together with the final label decision. For SST-2, we use only the first 100 samples from the test set and divide them into 100 batches, each containing one sample. Since SST-2 is a binary classification task, we also use classification accuracy as the evaluation metric. Under SST-2, the output length is only one token. In addition, all SST-2 inputs are padded to a fixed length of 128 tokens to avoid truncation and to keep the inference workload consistent across samples. Note that, for ResNet-56 on CIFAR-10, \emph{LLM.int8} uses the FP16+INT4 precision combination, whereas for Flan-T5 on SST-2, it uses FP16+INT8.

\subsection{PRESCIENT}

PRESCIENT (Platform for RESearch on Intelligence in Edge-enabled Next-generation Telecommunications) is a wireless network research platform at the Ohio State University. The platform is designed for over-the-air and emulated wireless experiments at scale, providing researchers with programmable radio hardware, dense GPU-accelerated compute nodes, and a Kubernetes-managed software infrastructure for experiment orchestration. 

\newlength{\figscale}

\begin{figure}[t]
\centering
\setlength{\figscale}{0.49\textwidth}
\includegraphics[width=0.46\textwidth, trim=0 0 0 0, clip]{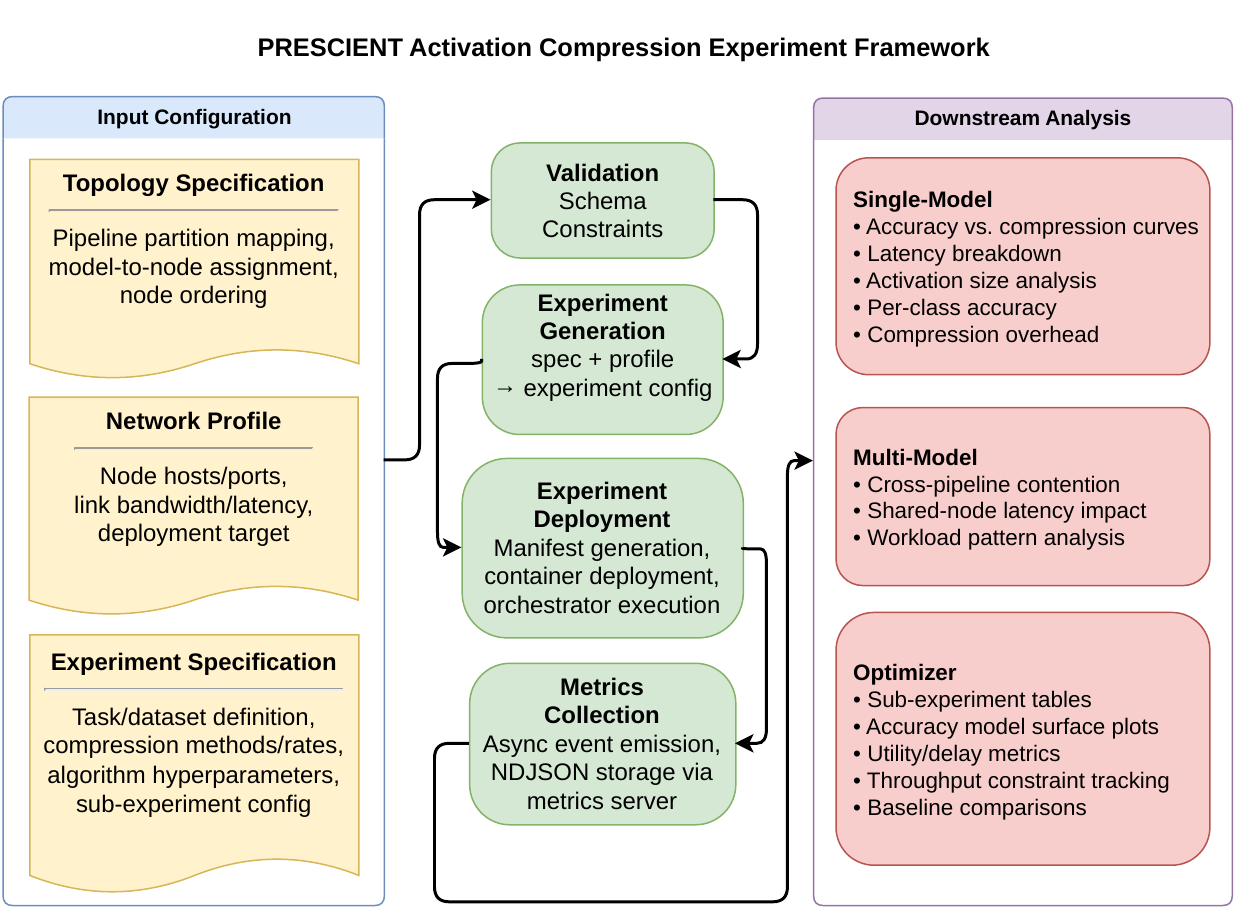}%
\hfill
\includegraphics[width=\figscale, trim=0 0 0 0, clip]{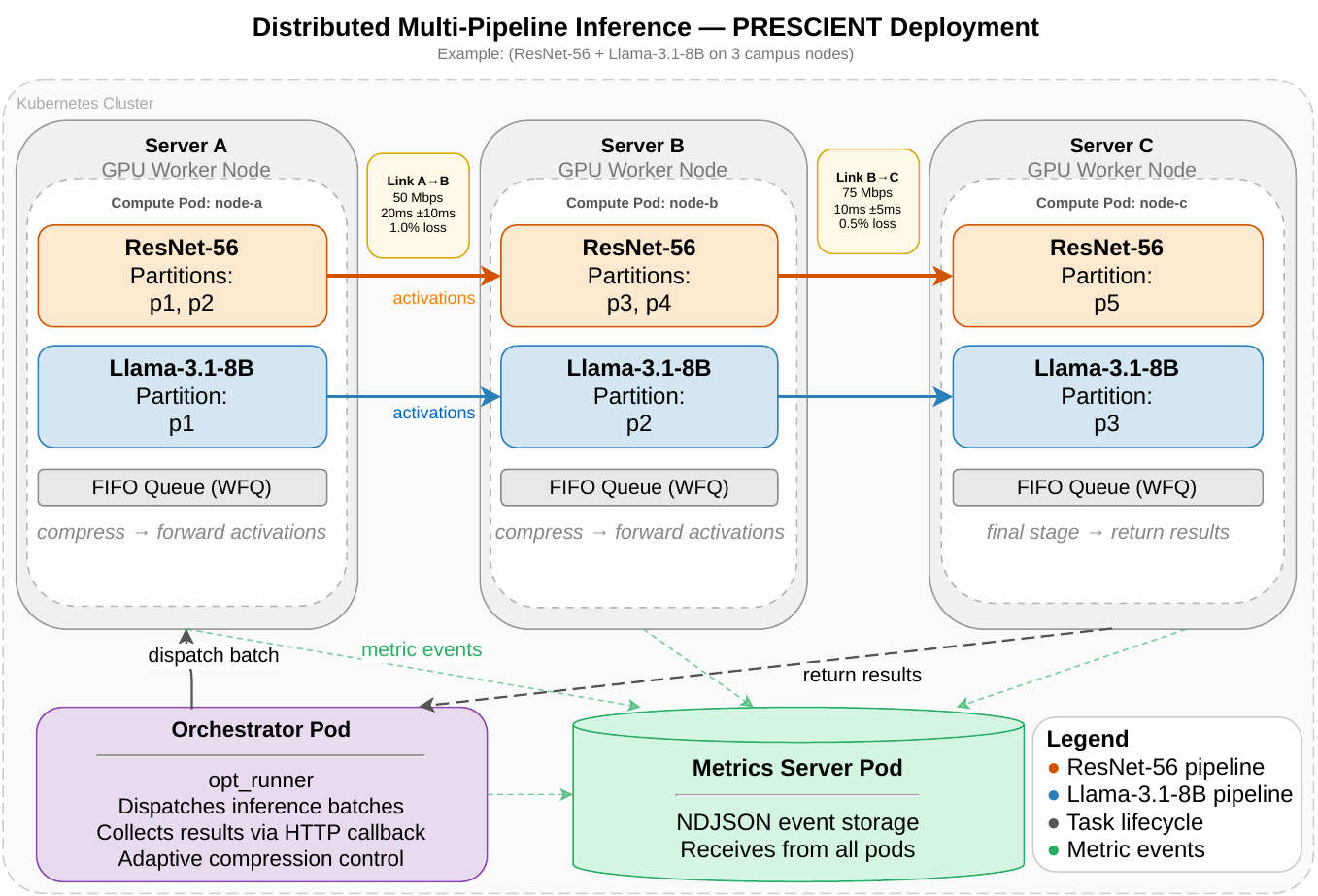}
\caption{(Left) Experiment framework pipeline from input YAML specifications
  through validation, generation, deployment, and metrics collection to
  downstream analysis.  (Right) Kubernetes deployment of a multi-pipeline
  optimizer experiment (ResNet-56 + Llama-3.1-8B) across three campus
  servers with TC-emulated inter-pod links.}
\label{fig:framework-overview}
\end{figure}

\subsubsection{Hardware Configuration}

\begin{itemize}
  \item \textbf{4$\times$ multi-GPU nodes.}  Multi-GPU x86 nodes equipped with either Nvidia's H100 or L40s GPUs. A single NVIDIA GH200 Grace Hopper (ARM64) server that supports Nvidia's GPU-accelerated AERIAL RAN research platform.
  \item \textbf{4$\times$ USRP X410 software-defined radios.}  High-performance SDR platforms capable of wideband transmission; used for over-the-air 5G NR experiments.
  \item \textbf{5$\times$ USRP N300 software-defined radios.}  Mid-range SDR platforms used as additional radio access points or for channel sounding.
  \item \textbf{User Equipment (UEs).} COTS UEs attached to the platform for end-to-end RAN experiments.
\end{itemize}

\subsubsection{Experimentation Framework}

The platform is managed as a \textbf{Kubernetes} cluster with each server accessible as a worker node.
Experiment workloads are packaged as containers and scheduled across nodes via standard Kubernetes manifests.
Network policies, resource quotas, and inter-pod connectivity can either be expressed through the Kubernetes control plane or deployed per node for experiments requiring higher granularity or specific controls.

An experimentation framework developed to accelerate experiment specification, deployment, and data analysis runs on PRESCIENT and is responsible for generating numerous experiment configurations given topology, profile and algorithmic configuration templates. The framework generates experiments based on scenario definition and for each optimizer (baselines and ours), runs the optimization loop for \textbf{T} slots. Profiling is performed once per scenario and cached for re-use later by experiments based on the same scenario.

\begin{algorithm}[htb]
\caption*{\textbf{PRESCIENT Experiment Framework - Optimization Algorithm}}\label{alg:opt-loop}
\begin{algorithmic}[1]

\Statex \textbf{Input:}
$K$ pipelines with throughput targets $R_k$, weights $w_k$;
profiled latencies $\tau_i^{(k)}$, activation sizes $a_\ell^{(k)}$;
compression bounds $[\eta_\ell^{\min}, \eta_\ell^{\max}]$ per link;
surrogate accuracy models $\hat{A}_k(\boldsymbol{\eta}_k)$;
channel estimator $\mathcal{E}$;
slots $T$, batches $B$, step size $\mu$, probe interval $P$

\Statex \textbf{Output:}
per-slot $\boldsymbol{\eta}_k(t)$, $\mathbf{s}(t)$, $\hat{R}_k(t)$

\Statex

\State Initialize $\lambda_k \gets 0$ for all $k$
\For{$t = 0, 1, \dots, T-1$}

\If{$t \bmod P = 0$}
  \State Probe links $\to \mathbf{c}_t$; \quad update $\mathcal{E}$ with $\mathbf{c}_t$
\EndIf
\State $\hat{\mathbf{c}}_t \gets \mathcal{E}.\mathrm{estimate}(t)$

\Statex \hspace{\algorithmicindent} \textit{// Primal step}
\State $\displaystyle \{\boldsymbol{\eta}^*,\, \mathbf{s}^*\} \gets
  \arg\max_{\boldsymbol{\eta},\,\mathbf{s}}
  \sum_{k=1}^{K} \bigl[ w_k \hat{A}_k(\boldsymbol{\eta}_k)
  - \lambda_k \, d_k(\boldsymbol{\eta}_k, \mathbf{s}_k, \hat{\mathbf{c}}_t) \bigr]$
\Statex \hspace{\algorithmicindent}\hspace{\algorithmicindent}
  s.t.\ \ $\eta_{k,\ell} \in [\eta_\ell^{\min},\, \eta_\ell^{\max}]$,
  \quad $\sum_k s_{k,i} = 1 \ \forall\, i$

\State Map each $\eta_{k,\ell}^*$ to compression method; push config to nodes
\State Run $B$ batches; observe achieved throughput $\hat{R}_k(t)$

\Statex \hspace{\algorithmicindent} \textit{// Dual update}
\For{each pipeline $k$}
  \State $\lambda_k \gets \max\!\bigl(0,\; \lambda_k + \mu \cdot \max(0,\; 1/\hat{R}_k - 1/R_k)\bigr)$
\EndFor

\EndFor

\end{algorithmic}

\hspace{2pt}
\noindent\small
\textbf{Preceding offline phases:}\;
\begin{enumerate}
  \item ~\emph{Profiling} --- run inference at $\eta = \eta^{\max}$ to measure $\tau_i^{(k)}$, $a_\ell^{(k)}$, and nominal link capacities.
  \item \emph{Accuracy model fitting} --- sweep $\boldsymbol{\eta}_k$ via simulation, fit surrogate $\hat{A}_k$.
\end{enumerate}
\end{algorithm}

Similarly, accuracy model oracles are trained offline once per a set of pipeline, dataset, and compression and cached for re-use by subsequent experiments. The training allows for multiple modes of sampling compression rates - along the diagonal and random sampling. For each compression rate sample, accuracy is evaluated over a set of batches of the pipeline model dataset.

\begin{algorithm}[hbt]
  \caption*{\textbf{PRESCIENT Experimentation Framework - Surrogate Accuracy Model Training}}
  \begin{algorithmic}[1]

  \Statex \textbf{Input:}
    pipeline $k$ with $L_k$ inter-node links, compression bounds $[\eta_\ell^{\min}, \eta_\ell^{\max}]$;
    sweep count $N_s$, sweep mode $\in \{\texttt{random}, \texttt{diagonal}\}$;
    dataset $\mathcal{D}_k$;
    compression scheme $s$;
    surrogate type (e.g.\ \texttt{poly3}, \texttt{gbm})

  \Statex \textbf{Output:}
    fitted surrogate $\hat{A}_k : \mathbb{R}^{L_k} \to [0,1]$

  \Statex

  \State Load simulation pipeline from stored partitions and dataset $\mathcal{D}_k$
  \State Configure compression scheme $s$ on all links

  \Statex

  \If{sweep mode $=$ \texttt{diagonal}}
    \State $\boldsymbol{\alpha} \gets \mathrm{linspace}(0, 1, N_s)$
    \For{$j = 1, \dots, N_s$}
      \State $\boldsymbol{\eta}^{(j)} \gets \boldsymbol{\eta}^{\min} + \alpha_j \, (\boldsymbol{\eta}^{\max} - \boldsymbol{\eta}^{\min})$
    \EndFor
  \Else    \hspace{55.0pt}\textit{// random}
    \For{$j = 1, \dots, N_s$}
      \State $\boldsymbol{\eta}^{(j)} \sim \mathrm{Uniform}[\boldsymbol{\eta}^{\min},\, \boldsymbol{\eta}^{\max}]$
    \EndFor
  \EndIf

  \Statex

  \For{$j = 1, \dots, N_s$}
    \State Apply compression at $\boldsymbol{\eta}^{(j)}$ under scheme $s$
    \State $y_j \gets$ evaluate task accuracy on $\mathcal{D}_k$ via simulation
  \EndFor

  \Statex

  \State Fit surrogate $\hat{A}_k$ on $\{(\boldsymbol{\eta}^{(j)},\, y_j)\}_{j=1}^{N_s}$

  \end{algorithmic}

  \vspace{4pt}
  \noindent\small
  \textbf{Diagonal} sweeps all links at the same relative compression level (useful for low-dimensional characterization);
  \textbf{random} covers the full $L_k$-dimensional hypercube.
  Surrogate types include polynomial ridge regression (\texttt{poly2}, \texttt{poly3}), gradient boosted trees (\texttt{gbm}), and MLPs.

\end{algorithm}
For running a complete experiment with multiple baseline and proposed optimizers, the choice of $R_k$, especially in a real, on-line deployment is important as it is easy to select one which is either trivially satisfied or is infeasible given communication overhead not captured during link bandwidth probing. The framework performs an off line $R_k$ selection that is feasible and appropriate for the network profile being emulated per experiment.

\paragraph{Infrastructure/network profiles.}
  
\begin{description}
  \item[\texttt{\netProfileEdge}.] Both inter-node links (A$\to$B and B$\to$C) are shaped to 100\,Mbps with 10ms of delay and 0.2\% packet loss.
    This profile represents a constrained wireless or cellular backhaul link and is the primary baseline for evaluating the benefit of adaptive compression.
  \item[\texttt{\netProfileLan}.] Both links are shaped to 1\,Gbps with no added delay.
    This profile models a high-bandwidth local-area or fronthaul network; compression overhead is minimal and end-to-end latency is dominated by compute.
  \item[\texttt{\netProfileWlan}.] Asymmetric link conditions modeling a wireless or mobile ad hoc network: the A$\to$B link is constrained to 50\,Mbps with 20\,ms propagation delay ($\pm$5\,ms jitter) and 0.5\,\% packet loss, while B$\to$C is constrained to 100\,Mbps with 5\,ms delay ($\pm$1\,ms jitter) and 0.1\,\% packet loss.
    This profile reflects a scenario in which the first pipeline stage operates at a remote or low-bandwidth site (e.g., an edge device in the field) while later stages are co-located in a slightly higher-capacity facility connected by a cellular backhaul link.
\end{description}

\begin{algorithm}[h]
  \caption*{\textbf{PRESCIENT Experimentation Framework - $R_k$ Selection Algorithm}}
  \begin{algorithmic}[1]

  \Statex \textbf{Input:}
    experiment configuration (pipelines, nodes, links, compression schemes $\mathcal{S}$)

  \Statex \textbf{Output:}
    per-pipeline throughput target $R_k$

  \Statex

  \State Run profiling at $\eta = \eta^{\max}$; measure $\tau_i^{(k)}$, $a_\ell^{(k)}$, link capacities $\mathbf{c}$
  \For{each scheme $s \in \mathcal{S}$}
    \State Run no-compression baseline ($\eta = \eta^{\max}$); record $\hat{R}_k^{\,\mathrm{none},s}$
    \State Run max-compression baseline ($\eta = \eta^{\min}$); record $\hat{R}_k^{\,\mathrm{max},s}$
  \EndFor

  \Statex

  \For{each pipeline $k$}
    \State $\underline{R}_k \gets \min_{s} \; \hat{R}_k^{\,\mathrm{none},s}$ \hfill\textit{// slowest no-compression run}
    \State $\overline{R}_k \gets \min_{s} \; \hat{R}_k^{\,\mathrm{max},s}$ \hfill\textit{// slowest max-compression run}
    \State $R_k \gets \underline{R}_k + \alpha \, (\overline{R}_k - \underline{R}_k)$, \quad $\alpha \in (0, 1)$
  \EndFor

  \end{algorithmic}

  \vspace{4pt}
  \noindent\small
  $\alpha$ controls difficulty: values near $0$ are easily met without compression; values near $1$ approach the physical throughput ceiling and risk infeasibility.
  A moderate $\alpha$ (e.g.\ $0.4$--$0.6$) ensures baselines that ignore channel or accuracy structure violate the constraint while leaving headroom for adaptive optimizers.

  \end{algorithm}

\subsubsection{Deployment}

The distributed model pipeline is realized as a set of Kubernetes pods scheduled across the available compute nodes with each pod provisioned access to an Nvidia GPU.
Each pod runs a container that hosts one or more consecutive model partitions; activations are transmitted between pods over the cluster network.
Network emulation via \texttt{tc/netem} is applied at each link (pod egress and ingress) to instantiate the infrastructure profiles described below.
The optimizer experiments additionally use a GPU provisioned orchestrator pod for Stein-oracle gradient estimation, which requires running multiple forward passes through the simulated model in parallel for each optimization time slot. Fig.~\ref{fig:framework-overview} shows the functional components of the framework and its workflow while also visualizing an example scenario experiment deployed on PRESCIENT.

\section{Additional Experiments}
\label{app:additional_exp}

\subsection{Testbed Experiments}

\begin{table*}[t]
\centering
\caption{Testbed results for single-task (upper block) and multi-task (lower block) settings including the CSI-aware methods. We report aggregate utility $\metU$, average delay $\metD$, and excess delay $\metED$, across the evaluated methods, grouped into reference baselines, CSI-aware methods, and No-CSI methods. Compression schemes $\cmpT$, $\cmpQ$, and $\cmpIEight$ denote Top-$k$, quantization, and LLM.int8, respectively. Infeasible entries ( $\metED/\metD > 0.05$) are shaded light red. Bold and underlined values denote the best and second-best accuracies, respectively, among feasible entries within the CSI-aware and No-CSI groups.}
\label{tab:results-testbed-all}

\medskip
\hrule
\smallskip
{\bfseries Single-task Scenarios}
\smallskip
\hrule
\medskip

{\scriptsize
\setlength{\tabcolsep}{2pt}
\resizebox{\textwidth}{!}{%
\begin{tabular}{ll c ccc !{\vrule width 1pt} ccc ccc !{\vrule width 1pt} ccc ccc ccc ccc}
\toprule
\multirow{3}{*}{\textbf{Scenario}} & \multirow{3}{*}{\textbf{Metric}}
 & \multicolumn{4}{c!{\vrule width 1pt}}{\textbf{Reference}}
 & \multicolumn{6}{c!{\vrule width 1pt}}{\textbf{CSI-aware}}
 & \multicolumn{12}{c}{\textbf{No-CSI}} \\
\cmidrule(lr){3-6}\cmidrule(lr){7-12}\cmidrule(lr){13-24}
 & & \multirow{2}{*}{\algNONE[short]} & \multicolumn{3}{c!{\vrule width 1pt}}{\algMAX[short]}
   & \multicolumn{3}{c}{\algUNI[short]} & \multicolumn{3}{c!{\vrule width 1pt}}{\algOURSCSI[short]}
   & \multicolumn{3}{c}{\algMYO[short]} & \multicolumn{3}{c}{\algCONS[short]} & \multicolumn{3}{c}{\algMA[short]} & \multicolumn{3}{c}{\algOURSNOCSIst[short]} \\
\cmidrule(lr){4-6}\cmidrule(lr){7-9}\cmidrule(lr){10-12}\cmidrule(lr){13-15}\cmidrule(lr){16-18}\cmidrule(lr){19-21}\cmidrule(lr){22-24}
 & & & \cmpT & \cmpQ & \cmpIEight & \cmpT & \cmpQ & \cmpIEight & \cmpT & \cmpQ & \cmpIEight & \cmpT & \cmpQ & \cmpIEight & \cmpT & \cmpQ & \cmpIEight & \cmpT & \cmpQ & \cmpIEight & \cmpT & \cmpQ & \cmpIEight \\
\midrule
\multirow{3}{*}{\scenTbSSGPT}
 & $\metU$  & \nfeas{1.00} & 0.331 & 0.242 & 0.552 & 0.657 & 0.449 & 0.671 & \underline{0.698} & 0.490 & \best{0.747} & \nfeas{0.686} & \nfeas{0.470} & \nfeas{0.742} & 0.607 & 0.443 & \underline{0.702} & \nfeas{0.718} & \nfeas{0.506} & \nfeas{0.798} & \nfeas{0.676} & 0.469 & \best{0.713} \\
 & $\metD$  & \nfeas{320} & 102 & 91.1 & 97.0 & 142 & 124 & 166 & \underline{155} & 137 & \best{159} & \nfeas{307} & \nfeas{285} & \nfeas{293} & 154 & 143 & \underline{162} & \nfeas{288} & \nfeas{265} & \nfeas{284} & \nfeas{86.0} & 81.1 & \best{128} \\
 & $\metED$ & \nfeas{120} & -97.9 & -109 & -103 & -58.0 & -75.9 & -33.8 & \underline{-44.8} & -63.1 & \best{-41.5} & \nfeas{107} & \nfeas{84.9} & \nfeas{92.7} & -45.7 & -57.3 & \underline{-38.0} & \nfeas{88.2} & \nfeas{65.1} & \nfeas{84.3} & \nfeas{16.5} & -119 & \best{-71.8} \\
\midrule


\multirow{3}{*}{\shortstack[l]{\scenPrsntVizWire}}
 & $\metU$  & 0.927 & 0.120 & 0.114 & 0.751 & 0.897 & 0.927 & \best{0.929} & \nfeas{0.899} & \underline{0.927} & \nfeas{0.934} & \nfeas{0.900} & \best{0.927} & \nfeas{0.934} & \nfeas{0.880} & \underline{0.927} & \nfeas{0.935} & \nfeas{0.907} & 0.927 & \nfeas{0.934} & 0.855 & 0.648 & 0.892 \\
 & $\metD$  & 236.334 & 226.023 & 169.156 & 221.468 & 235.713 & 178.333 & \best{230.703} & \nfeas{287.965} & \underline{175.316} & \nfeas{278.749} & \nfeas{285.990} & \best{175.103} & \nfeas{278.655} & \nfeas{287.929} & \underline{178.763} & \nfeas{278.887} & \nfeas{286.529} & 179.709 & \nfeas{278.993} & 183.209 & 183.649 & 187.401 \\
 & $\metED$ & 9.061 & -1.250 & -58.117 & -5.805 & 8.441 & -48.940 & \best{3.430} & \nfeas{60.692} & \underline{-51.957} & \nfeas{51.476} & \nfeas{58.718} & \best{-52.170} & \nfeas{51.382} & \nfeas{60.656} & \underline{-48.510} & \nfeas{51.615} & \nfeas{59.256} & -47.564 & \nfeas{51.720} & -44.064 & -43.624 & -39.871 \\
\midrule


\multirow{3}{*}{\shortstack[l]{\scenPrsntVizEdge}}
 & $\metU$  & \nfeas{0.927} & 0.120 & 0.114 & 0.751 & 0.563 & \best{0.928} & \underline{0.903} & \nfeas{0.551} & \nfeas{0.933} & \nfeas{0.908} & \nfeas{0.551} & \nfeas{0.933} & \nfeas{0.908} & \nfeas{0.543} & \nfeas{0.933} & \best{0.909} & \nfeas{0.546} & \nfeas{0.933} & \nfeas{0.909} & 0.533 & 0.684 & \underline{0.845} \\
 & $\metD$  & \nfeas{676.002} & 197.228 & 197.498 & 209.463 & 210.430 & \best{242.690} & \underline{241.475} & \nfeas{274.360} & \nfeas{259.501} & \nfeas{255.704} & \nfeas{285.748} & \nfeas{264.321} & \nfeas{255.146} & \nfeas{284.461} & \nfeas{268.449} & \best{248.806} & \nfeas{285.830} & \nfeas{263.583} & \nfeas{252.235} & 235.449 & 222.658 & \underline{205.917} \\
 & $\metED$ & \nfeas{437.907} & -40.867 & -40.597 & -28.632 & -27.665 & \best{4.595} & \underline{3.380} & \nfeas{36.265} & \nfeas{21.406} & \nfeas{17.609} & \nfeas{47.653} & \nfeas{26.226} & \nfeas{17.051} & \nfeas{46.366} & \nfeas{30.354} & \best{10.711} & \nfeas{47.735} & \nfeas{25.487} & \nfeas{14.139} & -2.646 & -15.437 & \underline{-32.178} \\
\midrule


\multirow{3}{*}{\shortstack[l]{\scenPrsntLangWikiWire}}
 & $\metU$  & \nfeas{1.000} & \nfeas{0.192} & 0.028 & 0.682 & \nfeas{0.995} & \best{1.000} & \nfeas{0.999} & \nfeas{1.000} & \nfeas{1.000} & \nfeas{1.000} & \nfeas{1.000} & \nfeas{1.000} & \nfeas{1.000} & \nfeas{1.000} & \nfeas{1.000} & \nfeas{1.000} & \nfeas{1.000} & \nfeas{1.000} & \nfeas{1.000} & \underline{0.639} & 0.066 & \best{0.942} \\
 & $\metD$  & \nfeas{296.553} & \nfeas{281.692} & 236.484 & 273.836 & \nfeas{294.012} & \best{254.720} & \nfeas{284.012} & \nfeas{358.007} & \nfeas{329.074} & \nfeas{321.507} & \nfeas{338.583} & \nfeas{294.891} & \nfeas{307.931} & \nfeas{338.732} & \nfeas{287.429} & \nfeas{316.593} & \nfeas{343.194} & \nfeas{294.311} & \nfeas{331.681} & \underline{262.093} & 260.805 & \best{236.245} \\
 & $\metED$ & \nfeas{33.395} & \nfeas{18.534} & -26.674 & 10.678 & \nfeas{30.854} & \best{-8.438} & \nfeas{20.855} & \nfeas{94.849} & \nfeas{65.916} & \nfeas{58.349} & \nfeas{75.425} & \nfeas{31.733} & \nfeas{44.773} & \nfeas{75.575} & \nfeas{24.271} & \nfeas{53.435} & \nfeas{80.036} & \nfeas{31.153} & \nfeas{68.523} & \underline{-1.065} & -2.353 & \best{-26.913} \\
\midrule


\multirow{3}{*}{\shortstack[l]{\scenPrsntLangWikiEdge}}
 & $\metU$  & \nfeas{1.000} & 0.192 & 0.028 & 0.677 & \underline{0.891} & \nfeas{0.013} & \best{0.981} & \nfeas{0.957} & \nfeas{0.018} & \nfeas{0.989} & \nfeas{0.957} & \nfeas{0.013} & \nfeas{0.989} & \nfeas{0.956} & \nfeas{0.013} & \nfeas{0.987} & \nfeas{0.957} & \nfeas{0.013} & \nfeas{0.989} & \underline{0.691} & 0.087 & \best{0.857} \\
 & $\metD$  & \nfeas{790.384} & 286.550 & 335.390 & 291.654 & \underline{355.782} & \nfeas{451.075} & \best{383.187} & \nfeas{460.639} & \nfeas{473.970} & \nfeas{437.899} & \nfeas{462.651} & \nfeas{457.479} & \nfeas{434.263} & \nfeas{453.741} & \nfeas{453.214} & \nfeas{437.973} & \nfeas{458.281} & \nfeas{448.635} & \nfeas{435.632} & \underline{337.721} & 314.389 & \best{343.769} \\
 & $\metED$ & \nfeas{420.014} & -83.820 & -34.981 & -78.716 & \underline{-14.589} & \nfeas{80.704} & \best{12.817} & \nfeas{90.269} & \nfeas{103.600} & \nfeas{67.528} & \nfeas{92.281} & \nfeas{87.109} & \nfeas{63.892} & \nfeas{83.370} & \nfeas{82.843} & \nfeas{67.602} & \nfeas{87.911} & \nfeas{78.265} & \nfeas{65.262} & \underline{-32.649} & -55.982 & \best{-26.601} \\
\midrule


\multirow{3}{*}{\shortstack[l]{\scenPrsntLangMmluEdge}}
 & $\metU$  & \nfeas{0.467} & 0.267 & 0.283 & 0.283 & \underline{0.387} & 0.300 & \nfeas{0.474} & \best{0.413} & \nfeas{0.300} & \nfeas{0.475} & 0.408 & \nfeas{0.300} & \nfeas{0.477} & \underline{0.417} & \nfeas{0.300} & \nfeas{0.481} & \best{0.417} & \nfeas{0.300} & \nfeas{0.478} & 0.269 & 0.282 & 0.397 \\
 & $\metD$  & \nfeas{901.130} & 274.393 & 280.272 & 297.253 & \underline{294.294} & 439.337 & \nfeas{819.023} & \best{390.254} & \nfeas{655.895} & \nfeas{867.009} & 385.275 & \nfeas{645.917} & \nfeas{874.369} & \underline{390.544} & \nfeas{651.298} & \nfeas{862.062} & \best{386.328} & \nfeas{655.560} & \nfeas{863.350} & 290.246 & 309.594 & 358.331 \\
 & $\metED$ & \nfeas{401.130} & -225.607 & -219.728 & -202.746 & \underline{-205.706} & -60.663 & \nfeas{319.023} & \best{-109.746} & \nfeas{155.895} & \nfeas{367.009} & -114.725 & \nfeas{145.917} & \nfeas{374.369} & \underline{-109.456} & \nfeas{151.298} & \nfeas{362.062} & \best{-113.672} & \nfeas{155.560} & \nfeas{363.350} & -209.754 & -190.406 & -141.669 \\
\midrule

\multirow{3}{*}{\shortstack[l]{\scenVisionJetsonTestbedS}}
 & $\metU$  & \nfeas{0.933} & 0.414 & 0.870 & 0.930 & -- & -- & -- & -- & -- & -- & 0.894 & 0.933 & 0.932 & 0.856 & \best{0.933} & 0.932 & 0.893 & \underline{0.933} & 0.932 & 0.894 & 0.925 & 0.930 \\
 & $\metD$  & \nfeas{3772} & 476 & 469 & 510 & -- & -- & -- & -- & -- & -- & 1955 & 1954 & 1960 & 1620 & \best{1780} & 1173 & 1941 & \underline{1942} & 1992 & 1900 & 1270 & 972 \\
 & $\metED$ & \nfeas{1878} & -1410 & -1430 & -1400 & -- & -- & -- & -- & -- & -- & 68.3 & 60.1 & 54.5 & -267 & \best{-114} & -733 & 53.4 & \underline{47.5} & 86.9 & 12.5 & -624 & -934 \\
\midrule

\multirow{3}{*}{\shortstack[l]{\scenLangJetsonTestbedS}}
 & $\metU$  & \nfeas{0.934} & 0.863 & 0.484 & 0.939 & -- & -- & -- & -- & -- & -- & \nfeas{0.939} & 0.486 & 0.939 & \best{0.942} & 0.486 & 0.939 & \nfeas{0.939} & 0.486 & 0.939 & 0.939 & 0.540 & \underline{0.939} \\
 & $\metD$  & \nfeas{237} & 98.1 & 95.5 & 96.2 & -- & -- & -- & -- & -- & -- & \nfeas{235} & 179 & 249 & \best{148} & 153 & 129 & \nfeas{230} & 178 & 243 & 149 & 95.9 & \underline{98.9} \\
 & $\metED$ & \nfeas{65.2} & -95.2 & -76.1 & -156 & -- & -- & -- & -- & -- & -- & \nfeas{41.5} & 7.86 & -3.39 & \best{-45.0} & -18.1 & -123 & \nfeas{36.2} & 6.60 & -9.04 & -44.0 & -75.7 & \underline{-153} \\
\bottomrule
\end{tabular}}}

\medskip
\hrule
\smallskip
{\bfseries Multi-task Scenarios}
\smallskip
\hrule
\medskip

{\tiny
\setlength{\tabcolsep}{1.5pt}
\resizebox{\textwidth}{!}{%
\begin{tabular}{ll c ccc !{\vrule width 1pt} ccc ccc ccc ccc !{\vrule width 1pt} ccc ccc ccc cccc}
\toprule
\multirow{3}{*}{\textbf{Scenario}} & \multirow{3}{*}{\textbf{Metric}}
 & \multicolumn{4}{c!{\vrule width 1pt}}{\textbf{Reference}}
 & \multicolumn{12}{c!{\vrule width 1pt}}{\textbf{CSI-aware}}
 & \multicolumn{12}{c}{\textbf{No-CSI}} \\
\cmidrule(lr){3-6}\cmidrule(lr){7-18}\cmidrule(lr){19-30}
 & & \multirow{2}{*}{\algNONE[short]} & \multicolumn{3}{c!{\vrule width 1pt}}{\algMAX[short]}
   & \multicolumn{3}{c}{\algEQ[short]} & \multicolumn{3}{c}{\algPROP[short]} & \multicolumn{3}{c}{\algPRIO[short]} & \multicolumn{3}{c!{\vrule width 1pt}}{\algOURSCSI[short]}
   & \multicolumn{3}{c}{\algDES[short]} & \multicolumn{3}{c}{\algDLPROP[short]} & \multicolumn{3}{c}{\algMA[short]} & \multicolumn{3}{c}{\algOURSNOCSImt[short]} \\
\cmidrule(lr){4-6}\cmidrule(lr){7-9}\cmidrule(lr){10-12}\cmidrule(lr){13-15}\cmidrule(lr){16-18}\cmidrule(lr){19-21}\cmidrule(lr){22-24}\cmidrule(lr){25-27}\cmidrule(lr){28-30}
 & & & \cmpT & \cmpQ & \cmpIEight & \cmpT & \cmpQ & \cmpIEight & \cmpT & \cmpQ & \cmpIEight & \cmpT & \cmpQ & \cmpIEight & \cmpT & \cmpQ & \cmpIEight & \cmpT & \cmpQ & \cmpIEight & \cmpT & \cmpQ & \cmpIEight & \cmpT & \cmpQ & \cmpIEight & \cmpT & \cmpQ & \cmpIEight \\
\midrule
\multirow{3}{*}{\scenPrsntVizVizWire}
& $\metU$  &  \nfeas{0.924}   &  \nfeas{0.094}   &  \nfeas{0.086}   &  \nfeas{0.758}   &  \nfeas{0.675}   &  \nfeas{0.924}   &  \nfeas{0.913}   &  \nfeas{0.672}   &  \nfeas{0.924}   &  \nfeas{0.912}   &  \nfeas{0.486}   &  \nfeas{0.524}   &  \nfeas{0.844}   &  \nfeas{0.094}   &  \nfeas{0.086}   &  \nfeas{0.758}   &  \nfeas{0.094}   &  \nfeas{0.086}   &  \nfeas{0.758}   &  \nfeas{0.094}   &  \nfeas{0.086}   &  \nfeas{0.758}   &  \nfeas{0.094}   &  \nfeas{0.086}   &  \nfeas{0.758}   &  0.109   &  \underline{0.109}   &  \best{0.759}  \\
 & $\metD$  &  \nfeas{430}   &  \nfeas{415}   &  \nfeas{376}   &  \nfeas{383}   &  \nfeas{417}   &  \nfeas{396}   &  \nfeas{411}   &  \nfeas{416}   &  \nfeas{387}   &  \nfeas{414}   &  \nfeas{365}   &  \nfeas{355}   &  \nfeas{402}   &  \nfeas{384}   &  \nfeas{355}   &  \nfeas{399}   &  \nfeas{390}   &  \nfeas{351}   &  \nfeas{394}   &  \nfeas{388}   &  \nfeas{343}   &  \nfeas{393}   &  \nfeas{393}   &  \nfeas{343}   &  \nfeas{396}   &  286   &  \underline{283}   &  \best{294}  \\
 & $\metED$  &  \nfeas{193}   &  \nfeas{163}   &  \nfeas{85.6}   &  \nfeas{98.9}   &  \nfeas{167}   &  \nfeas{126}   &  \nfeas{156}   &  \nfeas{165}   &  \nfeas{108}   &  \nfeas{162}   &  \nfeas{63}   &  \nfeas{43.1}   &  \nfeas{137}   &  \nfeas{101}   &  \nfeas{43.9}   &  \nfeas{131}   &  \nfeas{113}   &  \nfeas{34.9}   &  \nfeas{120}   &  \nfeas{109}   &  \nfeas{18.4}   &  \nfeas{120}   &  \nfeas{119}   &  \nfeas{20.2}   &  \nfeas{125}   &  -95.1   &  \underline{-101}   &  \best{-79.5}  \\

\midrule

\multirow{3}{*}{\shortstack[l]{\scenPrsntVizVizEdge}}
 & $\metU$  &  \nfeas{0.924}   &  0.094   &  0.086   &  0.758   &  \nfeas{0.541}   &  \nfeas{0.924}   &  \nfeas{0.895}   &  \nfeas{0.543}   &  \nfeas{0.924}   &  \nfeas{0.895}   &  \nfeas{0.432}   &  \nfeas{0.524}   &  \best{0.843}   &  \underline{0.094}   &  0.086   &  \nfeas{0.763}   &  0.094   &  0.086   &  0.763   &  0.094   &  0.086   &  \underline{0.763}   &  0.094   &  0.086   &  \nfeas{0.763}   &  0.389   &  0.325   &  \best{0.765}  \\
 & $\metD$  &  \nfeas{1397}   &  465   &  412   &  469   &  \nfeas{606}   &  \nfeas{531}   &  \nfeas{517}   &  \nfeas{599}   &  \nfeas{529}   &  \nfeas{509}   &  \nfeas{545}   &  \nfeas{540}   &  \best{486}   &  \underline{427}   &  437   &  \nfeas{495}   &  424   &  432   &  487   &  436   &  433   &  \underline{483}   &  426   &  428   &  \nfeas{493}   &  358   &  344   &  \best{369}  \\
 & $\metED$  &  \nfeas{1842}   &  -21.8   &  -128   &  -13.6   &  \nfeas{259}   &  \nfeas{110}   &  \nfeas{82.3}   &  \nfeas{246}   &  \nfeas{107}   &  \nfeas{65.8}   &  \nfeas{139}   &  \nfeas{127}   &  \best{20.3}   &  \underline{-98.9}   &  -78   &  \nfeas{37}   &  -104   &  -89.1   &  21.1   &  -80.6   &  -86.8   &  \underline{13.2}   &  -101   &  -95.8   &  \nfeas{34}   &  -236   &  -264   &  \best{-215}  \\

\midrule

\multirow{3}{*}{\shortstack[l]{\scenPrsntVizLangEdge}}
 & $\metU$  &  \nfeas{0.962}   &  0.153   &  0.0593   &  0.737   &  \underline{0.812}   &  \nfeas{0.483}   &  \nfeas{0.948}   &  \best{0.812}   &  \nfeas{0.483}   &  \nfeas{0.948}   &  \nfeas{0.56}   &  \nfeas{0.551}   &  \nfeas{0.879}   &  0.153   &  0.0593   &  0.737   &  0.153   &  0.0593   &  \underline{0.737}   &  0.153   &  0.0593   &  0.737   &  0.153   &  0.0593   &  0.737   &  0.45   &  0.232   &  \best{0.844}  \\
 & $\metD$  &  \nfeas{1350}   &  492   &  431   &  487   &  \underline{668}   &  \nfeas{816}   &  \nfeas{786}   &  \best{661}   &  \nfeas{849}   &  \nfeas{770}   &  \nfeas{756}   &  \nfeas{837}   &  \nfeas{879}   &  443   &  456   &  475   &  475   &  480   &  \underline{471}   &  471   &  449   &  483   &  474   &  452   &  487   &  388   &  380   &  \best{446}  \\
 & $\metED$  &  \nfeas{1368}   &  -349   &  -471   &  -360   &  \underline{2.23}   &  \nfeas{298}   &  \nfeas{239}   &  \best{-12.3}   &  \nfeas{364}   &  \nfeas{208}   &  \nfeas{178}   &  \nfeas{341}   &  \nfeas{424}   &  -448   &  -422   &  -384   &  -384   &  -374   &  \underline{-392}   &  -391   &  -436   &  -368   &  -384   &  -429   &  -360   &  -557   &  -574   &  \best{-441}  \\
 
\midrule

\multirow{3}{*}{\shortstack[l]{\scenPrsntVizLangWlan}}
 & $\metU$  &  \nfeas{0.947}   &  0.147   &  0.059   &  0.707   &  \underline{0.616}   &  \nfeas{0.411}   &  \nfeas{0.901}   &  \nfeas{0.619}   &  \nfeas{0.404}   &  \nfeas{0.901}   &  \nfeas{0.516}   &  \nfeas{0.0846}   &  \nfeas{0.845}   &  0.147   &  0.059   &  \best{0.707}   &  0.147   &  0.059   &  0.707   &  0.147   &  0.059   &  0.707   &  0.147   &  0.059   &  \underline{0.707}   &  0.184   &  0.0718   &  \best{0.716}  \\
 & $\metD$  &  \nfeas{6107}   &  803   &  962   &  1249   &  \underline{1438}   &  \nfeas{1909}   &  \nfeas{2136}   &  \nfeas{1487}   &  \nfeas{1872}   &  \nfeas{2150}   &  \nfeas{1855}   &  \nfeas{2455}   &  \nfeas{2369}   &  753   &  974   &  \best{1237}   &  766   &  940   &  1266   &  767   &  951   &  1252   &  769   &  968   &  \underline{1228}   &  865   &  996   &  \best{1225}  \\
 & $\metED$  &  \nfeas{9357}   &  -1250   &  -934   &  -358   &  \underline{19.3}   &  \nfeas{961}   &  \nfeas{1415}   &  \nfeas{118}   &  \nfeas{888}   &  \nfeas{1444}   &  \nfeas{854}   &  \nfeas{2054}   &  \nfeas{1881}   &  -1350   &  -909   &  \best{-382}   &  -1324   &  -976   &  -325   &  -1323   &  -956   &  -352   &  -1318   &  -922   &  \underline{-401}   &  -1126   &  -864   &  \best{-405}  \\
\midrule

\multirow{3}{*}{\shortstack[l]{\scenPrsntLangLangEdge}}
 & $\metU$  &  \nfeas{1.00}   &  0.211   &  0.0327   &  0.716   &  \nfeas{0.994}   &  \nfeas{0.0421}   &  \nfeas{1.00}   &  \nfeas{0.994}   &  \nfeas{0.0421}   &  \nfeas{1.00}   &  \nfeas{0.606}   &  \nfeas{0.516}   &  \nfeas{0.858}   &  \underline{0.211}   &  0.0327   &  \best{0.716}   &  0.211   &  0.0327   &  0.716   &  0.211   &  0.0327   &  \underline{0.716}   &  0.211   &  0.0327   &  0.716   &  0.594   &  0.0327   &  \best{0.898}  \\
 & $\metD$  &  \nfeas{1492}   &  554   &  519   &  518   &  \nfeas{769}   &  \nfeas{1134}   &  \nfeas{807}   &  \nfeas{764}   &  \nfeas{1151}   &  \nfeas{804}   &  \nfeas{747}   &  \nfeas{861}   &  \nfeas{793}   &  \underline{585}   &  507   &  \best{515}   &  573   &  488   &  512   &  577   &  491   &  \underline{510}   &  576   &  477   &  511   &  491   &  495   &  \best{518}  \\
 & $\metED$  &  \nfeas{1555}   &  -321   &  -390   &  -393   &  \nfeas{109}   &  \nfeas{841}   &  \nfeas{185}   &  \nfeas{98.8}   &  \nfeas{875}   &  \nfeas{180}   &  \nfeas{66.1}   &  \nfeas{294}   &  \nfeas{157}   &  \underline{-258}   &  -414   &  \best{-398}   &  -283   &  -453   &  -405   &  -274   &  -446   &  \underline{-409}   &  -277   &  -475   &  -407   &  -446   &  -438   &  \best{-393}  \\
\midrule

\multirow{3}{*}{\shortstack[l]{\scenMTJetsonTestbedM}}
 & $\metU$  & \nfeas{0.926} & 0.418 & 0.860 & 0.929 & -- & -- & -- & -- & -- & -- & -- & -- & -- & -- & -- & -- & 0.707 & \underline{0.915} & \nfeas{0.932} & \nfeas{0.836} & \best{0.915} & \nfeas{0.932} & \nfeas{0.928} & \nfeas{0.927} & \nfeas{0.933} & 0.747 & 0.913 & \nfeas{0.932} \\
 & $\metD$  & \nfeas{4330} & 1090 & 957 & 1680 & -- & -- & -- & -- & -- & -- & -- & -- & -- & -- & -- & -- & 2440 & \underline{1230} & \nfeas{2880} & \nfeas{3590} & \best{1210} & \nfeas{2670} & \nfeas{4880} & \nfeas{3990} & \nfeas{13800} & 2400 & 1450 & \nfeas{2730} \\
 & $\metED$ & \nfeas{1900} & -1350 & -1480 & -760 & -- & -- & -- & -- & -- & -- & -- & -- & -- & -- & -- & -- & 9.55 & \underline{-1210} & \nfeas{443} & \nfeas{1160} & \best{-1230} & \nfeas{234} & \nfeas{2450} & \nfeas{1550} & \nfeas{11400} & -30.3 & -986 & \nfeas{285} \\
\bottomrule
\end{tabular}}}
\end{table*}

Implementing CSI-aware methods and baselines on real testbeds is challenging. On the Jetson testbed, kernel-level limitations prevent accurate network control without external devices, so both traffic shaping and link probing must be done at the application layer, which is coarse and imprecise. In addition, the shared Wi-Fi medium is easily affected by interference. Therefore, we do not evaluate CSI-aware methods on Jetson.

PRESCIENT provides much stronger network control support, enabling CSI-aware experiments. In our testbed experiments in Table \ref{tab:results-testbed-all}, the \algOURSCSI{} optimizer is observed to perform poorly compared to the other CSI-aware baselines. This is in contrast to the results of the offline, simulated experiments where the \algOURSCSI{} is shown to be optimal. This under-performance is consistent across compression schemes. While it mostly outshines the baselines in the multi-task scenario, the improved throughput masks the overtly aggressive compression, especially compared to its CSI-aware counterparts, resulting in degraded utility.

This discrepancy reveals a fundamental mismatch in capacity measurement on our real testbed to that in the simulated environment. CSI-awareness in the PRESCIENT testbed is achieved via link probing in every optimization time slot ($t$) where a payload is transmitted and timed over each link ($\ell$) to produce a bandwidth measurement $c_\ell(t)$. This kind of na\"ive probing measures bulk throughput in isolation and systematically overestimates the effective capacity available to actual activation transfers, which are smaller (100 KiB– 16 MiB) and incur significant per-request HTTP overhead. Moreover, to ensure model pipeline stages are filled to maximize pipeline throughput, there will always be as many inference tasks in-flight, $W$, as there are stages and hence, activation transfers also need to contend with $W$ in-flight tasks. The testbed CSI measurement attempts to account for some of these losses -- a wire overhead factor $\omega_\ell \in [1.15, 1.5]$ corrects the activation size parameter to $a_\ell \cdot \omega_\ell$ and $c_\ell(t)/W$ is the contention-corrected capacity. So the delay model computes $$\frac{(a_\ell \cdot \omega_\ell) \cdot \eta_\ell}{\hat{c}_\ell / W}$$ which is a tighter measurement but still fails to capture second-order sensitivities that get amplified in the \algOURSCSI{} optimizer. 

Probing noise is significant under the \netProfileEdge{} and \netProfileWlan{} emulated network profiles. With jitter and packet loss, each 20 MiB probe produces a noisy $c_\ell(t)$. CSI-aware uses this instantaneous value directly with no smoothing — an upward noise draw produces an inflated $\eta$ for the entire slot. The \algUNI{} is partially shielded by this phenomenon as it computes $\eta = \min_\ell(\hat{c}\ell / R_k a_\ell)$, where even if one link's probe comes in high due to noise, the optimizers selects the lowest ratio across links, thus ignoring the upward outlier. If the bottle-necked link gets an upward noise draw it will affect the \algUNI{} decision, but other links will have excess capacity to absorb it. The no-CSI optimizers use channel estimators (moving average, LCB) that explicitly smooth this variance. \algOURSCSI{} aggressively adapts $\eta$ per link capacity thus lacking the cross-link dampening inherent to \algUNI{} or \algEQ{} optimizers. 

Additionally, the wire overhead factor ($\omega$) is calculated during a no-compression profiling run at maximum activation sizes and applied as a constant through all slots. Wire overhead at those activation sizes is a much smaller fraction of the payload than at lower $\eta$. HTTP connection setup, headers, and other such per-request costs can be more efficiently amortized at higher payloads but as $\eta$ shrinks the fixed cost fraction grows and the true overhead is significantly higher.

Mitigations, therefore, include probing with activation-realistic payloads and with multiple concurrent probe trains which aim to recreate channel contention proportional to the window-size required to keep the pipeline fully loaded ($W$). Thus, instrumenting robust CSI measurement on a real testbed that matches the simple delay model of the simulation environment can be non-trivial and provide a strong case for \algOURSNOCSIst{} and \algOURSNOCSImt{} optimizers for such deployments.

The Raspberry Pi testbed shares the same Wi-Fi interference sensitivity as the Jetson testbed, yet its full Linux kernel access and lightweight communication stack make CSI-aware operation feasible. Channel probing on this testbed operates in a regime where the probe-to-activation mismatch is minimal. Each non-terminal pipeline node runs a background thread that periodically opens a TCP connection to its downstream neighbor, transmits a fixed-size probe payload $B_{\mathrm{probe}}$ sized to match the decode-phase activation, and records the round-trip time $\Delta t_{\mathrm{probe}}$. The per-link capacity estimate is then $\hat{c}_i^{\mathrm{probe}}(t) = B_{\mathrm{probe}} / \Delta t_{\mathrm{probe}}$. Because the probe payload is deliberately matched to the hidden-state activation size $O_i$, the payload-dependent overhead discrepancy is eliminated. Furthermore, inter-node communication uses raw TCP sockets with a lightweight serialization protocol rather than HTTP, making per-request wire overhead negligible.

\begin{figure*}[!t]
    \centering
    \includegraphics[width=0.5\textwidth]{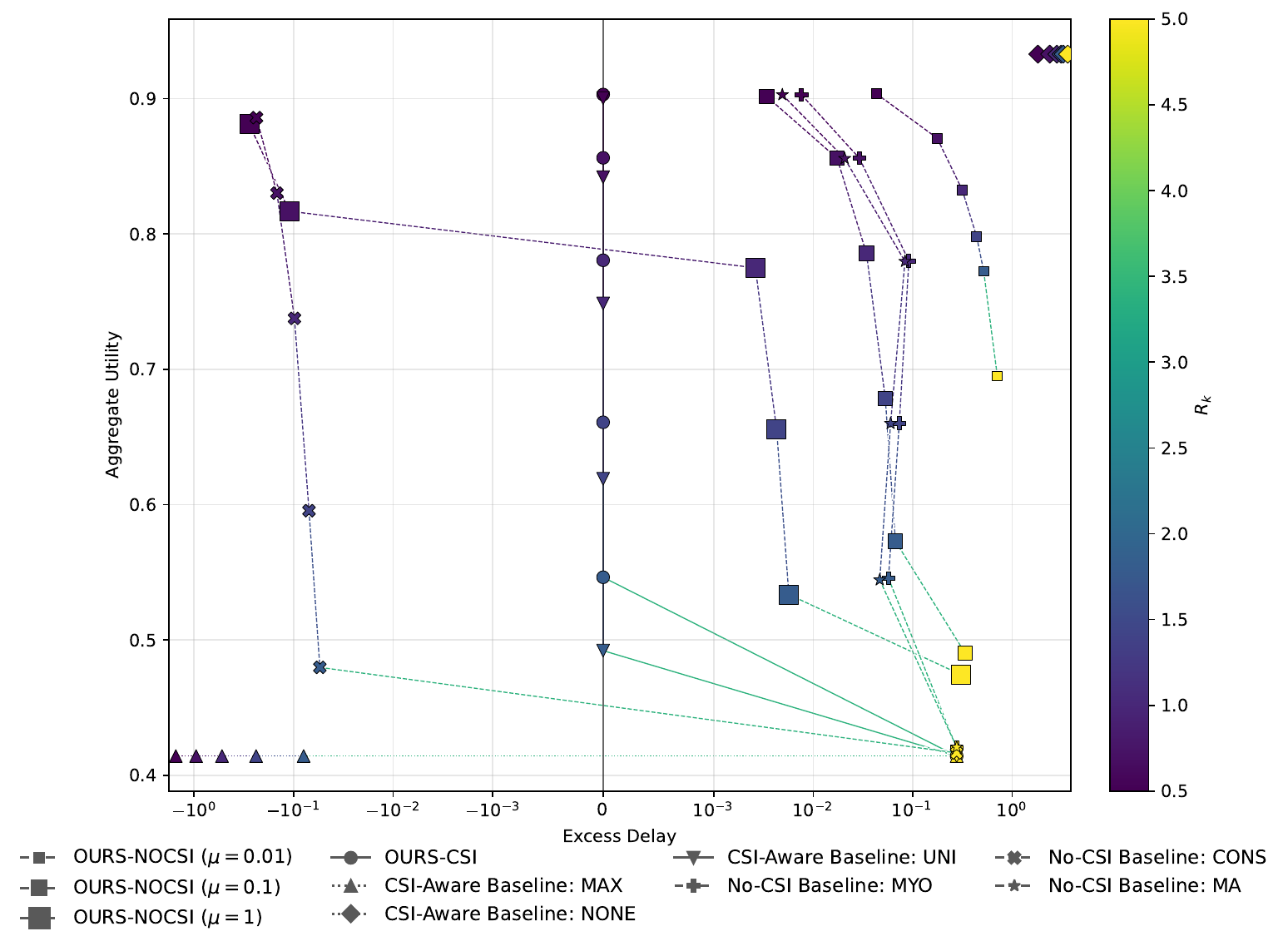}
    \caption{Scenario \scenVisionJetsonS, \taskVisionTwo single task results (Fitting-Based, Top-$k$)}
    \label{fig:topk-resnet-jetson}
\end{figure*}

\begin{figure*}[t]
    \centering
    \includegraphics[width=0.5\textwidth]{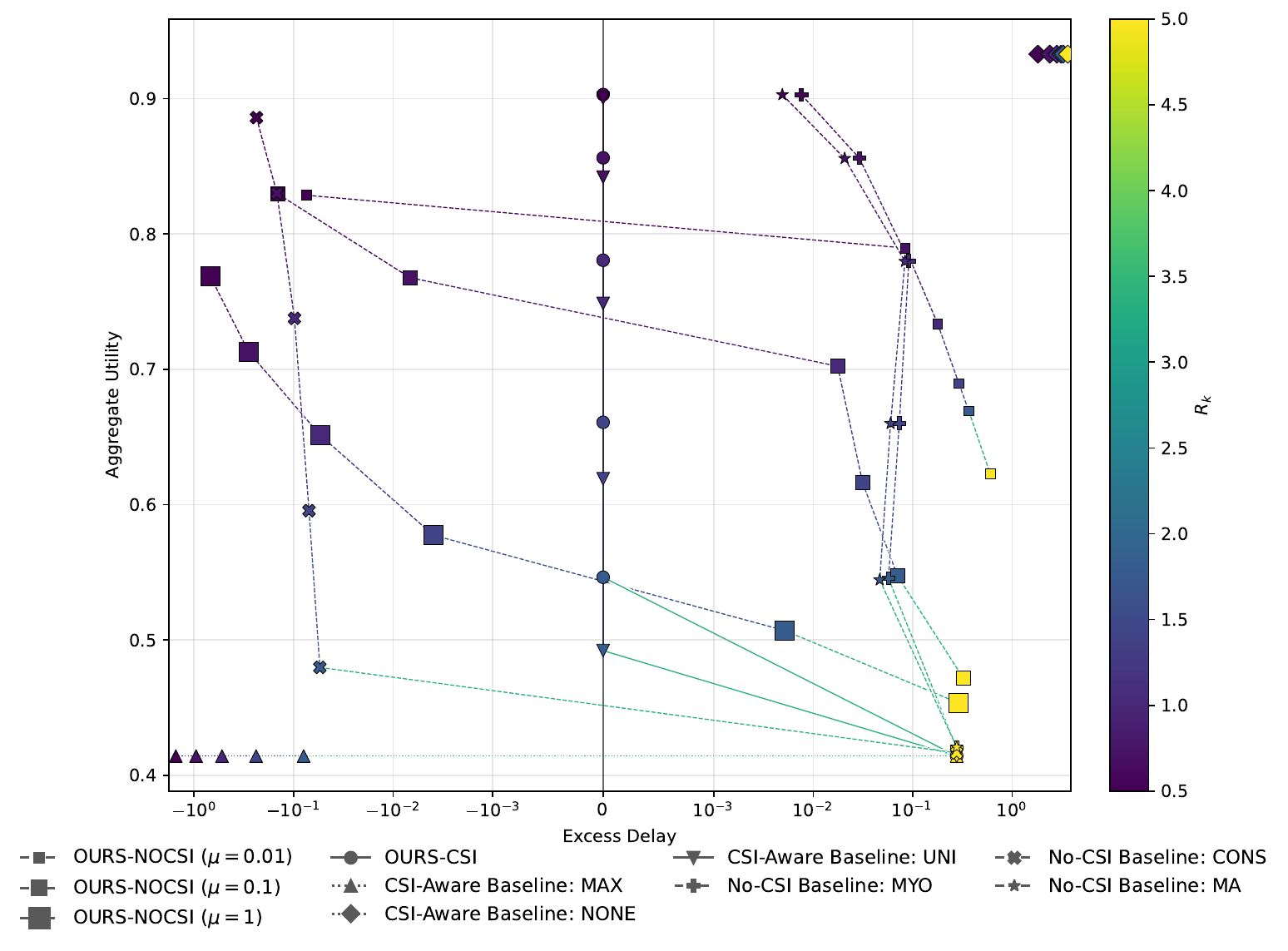}
    \caption{Scenario \scenVisionJetsonS, \taskVisionTwo single task results (Stein's, Top-$k$)}
    \label{fig:topk-resnet-jetson-stein}
\end{figure*}


\begin{figure*}[t]
    \centering
    \includegraphics[width=0.9\textwidth]{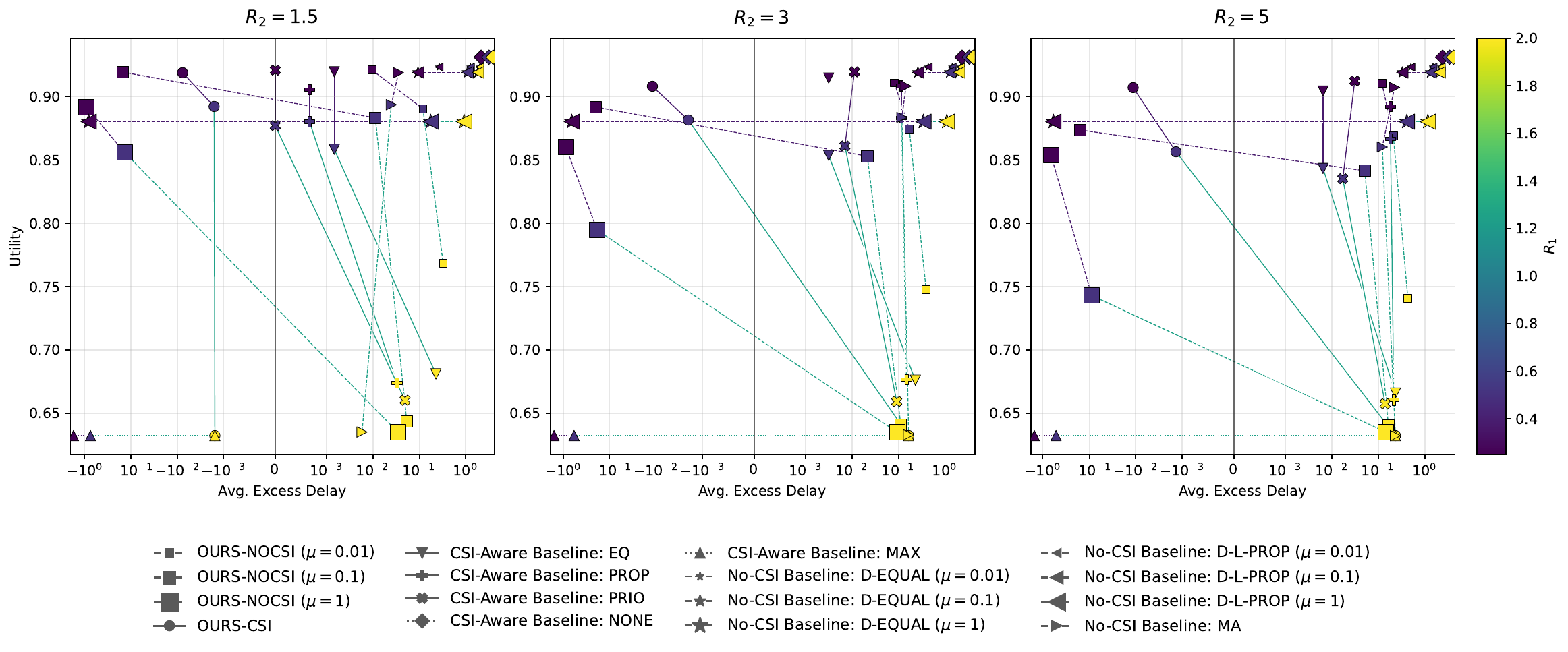}
    \caption{Top-$k$ multi-task tradeoff under the fitting-based estimator. Each panel fixes the Task-2 target rate \(R_2\) and sweeps the Task-1 target rate \(R_1\). The x-axis shows the overall average excess delay, and the y-axis shows the overall utility.}
    \label{fig:multi-topk-fitting-task1-fixed-r2}
\end{figure*}

\begin{figure*}[t]
    \centering
    \includegraphics[width=0.9\textwidth]{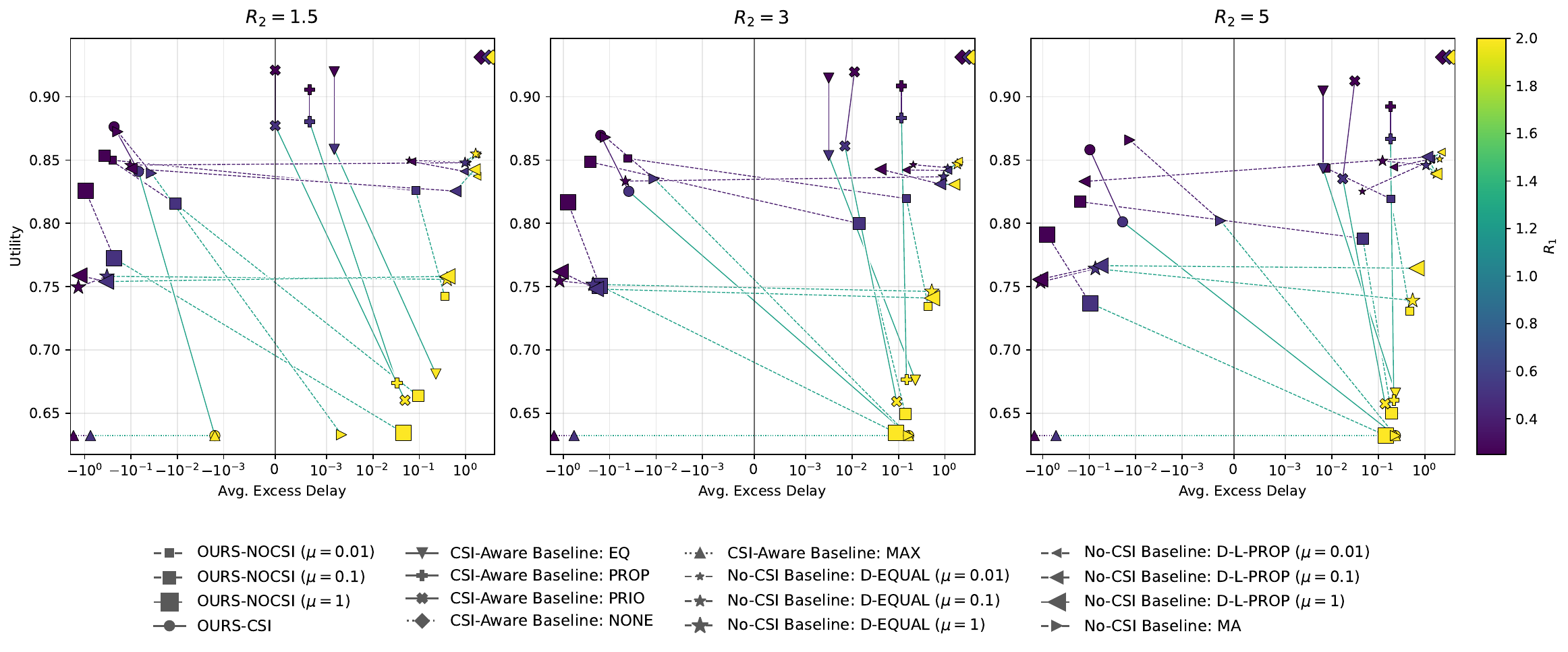}
    \caption{Top-$k$ multi-task tradeoff under Stein's estimator. Each panel fixes the Task-2 target rate \(R_2\) and sweeps the Task-1 target rate \(R_1\). The x-axis shows the overall average excess delay, and the y-axis shows the overall utility.}
    \label{fig:multi-topk-stein-task1-fixed-r2}
\end{figure*}

\subsection{Accuracy-Delay Tradeoff Trajectory under Different $R_k$}

We sweep both $R_k$ and $\mu$, and further compare two accuracy-estimation methods, \emph{fitting} and \emph{kernel-smoothing} (Stein’s method) under scenario \scenVisionJetsonS. For the \emph{Top-$k$} compressor, the resulting tradeoff curves are shown in Figure~\ref{fig:topk-resnet-jetson} and~\ref{fig:topk-resnet-jetson-stein}. We also evaluate the \emph{quantization} and \emph{LLM.int8}.  For these compressors, however, the accuracy decreases only mildly as $\eta$ becomes smaller, and thus no tradeoff plots are included here. These two figures illustrate the single-task accuracy--delay tradeoff. The experiments compare multiple decision policies across six target-rate $R_k$ settings. The x-axis shows the average excess delay, and the y-axis shows the average true accuracy. Each curve corresponds to one policy, and the color of each point indicates the target-rate setting.

First, we observe that our algorithm, especially the No-CSI, performs well across different values of $R_k$. Compared with the baselines, it is more likely to satisfy the target-rate constraint while maintaining higher accuracy. At the same time, there is a clear upper limit on $R_k$: once the target rate exceeds the capability of the compression mechanism, the algorithm fails to satisfy the constraint. This limitation also applies to the baselines.

Then, the comparison between \emph{fitting} and Stein's method suggests that, under our No-CSI algorithm, the fitting-based estimator tends to perform better relative to the baselines. In particular, it often achieves higher accuracy while still satisfying the target delay constraint. A likely reason is that, to reduce the computational overhead of Stein-based estimation, we use relatively small values of $\bm{N}$ and \datasetGradFit{}, which can lead to lower estimation quality for Stein's method.

By examining the curves under different values of $\mu$, we also observe that a larger $\mu$ leads to more aggressive compression, typically at the cost of some accuracy. This is consistent with our theoretical analysis, and can help guide the choice of $\mu$ to balance accuracy against the requirement of satisfying the target rate. If stricter control of target-rate violations is desired, a larger $\mu$ can be used; if a higher degree of violation is acceptable, a smaller $\mu$ can be chosen to achieve higher accuracy.

We further evaluate the multi-task setting with task~1 \taskVisionTwo and task~2 \taskLangSeven under scenario \scenMTLVJM, sweeping both $R_k$ and $\mu$, and comparing \emph{fitting} with \emph{kernel-smoothing} (Stein's method). For \emph{Top-$k$}, the resulting tradeoff curves are shown in Figures~\ref{fig:multi-topk-fitting-task1-fixed-r2} and~\ref{fig:multi-topk-stein-task1-fixed-r2}. In each subfigure, the target rate $R_2$ of task~2 is fixed, while the trajectory is obtained by sweeping the target rate $R_1$ of task~1. The x-axis shows the aggregated excess delay across the two tasks weighted by $w_k$, and the y-axis shows the aggregated utility weighted by $w_k$.

The multi-task results are broadly similar with the single-task case. Our method, especially the No-CSI variant, achieves better overall utility under the delay constraint across a range of target-rate settings, while the fitting-based estimator remains more effective than Stein's method in this testbed setting. The curves under different values of $\mu$ show the expected tradeoff: a larger $\mu$ leads to more conservative behavior and lower violation, while a smaller $\mu$ yields higher utility at the cost of looser delay control.

\begingroup\color{black}
\section{Model Extensions}\label{app:extensions}

Two modeling choices in Sec.~\ref{sec:meth} were deliberately kept simple: the target throughput $R_k(t)$ is a fixed requirement, and the transmitted payload is abstracted as $a_{i,k}\eta_{i,k}(t)$. We show here that both can be generalized without sacrificing the convex structure our solution methods rely on.

\subsection{Target Throughput as a Variable}\label{app:var-throughput}

In Prob.~\ref{prob:multi-task}, every active task $k \in \mathcal{K}(t)$ must meet a fixed target throughput $R_k(t)$. We now treat the served rate itself as a decision variable, which turns the per-slot problem into joint admission control, resource allocation, and compression.

For every task $k \in \mathcal{K}(t)$, let $r_k(t) \in [0, R_k^{\max}(t)]$ denote the admitted throughput allocated to task $k$ during slot $t$, where $R_k^{\max}(t)$ is an upper bound (e.g., the original target rate or the arrival rate). The value $r_k(t)=0$ means that task $k$ is not served during slot $t$, while $r_k(t)>0$ means that task $k$ is served at rate $r_k(t)$.

Replacing $R_k(t)$ by $r_k(t)$ in \eqref{cons:het-comp-lat} and \eqref{cons:het-comm-lat} gives
\begin{align}
\frac{\tau_{i,k}}{s^{\cmp}_{i,k}(t)}
&\le \frac{1}{r_k(t)},
\quad \forall k \in \mathcal{K}(t),~ i \in [L_k],
\label{eq:rate_var_cmp_orig}
\\
\frac{a_{i,k}\eta_{i,k}(t)}{s^{\com}_{i,k}(t)c_{i,k}(t)}
&\le \frac{1}{r_k(t)},
\quad \forall k \in \mathcal{K}(t),~ i \in [L_k-1],
\label{eq:rate_var_com_orig}
\end{align}
or, equivalently,
\begin{align}
\tau_{i,k}\, r_k(t)
&\le s^{\cmp}_{i,k}(t),
\quad \forall k \in \mathcal{K}(t),~ i \in [L_k],
\label{eq:rate_var_cmp_lin}
\\
a_{i,k}\eta_{i,k}(t)\, r_k(t)
&\le s^{\com}_{i,k}(t)c_{i,k}(t),
\quad \forall k \in \mathcal{K}(t),~ i \in [L_k-1].
\label{eq:rate_var_com_bilin}
\end{align}
The compute constraint \eqref{eq:rate_var_cmp_lin} is affine, but \eqref{eq:rate_var_com_bilin} is bilinear due to the product $\eta_{i,k}(t)r_k(t)$: rate and compression are no longer separable, since serving a task faster and transmitting more of its activation compete for the same link budget. Introducing the variables
\begin{equation}
z_{i,k}(t) \triangleq r_k(t)\,\eta_{i,k}(t),
\quad \forall k \in \mathcal{K}(t),~ i \in [L_k-1],
\label{eq:z_def}
\end{equation}
removes the bilinearity. Since $\eta_{i,k}(t) \in [\eta_{i,k}^{\min},1]$, the definition \eqref{eq:z_def} implies
\begin{equation}
\eta_{i,k}^{\min}\, r_k(t)
\le z_{i,k}(t)
\le r_k(t),
\quad \forall k \in \mathcal{K}(t),~ i \in [L_k-1],
\label{eq:z_bounds}
\end{equation}
and \eqref{eq:rate_var_com_bilin} becomes affine:
\begin{equation}
a_{i,k}\, z_{i,k}(t)
\le s^{\com}_{i,k}(t)c_{i,k}(t),
\quad \forall k \in \mathcal{K}(t),~ i \in [L_k-1].
\label{eq:z_comm_lin}
\end{equation}

To account for both admitted throughput and inference quality, we maximize the aggregate admitted utility
\begin{equation}
\textstyle\sum_{k \in \mathcal{K}(t)}
w_k\, r_k(t)\,
A_k\!\left(\frac{\mathbf{z}_k(t)}{r_k(t)}\right),
\label{eq:perspective_obj}
\end{equation}
where $\mathbf{z}_k(t) = (z_{i,k}(t))_{1 \le i \le L_k-1}$. For $r_k(t)>0$, the ratio $\mathbf{z}_k(t)/r_k(t)$ recovers the compression vector $\boldsymbol{\eta}_k(t)$, so \eqref{eq:perspective_obj} weighs each task's accuracy by the rate at which it is actually served. Crucially, the mapping $r_k(t)A_k(\mathbf{z}_k(t)/r_k(t))$ is the \emph{perspective} of $A_k$ and is therefore jointly concave in $(r_k(t),\mathbf{z}_k(t))$ whenever $A_k$ is concave, with the standard convention that it is extended by its closure at $r_k(t)=0$ (an unserved task contributes zero). We thus arrive at the following optimization problem:
\begin{problem}[CSI-Aware Multi-Task, Variable Throughput]\label{prob:var-throughput}
\begin{subequations}\label{prob:var-rate-main}
\begin{align}
\mathop{\text{Maximize:}}_{\substack{\mathbf{r}(t),\,\mathbf{z}(t),\\ \boldsymbol{s}^\cmp(t),\,\boldsymbol{s}^\com(t)}}
&~ \textstyle\sum_{k \in \mathcal{K}(t)}
w_k\, r_k(t)\,
A_k\!\left(\frac{\mathbf{z}_k(t)}{r_k(t)}\right)
\label{eq:var_rate_prob_obj}
\\[0.5ex]
\text{subj.~to:}~~
& \text{Constraints \eqref{eq:comp_cap} and \eqref{eq:comm_cap}}, \nonumber\\
& \tau_{i,k}\, r_k(t)
\le s^{\cmp}_{i,k}(t),
~ \forall k \in \mathcal{K}(t),\, i \in [L_k],
\label{eq:var_rate_prob_cmp}
\\
& a_{i,k}\, z_{i,k}(t)
\le s^{\com}_{i,k}(t)c_{i,k}(t),
\enspace \forall k \!\in\! \mathcal{K}(t),\, i \!\in\! [L_k\!-\!1],
\label{eq:var_rate_prob_com}
\\
& \eta_{i,k}^{\min}\, r_k(t)
\le z_{i,k}(t)
\le r_k(t),
\enspace \forall k \!\in\! \mathcal{K}(t),\, i \!\in\! [L_k\!-\!1],
\label{eq:var_rate_prob_z}
\\
& 0 \le r_k(t) \le R_k^{\max}(t),
~ \forall k \in \mathcal{K}(t),
\label{eq:var_rate_prob_r}
\\
& s^{\cmp}_{i,k}(t) \ge 0, \quad
s^{\com}_{i,k}(t) \ge 0.
\label{eq:var_rate_prob_nonneg}
\end{align}
\end{subequations}
\end{problem}
All constraints of Prob.~\ref{prob:var-throughput} are affine and, by the perspective argument above, the objective is concave; the program is therefore convex and the machinery of Sec.~\ref{sec:meth} applies unchanged. For any optimal solution with $r_k(t)>0$, the compression factors are recovered as
\begin{equation}
\eta_{i,k}(t)=\frac{z_{i,k}(t)}{r_k(t)},
\quad \forall k \in \mathcal{K}(t),~ i \in [L_k-1].
\label{eq:recover_eta}
\end{equation}
When $r_k(t)=0$, task $k$ is not served in slot $t$ and $\boldsymbol{\eta}_k(t)$ may be assigned arbitrarily within its feasible box, as it affects neither the objective nor the constraints.

Two properties are worth noting. First, fixing $r_k(t)=R_k(t)$ for all $k$ recovers Prob.~\ref{prob:multi-task}, so the variable-throughput formulation is a strict relaxation. Second, whereas Prob.~\ref{prob:multi-task} is infeasible whenever the fixed rates exceed what the instantaneous capacities can support (Lem.~\ref{lem:multi-feasibility}), Prob.~\ref{prob:var-throughput} is always feasible: $\mathbf{r}(t)=\mathbf{0}$, $\mathbf{z}(t)=\mathbf{0}$, $\boldsymbol{s}^\cmp(t)=\boldsymbol{s}^\com(t)=\mathbf{0}$ satisfies every constraint. Infeasibility is thus replaced by graceful degradation: the optimizer sheds admitted rate from the tasks whose weighted utility is least sensitive to it, rather than declaring the slot unservable. This is one mechanism mentioned in Sec.~\ref{sec:conc} for the regime in which no compression level suffices.

\subsection{Overhead-Aware Communication Model}\label{app:overhead}

Modeling the transmitted payload as $a_{i,k}\eta_{i,k}(t)$ is a deliberate abstraction: it exposes a continuous communication--accuracy tradeoff and renders the per-slot problem amenable to convex optimization, and identifying and exploiting this structure is one of our contributions. Real compressors, however, also transmit index metadata, are restricted to discrete bit widths, and consume compute to encode and decode. We show that each of these overheads can be folded into the formulation without destroying its structure.

\begin{table}[t]
  \centering
  \color{black}
  \caption{Overhead-aware payload of a single activation transmission with uncompressed size $a$ bits over $n$ coordinates. The metadata term is a bit-packed index mask and is independent of $\eta$, so it enters the communication constraint additively and preserves convexity.}
  \label{tab:overhead-model}
  \small
  \setlength{\tabcolsep}{4pt}
  \begin{tabular}{@{}lccc@{}}
    \toprule
    \textbf{Method} & \textbf{Values} & \textbf{Metadata} & \textbf{Total} \\
    \midrule
    Uncompressed            & $a$      & $0$   & $a$          \\
    \emph{Top-$k$}          & $a\eta$  & $n$   & $a\eta + n$  \\
    Uniform quantization    & $a\eta$  & $0$   & $a\eta$      \\
    \emph{LLM.int8}         & $a\eta$  & $n$   & $a\eta + n$  \\
    \bottomrule
  \end{tabular}
\end{table}

Consider one transmission on hop $i$ of task $k$, carrying an activation of $n_{i,k}$ coordinates whose uncompressed payload is $a_{i,k}$ bits. \emph{Top-$k$} and \emph{LLM.int8} must additionally convey \emph{which} coordinates are retained, respectively treated as outliers; our implementation sends a bit-packed mask of one bit per coordinate, costing $n_{i,k}$ bits (Appendix~\ref{app:compressors}). Uniform quantization requires no mask. Table~\ref{tab:overhead-model} summarizes the resulting payloads. The key structural observation is that the mask size is fixed by the tensor shape and is \emph{independent} of $\eta_{i,k}(t)$, so the communication constraint \eqref{cons:het-comm-lat} generalizes to
\begin{equation}
\frac{a_{i,k}\eta_{i,k}(t)+n_{i,k}}{s^{\com}_{i,k}(t)c_{i,k}(t)}
\le \frac{1}{R_k(t)},
\label{eq:overhead-comm}
\end{equation}
which is still jointly affine in $(\eta_{i,k}(t),s^{\com}_{i,k}(t))$. Convexity is therefore preserved, and every structural result carries over with $a_{i,k}\eta_{i,k}(t)$ replaced by $a_{i,k}\eta_{i,k}(t)+n_{i,k}$: the feasibility condition of Lem.~\ref{lem:multi-feasibility} becomes $\sum_{k \in \mathcal{K}(t)} \sum_{i:\,e_{i,k}=e} (a_{i,k}\eta_{i,k}^{\min}+n_{i,k})R_k(t)/c_{i,k}(t) \le 1$, and the single-task closed form of Thm.~\ref{thm:single-convex} becomes $\eta_i^\star(t)=\min\{1,(c_i(t)/R(t)-n_i)/a_i\}$. Intuitively, the mask consumes a fixed share of the link budget, leaving the remainder to be traded against accuracy exactly as before.

Finally, compression and decompression are not free on edge devices. We measure them and include them in the profiled stage time $\tau_{i,k}$, which is consequently compressor-specific; this is why $\boldsymbol{\tau}$ varies across compressors in our Jetson profiles while $\boldsymbol{a}$, being determined by tensor shape alone, does not (Appendix~\ref{app:scenarios}).
\endgroup

}{}

\end{document}